\newtheorem{theorem}{Theorem} [section]
\newtheorem{proposition}[theorem]{Proposition}	
\newtheorem{corollary}[theorem]{Corollary}	
\newtheorem{lemma}[theorem]{Lemma}		
\newtheorem{assumptions}[theorem]{Assumptions}
\newtheorem{remark}[theorem]{Remark}
\theoremstyle{definition}
\newcommand{\C}{\mathbb{C}}
\newcommand{\R}{\mathbb{R}}
\newcommand{\re}{\text{\upshape Re\,}}
\newcommand{\im}{\text{\upshape Im\,}}
\newcommand{\Ai}{{\rm Ai}}
\def\XXint#1#2#3{{\setbox0=\hbox{$#1{#2#3}{\int}$}
\vcenter{\hbox{$#2#3$}}\kern-.5\wd0}}
\tikzset{->-/.style={decoration={
				markings,
				mark=at position #1 with {\arrow{latex}}},postaction={decorate}}}
	\tikzset{-<-/.style={decoration={
				markings,
				mark=at position #1 with {\arrowreversed{latex}}},postaction={decorate}}}
\tikzset{cross/.style={cross out, draw, 
         minimum size=2*(#1-\pgflinewidth), 
         inner sep=0pt, outer sep=0pt}}
\numberwithin{equation}{section}
\def\bigO{{\cal O}}
\begin{document}
\title{Global rigidity and exponential moments \\ for soft and hard edge point processes}
\author{Christophe Charlier, Tom Claeys}

\maketitle

\begin{abstract}
We establish sharp global rigidity upper bounds for universal determinantal point processes describing edge eigenvalues of random matrices. For this, we first obtain a general result which can be applied to general (not necessarily determinantal) point processes which have a smallest (or largest) point: it allows to deduce global rigidity upper bounds from the exponential moments of the counting function of the process. By combining this with known exponential moment asymptotics for the Airy and Bessel point processes, we improve on the best known upper bounds for the global rigidity of the Airy point process, and we obtain new global rigidity results for the Bessel point process. 

Secondly, we obtain exponential moment asymptotics for the Wright's generalized Bessel process and the Meijer-$\mathrm{G}$ process, up to and including the constant term. As a direct consequence, we obtain new results for the expectation and variance of the associated counting functions. Furthermore, by combining these asymptotics with our general rigidity theorem, we obtain new global rigidity upper bounds for these point processes.
\end{abstract}

\noindent
{\small{\sc AMS Subject Classification (2010)}: 41A60, 60B20, 33B15, 33E20, 35Q15.}

\noindent
{\small{\sc Keywords}: Rigidity, Exponential moments, Muttalib--Borodin ensembles, Product random matrices, Random matrix theory, Asymptotic analysis, Large gap probability, Riemann--Hilbert problems.}


\section{Introduction and statement of results}
\label{Section:intro}

An important question in recent years in random matrix theory has been to understand how much the ordered eigenvalues of a random matrix can deviate from their typical locations. It has been observed \cite{Johansson98, Gustavsson, ArguinBeliusBourgade, ErdosSchleinYau, ErdosYauYin, CFLW} that the individual eigenvalues fluctuate on scales that are only slightly bigger than the microscopic scale. This property is loosely called the \textit{rigidity} of random matrix eigenvalues.
To make this more precise, let us consider the Circular Unitary Ensemble which consists of $n\times n$ unitary Haar distributed matrices. The eigenvalues of such a random matrix lie on the unit circle in the complex plane, and if we denote the eigenangles as $0 < \theta_1\leq \ldots\leq \theta_n \leq 2\pi$, we can expect that $\theta_j$ will for typical configurations lie close to $\frac{2\pi j}{n}$
because of the rotational invariance of the probability distribution of the eigenvalues. Indeed, it was shown in \cite[Theorem 1.5]{ArguinBeliusBourgade} (see also \cite{PaquetteZeitouni})
that
\[\lim_{n\to\infty}\mathbb P_{\rm CUE}\left((2-\epsilon)\frac{\log n}{n}<\max_{j=1,\ldots, n}\left|\theta_j-\frac{2\pi j}{n}\right|<(2+\epsilon)\frac{\log n}{n}\right)=1\]
for any $\epsilon>0$. We call this an \textit{optimal global rigidity} result because the lower and upper bounds of the maximal eigenvalue deviation differ only by a multiplicative factor which can be chosen arbitrarily close to $1$.
Similar optimal global rigidity results have been obtained in circular $\beta$-ensembles \cite{ChhaibiMadauleNajnudel, Lambert}, in unitary invariant random matrix ensembles \cite{CFLW}, and also for the sine $\beta$-process \cite{HolcombPaquette, Lambert}. In the two-dimensional setting of the Ginibre ensemble, results of a similar nature were obtained in \cite{Lambert2}.

One of the most important features of random matrix eigenvalues is their universal nature: their asymptotic behavior on microscopic scales is similar for large classes of random matrix models.
For instance, in many matrix models of Hermitian $n\times n$ matrices, like the GUE, Wigner matrices, and unitary invariant matrices, the microscopic large $n$ behavior of \textit{bulk} eigenvalues is described by the \textit{sine point process} (see e.g.\ \cite{ErdosYau} and references therein), whereas the microscopic behavior of \textit{edge} eigenvalues is described by the \textit{Airy point process} \cite{DeiftGioev, Deiftetal, BourgadeErdosYau, Forrester, PraehoferSpohn, Soshnikov, TracyWidom}. 
For ensembles of positive-definite Hermitian matrices, the situation is somewhat more complicated.
In the Wishart-Laguerre ensemble and its unitary invariant generalizations, the \textit{Bessel point process} typically describes the microscopic behavior of the smallest eigenvalues close to the hard edge $0$ \cite{Forrester, KuijlaarsVanlessen, TracyWidomBessel}.
However, in a generalization of the Wishart-Laguerre ensemble known as the Muttalib-Borodin Laguerre ensemble \cite{Borodin, Muttalib}, the local behavior of eigenvalues near the hard edge is described by a different determinantal point process known as the \textit{Wright's generalized Bessel point process} \cite{Borodin, KuijMolag}. Another generalization of the Bessel process, known as the \textit{Meijer-$\mathrm{G}$ point process}, arises at the hard edge of Wishart-type products of Ginibre or truncated unitary matrices \cite{AkemannIpsenKieburg, AkemannKieburgWei, KuijlaarsZhang, KuijlaarsStivigny, KieburgKuijlaarsStivigny}, and in Cauchy multi-matrix ensembles \cite{BertolaBothner, BertolaCauchy2MM}. 

\medskip

In this paper, we will establish upper bounds for the global rigidity of the Airy point process, the Bessel point process, and its (determinantal) generalizations arising near the hard edge in Muttalib-Borodin ensembles and in product random matrix ensembles. We do this by combining asymptotics for exponential moments of the counting measures of these point processes, which are Fredholm determinants of certain integral kernel operators, with a global rigidity estimate which can be applied to general point processes which almost surely have a smallest (or largest) point. 
In the case of the Airy and Bessel  point processes, asymptotics for the exponential moments are known, see \cite{BothnerBuckingham} for Airy and \cite{Charlier} for Bessel, and they allow us to improve on the best known upper bounds for the Airy point process (see \cite{Zhong} and \cite[Theorem 1.6]{CorwinGhosal}), and to deduce completely new global rigidity results for the Bessel point process.

\medskip

Another main contribution of this paper consists of exponential moment asymptotics for Wright's generalized Bessel and Meijer-$G$ point processes. We emphasize that we explicitly compute the multiplicative constant in these asymptotics, which is in general very challenging, see e.g.\ \cite{Krasovsky,CLM2019,CLM2019G}. As consequences of the exponential moment asymptotics, we obtain asymptotics for the average and variance of the counting functions of these processes, and an upper bound for their global rigidity.

\paragraph{General rigidity theorem.} Suppose that $X$ is a locally finite random point process on the real line which has almost surely a smallest particle, and denote the ordered random points in the process by $x_1\leq x_2\leq \cdots$. We write $N(s)$ for the random variable that counts the number of points $\leq s$. We will work under the following assumptions, which, as we will see later, are fairly easy to verify in practice.
\begin{assumptions}\label{assumptions}
There exist constants $\mathrm{C}, a >0$, $s_0\in\mathbb R$, $M > \sqrt{2/a}$ and continuous functions $\mu,\sigma:[s_0,+\infty)\to [0,+\infty)$ such that the following holds:
{\begin{enumerate}
\item[(1)] We have
\begin{equation}\label{expmomentbound}
\mathbb{E} \big[e^{-\gamma N(s)}\big]\leq \mathrm{C} \, e^{-\gamma \mu(s)+\frac{\gamma^{2}}{2}\sigma^2(s)},
\end{equation}
for any $\gamma\in[-M,M]$ and for any $s>s_0$.
\item[(2)] The functions $\mu$ and $\sigma$ are strictly increasing and differentiable, and they satisfy 
\begin{align*}
\lim_{s\to + \infty} \mu(s) = + \infty \qquad \mbox{ and } \qquad \lim_{s\to + \infty} \sigma(s) = + \infty.
\end{align*}
Moreover, $s\mapsto s\mu'(s)$ is weakly increasing and $\displaystyle \lim_{s\to+\infty}\frac{s\mu'(s)}{\sigma^2(s)}=+\infty$.
\item[(3)] The function $\sigma^2\circ\mu^{-1}:[\mu(s_0),+\infty)\to [0,+\infty)$ is strictly concave, and
\begin{align*}
& (\sigma^2\circ\mu^{-1})(s)\sim a\log s \qquad \mbox{as} \quad s\to +\infty.
\end{align*}
\end{enumerate}
}
\end{assumptions}
In the above assumptions, $\mathrm{C}$ and $s_0$ are auxiliary constants whose values are unimportant, but on the other hand $a, \mu, \sigma$ will turn out to encode information about fundamental quantities of the point process under consideration, like the mean and variance of the counting functions.

\begin{theorem}\label{thm:rigidity} {\bf (Rigidity)}
Suppose that $X$ is a locally finite point process on the real line which almost surely has a smallest particle and which is such that Assumptions \ref{assumptions} hold. Let us write $x_k$ for the $k$-th smallest particle of the process $X$, $k \geq 1$. Then, there are constants $c>0$ and $s_{0}>0$ such that for any small enough $\epsilon>0$ and for all $s \geq s_{0}$, 
\begin{align}\label{estimate of main thm}
\mathbb P\left( \sup_{k \geq \mu(2s)}\frac{|\mu(x_k)-k|}{\sigma^2(\mu^{-1}(k))} >  \sqrt{\frac{2}{a}(1+\epsilon)} \right)\leq \frac{c \, \mu(s)^{-\frac{\epsilon}{2}}}{\epsilon}.
\end{align}
In particular, for any $\epsilon>0$,
\begin{align*}
\lim_{k_0\to\infty}\mathbb P\left( \sup_{k \geq k_{0}}\frac{|\mu(x_k)-k|}{\sigma^2(\mu^{-1}(k))}\leq \sqrt{\frac{2}{a}(1+\epsilon)} \right)=1.
\end{align*}
\end{theorem}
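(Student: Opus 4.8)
The plan is to estimate, for each fixed index $k$, the probability that $\mu(x_k)$ deviates from $k$ by more than $A\,\sigma^2(\mu^{-1}(k))$ with $A:=\sqrt{\tfrac{2}{a}(1+\epsilon)}$, and then union bound over the integers $k\geq\mu(2s)$. The bridge to the counting function is the elementary equivalence $\{x_k\leq t\}=\{N(t)\geq k\}$, valid for all $t$ and all $k\geq 1$. Setting $g:=\sigma^2\circ\mu^{-1}$ --- well defined on $[\mu(s_0),+\infty)$ since by Assumption \ref{assumptions}(2) $\mu$ is a strictly increasing continuous bijection onto $[\mu(s_0),+\infty)$ --- the upper-tail event becomes $\{\mu(x_k)>k+A\,g(k)\}=\{x_k>t_k\}=\{N(t_k)\leq k-1\}$ with $t_k:=\mu^{-1}(k+A\,g(k))$, and the lower-tail event becomes $\{\mu(x_k)<k-A\,g(k)\}=\{N(r_k)\geq k\}$ with $r_k:=\mu^{-1}(k-A\,g(k))$. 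Since $g(k)\sim a\log k$ by Assumption \ref{assumptions}(3), we have $k\pm A\,g(k)\sim k\to+\infty$, so $t_k,r_k\in[s_0,+\infty)$ once $s$ is large; the points $x_k$ with $k\geq\mu(2s)$ lie in $[s_0,+\infty)$ with overwhelming probability, or alternatively one extends $\mu$ to a strictly increasing continuous function on $\R$, which changes nothing for $s\geq s_0$.

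For the tail bounds I would use Markov's inequality applied to the exponential moments (the Chernoff bound) together with \eqref{expmomentbound}. For the upper tail, for any $\gamma\in(0,M]$, $\mathbb P(N(t_k)\leq k-1)\leq e^{\gamma(k-1)}\,\mathbb E[e^{-\gamma N(t_k)}]\leq\mathrm{C}\,e^{-\gamma(1+A\,g(k))+\frac{\gamma^2}{2}\sigma^2(t_k)}$, using $\mu(t_k)=k+A\,g(k)$. Here Assumption \ref{assumptions}(3) is used decisively: concavity of $g$ gives $\sigma^2(t_k)=g(k+A\,g(k))\leq g(k)(1+A\,g'(k))$, and a concave function with $g(k)\sim a\log k$ necessarily has $g'(k)=\bigO(1/k)$, so $\sigma^2(t_k)\leq(1+o(1))g(k)$ uniformly as $k\to+\infty$. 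Substituting and minimizing over $\gamma$ --- the minimizer being at most $A=\sqrt{\tfrac{2}{a}(1+\epsilon)}$, which lies in $(0,M]$ precisely because $M>\sqrt{2/a}$ and $\epsilon$ is small (this being the role of the ``small enough $\epsilon$'' hypothesis) --- and bounding $e^{-\gamma}\leq 1$, one gets $\mathbb P(\mu(x_k)>k+A\,g(k))\leq\mathrm{C}\,e^{-\frac{1+\epsilon}{a}(1+o(1))\,g(k)}$. The lower tail is identical: replace $\gamma$ by $-\gamma$ in \eqref{expmomentbound} and use only $\sigma^2(r_k)=g(k-A\,g(k))\leq g(k)$ (monotonicity of $g$, no concavity needed), giving the same bound.

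Summing over $k\geq\lceil\mu(2s)\rceil$ concludes the argument. Using $g(k)\sim a\log k$ once more, for $k$ large the exponent obeys $\tfrac{1+\epsilon}{a}(1+o(1))g(k)\geq(1+\tfrac{\epsilon}{2})\log k$, so each tail probability is $\leq\mathrm{C}\,k^{-1-\epsilon/2}$; comparing the sum with $\int_{\mu(2s)-1}^{\infty}x^{-1-\epsilon/2}\,dx$ yields a bound of order $\tfrac{1}{\epsilon}(\mu(2s)-1)^{-\epsilon/2}\leq\tfrac{c}{\epsilon}\,\mu(s)^{-\epsilon/2}$, the factor $2$ in $\mu(2s)$ conveniently absorbing the integer-part correction since $\mu(2s)-\mu(s)\to+\infty$. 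This gives \eqref{estimate of main thm}. For the displayed limit, given $\epsilon>0$ it suffices to prove it for a smaller $\epsilon'\in(0,\epsilon)$; applying \eqref{estimate of main thm} with $\epsilon'$ and $s=\tfrac{1}{2}\mu^{-1}(k_0)$, and letting $k_0\to+\infty$ (so $\mu(s)\to+\infty$), gives the conclusion.

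I expect the main obstacle to be the uniform control of $\sigma^2$ at the shifted points $t_k$ and the tracking of the resulting $(1+o(1))$ through the Chernoff optimization: one must check that $g(k+A\,g(k))\leq(1+o(1))g(k)$ with $o(1)$ uniform over all $k\geq\mu(2s)$, and that the loss this produces in the exponent --- together with the loss inherent in $g(k)\sim a\log k$ --- still leaves an exponent strictly below $-\log k$, so that the series converges and delivers the stated bound $\tfrac{c}{\epsilon}\mu(s)^{-\epsilon/2}$. The Chernoff and union-bound steps, and the verification of the domains of definition, are routine.
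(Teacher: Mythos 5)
Your argument is correct, but it is organized differently from the paper's. The paper proceeds in two stages: Lemma \ref{lemma: A r eps} first controls the supremum of the normalized counting function, $\sup_{x>s}|N(x)-\mu(x)|/\sigma^{2}(x)$, by discretizing at the deterministic points $\kappa_{k}=\mu^{-1}(k)$, union bounding over $k$, and applying Markov's inequality to $e^{\pm\gamma N(\kappa_{k})}$ with \eqref{expmomentbound}; then the deterministic Lemma \ref{lemma: xk not far away from kappa k} converts that uniform counting-function bound into rigidity of the points, and this is where the hypotheses that $s\mu'(s)$ is weakly increasing and $s\mu'(s)/\sigma^{2}(s)\to\infty$ are used (to show $x_{k}>s$ for $k\geq\mu(2s)$), together with concavity of $\sigma^{2}\circ\mu^{-1}$. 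You instead invert first, per index: $\{\mu(x_{k})>k+A\,g(k)\}=\{N(t_{k})\leq k-1\}$ and $\{\mu(x_{k})<k-A\,g(k)\}\subseteq\{N(r_{k})\geq k\}$ with shifted thresholds $t_{k},r_{k}$, apply the Chernoff bound from \eqref{expmomentbound} at those thresholds, and union bound directly over the point indices $k\geq\mu(2s)$. The core estimates (Markov with the exponential moment bound, concavity to control $g(k\pm A g(k))$, the asymptotics $g(k)\sim a\log k$, and a sum-versus-integral comparison producing the factor $1/\epsilon$) are the same in both routes; what your route buys is a single-pass, more direct proof that barely uses condition (2) of Assumptions \ref{assumptions} (only to guarantee $\mu(2s)-\mu(s)\to+\infty$ and that the thresholds stay in $[s_{0},\infty)$), while the paper's route yields the standalone Lemma \ref{lemma: A r eps} — a uniform bound on the counting function at \emph{all} real locations $x>s$, not just at the random points — which is of independent interest and parallels the strategy of \cite{CFLW}. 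Two small inaccuracies in your write-up, neither of which affects validity: the Chernoff minimizer for the upper tail is $(1+Ag(k))/\sigma^{2}(t_{k})$, which need not be $\leq A$, but simply taking $\gamma=A\in(0,M]$ gives the same exponent up to $(1+o(1))$; and concavity with $g(k)\sim a\log k$ only yields $g'(k)=\bigO(\log k/k)$ (as in \eqref{lol5} of the paper), not $\bigO(1/k)$, which still suffices for $g(k+Ag(k))\leq(1+o(1))g(k)$. Finally, your $(1+o(1))$ losses mean the implicit threshold in $s_{0}$ (or the constant) mildly depends on $\epsilon$; this is the same degree of precision as in the paper's own proof, where the passage from $(\sigma^{2}\circ\mu^{-1})(x)\sim a\log x$ to the bound $x^{-a\gamma^{2}/2}$ involves an analogous absorption.
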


\begin{remark}\label{rem:close to optimal}
The above result derives an upper bound for the global rigidity via the asymptotics for the first exponential moment of the counting function. Estimates for the first exponential moment however do not allow to obtain a sharp lower bound for the global rigidity. For this, one would need more delicate information, like estimates for higher exponential moments, about more complicated averages in the point process, or about convergence of the counting function to a {Gaussian multiplicative chaos} measure, see e.g.\ \cite{ArguinBeliusBourgade, BWW, CFLW, LOS18}. In the point processes arising in random matrix theory for which optimal lower bounds for the global rigidity are available, see e.g.\ \cite{ArguinBeliusBourgade, ChhaibiMadauleNajnudel, CFLW, HolcombPaquette, PaquetteZeitouni}, it turns out that the upper bounds obtained via the first exponential moment are sharp, and therefore we believe that the upper bound in Theorem \ref{thm:rigidity} is, at least for the concrete examples considered below related to random matrix theory, close to optimal.
\end{remark}

\paragraph{Outline of the proof of Theorem \ref{thm:rigidity}.}
We will prove Theorem \ref{thm:rigidity} in Section \ref{Section: rigidity} using elementary probabilistic estimates.
The most delicate step in the proof consists of establishing a probabilistic bound for the supremum of the normalized counting function of the point process under consideration. For this, we need to use a discretization argument, a union bound, and Markov's inequality together with the exponential moment asymptotics from Assumptions \ref{assumptions}. Next, we prove that the bound on the supremum of the normalized counting function implies rigidity of the points, and we quantify 
the relevant probabilities to obtain Theorem \ref{thm:rigidity}. This method is similar to that of \cite[Section 4]{CFLW}.

\paragraph{Global rigidity for the Airy point process.}
The Airy point process is a determinantal point process on $\mathbb{R}$ whose correlation kernel is given by
\begin{align}\label{Airykernel}
\mathbb{K}^\Ai(x, y) = \frac{ \Ai(x) \Ai'(y) - \Ai'(x) \Ai(y) }{ x - y }, \qquad x,y \in \mathbb{R},
\end{align}
where $\Ai$ denotes the Airy function. This point process describes the largest eigenvalues in a large class of random matrix ensembles, and it has almost surely a largest particle. Upper bounds for the fluctuations of the points have been obtained recently in \cite{Zhong} and \cite[Theorem 1.6]{CorwinGhosal}. A sharper upper bound can be obtained by combining the exponential moment estimates from \cite{BothnerBuckingham} with Theorem \ref{thm:rigidity}.

\vspace{0.2cm}
The Airy point process satisfies Assumptions \ref{assumptions} only after considering the opposite points $x_j=-a_j$, where $a_1> a_2> \cdots$ are the random points in the Airy point process. We write $N^{\rm Ai}(s)$ for the number of points $x_j$ smaller than or equal to $s$.
It was proved in \cite{BothnerBuckingham} (see also \cite{BogClaeysIts, CharlierClaeys})
that
\begin{align*}
\mathbb E\big[e^{-2\pi\nu N^{\rm Ai}(s)}\big]=8^{\nu^{2}}G(1+i\nu)G(1-i\nu)e^{-2\pi\nu\mu(s)+2\pi^2\nu^2\sigma^2(s)}(1+\bigO(s^{-3/2}))
\end{align*}
as $s\to +\infty$ uniformly for $\nu$ in compact subsets of $\mathbb{R}$, where $G$ is Barnes' $G$ function, and where
\begin{align}\label{mu sigma2 airy}
\mu(s)=\frac{2}{3\pi}s^{3/2},\qquad \sigma^2(s)=\frac{3}{4\pi^2}\log s.
\end{align}
It is straightforward to verify from this that the Airy point process satisfies Assumptions \ref{assumptions} with 
\begin{align*}
M=10, \quad \gamma = 2\pi \nu, \quad \mathrm{C} = 2 \max_{\nu \in [-\frac{M}{2\pi},\frac{M}{2\pi}]} 8^{\nu^{2}}G(1+i\nu)G(1-i\nu), \quad a=\frac{1}{2\pi^2},
\end{align*}
and with $s_{0}$ a sufficiently large constant. Applying Theorem \ref{thm:rigidity}, we obtain the following result.

\begin{theorem}\label{thm:rigidity Airy}{\bf (Rigidity for the Airy process)}
Let $-x_1>-x_2>\ldots$ be the points in the Airy point process. There exists a constant $c>0$ such that
\begin{align*}
\mathbb P\left( \sup_{k\geq \mu(s)} \frac{|\frac{2}{3\pi}x_k^{3/2}-k |}{\log k} > \frac{\sqrt{1+\epsilon}}{\pi}\right) \leq \frac{c \, s^{-\frac{3 \epsilon}{4}}}{\epsilon},
\end{align*}
as $s\to +\infty$, uniformly for $\epsilon>0$ small. In particular, for any $\epsilon > 0$ we have
\begin{align*}
\lim_{k_0\to\infty}\mathbb P\left( \sup_{k \geq k_{0}} \frac{|\frac{2}{3\pi}x_k^{3/2}-k|}{\log k}\leq \frac{1}{\pi} + \epsilon\right)=1.
\end{align*}
\end{theorem}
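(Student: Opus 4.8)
}
The plan is to obtain Theorem~\ref{thm:rigidity Airy} as a direct corollary of the general estimate in Theorem~\ref{thm:rigidity}, applied to the reversed Airy point process with the explicit data listed just before the statement, and then to rewrite the conclusion in the stated form. First I would confirm that Assumptions~\ref{assumptions} hold: item~(1) is precisely the exponential moment expansion of \cite{BothnerBuckingham} quoted above, with $\gamma=2\pi\nu$ and the error factor $1+\bigO(s^{-3/2})$ absorbed into $\mathrm C$; item~(2) reduces to the elementary checks that, with $\mu(s)=\tfrac{2}{3\pi}s^{3/2}$ and $\sigma^2(s)=\tfrac{3}{4\pi^2}\log s$, the map $s\mapsto s\mu'(s)=\tfrac1\pi s^{3/2}$ is increasing and $s\mu'(s)/\sigma^2(s)=\tfrac{4\pi}{3}s^{3/2}/\log s\to\infty$; and item~(3) follows from the computation of $\sigma^2\circ\mu^{-1}$ below together with concavity of $\log$. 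Since $a=\tfrac{1}{2\pi^2}$ we have $\sqrt{2/a}=2\pi<10=M$ and $\sqrt{\tfrac{2}{a}(1+\epsilon)}=2\pi\sqrt{1+\epsilon}$.

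Next I would make the changes of variable explicit. From $\mu(s)=\tfrac{2}{3\pi}s^{3/2}$ one gets $\mu^{-1}(k)=\big(\tfrac{3\pi k}{2}\big)^{2/3}$, and hence
\[
\sigma^2\big(\mu^{-1}(k)\big)=\frac{3}{4\pi^2}\cdot\frac{2}{3}\log\frac{3\pi k}{2}=\frac{1}{2\pi^2}\log\frac{3\pi k}{2}\;\sim\;\frac{1}{2\pi^2}\log k,\qquad k\to\infty.
\]
To pass from the cut-off $k\geq\mu(2s)$ in Theorem~\ref{thm:rigidity} to the cut-off $k\geq\mu(s)$ wanted here, I would apply Theorem~\ref{thm:rigidity} with $s$ replaced by $s/2$: since $\mu\big(2\cdot\tfrac s2\big)=\mu(s)$ and $\mu(s/2)^{-\epsilon/2}=(3\pi\sqrt2\,)^{\epsilon/2}s^{-3\epsilon/4}$ with a prefactor bounded uniformly for $\epsilon\in(0,1]$, this yields a bound of the form $\mathbb P\big(\sup_{k\geq\mu(s)}|\mu(x_k)-k|/\sigma^2(\mu^{-1}(k))>2\pi\sqrt{1+\epsilon}\big)\leq c_1 s^{-3\epsilon/4}/\epsilon$; inserting $\mu(x_k)=\tfrac{2}{3\pi}x_k^{3/2}$ and the formula for $\sigma^2(\mu^{-1}(k))$ turns the event into $\sup_{k\geq\mu(s)}|\tfrac{2}{3\pi}x_k^{3/2}-k|/\log(3\pi k/2)>\tfrac{\sqrt{1+\epsilon}}{\pi}$.

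The only remaining step, which is purely technical, is to replace $\log(3\pi k/2)$ by $\log k$ in the denominator. For $k\geq\mu(s)$ one has $\log k/\log(3\pi k/2)\geq(1+\eta(s))^{-1}$ with $\eta(s)=\log(3\pi/2)/\log\mu(s)\to0$, so the event with $\log k$ and threshold $\tfrac{\sqrt{1+\epsilon}}{\pi}$ is contained in the one with $\log(3\pi k/2)$ and threshold $\tfrac{\sqrt{1+\epsilon'}}{\pi}$, where $1+\epsilon'=(1+\epsilon)(1+\eta(s))^{-2}$. Feeding $\epsilon'$ into the previous bound and writing $s^{-3\epsilon'/4}=s^{-3\epsilon/4}\,s^{3(\epsilon-\epsilon')/4}$, one has $\epsilon-\epsilon'=\bigO(\eta(s))$, and since $\log\mu(s)\sim\tfrac32\log s$ the quantity $\eta(s)\log s$ stays bounded; hence $s^{3(\epsilon-\epsilon')/4}$ is at most an absolute constant, the exponent $3\epsilon/4$ is preserved, and (treating separately the harmless regime in which the claimed right-hand side already exceeds~$1$) one also keeps $1/\epsilon'\leq 2/\epsilon$, so that collecting constants gives the first estimate of Theorem~\ref{thm:rigidity Airy}. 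For the ``in particular'' statement I would, given $\epsilon>0$, choose $\tilde\epsilon>0$ with $\tfrac{\sqrt{1+\tilde\epsilon}}{\pi}<\tfrac1\pi+\epsilon$, take $s$ with $\mu(s)=k_0$ so that $\sup_{k\geq k_0}=\sup_{k\geq\mu(s)}$, and let $k_0\to\infty$; since $c\,s^{-3\tilde\epsilon/4}/\tilde\epsilon\to0$ the complementary probability tends to~$1$. I expect no conceptual obstacle here --- all the analytic content is carried by Theorem~\ref{thm:rigidity} --- the only care needed being the bookkeeping in (i) aligning the two cut-offs by rescaling and (ii) verifying that the correction $\log(3\pi k/2)/\log k-1=o(1)$ feeds only into the constant $c$ and not into the exponent $3\epsilon/4$, uniformly for small $\epsilon$.
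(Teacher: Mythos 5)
Your proposal is correct and follows exactly the paper's route: verify Assumptions \ref{assumptions} from the Bothner--Buckingham exponential moment expansion (with $\gamma=2\pi\nu$, $a=\tfrac{1}{2\pi^2}$) and then specialize Theorem \ref{thm:rigidity}, which is all the paper does, leaving the bookkeeping implicit. The extra details you supply — rescaling $s\mapsto s/2$ to align the cut-offs $\mu(2s)$ versus $\mu(s)$, and absorbing the shift $\log(3\pi k/2)\to\log k$ into a slightly smaller $\epsilon'$ so that only the constant $c$, not the exponent $\tfrac{3\epsilon}{4}$, is affected — are accurate and complete.
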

\begin{remark}
This result implies that for any $\epsilon > 0$, the probability that
\begin{align}\label{Airy inequalities}
\left( \frac{3\pi}{2} \right)^{2/3} \left( k- \Big( \frac{1}{\pi}+\epsilon \Big)\log k \right)^{2/3} \leq x_{k} \leq \left( \frac{3\pi}{2} \right)^{2/3} \left( k+ \Big( \frac{1}{\pi}+\epsilon \Big)\log k \right)^{2/3} \quad \mbox{for all } k \geq k_{0}
\end{align}
tends to $1$ as $k_{0} \to +\infty$. Figure \ref{fig:AiryBessel rigidity} illustrates this and supports our belief that Theorem \ref{thm:rigidity Airy} is close to optimal (see also Remark \ref{rem:close to optimal}).
\end{remark}
\begin{figure}
\begin{center}
\begin{tikzpicture}
\node at (0,0) {\includegraphics[scale=0.3]{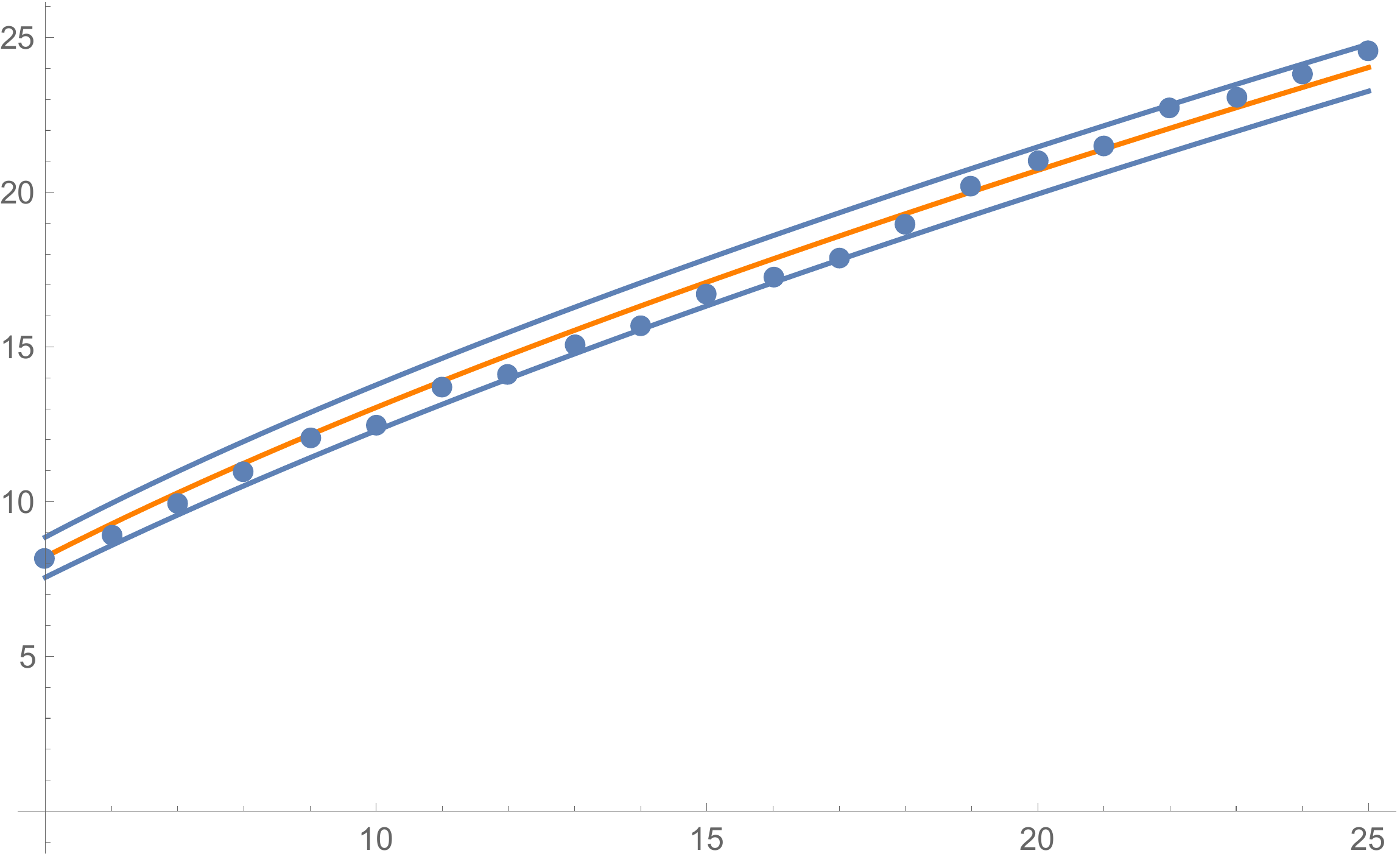}};
\node at (0,-2.47) {Airy};
\end{tikzpicture}
\begin{tikzpicture}
\node at (0,0) {\includegraphics[scale=0.35]{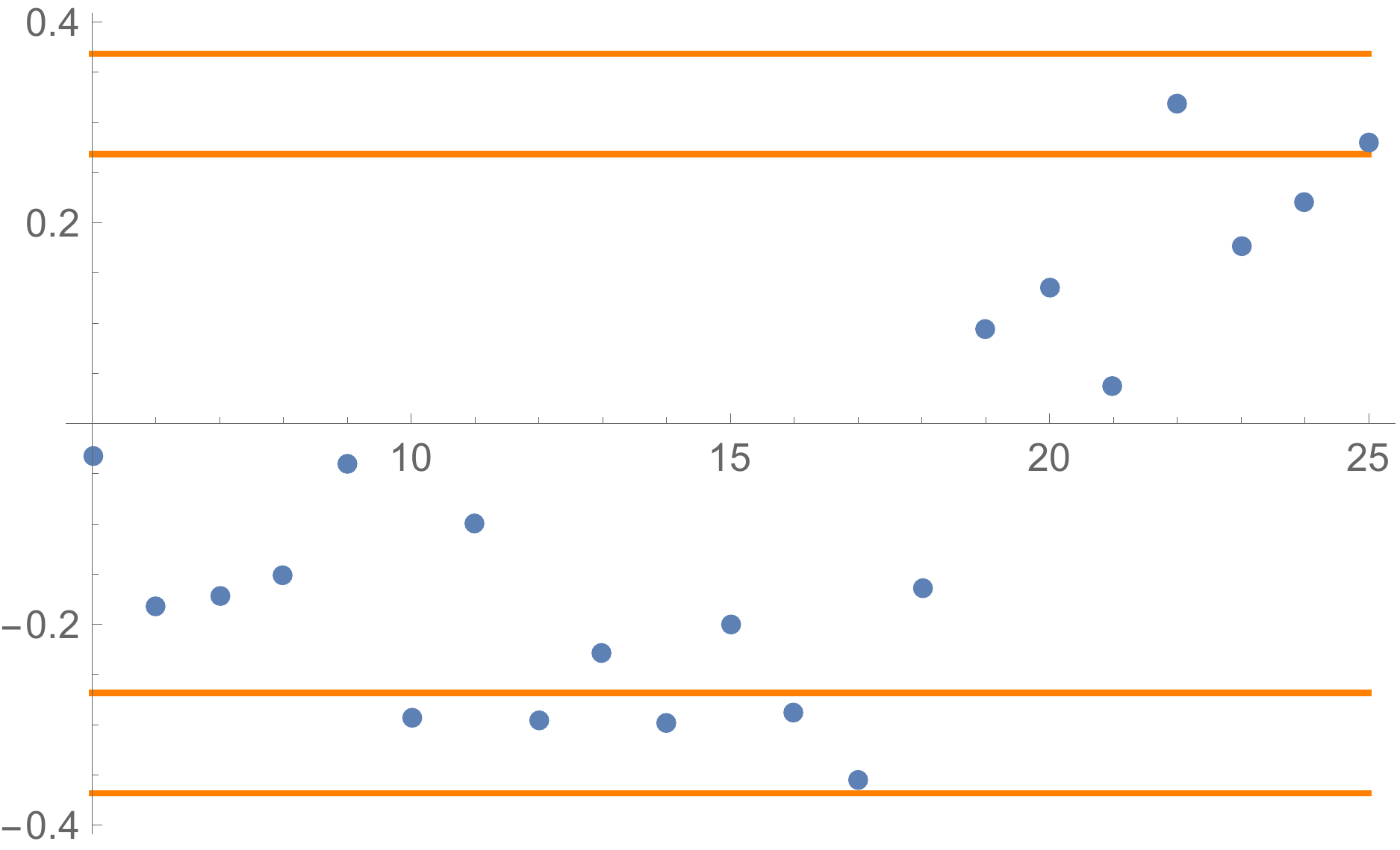}};
\node at (0,-2.5) {Normalized Airy};
\end{tikzpicture}
\end{center}
\caption{\label{fig:AiryBessel rigidity}\textit{Global ridigity for the Airy point process.
At the left, the blue dots represent the random points and have coordinates $(k,x_{k})$, the blue curves are the upper and lower bounds in \eqref{Airy inequalities} (with $\epsilon=0.05$), and the orange curve is parametrized by $\big(t,(\frac{3\pi}{2}t)^{2/3}\big)$. At the right, the blue dots represent the normalized random points with coordinates
$\Big(k,\frac{\frac{2}{3\pi}x_k^{3/2}-k}{\log k}\Big)$. The orange lines indicate the heights $\pm \frac{1}{\pi}\pm\epsilon$ with $\epsilon=0.05$. We observe the presence of points in the bands between the orange lines, indicating that Theorem \ref{thm:rigidity Airy} can be expected to be sharp. The points shown in the figure are not exactly the points in the Airy point process: they are sampled as re-scaled extreme eigenvalues of a large GUE matrix, which approximate the points in the Airy point process.
} }
\end{figure}

\paragraph{Global rigidity for the Bessel point process.}
The Bessel point process is another canonical point process from the theory of random matrices. It models the behavior of the eigenvalues near hard edges in a large class of random matrix ensembles, with the Laguerre-Wishart ensemble as most prominent example \cite{Forrester}. The Bessel point process is a determinantal point process on $(0,+\infty)$ whose correlation kernel is given by
\begin{align}\label{Besselkernel}
\mathbb K^{\rm Be}_\alpha(x,y)=\frac{\sqrt{y}J_\alpha(\sqrt{x})J_\alpha'(\sqrt{y})-\sqrt{x}J_\alpha(\sqrt{y})
J_\alpha'(\sqrt{x})}{2(x-y)},  \qquad x,y > 0,
\end{align}
where $\alpha>-1$ and $J_\alpha$ is the Bessel function of the first kind of order $\alpha$. To the best of our knowledge, there are no global rigidity upper bounds available in the literature for the Bessel process, but the corresponding exponential moment asymptotics have been obtained in \cite{Charlier}, and they allow us to apply Theorem \ref{thm:rigidity}. Let us write $N^{\rm Be}(s)$ for the number of points $x_j$ smaller than or equal to $s$ in the Bessel process. By \cite[eq (1.11)-(1.12)]{Charlier}, we have
\begin{align}\label{moment asymp Bessel}
\mathbb{E}\big[e^{-2\pi \nu N^{\mathrm{Be}}(s)}\big] = 4^{\nu^{2}}e^{\pi \nu \alpha} G(1+i\nu)G(1-i\nu) e^{-2\pi \nu \mu(s) + 2 \pi^{2} \nu^{2} \sigma^{2}(s)}(1+\bigO(s^{-1/2}\log s)),
\end{align}
as $s \to + \infty$ uniformly for $\nu$ in compact subsets of $\mathbb{R}$, with
\begin{align}\label{mu sigma2 Bessel}
\mu(s) = \frac{\sqrt{s}}{\pi}, \qquad \sigma^{2}(s) = \frac{1}{4\pi^{2}}\log s.
\end{align}
We verify from \eqref{moment asymp Bessel} that the Bessel point process satisfies Assumptions \ref{assumptions} with 
\begin{align*}
M=10, \quad \gamma = 2\pi \nu, \quad \mathrm{C} = 2 \max_{\nu \in [-\frac{M}{2\pi},\frac{M}{2\pi}]} 4^{\nu^{2}}e^{\pi \nu \alpha}G(1+i\nu)G(1-i\nu), \quad a=\frac{1}{2\pi^2},
\end{align*}
and with $s_{0}$ a sufficiently large constant. Applying Theorem \ref{thm:rigidity}, we obtain the following result.

\begin{theorem}\label{thm:rigidity Bessel}{\bf (Rigidity for the Bessel point process)}
Let $x_1<x_2<\ldots$ be the points in the Bessel point process.
There exists a constant $c>0$ such that
\begin{align*}
\mathbb P\left( \sup_{k\geq \mu(s)} \frac{|\frac{1}{\pi}x_k^{1/2}-k |}{\log k} > \frac{\sqrt{1+\epsilon}}{\pi}\right) \leq \frac{c \, s^{-\frac{\epsilon}{4}}}{\epsilon},
\end{align*}
as $s\to +\infty$, uniformly for $\epsilon>0$ small. In particular, for any $\epsilon > 0$ we have
\begin{align*}
\lim_{k_0\to\infty}\mathbb P\left( \sup_{k \geq k_{0}} \frac{|\frac{1}{\pi}x_k^{1/2}-k|}{\log k}\leq \frac{1}{\pi} + \epsilon\right)=1.
\end{align*}
\end{theorem}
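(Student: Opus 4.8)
\textbf{Proof proposal for Theorem \ref{thm:rigidity Bessel}.}
The plan is simply to verify that the Bessel point process fits the hypotheses of the general rigidity theorem, Theorem \ref{thm:rigidity}, and then to read off the stated conclusion by plugging in the explicit $\mu$ and $\sigma^2$ from \eqref{mu sigma2 Bessel}. The paper has essentially done the first part already: from \eqref{moment asymp Bessel} the first exponential moment satisfies, uniformly for $\nu$ in compact sets,
\begin{align*}
\mathbb{E}\big[e^{-2\pi \nu N^{\mathrm{Be}}(s)}\big] = 4^{\nu^{2}}e^{\pi \nu \alpha} G(1+i\nu)G(1-i\nu) e^{-2\pi \nu \mu(s) + 2 \pi^{2} \nu^{2} \sigma^{2}(s)}(1+\bigO(s^{-1/2}\log s)),
\end{align*}
so with $\gamma = 2\pi\nu$, $M = 10$, and $\mathrm{C} = 2 \max_{\nu \in [-\frac{M}{2\pi},\frac{M}{2\pi}]} 4^{\nu^{2}}e^{\pi \nu \alpha}G(1+i\nu)G(1-i\nu)$, the bound \eqref{expmomentbound} holds for all $s$ larger than some $s_0$, because the error factor $(1 + \bigO(s^{-1/2}\log s))$ is bounded by $2$ for $s$ large and $\mathrm{C}$ absorbs it. One checks $M > \sqrt{2/a}$ since $a = \frac{1}{2\pi^2}$ gives $\sqrt{2/a} = 2\pi < 10$. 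The Bessel process has almost surely a smallest point since its kernel $\mathbb{K}^{\mathrm{Be}}_\alpha$ acts on $(0,+\infty)$ and the number of points in any bounded interval $(0,s]$ has finite expectation $\int_0^s \mathbb{K}^{\mathrm{Be}}_\alpha(x,x)\,dx < \infty$; hence no accumulation at $0$ and a well-defined $x_1$.

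Next I would verify Assumptions \ref{assumptions}(2) and (3) for the explicit choices $\mu(s) = \sqrt{s}/\pi$ and $\sigma^2(s) = \frac{1}{4\pi^2}\log s$. Both are strictly increasing and differentiable on $[s_0, \infty)$ with limit $+\infty$. We have $s\mu'(s) = \frac{1}{2\pi}\sqrt{s}$, which is weakly increasing, and $\frac{s\mu'(s)}{\sigma^2(s)} = \frac{2\pi\sqrt{s}}{\log s} \to +\infty$. For (3), compute $\mu^{-1}(t) = \pi^2 t^2$, so $(\sigma^2\circ\mu^{-1})(t) = \frac{1}{4\pi^2}\log(\pi^2 t^2) = \frac{1}{2\pi^2}\log t + \frac{\log \pi}{2\pi^2}$; this is strictly concave in $t$ (its second derivative is $-\frac{1}{2\pi^2 t^2} < 0$) and is asymptotic to $\frac{1}{2\pi^2}\log t = a\log t$ with $a = \frac{1}{2\pi^2}$, matching the declared constant. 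Thus all three assumptions hold.

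Finally I would apply Theorem \ref{thm:rigidity} directly. The quantity inside the supremum in \eqref{estimate of main thm} becomes $\frac{|\mu(x_k) - k|}{\sigma^2(\mu^{-1}(k))}$; here $\mu(x_k) = \frac{1}{\pi}x_k^{1/2}$ and $\sigma^2(\mu^{-1}(k)) = \frac{1}{2\pi^2}\log k + \bigO(1) \sim a\log k$. Since $\sqrt{\frac{2}{a}(1+\epsilon)} = 2\pi\sqrt{1+\epsilon}$, the event $\frac{|\mu(x_k)-k|}{\sigma^2(\mu^{-1}(k))} > 2\pi\sqrt{1+\epsilon}$ is, after multiplying through by $\sigma^2(\mu^{-1}(k)) = \frac{1}{2\pi^2}\log k(1 + o(1))$, equivalent to $|\frac{1}{\pi}x_k^{1/2} - k| > \frac{\sqrt{1+\epsilon}}{\pi}\log k$ up to the $o(1)$ correction in the normalization, which can be absorbed by slightly adjusting $\epsilon$ and $s_0$. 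The probability bound $\frac{c\,\mu(s)^{-\epsilon/2}}{\epsilon}$ from \eqref{estimate of main thm} becomes $\frac{c\,(\sqrt{s}/\pi)^{-\epsilon/2}}{\epsilon} = \frac{c'\,s^{-\epsilon/4}}{\epsilon}$, and the condition $k \geq \mu(2s)$ can be replaced by $k \geq \mu(s)$ after relabeling $s$, giving the stated inequality. The limiting statement follows since $s^{-\epsilon/4} \to 0$ (and, for the $\frac1\pi + \epsilon$ formulation, because $\sqrt{1+\epsilon}/\pi \leq \frac1\pi + \epsilon$ for $\epsilon$ small). The only mild subtlety — and the step I would be most careful about — is the passage between the ``$k \geq \mu(s)$'' and ``$k \geq k_0$'' formulations and the absorption of the lower-order term in $\sigma^2(\mu^{-1}(k))$ into $\epsilon$; but since the claim is only an upper bound and $\epsilon$ is arbitrary, this is harmless.
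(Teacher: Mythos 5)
Your proposal is correct and follows essentially the same route as the paper: the paper likewise just checks Assumptions \ref{assumptions} for the Bessel process with $\gamma=2\pi\nu$, $M=10$, $a=\frac{1}{2\pi^{2}}$ and the stated $\mathrm{C}$, using the uniform asymptotics \eqref{moment asymp Bessel}, and then specializes Theorem \ref{thm:rigidity} with $\mu(s)=\sqrt{s}/\pi$, $\sigma^{2}(\mu^{-1}(k))=\frac{1}{2\pi^{2}}\log k+\frac{\log\pi}{2\pi^{2}}$. Your handling of the bookkeeping (replacing $\mu(2s)$ by $\mu(s)$ via relabeling, absorbing the constant in $\sigma^{2}(\mu^{-1}(k))$ and the factor $\pi^{\epsilon/2}$ into $\epsilon$ and $c$, noting the bound is trivial when $s^{-\epsilon/4}/\epsilon\gtrsim 1$) is exactly the translation the paper leaves implicit.
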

\begin{remark}
The above implies that for any $\epsilon > 0$, the probability that
\begin{align}\label{Bessel inequalities}
\pi^{2} \left( k- \Big( \frac{1}{\pi}+\epsilon \Big)\log k \right)^{2} \leq x_{k} \leq \pi^{2} \left( k+ \Big( \frac{1}{\pi}+\epsilon \Big)\log k \right)^{2} \quad \mbox{for all } k \geq k_{0}
\end{align}
tends to $1$ as $k_{0} \to +\infty$. Figure \ref{fig:Bessel rigidity} illustrates this. 
\end{remark}
\begin{figure}
\begin{center}
\begin{tikzpicture}
\node at (0,0) {\includegraphics[scale=0.3]{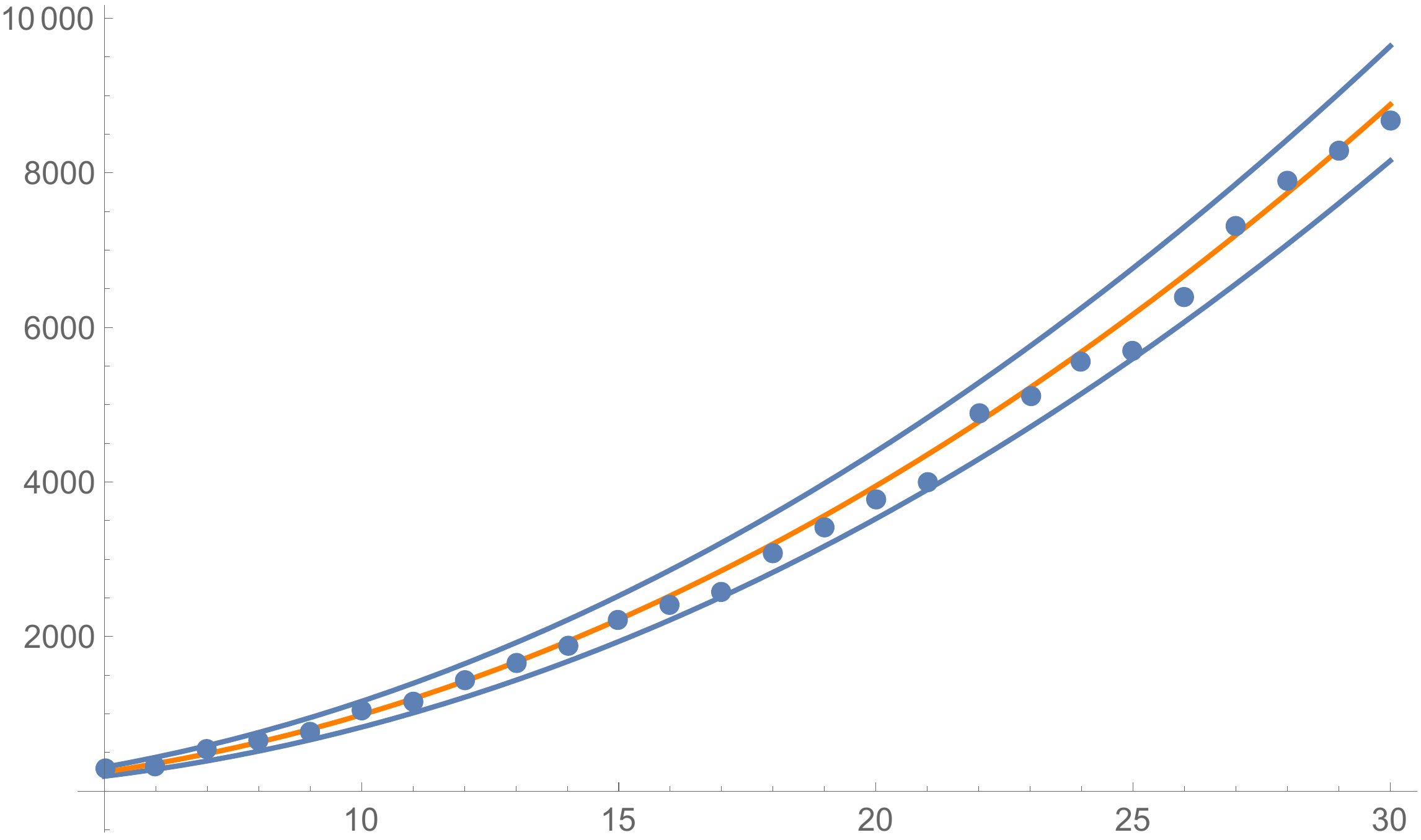}};
\node at (0,-2.47) {Bessel};
\end{tikzpicture}
\begin{tikzpicture}
\node at (0,0) {\includegraphics[scale=0.35]{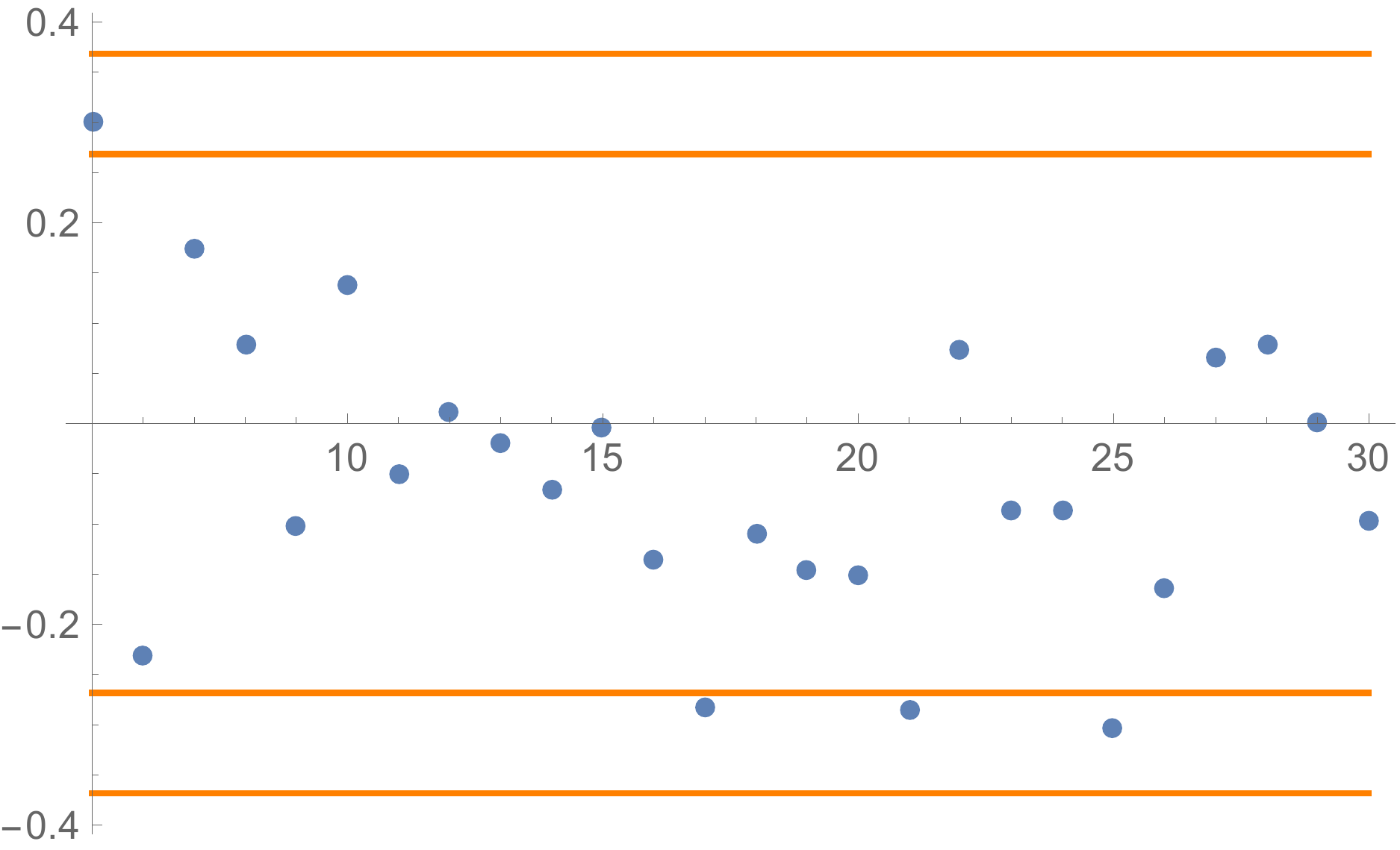}};
\node at (0,-2.5) {Normalized Bessel};
\end{tikzpicture}
\end{center}
\caption{\label{fig:Bessel rigidity}\textit{Global ridigity for the Bessel point process.
At the left, the blue dots represent the random points and have coordinates $(k,x_{k})$, the blue curves are the upper and lower bounds in \eqref{Bessel inequalities} (with $\epsilon=0.05$), and the orange curve is parametrized by $(t,\pi^2t^2)$. At the right, the blue dots represent the normalized random points with coordinates
$\Big(k,\frac{\frac{1}{\pi}x_k^{1/2}-k}{\log k}\Big)$. The orange lines indicate the heights $\pm \frac{1}{\pi}\pm\epsilon$ with $\epsilon=0.05$. We observe the presence of points in the bands between the orange lines, indicating that Theorem \ref{thm:rigidity Bessel} can be expected to be sharp. The points shown in the figure are not exactly the points in the Bessel point process: they are sampled as re-scaled extreme eigenvalues of a large Laguerre/Wishart random matrix, which approximate the points in the Bessel point process.}
} 
\end{figure}

\paragraph{Exponential moments and rigidity for the Wright's generalized Bessel process.}
The Wright's generalized Bessel process appears as the limiting point process at the hard edge of Muttalib-Borodin ensembles \cite{Borodin, ClaeysRomano, ForWang, KuijMolag, LiuWangZhang, Zhang, Zhang2}. This is a determinantal point process on $(0,+\infty)$ which depends on parameters $\theta > 0$ and $\alpha > -1$. The associated kernel is given by
\begin{align}\label{def Wright kernel}
\mathbb K^{\mathrm{Wr}}(x,y)=\theta \, (xy)^{\frac{\alpha}{2}} \int_0^1 J_{\frac{\alpha+1}{\theta},\frac{1}{\theta}}(xt) J_{\alpha+1,\theta}\big((yt)^\theta  \big)t^\alpha dt, \qquad x,y > 0,
\end{align}
where $J_{\alpha,\theta}$ is Wright's generalized Bessel function,
\begin{align*}
J_{a,b}(x)=\sum_{m=0}^\infty \frac{(-x)^m}{m!\Gamma(a+bm)}.
\end{align*}
If $\theta = 1$, this point process reduces (up to a rescaling) to the Bessel point process:
\begin{align}\label{Bessel and Wright kernel}
\mathbb K^{\mathrm{Wr}}(x,y) \Big|_{\theta = 1} = 4 \, \mathbb{K}^{\mathrm{Be}}(4x,4y), \qquad x,y > 0.
\end{align}
We obtain asymptotics for the exponential moments in this point process.
\begin{theorem}\label{thm:expmoments Wrights}
Let $\nu\in\mathbb R$ and let $N^{\mathrm{Wr}}(s)$ denote the number of points smaller than or equal to $s$ in the Wright's generalized Bessel process. As $s \to + \infty$, we have
\begin{align}\label{asymp gap thm explicit  Wr}
\mathbb{E} \big[ e^{-2\pi\nu N^{\mathrm{Wr}}(s)} \big]= C \, \exp \bigg(-2\pi\nu \mu(s) + 2 \pi^{2} \nu^{2} \sigma^{2}(s) + \bigO \big(s^{-\frac{\theta}{1+\theta}}\big) \bigg),
\end{align}
where the functions $\mu$ and $\sigma^{2}$ are given by
\begin{align}\label{mu and sigma2 wrights}
\mu(s) = \frac{1+\theta}{\pi} \, \theta^{- \frac{\theta}{1+\theta}} \cos \left( \frac{\pi}{2}\frac{1-\theta}{1+\theta} \right)s^{\frac{\theta}{1+\theta}}, \qquad \mbox{ and } \qquad  \sigma^{2}(s) = \frac{\theta}{2\pi^{2}(1+\theta)} \log s,
\end{align}
and the values of $C$ by
\begin{align}\label{constant for Wrights}
& C = \exp \bigg( \frac{\pi \nu(1-\theta + 2 \alpha)}{1+\theta} \bigg) \bigg[ 4 (1+\theta)\, \theta^{- \frac{\theta}{1+\theta}} \sin^{2} \left( \frac{\pi \, \theta}{1+\theta} \right) \bigg]^{\nu^{2}}G(1+i\nu)G(1-i\nu),
\end{align}
where $G$ is Barnes' $G$-function. Furthermore, the error term in \eqref{asymp gap thm explicit  Wr} is uniform for $\nu$ in compact subsets of $\mathbb R$.
\end{theorem}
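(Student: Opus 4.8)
\medskip
\noindent\textbf{Proof proposal.}
The plan is to write the exponential moment as a Fredholm determinant and to analyse it through a Riemann--Hilbert problem (RHP) and the Deift--Zhou steepest descent method, in the spirit of the treatment of the Bessel process in \cite{Charlier} but with the parameter $\theta$ forcing a new model problem at the origin. Since the Wright's generalized Bessel process is determinantal,
\[
\mathbb{E}\big[ e^{-2\pi\nu N^{\mathrm{Wr}}(s)} \big] = \det\big( 1 - (1-e^{-2\pi\nu})\, \mathbb{K}^{\mathrm{Wr}}|_{L^2(0,s)} \big),
\]
and, rescaling $(0,s)\to(0,1)$ and using that $\mathbb{K}^{\mathrm{Wr}}$ is an integrable kernel in the sense of Its--Izergin--Korepin--Slavnov (as exploited in \cite{KuijMolag}), this Fredholm determinant is encoded in a RHP $Y = Y(\cdot\,;s,\nu)$ with jump on $(0,1)$; the large parameter of the problem is $\rho := s^{\theta/(1+\theta)}$, which is also, up to a constant, the order of $\mu(s)$, and the error term in Theorem \ref{thm:expmoments Wrights} is exactly the $\bigO(\rho^{-1})$ one expects from such an analysis. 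In parallel I would record a differential identity expressing $\partial_\nu \log \mathbb{E}[e^{-2\pi\nu N^{\mathrm{Wr}}(s)}]$ in terms of $Y$; integrating it in $\nu$ from $0$, where the determinant equals $1$, to the target value will reconstruct the entire right-hand side of the asymptotics in Theorem \ref{thm:expmoments Wrights}, constant $C$ and Barnes $G$-factors included, provided the large-$s$ asymptotics of $Y$ are known uniformly for $\nu$ in compact sets.

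To obtain those asymptotics I would carry out the steepest descent analysis in the usual steps: (i) a $g$-function transformation $Y \to T$ normalising the behaviour at $\infty$ and opening lenses so that the jumps away from the endpoints are exponentially close to the identity as $\rho\to\infty$; the relevant exponent is built from the large-argument asymptotics of Wright's generalized Bessel function $J_{a,b}$, and the roots of unity governing that exponential behaviour are what produce the constants $\cos\!\big(\tfrac{\pi}{2}\tfrac{1-\theta}{1+\theta}\big)$ in \eqref{mu and sigma2 wrights} and $\sin^2\!\big(\tfrac{\pi\theta}{1+\theta}\big)$ in \eqref{constant for Wrights}; (ii) a global parametrix on $(0,1)$ carrying a Szeg\H{o}-type function with the exponent $\alpha$ and the Fisher--Hartwig exponent $i\nu$ coming from the symbol value $e^{-2\pi\nu}$ at the endpoint $1$; (iii) a local parametrix near the hard edge $0$ constructed out of Wright's generalized Bessel functions, which is the analogue of the classical Bessel parametrix and must reduce to it when $\theta=1$ in view of \eqref{Bessel and Wright kernel}; (iv) a local parametrix near the moving endpoint $1$ built from confluent hypergeometric functions, the standard model for a jump discontinuity in the symbol, and the source of the factor $G(1+i\nu)G(1-i\nu)$; and (v) a small-norm problem for the ratio $R$, giving $R = I + \bigO(\rho^{-1})$ uniformly for $\nu$ in compacts, with an expansion whose first correction term is also needed.

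Feeding the resulting asymptotics of $Y$ into the differential identity and integrating in $\nu$ then delivers the various pieces of the answer: the terms $-2\pi\nu\mu(s) + 2\pi^2\nu^2\sigma^2(s)$ from the leading and subleading orders of the $g$-function together with the global parametrix, the factor $\exp\!\big(\tfrac{\pi\nu(1-\theta+2\alpha)}{1+\theta}\big)$ and the bracketed power $\big[\,4(1+\theta)\theta^{-\theta/(1+\theta)}\sin^2(\pi\theta/(1+\theta))\,\big]^{\nu^2}$ in $C$ from the hard-edge parametrix and the Szeg\H{o} function, and $G(1+i\nu)G(1-i\nu)$ from integrating the $\psi$-function contribution of the confluent hypergeometric parametrix; the normalisation at $\nu=0$ fixes the constant of integration. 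As a consistency check, specialising to $\theta=1$ and using \eqref{Bessel and Wright kernel} must return the Bessel asymptotics \eqref{moment asymp Bessel}, and one verifies that \eqref{mu and sigma2 wrights}--\eqref{constant for Wrights} indeed do so.

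The main obstacle is the explicit evaluation of the constant $C$: this requires tracking \emph{all} $\rho$-independent multiplicative factors exactly rather than only leading orders, which in turn demands constructing and analysing the hard-edge parametrix directly from the Mellin--Barnes and series representations of Wright's functions (these do not satisfy a classical low-order ODE, unlike the Bessel function), verifying its matching with the global parametrix on a small circle with an $\bigO(\rho^{-1})$ error, and propagating its contribution through the small-norm step and the $\nu$-integration. Equally delicate is ensuring that every estimate is uniform for $\nu$ in compact subsets of $\mathbb{R}$, in particular near $\nu = 0$ where the Fisher--Hartwig singularity at the moving endpoint degenerates and the confluent hypergeometric parametrix must be handled with care.
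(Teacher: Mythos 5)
Your overall architecture (Fredholm determinant, RH problem, steepest descent, a differential identity in $\nu$ integrated from $\nu=0$) is reasonable in spirit, but the concrete route you propose breaks down at its foundation. You set up the RH problem directly from an IIKS structure for $\mathbb{K}^{\mathrm{Wr}}$ acting on $(0,s)$, but this kernel is \emph{not} known to be integrable for irrational $\theta$, and for rational $\theta=p/q$ it is integrable only with size $p+q$ \cite{Zhang}, so the associated RH problem is $(p+q)\times(p+q)$ rather than $2\times2$ and becomes intractable except for $\theta=1$. This is precisely the obstruction the paper circumvents: following \cite{ClaeysGirSti}, the operator is conjugated by a Mellin transform, using the double contour representation \eqref{limiting kernels1}, which produces a genuine $2\times 2$ RH problem for \emph{all} $\theta>0$ (and all parameter values in the Meijer-$G$ case), with jumps on two contours $\gamma,\tilde\gamma$ in the complex plane rather than on $(0,1)$. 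Consequently your steps (ii)--(iv) do not match any available technology: after the Mellin step there is no hard edge and no moving endpoint in the RH problem, the analysis localizes at two saddle points $b_1,b_2$ of a phase function, and the local parametrices are parabolic cylinder ones; in your formulation, the ``hard-edge parametrix built from Wright's generalized Bessel functions'' that you would need does not exist as a model problem, exactly because (as you note yourself) these functions do not satisfy a finite-order ODE with a usable Stokes structure. So the step on which your whole construction rests is not merely difficult, it is not known to be possible.

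There is a second, independent gap in how you determine the constant $C$. You plan to track all $s$-independent multiplicative factors explicitly through the parametrices and the $\nu$-integration, using $\theta=1$ only as a consistency check. Even in the paper's simpler $2\times2$ setting, the $t$-differential identity \eqref{diff identity with t} produces regularized integrals of parabolic cylinder functions that are not evaluated directly; the decisive observation is that this unknown contribution depends only on $t$ and on no other parameter, so it can be \emph{identified} by specializing to $\theta=1$ ($\alpha=0$) and comparing with the known Bessel asymptotics \eqref{moment asymp Bessel} from \cite{Charlier}, via \eqref{Bessel and Wright kernel}. Without such a comparison argument (or a genuinely new explicit computation you have not supplied), your evaluation of $C$ is unsupported. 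A smaller issue: the paper needs \emph{both} differential identities, the one in $s$ \eqref{diff identity with s} to reach the sharp error $\bigO(s^{-\theta/(1+\theta)})$ and the one in $t$ to get $C$; with only a $\nu$-identity, even a successful analysis would be expected to yield the weaker error $\bigO(s^{-\theta/(2(1+\theta))})$, not the bound stated in the theorem.
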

\begin{remark}
By setting $\theta = 1$ in \eqref{asymp gap thm explicit Wr} and then applying the rescaling $s \mapsto \frac{s}{4}$, we recover the asymptotics \eqref{moment asymp Bessel} for the Bessel point process, which is consistent with \eqref{Bessel and Wright kernel}. In fact, we even slightly improved the error term: from \eqref{asymp gap thm explicit Wr} with $\theta = 1$, it follows that the error term $\bigO(s^{-1/2}\log s)$ in \eqref{moment asymp Bessel} is $\bigO(s^{-1/2})$.
\end{remark}
\begin{remark}
It follows from \cite[page 4]{Borodin} that the left-hand side of \eqref{asymp gap thm explicit  Wr} is invariant under the following changes of the parameters:
\begin{align}\label{symmetry}
s \mapsto s^{\theta}, \qquad \theta \mapsto \frac{1}{\theta}, \quad \mbox{ and } \quad \alpha \mapsto \alpha^{\star}:=\frac{1+\alpha}{\theta}-1.
\end{align} 
It follows that the constant $C$ and the functions $\mu$ and $\sigma^{2}$ must obey the following symmetry relations for any $\theta > 0$ and $\alpha > -1$:
\begin{align}\nonumber
& 
\mu(s,\theta,\alpha) = \mu(s^{\theta},\tfrac{1}{\theta},\alpha^{\star}), \quad 
\sigma^{2}(s,\theta,\alpha) = \sigma^{2}(s^{\theta},\tfrac{1}{\theta},\alpha^{\star}), \quad C(\theta,\alpha) = C(\tfrac{1}{\theta},\alpha^{\star}),
\end{align}
where we have indicated the dependence of the quantities on $\theta$ and $\alpha$ explicitly. These identities can be verified directly from \eqref{constant for Wrights} and provide a consistency check of our results.
\end{remark}
\begin{remark}\label{remark:existence}
It is not entirely obvious that the kernel \eqref{def Wright kernel} defines a point process. 
To see this, we note first that the kernel \eqref{def Wright kernel} arises as the large $n$ limit of the correlation kernel $K_n$ in the Muttalib-Borodin Laguerre ensemble with $n$ particles (see \cite{Borodin}). Next, from \cite{Lenard} and \cite[Theorem 1]{Soshnikov}, we know that a kernel defines a point process if and only if it generates locally integrable correlation functions which are symmetric under permutations of variables and satisfy a certain positivity condition. Since $K_n$ must satisfy the symmetry and positivity conditions, and since these conditions are closed under taking limits, we can conclude that \eqref{def Wright kernel} also defines a point process. The uniqueness of the point process follows from the fact that the process is characterized by its Laplace transform $\mathbb E e^{-\sum_{k=1}^\infty f(x_k)}$ for continuous compactly supported functions $f$, where $x_1, x_2, \ldots$ are the points in the process.
For a determinantal point process with a kernel $K$ which is trace-class on any compact, the Laplace transform is characterized by $K$ since
\[\mathbb E e^{-\sum_{k=1}^\infty f(x_k)} = \det\left(1-(1-e^{-f})K\right).\]
The Fredholm determinant at the right is defined since the trace-norm of $(1-e^{-f})K$ is bounded by $\|1-e^{-f}\|_\infty$  times the trace-norm of $K$ restricted to the support of $f$. Hence the process defined by $K$ is unique.
\end{remark}
Theorem \ref{thm:expmoments Wrights} has the following immediate consequence.
\begin{corollary}\label{coro:counting function asymptotics Wrights}
As $s \to + \infty$, we have
\begin{align}
& \mathbb{E}[N^{\mathrm{Wr}}(s)] = \mu(s)- 
\frac{1-\theta + 2 \alpha}{2(1+\theta)}+ \bigO(s^{-\frac{\theta}{1+\theta}}), \label{asymp E N Wr} \\
& {\rm Var}[N^{\mathrm{Wr}}(s)] = \sigma^{2}(s) + 
\frac{1}{2\pi^{2}}\log \left[4 (1+\theta)\, \theta^{- \frac{\theta}{1+\theta}} \sin^{2} \left( \frac{\pi \, \theta}{1+\theta} \right) \right] +
\frac{1+\gamma_E}{2\pi^{2}}+
 \bigO(s^{-\frac{\theta}{1+\theta}}), \label{asymp Var N Wr}
\end{align}
where $\gamma_E\approx 0.5772$ is Euler's constant and the functions $\mu$ and $\sigma^{2}$ are given by \eqref{mu and sigma2 wrights}.
\end{corollary}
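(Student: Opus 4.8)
The plan is to extract \eqref{asymp E N Wr} and \eqref{asymp Var N Wr} from Theorem \ref{thm:expmoments Wrights} by differentiating the cumulant generating function at $\nu=0$, after justifying that this differentiation is legitimate. Concretely, set $\gamma=2\pi\nu$ and define $F_s(\nu):=\log\mathbb E[e^{-2\pi\nu N^{\mathrm{Wr}}(s)}]$; then $\mathbb E[N^{\mathrm{Wr}}(s)]=-\tfrac{1}{2\pi}F_s'(0)$ and $\mathrm{Var}[N^{\mathrm{Wr}}(s)]=\tfrac{1}{4\pi^2}F_s''(0)$, these being the first two cumulants of $-2\pi\nu N^{\mathrm{Wr}}(s)$ evaluated through $F_s$. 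Theorem \ref{thm:expmoments Wrights} gives, uniformly for $\nu$ in a fixed compact neighbourhood of $0$,
\begin{align*}
F_s(\nu) = \log C(\nu) - 2\pi\nu\,\mu(s) + 2\pi^2\nu^2\sigma^2(s) + \bigO\big(s^{-\frac{\theta}{1+\theta}}\big),
\end{align*}
where $\log C(\nu) = \tfrac{\pi\nu(1-\theta+2\alpha)}{1+\theta} + \nu^2\log\big[4(1+\theta)\theta^{-\frac{\theta}{1+\theta}}\sin^2(\tfrac{\pi\theta}{1+\theta})\big] + \log G(1+i\nu) + \log G(1-i\nu)$.

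The first substantive step is to upgrade the uniform estimate on $F_s$ to a uniform estimate on $F_s'$ and $F_s''$ at $\nu=0$. Since $\nu\mapsto\mathbb E[e^{-2\pi\nu N^{\mathrm{Wr}}(s)}]$ is analytic in a complex neighbourhood of $0$ (the moment generating function of a counting variable of a point process with exponential moments is entire in the relevant strip, or at least analytic on a disc where it is finite by Theorem \ref{thm:expmoments Wrights}), and the main-term expression on the right-hand side is analytic in $\nu$ with the error $\bigO(s^{-\frac{\theta}{1+\theta}})$ holding uniformly on a fixed closed disc $|\nu|\le r$, Cauchy's integral formula for derivatives yields that the same $\bigO(s^{-\frac{\theta}{1+\theta}})$ bound holds for the difference of the $j$-th $\nu$-derivatives at $\nu=0$, for $j=1,2$, with an implied constant depending only on $r$ and $j$. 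Thus $F_s'(0) = \tfrac{d}{d\nu}\big|_0\big(\log C(\nu)-2\pi\nu\mu(s)\big) + \bigO(s^{-\frac{\theta}{1+\theta}})$ and similarly for $F_s''(0)$; the linear-in-$s$ term $2\pi^2\nu^2\sigma^2(s)$ contributes nothing to $F_s'(0)$ and contributes $4\pi^2\sigma^2(s)$ to $F_s''(0)$.

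The remaining step is the purely computational one of evaluating the $\nu$-derivatives of $\log C(\nu)$ at $\nu=0$. For the mean, $\tfrac{d}{d\nu}\big|_0\log C(\nu) = \tfrac{\pi(1-\theta+2\alpha)}{1+\theta}$ since the quadratic term and the Barnes-$G$ contribution are even in $\nu$ (using $\overline{G(1+i\nu)}=G(1-i\nu)$ for real $\nu$, so $\log G(1+i\nu)+\log G(1-i\nu)$ is real and even) and so have vanishing first derivative; hence $\mathbb E[N^{\mathrm{Wr}}(s)] = \mu(s) - \tfrac{1-\theta+2\alpha}{2(1+\theta)} + \bigO(s^{-\frac{\theta}{1+\theta}})$, which is \eqref{asymp E N Wr}. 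For the variance, one needs $\tfrac{d^2}{d\nu^2}\big|_0\log C(\nu)$: the quadratic term contributes $2\log\big[4(1+\theta)\theta^{-\frac{\theta}{1+\theta}}\sin^2(\tfrac{\pi\theta}{1+\theta})\big]$, and the Barnes-$G$ term contributes $\tfrac{d^2}{d\nu^2}\big|_0\big(\log G(1+i\nu)+\log G(1-i\nu)\big)$, which is a standard evaluation: using $\log G(1+z) = \tfrac{z}{2}\log(2\pi) - \tfrac{z(z+1)}{2} - \tfrac{\gamma_E z^2}{2} + \sum_{n\ge 3}(-1)^{n-1}\tfrac{\zeta(n-1)}{n}z^n$, one gets $\tfrac{d^2}{dz^2}\big|_0\log G(1+z) = -(1+\gamma_E)$, so with $z=\pm i\nu$ the sum of the two terms has second $\nu$-derivative $2(1+\gamma_E)$. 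Combining, $F_s''(0) = 4\pi^2\sigma^2(s) + 2\log[\cdots] + 2(1+\gamma_E) + \bigO(s^{-\frac{\theta}{1+\theta}})$, and dividing by $4\pi^2$ gives \eqref{asymp Var N Wr}. The main obstacle is the first step — rigorously transferring the uniform asymptotic estimate from $F_s$ to its derivatives — but this is handled cleanly by Cauchy estimates once one records that the error term in Theorem \ref{thm:expmoments Wrights} is uniform on a fixed complex disc around $\nu=0$, not merely on a real interval; the rest is bookkeeping with known Taylor coefficients of $\log G$.
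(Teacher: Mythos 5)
Your argument is essentially the paper's own proof: Corollary \ref{coro:counting function asymptotics Wrights} is obtained there by expanding the asymptotics of Theorem \ref{thm:expmoments Wrights} as $\nu\to 0$ and identifying the coefficients of $\nu$ and $\nu^{2}$ against $\mathbb{E}\big[e^{-2\pi\nu N^{\mathrm{Wr}}(s)}\big]=1-2\pi\nu\,\mathbb{E}[N^{\mathrm{Wr}}(s)]+2\pi^{2}\nu^{2}\,\mathbb{E}[(N^{\mathrm{Wr}}(s))^{2}]+\bigO(\nu^{3})$, and your bookkeeping (odd/even structure of $\log C$, and $\tfrac{d^{2}}{dz^{2}}\log G(1+z)\big|_{z=0}=-(1+\gamma_E)$) reproduces exactly the constants in \eqref{asymp E N Wr}--\eqref{asymp Var N Wr}, whether one matches moments as the paper does or cumulants as you do. The one point to flag is your justification step: Theorem \ref{thm:expmoments Wrights} asserts uniformity of the error only for $\nu$ in compact subsets of $\mathbb{R}$, so the uniformity ``on a fixed complex disc'' that your Cauchy estimates require is not something one can merely record---it would need the Riemann--Hilbert analysis to be extended to complex $t$---though the paper's own proof is no more rigorous here, since a bound uniform on a real interval does not by itself control $\nu$-derivatives of the error term.
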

\begin{proof}
The asymptotics \eqref{asymp gap thm explicit Wr} are valid uniformly for $\nu$ in compact subsets of $\mathbb{R}$. Hence, we obtain \eqref{asymp E N Wr}--\eqref{asymp Var N Wr} by first expanding \eqref{asymp gap thm explicit Wr} as $\nu \to 0$, and then identifying the power of $\nu$ using
\begin{align*}
\mathbb{E} \big[ e^{-2\pi\nu N^{\mathrm{Wr}}(s)} \big] = 1 - 2 \pi \nu \mathbb{E}[N^{\mathrm{Wr}}(s)] + 2 \pi^{2} \nu^{2} \mathbb{E}[(N^{\mathrm{Wr}}(s))^{2}] + \bigO(\nu^{3}), \qquad \mbox{as } \nu \to 0.
\end{align*}
\end{proof}
\begin{remark}
Setting $\theta = 1$ in \eqref{asymp E N Wr}--\eqref{asymp Var N Wr} and then rescaling $s \mapsto \frac{s}{4}$, we recover the asymptotic formulas \cite[eq (1.14)]{Charlier} with improved error terms.
\end{remark}
We verify from Theorem \ref{thm:expmoments Wrights} that the Wright's generalized Bessel process satisfies Assumptions \ref{assumptions} with $s_{0}$ a sufficiently large constant, and
\begin{align*}
M=10, \quad \gamma = 2\pi \nu, \quad \mathrm{C} = 2 \max_{\nu \in [-\frac{M}{2\pi},\frac{M}{2\pi}]} C(\nu), \quad a=\frac{1}{2\pi^2},
\end{align*}
where $C = C(\nu)$ is given by \eqref{constant for Wrights}. We obtain the following global rigidity result by combining Theorem \ref{thm:rigidity} with Theorem \ref{thm:expmoments Wrights}.

\begin{theorem}{\bf (Rigidity for the Wright's generalized Bessel process)}\label{thm:rigidityWright}
Let $x_1<x_2<\ldots$ be the points in the Wright's generalized Bessel point process. There exists a constant $c>0$ such that
\begin{align*}
\mathbb P\left( \sup_{k\geq \mu(s)} \frac{\big|\frac{1+\theta}{\pi} \, \theta^{- \frac{\theta}{1+\theta}} \cos \left( \frac{\pi}{2}\frac{1-\theta}{1+\theta} \right)x_{k}^{\frac{\theta}{1+\theta}}-k \big|}{\log k} > \frac{\sqrt{1+\epsilon}}{\pi}\right) \leq \frac{c \, s^{-\frac{\theta \, \epsilon}{2(1+\theta)}}}{\epsilon},
\end{align*}
as $s\to +\infty$, uniformly for $\epsilon>0$ small. In particular, for any $\epsilon > 0$ we have
\begin{align*}
\lim_{k_0\to\infty}\mathbb P\left( \sup_{k \geq k_{0}} \frac{\big|\frac{1+\theta}{\pi} \, \theta^{- \frac{\theta}{1+\theta}} \cos \left( \frac{\pi}{2}\frac{1-\theta}{1+\theta} \right)x_{k}^{\frac{\theta}{1+\theta}}-k \big|}{\log k}\leq \frac{1}{\pi} + \epsilon\right)=1.
\end{align*}
\end{theorem}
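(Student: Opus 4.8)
The plan is to deduce the theorem directly from Theorem~\ref{thm:rigidity} applied to the Wright's generalized Bessel process, combined with the exponential moment asymptotics of Theorem~\ref{thm:expmoments Wrights}, following the same scheme used for Theorems~\ref{thm:rigidity Airy} and \ref{thm:rigidity Bessel}. First I would record that Assumptions~\ref{assumptions} hold for this process, with $a=\tfrac{1}{2\pi^{2}}$ and the remaining parameters as listed in the text above. Condition~(1) follows from Theorem~\ref{thm:expmoments Wrights}: restricting $\gamma=2\pi\nu$ to the compact set $[-M,M]$, bounding $C(\nu)$ by its maximum and the error term $\bigO(s^{-\theta/(1+\theta)})$ by a constant for $s$ large, and using $2\pi^{2}\nu^{2}\sigma^{2}(s)=\tfrac{\gamma^{2}}{2}\sigma^{2}(s)$, one obtains \eqref{expmomentbound}. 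Condition~(2) is immediate from the explicit forms in \eqref{mu and sigma2 wrights}, once one notes that the leading constant of $\mu$ is positive since $\tfrac{1-\theta}{1+\theta}\in(-1,1)$ keeps the cosine's argument inside $(-\tfrac{\pi}{2},\tfrac{\pi}{2})$, and that $s\mu'(s)=\tfrac{\theta}{1+\theta}\mu(s)$ is increasing with $\tfrac{s\mu'(s)}{\sigma^{2}(s)}\to+\infty$. Condition~(3) holds because inverting $\mu$ gives $\sigma^{2}(\mu^{-1}(y))=\tfrac{1}{2\pi^{2}}(\log y+c_{0})$, where $c_{0}:=-\log\!\big[\tfrac{1+\theta}{\pi}\theta^{-\theta/(1+\theta)}\cos(\tfrac{\pi}{2}\tfrac{1-\theta}{1+\theta})\big]$; this is strictly concave in $y$ and asymptotic to $a\log y$ with $a=\tfrac{1}{2\pi^{2}}$. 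Since $\sqrt{\tfrac{2}{a}(1+\epsilon)}=2\pi\sqrt{1+\epsilon}$, Theorem~\ref{thm:rigidity}, applied with $s$ replaced by $s/2$ so that the supremum runs over $k\geq\mu(s)$, then yields for $\epsilon>0$ small and $s$ large
\begin{align*}
\mathbb{P}\Big(\sup_{k\geq\mu(s)}\tfrac{|\mu(x_{k})-k|}{\sigma^{2}(\mu^{-1}(k))}>2\pi\sqrt{1+\epsilon}\Big)\leq \frac{c\,\mu(s/2)^{-\epsilon/2}}{\epsilon}.
\end{align*}

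Next I would translate this into the stated form. By \eqref{mu and sigma2 wrights}, $\mu(x_{k})=\tfrac{1+\theta}{\pi}\theta^{-\theta/(1+\theta)}\cos(\tfrac{\pi}{2}\tfrac{1-\theta}{1+\theta})x_{k}^{\theta/(1+\theta)}$ is precisely the numerator appearing in the theorem, while $\sigma^{2}(\mu^{-1}(k))=\tfrac{1}{2\pi^{2}}(\log k+c_{0})$. For $k\geq\mu(s)$ we have $\log k\geq\log\mu(s)\to+\infty$, hence $\tfrac{\log k}{\log k+c_{0}}\geq 1-\eta(s)$ uniformly in $k$ with $\eta(s):=\max\{0,\tfrac{c_{0}}{\log\mu(s)+c_{0}}\}\to 0$. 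Therefore, on the event $\tfrac{|\mu(x_{k})-k|}{\log k}>\tfrac{\sqrt{1+\epsilon}}{\pi}$ one has $\tfrac{|\mu(x_{k})-k|}{\sigma^{2}(\mu^{-1}(k))}>2\pi(1-\eta(s))\sqrt{1+\epsilon}\geq 2\pi\sqrt{1+\epsilon^{*}}$ with $\epsilon^{*}:=\epsilon-\bigO(\eta(s))$, which is positive once $s$ is large relative to $\epsilon$. Substituting $\epsilon^{*}$ into the displayed bound and using that $\mu(s/2)^{-\epsilon^{*}/2}=\mu(s/2)^{-\epsilon/2}\exp\!\big(\bigO(\eta(s)\log\mu(s/2))\big)$ with $\eta(s)\log\mu(s/2)$ bounded, $\mu(s/2)^{-\epsilon/2}=2^{\theta\epsilon/(2(1+\theta))}\mu(s)^{-\epsilon/2}$, $\mu(s)^{-\epsilon/2}=(\text{const})^{-\epsilon/2}s^{-\theta\epsilon/(2(1+\theta))}$ with the constant bounded away from $0$ and $\infty$ for small $\epsilon$, and $\tfrac{1}{\epsilon^{*}}\leq\tfrac{2}{\epsilon}$, all auxiliary factors stay bounded and the right-hand side becomes $\leq \tfrac{c\,s^{-\theta\epsilon/(2(1+\theta))}}{\epsilon}$, which is the first assertion of the theorem.

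The limiting statement I would then obtain by fixing $\epsilon>0$, choosing $\epsilon'>0$ small enough that $\tfrac{1}{\pi}+\epsilon>\tfrac{\sqrt{1+\epsilon'}}{\pi}$, applying the first part with $s=\mu^{-1}(k_{0})$ (so that $\mu(s)=k_{0}$ and the supremum runs over $k\geq k_{0}$), and letting $k_{0}\to+\infty$: the bound $\tfrac{c\,s^{-\theta\epsilon'/(2(1+\theta))}}{\epsilon'}$ tends to $0$, hence the complementary probability tends to $1$.

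I do not expect a genuine obstacle, since the two substantial ingredients---the general rigidity estimate of Theorem~\ref{thm:rigidity} and the exponential moment asymptotics of Theorem~\ref{thm:expmoments Wrights}---are already available. The only places calling for a little care are the verification of the strict concavity of $\sigma^{2}\circ\mu^{-1}$ (immediate once one writes it as an affine function of $\log y$) and, in the translation step, the bookkeeping of the additive constant $c_{0}$ in $\sigma^{2}(\mu^{-1}(k))=\tfrac{1}{2\pi^{2}}(\log k+c_{0})$ together with the mild dependence on $\epsilon$ of how large $s$ must be chosen; both only perturb the final estimate by an overall bounded multiplicative constant, which is absorbed into $c$.
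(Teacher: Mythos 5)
Your proposal is correct and takes essentially the same route as the paper: one verifies Assumptions \ref{assumptions} (with $a=\tfrac{1}{2\pi^{2}}$, $\gamma=2\pi\nu$) from the exponential moment asymptotics of Theorem \ref{thm:expmoments Wrights} and then applies Theorem \ref{thm:rigidity}. The extra bookkeeping you supply (replacing $s$ by $s/2$ to pass from $k\geq\mu(2s)$ to $k\geq\mu(s)$, and absorbing the additive constant in $\sigma^{2}(\mu^{-1}(k))=\tfrac{1}{2\pi^{2}}(\log k+c_{0})$ into the constant $c$) is exactly the translation the paper leaves implicit, and it is handled adequately.
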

\begin{remark}
This result implies that for any $\epsilon > 0$, the probability that
\begin{align}\label{eq:boundsMB}
\left[ \frac{\pi}{1+\theta}\frac{\theta^{\frac{\theta}{1+\theta}}}{\cos ( \frac{\pi}{2}\frac{1-\theta}{1+\theta} )}  \left( k- \Big( \frac{1}{\pi}+\epsilon \Big)\log k \right)\right]^{\frac{1+\theta}{\theta}} \leq x_{k} \leq \left[ \frac{\pi}{1+\theta}\frac{\theta^{\frac{\theta}{1+\theta}}}{\cos ( \frac{\pi}{2}\frac{1-\theta}{1+\theta} )}  \left( k+ \Big( \frac{1}{\pi}+\epsilon \Big)\log k \right)\right]^{\frac{1+\theta}{\theta}}
\end{align}
for all $k \geq k_{0}$, tends to $1$ as $k_{0} \to +\infty$. We verify this numerically in Figure \ref{fig:WrightMeigerG rigidity} (left) for $\epsilon = 0.05$ and different values of $\theta$. 
\end{remark}
\begin{figure}
\begin{center}
\begin{tikzpicture}
\node at (0,0) {\includegraphics[scale=0.3]{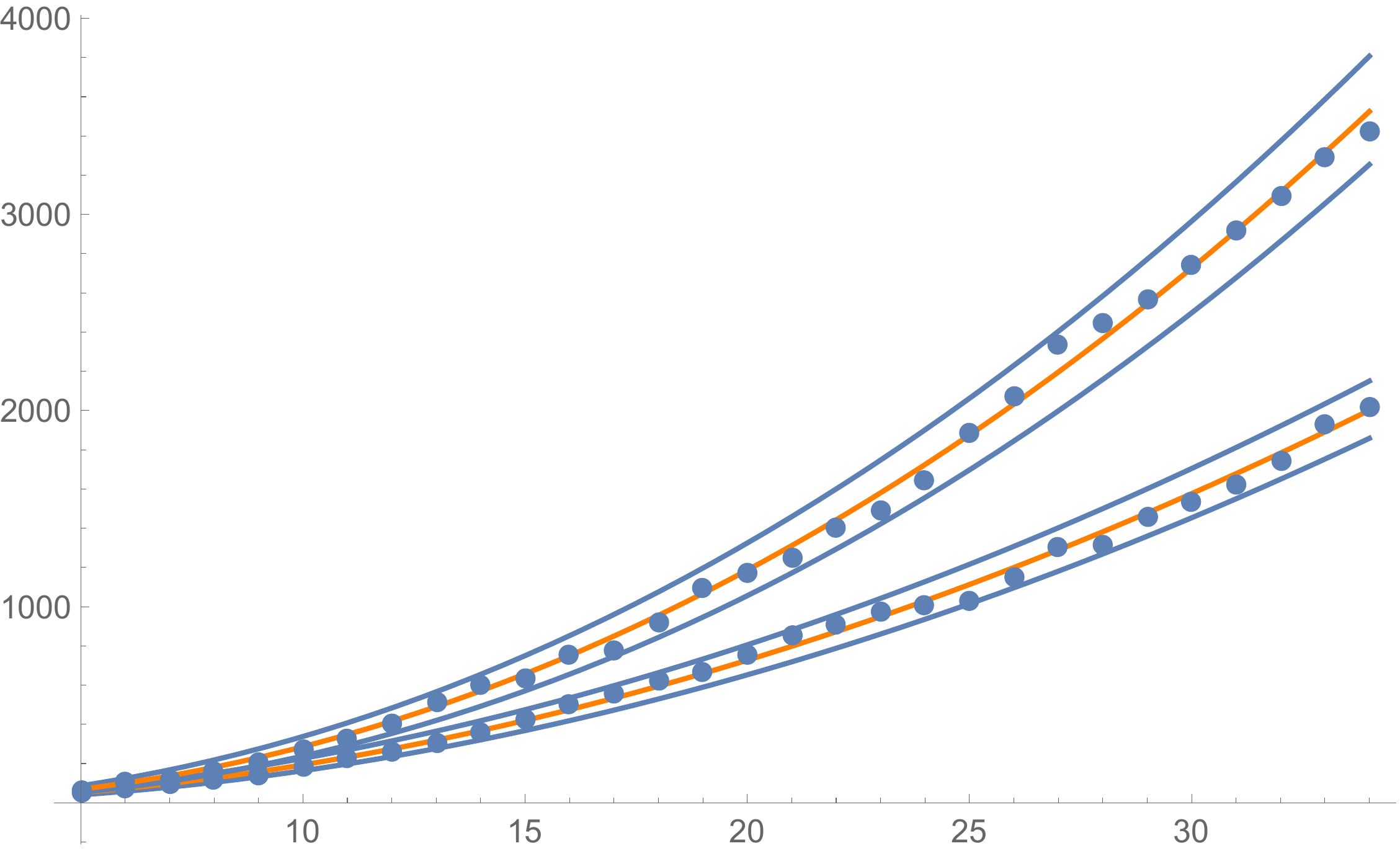}};
\node at (0,-2.5) {Wright's generalized Bessel};
\node at (1.7,1.3) {$\theta=0.95$};
\node at (2.6,-0.9) {$\theta=1.1$};
\end{tikzpicture}
\begin{tikzpicture}
\node at (0,0) {\includegraphics[scale=0.3]{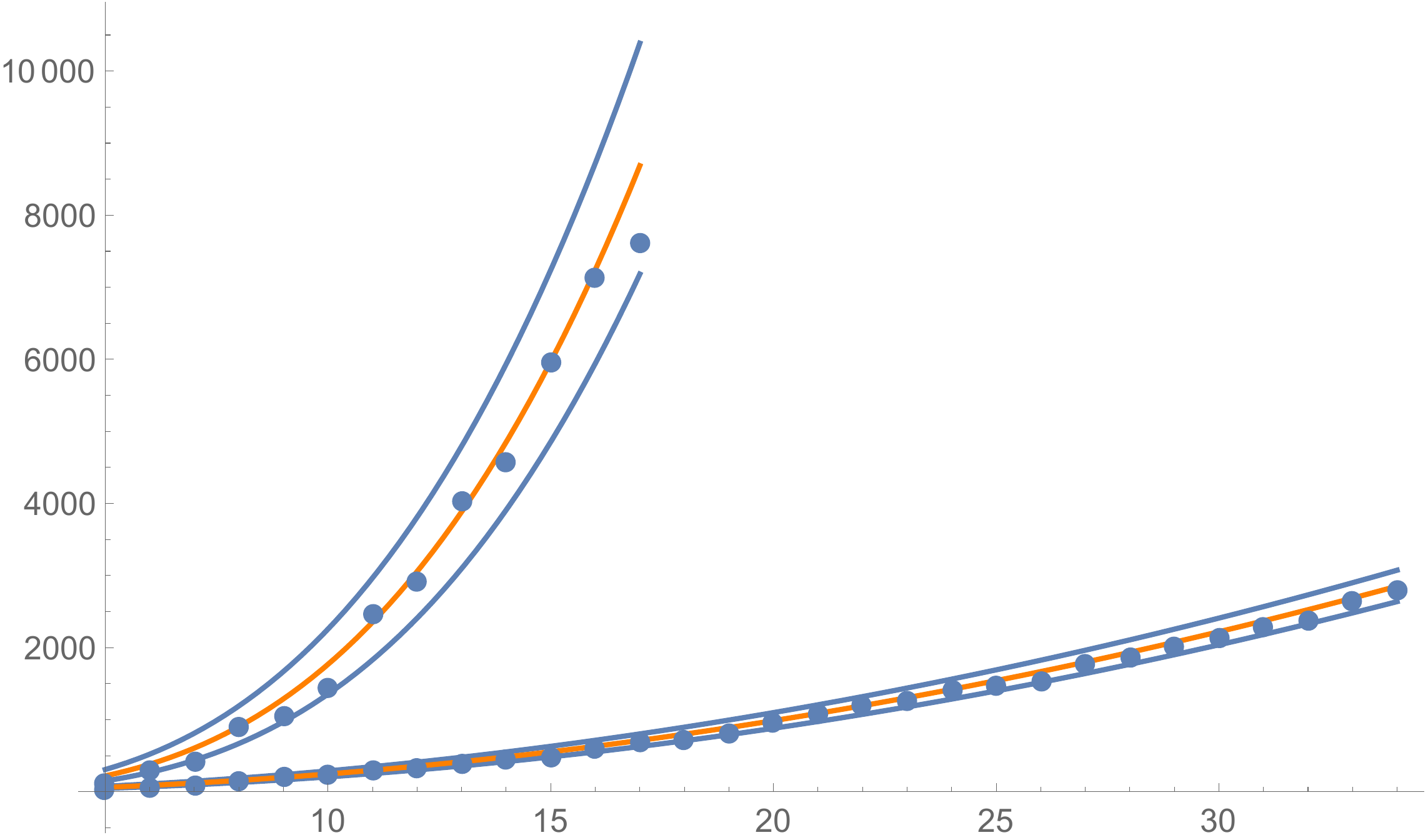}};
\node at (0.6,1.3) {$r=2$};
\node at (2.6,-0.4) {$r=1$};
\node at (0,-2.5) {Meijer-$G$};
\end{tikzpicture}
\end{center}
\caption{\label{fig:WrightMeigerG rigidity}\textit{Global ridigity for the Wright's generalized Bessel and Meijer-$G$ point processes. The blue dots represent the random points for different values of the parameters, the blue curves represent the upper and lower bounds in \eqref{eq:boundsMB} and \eqref{eq:boundsMG}  with $\epsilon=0.05$, and the orange curves correspond to $\epsilon=-\frac{1}{\pi}$. The points were sampled as random matrix eigenvalues which are known to approximate the random points. At the left, we rely for this on a result of Cheliotis \cite[Theorem 4]{Cheliotis}, and at the right, on a result of Kuijlaars and Zhang \cite[Theorem 5.3]{KuijlaarsZhang}.}} 
\end{figure}
\paragraph{Exponential moments and rigidity for the Meijer-$\mathrm{G}$ process.}
The Meijer-$\mathrm{G}$ process is the limiting point process at the hard edge of Wishart-type products of Ginibre matrices or truncated unitary matrices and appears also in Cauchy multi-matrix models \cite{BertolaBothner, BertolaCauchy2MM, KieburgKuijlaarsStivigny, KuijlaarsZhang}. It is a determinantal point process on $(0,+\infty)$ which depends on parameters $r,q \in \mathbb{N}:=\{0,1,2,\ldots\}$, $r > q \geq 0$, $\nu_{1},\ldots,\nu_{r} \in \mathbb{N}$ and $\mu_{1},\ldots,\mu_{q} \in \mathbb{N}_{>0}$ such that $\mu_{k} > \nu_{k}$, $k=1,\ldots,q$. Its kernel can be expressed in terms of the Meijer-$\mathrm{G}$ function:
\begin{align}\label{def Meijer G kernel}
\mathbb K^{\mathrm{Me}}(x,y)=\int_0^1
\mathrm{G}_{q,r+1}^{1,q}\left( \left. {-\mu_1,\dots, -\mu_q \atop 0,-\nu_1,\dots, -\nu_r} \right|tx\right)
\mathrm{G}_{q,r+1}^{r,0}\left( \left. {\mu_1,\dots, \mu_q \atop \nu_1,\dots, \nu_r,0} \right|ty\right)dt, \qquad x,y > 0.
\end{align}
We obtain exponential moment asymptotics for this point process.
\begin{theorem}\label{thm:expmoments Meijer}
Let $\nu\in\mathbb R$ and let $N^{\mathrm{Me}}$ be the counting function of the Meijer-$\mathrm{G}$ process. As $s \to + \infty$, we have
\begin{align}\label{asymp gap thm explicit  Me}
\mathbb{E} \big[ e^{-2\pi\nu N^{\mathrm{Me}}(s)} \big]= C \, \exp \bigg(-2\pi\nu \mu(s) + 2 \pi^{2} \nu^{2} \sigma^{2}(s) + \bigO \big(s^{-\frac{1}{1+r-q}}\big) \bigg),
\end{align}
where
the functions $\mu$ and $\sigma^{2}$ are given  by
\begin{align}\label{mu and sigma2 meijer}
\mu(s) = \frac{1+r-q}{\pi} \cos \left( \frac{\pi}{2}\frac{r-q-1}{r-q+1} \right)s^{\frac{1}{1+r-q}}, \qquad \mbox{ and } \qquad  \sigma^{2}(s) = \frac{1}{2\pi^{2}(1+r-q)} \log s,
\end{align}
and the values of $C$ by
\begin{align*}
& C = \exp \bigg( \frac{2\pi \nu}{1+r-q} \bigg[ \sum_{j=1}^{r}\nu_{j} - \sum_{j=1}^{q} \mu_{j} \bigg] \bigg) \bigg[ 4(1+r-q)\sin^{2} \left( \frac{\pi}{1+r-q} \right) \bigg]^{\nu^{2}}G(1+i\nu)G(1-i\nu),
\end{align*}
where $G$ is Barnes' $G$-function. Furthermore, the error term in \eqref{asymp gap thm explicit  Wr} is uniform for $\nu$ in compact subsets of $\mathbb R$.
\end{theorem}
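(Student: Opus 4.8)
The plan is to represent the exponential moment as a Fredholm determinant and to analyze it by a Deift--Zhou steepest descent analysis of an associated Riemann--Hilbert (RH) problem, exactly in the spirit of the proof of Theorem~\ref{thm:expmoments Wrights} and of the corresponding results for the Airy and Bessel processes \cite{BothnerBuckingham, Charlier}. Since the Meijer-$\mathrm{G}$ process is determinantal with kernel $\mathbb{K}^{\mathrm{Me}}$ which is trace class on compacts (see Remark~\ref{remark:existence} for the analogous discussion in the Wright case), we have
\[
\mathbb{E}\big[e^{-2\pi\nu N^{\mathrm{Me}}(s)}\big]=\det\!\Big(1-(1-e^{-2\pi\nu})\,\mathbb{K}^{\mathrm{Me}}\big|_{L^{2}(0,s)}\Big).
\]
After the change of variables $x\mapsto sx$ the underlying operator acts on $L^{2}(0,1)$ with a kernel built from the two Meijer-$\mathrm{G}$ functions appearing in \eqref{def Meijer G kernel} evaluated at arguments proportional to $s$; this kernel is of integrable (Its--Izergin--Korepin--Slavnov) type, so the Fredholm determinant can be expressed through the solution $Y$ of an RH problem whose size is dictated by the parameters $r,q$ (the same RH problem used for the gap probability of the Meijer-$\mathrm{G}$ process). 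The crucial structural feature is that the multiplier $1-(1-e^{-2\pi\nu})\chi_{(0,1)}$ has a single jump discontinuity at the right endpoint $x=1$, i.e.\ a pure Fisher--Hartwig jump singularity with parameter proportional to $\nu$.

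\smallskip

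Next I would record a differential identity for the logarithmic derivative of $\mathbb{E}[e^{-2\pi\nu N^{\mathrm{Me}}(s)}]$ in terms of $Y$ near the endpoints $0$ and $s$ --- one such identity in $s$, and one in an auxiliary parameter interpolating between the trivial multiplier and the one above (equivalently, in $\nu$ after the obvious rescaling). The heart of the argument is the asymptotic analysis of $Y$ as $s\to+\infty$: introduce a $g$-function adapted to the large-argument behavior of the Meijer-$\mathrm{G}$ functions --- this saddle point/equilibrium computation is where the exponent $s^{1/(1+r-q)}$ and the cosine prefactor $\cos(\tfrac{\pi}{2}\tfrac{r-q-1}{r-q+1})$ in \eqref{mu and sigma2 meijer} originate --- perform the usual transformations (opening of lenses) so that all jumps converge to the identity away from $0$ and $s$, construct the outer parametrix, a local parametrix at the hard edge $0$ built out of Meijer-$\mathrm{G}$ functions, and a local parametrix at $s$ built out of confluent hypergeometric functions resolving the Fisher--Hartwig jump. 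A small-norm analysis of the resulting ratio RH problem then yields the error term $\bigO(s^{-1/(1+r-q)})$.

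\smallskip

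The main obstacle is the identification of the multiplicative constant $C$, which is why this kind of result is generally delicate (cf.\ \cite{Krasovsky,CLM2019,CLM2019G}): substituting the parametrices into the differential identity in $s$ and integrating produces the exponent $-2\pi\nu\mu(s)+2\pi^{2}\nu^{2}\sigma^{2}(s)$ with $\mu,\sigma^{2}$ as in \eqref{mu and sigma2 meijer}, but only up to an $s$-independent factor. To pin this factor down I would integrate the differential identity in the auxiliary/thinning parameter starting from the trivial configuration, where the determinant equals $1$, exploiting the uniformity of the steepest descent analysis in that parameter, and then carefully collect all the $\mathcal{O}(1)$ contributions of the various parametrices: the confluent hypergeometric parametrix at $s$ supplies the Barnes-$G$ factor $G(1+i\nu)G(1-i\nu)$ together with the power $\big[4(1+r-q)\sin^{2}(\pi/(1+r-q))\big]^{\nu^{2}}$, while the Meijer-$\mathrm{G}$ parametrix at the hard edge $0$ (in combination with the outer parametrix) produces the linear exponential shift $\exp\!\big(\tfrac{2\pi\nu}{1+r-q}\big[\sum_{j=1}^{r}\nu_{j}-\sum_{j=1}^{q}\mu_{j}\big]\big)$.

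\smallskip

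Thus the two nontrivial points in the argument are (i) the bookkeeping of a RH problem whose size grows with $r$, which complicates the construction of the outer and hard-edge parametrices compared with the $2\times2$ Bessel case, and (ii) the exact evaluation of $C$, which requires tracking every finite factor through all the transformations and through the integration in the auxiliary parameter. As a final consistency check one can specialize the parameters so that the Meijer-$\mathrm{G}$ process degenerates to the Bessel process and compare with \eqref{moment asymp Bessel}, and one can compare the structure of $C$ with the companion formula \eqref{constant for Wrights} for the Wright's generalized Bessel process.
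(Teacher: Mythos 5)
Your overall philosophy (Fredholm determinant plus Deift--Zhou analysis, with the thinning parameter acting as a Fisher--Hartwig-type jump) is the right one, but the specific route you propose is not the paper's and, as written, it leaves the two decisive difficulties unresolved. The paper deliberately does \emph{not} work with the IIKS structure of $\mathbb{K}^{\mathrm{Me}}$ on $[0,s]$: since that structure has size growing with $r$ (cf.\ \cite{Zhang} for the Wright case), the authors conjugate the operator by a Mellin transform, following \cite{ClaeysGirSti}, and obtain for \emph{all} admissible parameters a $2\times 2$ RH problem whose jump contours live in the complex plane of the Mellin variable. The steepest descent is then a saddle point analysis of a phase $h$ with two complex saddles $b_{1},b_{2}$, local parametrices are built from parabolic cylinder functions (not from a hard-edge Meijer-$\mathrm{G}$ parametrix plus a confluent hypergeometric parametrix at $x=s$), and two differential identities are used: the one in $s$ gives $\mu$, $\sigma^{2}$ and the sharp error $\bigO(s^{-1/(1+r-q)})$, while the one in $t=e^{-2\pi\nu}$, integrated from $t=1$ where the determinant is $1$, gives the constant $C$. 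Your proposal to analyze directly an RH problem of size dictated by $r,q$, with a hard-edge parametrix made of Meijer-$\mathrm{G}$ functions and a global parametrix for that large system, is exactly the analysis the paper describes as prohibitively complicated and circumvents; asserting that ``the usual transformations'' go through is a genuine gap, since no such large-size analysis (global parametrix, hard-edge parametrix, Fisher--Hartwig insertion) is constructed or available for general $r>q\geq 0$.

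The second, and in my view more serious, gap is the determination of $C$. You propose to ``carefully collect all the $\mathcal{O}(1)$ contributions'' of the parametrices and you attribute specific factors of $C$ to specific parametrices, but you give no mechanism for actually evaluating these contributions; this is precisely the notoriously hard step. Note that even in the paper's much simpler $2\times 2$ setting, the $\mathcal{O}(1)$ terms coming from the differential identity in $t$ involve regularized integrals of parabolic cylinder functions that the authors do \emph{not} compute directly: they observe that these integrals depend only on $t$, specialize the parameters so that the process degenerates to the Bessel process, and use the known expansion \eqref{moment asymp Bessel} from \cite{Charlier} to fix them. In your scheme the comparison with Bessel appears only as a ``final consistency check,'' so nothing in your argument actually pins down the unknown $t$-dependent (and, in your larger RH problem, potentially parameter-dependent) constants of integration; without either an explicit evaluation of the analogous integrals for your Meijer-$\mathrm{G}$ and confluent hypergeometric parametrices or a comparison argument promoted from ``check'' to essential input, the formula for $C$ is not established. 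Relatedly, the sharp error $\bigO(s^{-1/(1+r-q)})$ is obtained in the paper from the $s$-differential identity (the $t$-identity alone only yields $\bigO(s^{-1/(2(1+r-q))})$), so your single integration ``in the auxiliary parameter'' would also need a separate argument to reach the stated error rate.
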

\begin{remark}
If $q=0$ and if the parameters $\nu_{1}, \ldots,\nu_{r}$ form an arithmetic progression, then the kernel $\mathbb{K}^{\mathrm{Me}}$ defines the same point process (up to rescaling) as the Wright's generalized Bessel point process (for a rational $\theta$), see \cite[Theorem 5.1]{KuijlaarsStivigny}. More precisely, if $r \geq 1$ is an integer, $\alpha > -1$ and
\begin{align}
& \theta = \frac{1}{r}, & & \nu_{j} = \alpha + \frac{j-1}{r}, \qquad j = 1,\ldots,r, \label{cond parameters for relation first and third kernel}
\end{align}
then the kernels $\mathbb{K}^{\mathrm{Me}}$ and $\mathbb{K}^{\mathrm{Wr}}$ are related by
\begin{align}\label{Mejer G Wright relation kernel}
\left( \frac{x}{y} \right)^{\frac{\alpha}{2}}\mathbb{K}^{\mathrm{Me}}(x,y) = r^{r} \mathbb{K}^{\mathrm{Wr}}(r^{r}x,r^{r}y).
\end{align}
Therefore, if the parameters satisfy \eqref{cond parameters for relation first and third kernel}, we obtain the following relations:
\begin{align}\label{Wright to Meijer identities}
&\mu^{\mathrm{Me}}(s) = \mu^{\mathrm{Wr}}(r^{r}s), \qquad \sigma^{\mathrm{Me}}(s)^{2} = \sigma^{\mathrm{Wr}}(s)^{2}, \qquad C^{\mathrm{Me}} = r^{\frac{r \nu^{2}\theta}{\theta+1}}C^{\mathrm{Wr}},
\end{align}
where the quantities with superscript $\mathrm{Wr}$ and $\mathrm{Me}$ are given in Theorems \ref{thm:expmoments Wrights} and \ref{thm:expmoments Meijer}, respectively. All the identities in \eqref{Wright to Meijer identities} can be verified by a direct computation; this provides another consistency check of our results.
\end{remark}
\begin{remark}\label{remark:existence2}
The existence and uniqueness of a point process with correlation kernel \eqref{def Meijer G kernel} can be shown in a similar way as outlined in Remark \ref{remark:existence} for the Wright's generalized Bessel process.
\end{remark}
\begin{corollary}\label{thm:counting function asymptotics Meijer}
As $s \to + \infty$, we have
\begin{align}
& \mathbb{E}[N^{\mathrm{Me}}(s)] = \mu(s)- 
\frac{1}{1+r-q}\bigg[\sum_{j=1}^{r} \nu_{j} - \sum_{j=1}^{q} \mu_{j} \bigg]+ \bigO(s^{-\frac{1}{1+r-q}}), \label{asymp E N Me} \\
& {\rm Var}[N^{\mathrm{Me}}(s)] = \sigma^{2}(s) + 
\frac{1}{2\pi^{2}}\log \left[4(1+r-q)\sin^{2} \left( \frac{\pi}{1+r-q} \right)\right] +
\frac{1+\gamma_E}{2\pi^{2}}+
 \bigO(s^{-\frac{1}{1+r-q}}), \label{asymp Var N Me}
\end{align}
where $\gamma_E$ is Euler's constant and the functions $\mu$ and $\sigma^{2}$ are given by \eqref{mu and sigma2 meijer}.
\end{corollary}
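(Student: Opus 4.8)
The plan is to imitate verbatim the deduction of Corollary~\ref{coro:counting function asymptotics Wrights} from Theorem~\ref{thm:expmoments Wrights}: both $\mathbb{E}[N^{\mathrm{Me}}(s)]$ and ${\rm Var}[N^{\mathrm{Me}}(s)]$ will be extracted from the behaviour as $\nu\to 0$ of the exponential moment asymptotics in Theorem~\ref{thm:expmoments Meijer}. First I would observe that, since \eqref{asymp gap thm explicit Me} holds uniformly for $\nu$ in a fixed neighbourhood of $0$, it shows in particular that $\mathbb{E}\big[e^{\epsilon N^{\mathrm{Me}}(s)}\big]<\infty$ for some $\epsilon>0$; hence $N^{\mathrm{Me}}(s)$ has finite moments of every order, the map $\nu\mapsto\mathbb{E}\big[e^{-2\pi\nu N^{\mathrm{Me}}(s)}\big]$ is analytic near $\nu=0$, and
\[
\mathbb{E}\big[e^{-2\pi\nu N^{\mathrm{Me}}(s)}\big]=1-2\pi\nu\,\mathbb{E}[N^{\mathrm{Me}}(s)]+2\pi^{2}\nu^{2}\,\mathbb{E}\big[(N^{\mathrm{Me}}(s))^{2}\big]+\bigO(\nu^{3}),\qquad\nu\to0 .
\]

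Next I would Taylor expand the right-hand side of \eqref{asymp gap thm explicit Me} in $\nu$ around $0$. The functions $\mu$ and $\sigma^{2}$ do not depend on $\nu$, and the constant $C=C(\nu)$ of Theorem~\ref{thm:expmoments Meijer} has an explicit power-series expansion: the exponential prefactor contributes a term linear in $\nu$, the bracket raised to the power $\nu^{2}$ contributes $\nu^{2}\log\big[4(1+r-q)\sin^{2}(\tfrac{\pi}{1+r-q})\big]$, and, using $G(1)=1$ together with the standard expansion of Barnes' $G$-function at $1$ (the odd-order terms cancelling since $G(1-i\nu)=\overline{G(1+i\nu)}$ for real $\nu$),
\[
\log\big(G(1+i\nu)G(1-i\nu)\big)=(1+\gamma_{E})\,\nu^{2}+\bigO(\nu^{4}),\qquad\nu\to0 .
\]
Collecting these contributions, the logarithm of the right-hand side of \eqref{asymp gap thm explicit Me} equals
\[
-2\pi\nu\Big(\mu(s)+\tfrac{1}{1+r-q}\big[\textstyle\sum_{j=1}^{r}\nu_{j}-\sum_{j=1}^{q}\mu_{j}\big]\Big)+2\pi^{2}\nu^{2}\Big(\sigma^{2}(s)+\tfrac{1}{2\pi^{2}}\log\big[4(1+r-q)\sin^{2}(\tfrac{\pi}{1+r-q})\big]+\tfrac{1+\gamma_{E}}{2\pi^{2}}\Big)+\bigO(\nu^{3})+\bigO\big(s^{-\frac{1}{1+r-q}}\big).
\]

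Finally I would take logarithms on the left (so that $\log\mathbb{E}\big[e^{-2\pi\nu N^{\mathrm{Me}}(s)}\big]=-2\pi\nu\,\mathbb{E}[N^{\mathrm{Me}}(s)]+2\pi^{2}\nu^{2}\,{\rm Var}[N^{\mathrm{Me}}(s)]+\bigO(\nu^{3})$) and match coefficients of $\nu$ and of $\nu^{2}$: the $\nu^{1}$-comparison gives \eqref{asymp E N Me} and the $\nu^{2}$-comparison gives \eqref{asymp Var N Me}. I expect the only point requiring genuine care — exactly as in the Wright's generalized Bessel case — to be the justification that the error $\bigO(s^{-\frac{1}{1+r-q}})$ of Theorem~\ref{thm:expmoments Meijer}, being uniform for $\nu$ in a complex neighbourhood of $0$, has its first few Taylor coefficients at $\nu=0$ still of size $\bigO(s^{-\frac{1}{1+r-q}})$; this follows from Cauchy's estimates on a small fixed circle around $\nu=0$ inside the region of uniformity. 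No obstacle arises beyond what is already present in the proof of Corollary~\ref{coro:counting function asymptotics Wrights}.
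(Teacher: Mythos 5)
Your proposal follows exactly the paper's route: the paper proves this corollary by declaring it ``similar to the proof of Corollary \ref{coro:counting function asymptotics Wrights}'', i.e.\ expand \eqref{asymp gap thm explicit Me} as $\nu \to 0$ and identify the coefficients of $\nu$ and $\nu^{2}$ via $\log \mathbb{E}\big[e^{-2\pi\nu N^{\mathrm{Me}}(s)}\big] = -2\pi\nu\,\mathbb{E}[N^{\mathrm{Me}}(s)] + 2\pi^{2}\nu^{2}\,\mathrm{Var}[N^{\mathrm{Me}}(s)] + \bigO(\nu^{3})$, which is precisely what you do (your Cauchy-estimate remark about the uniformity of the error term is a welcome extra; note only that Theorem \ref{thm:expmoments Meijer} literally asserts uniformity for $\nu$ in compact subsets of $\mathbb{R}$, so the complex neighbourhood has to be read off from the Riemann--Hilbert analysis rather than from the statement, a point the paper itself glosses over). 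One small slip in your displayed expansion: since $\log C$ contributes $+\frac{2\pi\nu}{1+r-q}\big[\sum_{j=1}^{r}\nu_{j}-\sum_{j=1}^{q}\mu_{j}\big]$, the coefficient of $-2\pi\nu$ should be $\mu(s)-\frac{1}{1+r-q}\big[\sum_{j=1}^{r}\nu_{j}-\sum_{j=1}^{q}\mu_{j}\big]$ rather than $\mu(s)+\cdots$; with that sign corrected the $\nu^{1}$-comparison indeed yields \eqref{asymp E N Me} exactly as you claim, and the $\nu^{2}$-comparison yields \eqref{asymp Var N Me}.
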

\begin{proof}
The proof is similar to the proof of Corollary \ref{coro:counting function asymptotics Wrights}.
\end{proof}
We verify from Theorem \ref{thm:expmoments Meijer} that the Meijer-$G$ process satisfies Assumptions \ref{assumptions} with $s_{0}$ a sufficiently large constant, and
\begin{align*}
M=10, \quad \gamma = 2\pi \nu, \quad \mathrm{C} = 2 \max_{\nu \in [-\frac{M}{2\pi},\frac{M}{2\pi}]} C(\nu), \quad a=\frac{1}{2\pi^2},
\end{align*}
where $C = C(\nu)$ is given in Theorem \ref{thm:expmoments Meijer}. We obtain the following rigidity result by combining Theorem \ref{thm:rigidity} with Theorem \ref{thm:expmoments Meijer}.

\begin{theorem}{\bf (Rigidity for the Meijer-$G$ process)}\label{thm:rigidityMeijer}
Let $x_1<x_2<\ldots$ be the points in the Meijer-$G$ point process. There exists a constant $c>0$ such that
\begin{align*}
\mathbb P\left( \sup_{k\geq \mu(s)} \frac{\big|\frac{1+r-q}{\pi} \cos \left( \frac{\pi}{2}\frac{r-q-1}{r-q+1} \right)x_{k}^{\frac{1}{1+r-q}}-k\big|}{\log k} > \frac{\sqrt{1+\epsilon}}{\pi}\right) \leq \frac{c \, s^{-\frac{\epsilon}{2(1+r-q)}}}{\epsilon},
\end{align*}
as $s\to +\infty$, uniformly for $\epsilon>0$ small. In particular, for all $\epsilon > 0$ we have
\begin{align*}
\lim_{k_0\to\infty}\mathbb P\left( \sup_{k \geq k_{0}} \frac{\big|\frac{1+r-q}{\pi} \cos \left( \frac{\pi}{2}\frac{r-q-1}{r-q+1} \right)x_{k}^{\frac{1}{1+r-q}}-k \big|}{\log k}\leq \frac{1}{\pi} + \epsilon\right)=1.
\end{align*}
\end{theorem}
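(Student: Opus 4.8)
The plan is to deduce Theorem~\ref{thm:rigidityMeijer} from the general rigidity theorem, Theorem~\ref{thm:rigidity}, in exactly the same way as Theorems~\ref{thm:rigidity Airy} and~\ref{thm:rigidity Bessel} were obtained for the Airy and Bessel processes; the only new ingredient is the exponential moment asymptotics of Theorem~\ref{thm:expmoments Meijer}. The argument thus has three steps: first verify that the Meijer-$\mathrm{G}$ process satisfies Assumptions~\ref{assumptions} with the constants listed just before the statement; then invoke Theorem~\ref{thm:rigidity}; and finally rewrite its conclusion in the concrete form claimed, which is where essentially all of the (elementary) work lies.

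For the first step I would set $\beta:=1+r-q\ge 2$ and $A:=\frac{1+r-q}{\pi}\cos\bigl(\frac{\pi}{2}\frac{r-q-1}{r-q+1}\bigr)$, which is positive since $\frac{r-q-1}{r-q+1}\in[0,1)$, so that by~\eqref{mu and sigma2 meijer} one has $\mu(s)=A\,s^{1/\beta}$ and $\sigma^2(s)=\frac{1}{2\pi^2\beta}\log s$. Assumption~(1), with $M=10$, $\gamma=2\pi\nu$, $\mathrm{C}=2\max_{|\nu|\le M/(2\pi)}C(\nu)$ and $a=\frac{1}{2\pi^2}$, is immediate from Theorem~\ref{thm:expmoments Meijer}: its error term is uniform for $\nu$ in the compact interval $[-\frac{M}{2\pi},\frac{M}{2\pi}]$, so the factor $\exp\bigl(\bigO(s^{-1/\beta})\bigr)$ is at most $2$ once $s$ is large, which gives~\eqref{expmomentbound}, and one checks $M=10>2\pi=\sqrt{2/a}$. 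Assumptions~(2) and~(3) then reduce to calculus: $s\mu'(s)=\frac{A}{\beta}s^{1/\beta}$ is increasing with $s\mu'(s)/\sigma^2(s)\to+\infty$, both $\mu$ and $\sigma^2$ increase to $+\infty$, and $\mu^{-1}(t)=(t/A)^\beta$ gives $(\sigma^2\circ\mu^{-1})(t)=\frac{1}{2\pi^2}\log(t/A)$, which is strictly concave and satisfies $(\sigma^2\circ\mu^{-1})(t)\sim a\log t$.

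With the assumptions verified, Theorem~\ref{thm:rigidity} yields constants $c,s_0>0$ such that, for all small $\epsilon>0$ and all $s\ge s_0$,
\[
\mathbb{P}\!\left(\sup_{k\ge\mu(2s)}\frac{|\mu(x_k)-k|}{(\sigma^2\circ\mu^{-1})(k)}>\sqrt{\tfrac{2}{a}(1+\epsilon)}\right)\le\frac{c\,\mu(s)^{-\epsilon/2}}{\epsilon}.
\]
To cast this into the claimed form I would substitute $\mu(x_k)=A\,x_k^{1/\beta}$, $\sqrt{2/a}=2\pi$ and $\mu(s)^{-\epsilon/2}=A^{-\epsilon/2}s^{-\epsilon/(2\beta)}$, and replace $s$ by $s/2$, which (since $\mu(s/2)=2^{-1/\beta}\mu(s)$) only alters $c$ and turns the index range into $k\ge\mu(s)$ and the decay rate into $s^{-\epsilon/(2(1+r-q))}$. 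The only remaining mismatch is that the denominator above equals $(\sigma^2\circ\mu^{-1})(k)=\frac{1}{2\pi^2}\log k-\frac{\log A}{2\pi^2}$, whereas the statement carries the clean $\log k$.

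\textbf{Reconciling these two denominators is the only genuine, if minor, obstacle.} They differ by a bounded multiplicative factor tending to $1$, so on the event that $|A\,x_k^{1/\beta}-k|/\log k>\frac{\sqrt{1+\epsilon}}{\pi}$ for some $k\ge\mu(s)$, that same $k$ satisfies $|A\,x_k^{1/\beta}-k|/(\sigma^2\circ\mu^{-1})(k)>2\pi\sqrt{1+\epsilon}\,\bigl(1-\bigO(1/\log\mu(s))\bigr)$; when $\log A>0$ the correction goes the favourable way and nothing needs to be done. I would therefore pick $\epsilon''\le\epsilon$ with $\epsilon-\epsilon''=\bigO(1/\log\mu(s))$ so that this event is contained in the one in Theorem~\ref{thm:rigidity} with parameter $\epsilon''$. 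As long as $\epsilon''\ge\epsilon/2$ the rate is preserved, because $(\epsilon-\epsilon'')\log\mu(s)=\bigO(1)$ forces $\mu(s)^{(\epsilon-\epsilon'')/2}=\bigO(1)$, hence $\mu(s)^{-\epsilon''/2}=\bigO(\mu(s)^{-\epsilon/2})$, while $1/\epsilon''=\bigO(1/\epsilon)$; and in the complementary regime one has $\log\mu(s)=\bigO(1/\epsilon)$, so $s^{-\epsilon/(2(1+r-q))}$ stays bounded below by a positive constant and the asserted right-hand side $c\,s^{-\epsilon/(2(1+r-q))}/\epsilon$ exceeds $1$, making the bound trivial. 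The second, limiting assertion then follows by fixing $\epsilon$ and letting $s\to+\infty$: then $k_0=\lceil\mu(s)\rceil\to+\infty$, the right-hand side tends to $0$, and passing to the complementary event completes the argument.
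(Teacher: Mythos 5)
Your proposal is correct and follows essentially the same route as the paper, whose (implicit) proof is exactly this: verify Assumptions \ref{assumptions} from Theorem \ref{thm:expmoments Meijer} with the constants $M=10$, $\gamma=2\pi\nu$, $\mathrm{C}=2\max_{|\nu|\le M/(2\pi)}C(\nu)$, $a=\frac{1}{2\pi^2}$, and then apply Theorem \ref{thm:rigidity}. The only additional content is your careful reconciliation of the denominator $(\sigma^2\circ\mu^{-1})(k)=\frac{1}{2\pi^2}\log(k/A)$ with the stated $\log k$ (together with the harmless replacement $s\mapsto s/2$ to shift the index range), a point the paper treats as immediate; your treatment of it, including the small-$\epsilon''$ adjustment when $A<1$, is sound.
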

\begin{remark}
The above means that for any $\epsilon > 0$, the probability that
\begin{multline}\label{eq:boundsMG}
\left[ \frac{\pi}{1+r-q}\frac{1}{\cos ( \frac{\pi}{2}\frac{r-q-1}{r-q+1} )}  \left( k- \Big( \frac{1}{\pi}+\epsilon \Big)\log k \right)\right]^{1+r-q}  \leq x_{k} \\
 \leq \left[ \frac{\pi}{1+r-q}\frac{1}{\cos ( \frac{\pi}{2}\frac{r-q-1}{r-q+1} )}  \left( k+ \Big( \frac{1}{\pi}+\epsilon \Big)\log k \right)\right]^{1+r-q} \qquad \mbox{for all } k \geq k_{0}
\end{multline}
tends to $1$ as $k_{0} \to +\infty$. This has been verified numerically for $q=0$ and different values of $r$ by generating products of $r$ independent Ginibre matrices \cite{AkemannKieburgWei,KuijlaarsZhang} and is illustrated in Figure \ref{fig:WrightMeigerG rigidity} (right) for $\epsilon = 0.05$. 
\end{remark}

\paragraph{Outline of the proofs of Theorem \ref{thm:expmoments Wrights} and Theorem \ref{thm:expmoments Meijer}.} It is well-known \cite{BorodinPoint, Johansson, Soshnikov} that the left-hand-sides of \eqref{asymp gap thm explicit  Wr} and \eqref{asymp gap thm explicit  Me} are (for any determinantal point process generated by one of the kernels \eqref{def Wright kernel} or \eqref{def Meijer G kernel}, recall Remark \ref{remark:existence} and Remark \ref{remark:existence2}) equal to the Fredholm determinants
\begin{align}\label{Fredholm det Wr and Me}
\det \left( 1- (1-t)\mathbb{K}^{\mathrm{Wr}} \Big|_{[0,s]}\right) \qquad \mbox{ and } \qquad \det \left( 1- (1-t)\mathbb{K}^{\mathrm{Me}} \Big|_{[0,s]}\right),\qquad t=e^{-2\pi\nu}
\end{align}
respectively. The kernels $\mathbb{K}^{\mathrm{Wr}}$ and $\mathbb{K}^{\mathrm{Me}}$ are known to be \textit{integrable} in the sense of Its-Izergin-Korepin-Slavnov (IIKS) \cite{IIKS} only for particular values of the parameters. For example, for $\theta = p/q$, $p,q \in \mathbb{N}_{>0}$, $\mathbb{K}^{\mathrm{Wr}}$ is integrable of size $p+q$ \cite{Zhang}, and there are associated Riemann-Hilbert (RH) problems of size $(p+q)\times(p+q)$. We expect the analysis of these RH problems to be rather complicated (except in the simplest case when $p=q=1$). Furthermore, for irrational values of $\theta$, $\mathbb{K}^{\mathrm{Wr}}$ is not known to be integrable at all. To circumvent this problem, we follow the ideas from \cite{ClaeysGirSti} and conjugate the operators appearing in \eqref{Fredholm det Wr and Me} with a Mellin transform. This allows us in Section \ref{Section: RH and diff identities} to derive a differential identity in $s$, i.e. to express, \textit{for all} values of the parameters, the derivatives
\begin{align}\label{partial s Fredholm det Wr and Me}
\partial_{s} \log \det \left( 1- (1-t)\mathbb{K}^{\mathrm{Wr}} \Big|_{[0,s]}\right) \qquad \mbox{ and } \qquad \partial_{s} \log \det \left( 1- (1-t)\mathbb{K}^{\mathrm{Me}} \Big|_{[0,s]}\right),
\end{align}
in terms of the solution, denoted $Y$, to a $2 \times 2$ RH problem. We then perform, in Section \ref{Section: steepest descent}, an asymptotic analysis of this RH problem by means of the Deift/Zhou \cite{DeiftZhou} steepest descent method. The local analysis requires the use of parabolic cylinder functions. By integrating in $s$ the derivatives \eqref{partial s Fredholm det Wr and Me}, we obtain
\begin{multline}\label{integration in s}
\log \det \left( 1- (1-t)\mathbb{K} \Big|_{[0,s]}\right) = \log \det \left( 1- (1-t)\mathbb{K} \Big|_{[0,M]}\right) \\
 + \int_{M}^{s} \partial_{\tilde{s}} \log \det \left( 1- (1-t)\mathbb{K} \Big|_{[0,\tilde{s}]}\right)d\tilde{s}, \qquad \mathbb{K} = \mathbb{K}^{\mathrm{Wr}}, \mathbb{K}^{\mathrm{Me}},
\end{multline}
for a certain constant $M$. By substituting the large $s$ asymptotics of \eqref{partial s Fredholm det Wr and Me} in the integrand of \eqref{integration in s}, we determine the functions $\mu$ and $\sigma^{2}$ of Theorems \ref{thm:expmoments Wrights} and \ref{thm:expmoments Meijer} in Section \ref{Section: Differential identity in s}. However, the quantity
\begin{align*}
\log \det \left( 1- (1-t)\mathbb{K} \Big|_{[0,M]}\right)
\end{align*}
is an unknown constant, so this method does not allow for the evaluation of $C$ (the constants of order $1$) of Theorems \ref{thm:expmoments Wrights} and \ref{thm:expmoments Meijer}. Such constants are notoriously difficult to compute explicitly \cite{Krasovsky}, and require the use of other, more complicated, differential identities. To obtain $C$, we will use a differential identity in $t$, i.e. we will express 
\begin{align}\label{partial t Fredholm det Wr and Me}
\partial_{t} \log \det \left( 1- (1-t)\mathbb{K} \Big|_{[0,s]}\right) \qquad \mathbb{K} = \mathbb{K}^{\mathrm{Wr}}, \mathbb{K}^{\mathrm{Me}},
\end{align}
in terms of $Y$ in Section \ref{Section: RH and diff identities}. Large $s$ asymptotics for the derivatives \eqref{partial t Fredholm det Wr and Me} appears to be rather complicated to obtain. In particular, it requires the explicit evaluation of certain (regularized) integrals involving parabolic cylinder functions. The key observation is that these integrals do not depend on any other parameters than $t$. Then we evaluate explicitly these complicated integrals by using the known expansion \eqref{moment asymp Bessel} from \cite{Charlier}. By integrating \eqref{partial t Fredholm det Wr and Me} in $t$, we have
\begin{multline}\label{integration in t}
\log \det \left( 1- (1-t)\mathbb{K} \Big|_{[0,s]}\right) = \log \det \left( 1- (1-t)\mathbb{K} \Big|_{[0,s]}\right)\bigg|_{t=1} \\
 + \int_{1}^{t} \partial_{\tilde{t}} \log \det \left( 1- (1-\tilde{t})\mathbb{K} \Big|_{[0,s]}\right)d\tilde{t}, \qquad \mathbb{K} = \mathbb{K}^{\mathrm{Wr}}, \mathbb{K}^{\mathrm{Me}}.
\end{multline}
By substituting the large $s$ asymptotics of \eqref{partial t Fredholm det Wr and Me} in the integrand of \eqref{integration in t} in Section \ref{Section: Differential identity in t}, we obtain $C$ (and moreover we recover the same functions $\mu,\sigma^{2}$ as obtained via the differential identity in $s$), since
\begin{align*}
\log \det \left( 1- (1-t)\mathbb{K} \Big|_{[0,s]}\right)\bigg|_{t=1} = 0.
\end{align*}
In conclusion, each of the two differential identities has its advantages and disadvantages: the differential identity in $s$ leads to an easier analysis, but does not allow for the evaluation of $C$, while the differential identity in $t$ is significantly more involved but allows to compute $C$. Another advantage of the differential identity in $s$ is that it allows to give the optimal estimates $\bigO(s^{-\frac{\theta}{1+\theta}})$ and $\bigO \big(s^{-\frac{1}{1+r-q}}\big)$ for the error terms of Theorems \ref{thm:expmoments Wrights} and \ref{thm:expmoments Meijer}, while with the differential identity in $t$, we are only able to prove that the errors are of order $\bigO(s^{-\frac{\theta}{2(1+\theta)}})$ and $\bigO \big(s^{-\frac{1}{2(1+r-q)}}\big)$.

\section{Proof of Theorem \ref{thm:rigidity}}\label{Section: rigidity}
In this section, we suppose that $X$ is a locally finite random point process on the real line which has a smallest particle almost surely, with counting function $N(s)$, and which is such that Assumptions \ref{assumptions} hold for certain constants $\mathrm{C}, a>0$, $s_0\in\mathbb R$, $M>\sqrt{2/a}$, and for certain functions $\mu, \sigma$.

\medskip

We start by establishing a bound for the tail of the probability distribution of the extremum of the normalized counting function.
\begin{lemma}\label{lemma: A r eps}
There exist $c>0$ and $s_{0}>0$ such that for any $\epsilon>0$ sufficiently small and $s>s_{0}$,
\begin{align}\label{prob statement lemma 2.1}
\mathbb P\left(\sup_{x> s}\left|\frac{N(x)-\mu(x)}{\sigma^2(x)}\right| >  \sqrt{\frac{2}{a}(1+\epsilon)} \right)\leq \frac{c\, \mu(s)^{-\epsilon}}{2\epsilon}.
\end{align}
In particular, for any $\epsilon > 0$, 
\begin{align*}
\lim_{s \to + \infty} \mathbb{P}\left(\sup_{x>s}\left|\frac{N(x)-\mu(x)}{\sigma^2(x)}\right| \leq   \sqrt{\frac{2}{a}(1+\epsilon)}\right) = 1.
\end{align*}
\end{lemma}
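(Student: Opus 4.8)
The plan is to control the supremum over the continuum $x>s$ by a supremum over a well-chosen discrete set, then apply a union bound together with Markov's inequality and the exponential moment bound \eqref{expmomentbound}. First I would introduce a grid: since $N$ is a nondecreasing integer-valued step function and $\mu$ is continuous and increasing, a natural choice is to pick the points $s_n$ defined by $\mu(s_n) = \mu(s) + n$ for $n = 0, 1, 2, \dots$ (using that $\mu$ is a bijection onto $[\mu(s_0),+\infty)$ by Assumption (2)). On the interval $[s_n, s_{n+1})$ one has $\mu(s_n) \leq \mu(x) \leq \mu(s_n)+1$ and, because $N$ is nondecreasing, $N(s_n) \leq N(x) \leq N(s_{n+1}^-) \leq N(s_{n+1})$; moreover $\sigma^2$ is increasing so $\sigma^2(x) \geq \sigma^2(s_n)$. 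Hence $|N(x)-\mu(x)|/\sigma^2(x)$ on this interval is bounded in terms of $|N(s_n)-\mu(s_n)|/\sigma^2(s_n)$ and $|N(s_{n+1})-\mu(s_{n+1})|/\sigma^2(s_n)$, up to an additive error that is $O(1/\sigma^2(s_n)) \to 0$. This reduces the problem to bounding $\sup_n |N(s_n)-\mu(s_n)|/\sigma^2(s_n)$, at the cost of replacing $\epsilon$ by a slightly smaller $\epsilon'$ (absorbed since the statement is ``for $\epsilon$ sufficiently small'').

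Next, for a fixed grid point $x = s_n$, I would estimate the one-sided tails separately. For the upper tail $\mathbb{P}(N(s_n) - \mu(s_n) > \lambda \sigma^2(s_n))$ with $\lambda = \sqrt{\tfrac{2}{a}(1+\epsilon)}$, apply Markov to $e^{\gamma N(s_n)}$ for $\gamma < 0$, i.e. use $\mathbb{E}[e^{-|\gamma| N(s_n)}] \leq \mathrm{C}\, e^{|\gamma|\mu(s_n) + \frac{\gamma^2}{2}\sigma^2(s_n)}$; this gives a bound of the form $\mathrm{C}\exp(-|\gamma|\lambda\sigma^2(s_n) + \frac{\gamma^2}{2}\sigma^2(s_n))$, and optimizing in $|\gamma|$ (which is legitimate provided the optimal $|\gamma| = \lambda$ stays $\leq M$, guaranteed for $\epsilon$ small since $\lambda \to \sqrt{2/a} < M$) yields $\mathrm{C}\exp(-\frac{\lambda^2}{2}\sigma^2(s_n)) = \mathrm{C}\exp(-\frac{1+\epsilon}{a}\sigma^2(s_n))$. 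The symmetric argument with $\gamma>0$ handles the lower tail. Now invoke Assumption (3): $\sigma^2(x) = (\sigma^2\circ\mu^{-1})(\mu(x)) \sim a \log \mu(x)$, so $\frac{1}{a}\sigma^2(s_n) \sim \log\mu(s_n) = \log(\mu(s)+n)$, and therefore each term is $\lesssim (\mu(s)+n)^{-(1+\epsilon/2)}$ once the asymptotic equivalence is made quantitative with a bit of room in the exponent (again using $\epsilon$ small).

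Finally, sum over the grid via a union bound:
\begin{align*}
\mathbb{P}\left(\sup_{x>s}\frac{|N(x)-\mu(x)|}{\sigma^2(x)} > \lambda\right) \;\lesssim\; \sum_{n=0}^{\infty} \mathrm{C}\,(\mu(s)+n)^{-1-\epsilon/2} \;\lesssim\; \mathrm{C}\,\frac{\mu(s)^{-\epsilon/2}}{\epsilon},
\end{align*}
where the last step compares the sum to $\int_{\mu(s)-1}^\infty u^{-1-\epsilon/2}\,du = \frac{2}{\epsilon}(\mu(s)-1)^{-\epsilon/2}$. Adjusting constants and renaming $\epsilon$ gives exactly \eqref{prob statement lemma 2.1} (the discrepancy between the exponent $\epsilon$ there and $\epsilon/2$ here is harmless, absorbed by replacing $\epsilon$ with $2\epsilon$ at the start). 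The second display follows since $\mu(s)\to+\infty$ forces the right-hand side to $0$.

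The main obstacle is the discretization step: one must choose the grid fine enough that the additive ``rounding'' errors from replacing $\mu(x)$ by $\mu(s_n)$ and $N(x)$ by $N(s_n)$ or $N(s_{n+1})$ are negligible compared to $\sigma^2(x)$, yet coarse enough that the resulting series $\sum_n (\mu(s)+n)^{-1-\epsilon/2}$ converges with the stated dependence on $\epsilon$. Spacing the grid by unit increments of $\mu$ is exactly the right balance — the $N$-increment across $[s_n,s_{n+1})$ is then typically $O(1)$ while $\sigma^2 \to \infty$, so the rounding cost is $o(1)$ and can be swallowed into the slack between $\lambda^2 = \frac{2}{a}(1+\epsilon)$ and $\frac{2}{a}(1+\epsilon/2)$. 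One also has to be slightly careful that the optimal exponent $|\gamma|=\lambda$ in the Markov step lies in $[-M,M]$, which is where the hypothesis $M > \sqrt{2/a}$ is used.
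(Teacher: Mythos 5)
Your plan follows essentially the same route as the paper's proof: discretize at the points where $\mu$ increases by unit steps (the paper's $\kappa_k=\mu^{-1}(k)$), use monotonicity of $N$, $\mu$, $\sigma^2$ to dominate the supremum on each interval by grid values, then a union bound plus Markov applied to $e^{\pm\gamma N}$ with the bound \eqref{expmomentbound} at $\gamma=\sqrt{\tfrac{2}{a}(1+\epsilon)}$ (your ``optimal'' Chernoff exponent is exactly the paper's choice, admissible because $M>\sqrt{2/a}$), and finally a sum-to-integral comparison via $(\sigma^2\circ\mu^{-1})(x)\sim a\log x$.

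The one place where you deviate is the bookkeeping of the rounding errors, and it costs you the stated exponent. You absorb the discretization slack (the ``$+1$'' from $\mu(x)\ge\mu(s_n)$, $N(x)\le N(s_{n+1})$, and the mismatch between $\sigma^2(s_{n+1})$ and $\sigma^2(s_n)$) by giving away part of $\epsilon$, ending with a tail bound of order $\mu(s)^{-\epsilon/2}/\epsilon$; but renaming $\epsilon\mapsto 2\epsilon$ does not recover \eqref{prob statement lemma 2.1}, because the threshold $\sqrt{\tfrac{2}{a}(1+\epsilon)}$ on the left-hand side carries the same $\epsilon$ — you would prove the bound $\mu(s)^{-\epsilon}$ only for the larger threshold $\sqrt{\tfrac{2}{a}(1+2\epsilon)}$. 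The paper keeps the full exponent $-\epsilon$ by treating these errors as bounded multiplicative factors rather than exponent loss: the additive $+1$ becomes a factor $e^{\gamma}\le e^{M}$ inside the Markov bound, and the mismatch $\sigma^2(\kappa_k)-\sigma^2(\kappa_{k-1})$ is bounded by a constant $C'$ uniformly in $k$ using the concavity of $\sigma^2\circ\mu^{-1}$ from Assumptions \ref{assumptions}(3) — a point your sketch does not address explicitly when you divide $|N(s_{n+1})-\mu(s_{n+1})|$ by $\sigma^2(s_n)$. Your weaker version still gives the ``in particular'' limit and a qualitatively identical Theorem \ref{thm:rigidity} (with a degraded rate $\mu(s)^{-\epsilon/4}$ in place of $\mu(s)^{-\epsilon/2}$), but as written it does not yield the inequality \eqref{prob statement lemma 2.1} itself; carrying the errors multiplicatively as in the paper closes this gap with no extra ideas.
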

\begin{proof}
Let us define $\kappa_{k}= \mu^{-1}(k)$. We start by noting that for $x\in[\kappa_{k-1},\kappa_k]$, $k\in\mathbb N$, we have by monotonicity of $\mu$ and of the counting function $N$ that
\begin{equation*}
N(x)-\mu(x)\leq N(\kappa_k)-\mu(\kappa_{k-1})
=N(\kappa_k)-\mu(\kappa_{k})+1,
\end{equation*}
and since $\sigma$ is increasing, we also have $\sigma^2(x)\geq \sigma^2(\kappa_{k-1})$.
For large enough $s$, it follows that 
\begin{align*}
\sup_{x>s}\frac{N(x)-\mu(x)}{\sigma^2(x)} \leq \sup_{k:\kappa_k>s}
\frac{N(\kappa_k)-\mu(\kappa_{k})+1}{\sigma^2(\kappa_{k-1})}.
\end{align*}
Hence, by a union bound, for any $\gamma > 0$ we have
\begin{multline}\label{lol2}
\mathbb P\left(\sup_{x>s}\frac{N(x)-\mu(x)}{\sigma^2(x)}>\gamma\right)\leq \sum_{k:\kappa_k>s}\mathbb P\left(\frac{N(\kappa_k)-\mu(\kappa_{k})+1}{\sigma^2(\kappa_{k-1})}>\gamma\right) \\ = \sum_{k:\kappa_k>s} \mathbb{P}\left( e^{\gamma N(\kappa_{k})} > e^{\gamma \mu(\kappa_{k})-\gamma + \gamma^{2} \sigma^{2}(\kappa_{k-1})} \right) \leq \sum_{k:\kappa_k>s}\mathbb E\left(e^{\gamma N(\kappa_k)}\right)e^{-\gamma\mu(\kappa_k)+\gamma-\gamma^2 \sigma^2(\kappa_{k-1})},
\end{multline}
where the last inequality is obtained by applying Markov's inequality on the positive random variable $e^{\gamma N(\kappa_{k})}$. Using \eqref{expmomentbound} in \eqref{lol2}, we obtain
\begin{align*}
\mathbb P\left(\sup_{x>s}\frac{N(x)-\mu(x)}{\sigma^2(x)}>\gamma\right)\leq 
\mathrm{C} \, 
e^\gamma\sum_{k:\kappa_k>s} e^{-\frac{\gamma^2}{2} \sigma^2(\kappa_{k})}e^{\gamma^2(\sigma^{2}(\kappa_{k})-\sigma^{2}(\kappa_{k-1}))}.
\end{align*}
Because $(\sigma^2\circ\mu^{-1})$ is strictly concave and behaves as $(\sigma^2\circ\mu^{-1})(k) \sim a \log k$ as $k \to + \infty$, we have that
\[e^{\gamma^2(\sigma^{2}(\kappa_{k})-\sigma^{2}(\kappa_{k-1}))}=e^{\gamma^2[\sigma^{2}(\mu^{-1}(k))-\sigma^{2}(\mu^{-1}(k-1))]}\] decreases with $k$ and is uniformly bounded in $k$ by a constant which we denote as $C'$ and which we can choose independently of $\gamma\in[0,M]$.
Using also the fact that $\sigma^2$ and $\mu$ are increasing, we obtain
\begin{align}
\mathbb P\left(\sup_{x>s}\frac{N(x)-\mu(x)}{\sigma^2(x)}>\gamma\right)&\leq 
\mathrm{C} \, C' e^\gamma \sum_{k:\kappa_k>s} e^{-\frac{\gamma^2}{2} \sigma^2(\mu^{-1}(k))} \nonumber \\
&\leq \mathrm{C} \, C'e^\gamma\left(e^{-\frac{\gamma^2}{2} \sigma^2(s)}+ \int_{\mu(s)}^\infty e^{-\frac{\gamma^2}{2}(\sigma^2\circ\mu^{-1})(x)}dx\right). \label{bound1 sup}
\end{align}
Similarly, we obtain
\begin{multline*}
\mathbb P\left(\sup_{x>s}\frac{\mu(x)-N(x)}{\sigma^2(x)}>\gamma\right)\leq \sum_{k:\kappa_k>s}\mathbb P\left(\frac{\mu(\kappa_{k-1})-N(\kappa_{k-1})+1}{\sigma^2(\kappa_{k-1})}>\gamma\right) \\ = \sum_{k:\kappa_{k+1}>s} \mathbb{P}\left( e^{-\gamma N(\kappa_{k})} > e^{-\gamma \mu(\kappa_{k})-\gamma + \gamma^{2} \sigma^{2}(\kappa_{k})} \right) \leq \sum_{k:\kappa_{k+1}>s}\mathbb E\left(e^{-\gamma N(\kappa_k)}\right)e^{\gamma\mu(\kappa_k)+\gamma-\gamma^2 \sigma^2(\kappa_{k})}.
\end{multline*}
Using again \eqref{expmomentbound} and the fact that $\sigma$ and $\mu$ are increasing, we then get
\begin{align}
\mathbb P\left(\sup_{x>s}\frac{\mu(x)-N(x)}{\sigma^2(x)}>\gamma\right)&\leq 
\mathrm{C} \, e^\gamma \sum_{k:\kappa_{k+1}>s} e^{-\frac{\gamma^2}{2} \sigma^2(\mu^{-1}(k))} \nonumber \\
&\leq \mathrm{C} \, e^\gamma\left(2 \, e^{-\frac{\gamma^2}{2} (\sigma^2(\mu^{-1}(\mu(s)-1))}+ \int_{\mu(s)}^\infty e^{-\frac{\gamma^2}{2}(\sigma^2\circ\mu^{-1})(x)}dx\right). \label{bound2 sup}
\end{align}
By combining \eqref{bound1 sup} and \eqref{bound2 sup}, we obtain
\begin{align*}
\mathbb P\left(\sup_{x>s}\left|\frac{N(x)-\mu(x)}{\sigma^2(x)}\right|>\gamma\right)\leq \mathrm{C}(C'+2)e^\gamma\left(e^{-\frac{\gamma^2}{2} (\sigma^2\circ\mu^{-1})(\mu(s)-1)}+ \int_{\mu(s)}^\infty e^{-\frac{\gamma^2}{2}(\sigma^2\circ\mu^{-1})(x)}dx\right).
\end{align*}
It follows from criteria (2) and (3) of Assumptions \ref{assumptions} that the right hand side converges to $0$ as $s\to +\infty$, provided that $\gamma>\sqrt{2/a}$.
More precisely, for $\gamma \in (\sqrt{2/a},M]$ the right hand side is smaller than
\[2\mathrm{C}(C'+2)e^M \left((\mu(s)-1)^{-\frac{a \gamma^2}{2}}+\int_{\mu(s)}^{\infty}x^{-\frac{a \gamma^2}{2}}dx \right)\leq 3\mathrm{C}(C'+2)e^M\frac{\mu(s)^{1-\frac{a \gamma^2}{2}}}{\frac{a \gamma^2}{2}-1}
\]
for all sufficiently large $s$. In conclusion, for any $\gamma \in (\sqrt{2/a},M]$, we have
\begin{align*}
\mathbb P\left(\sup_{x>s}\left|\frac{N(x)-\mu(x)}{\sigma^2(x)}\right|>\gamma\right) \leq 3\mathrm{C}(C'+2)e^M\frac{\mu(s)^{1-\frac{a \gamma^2}{2}}}{\frac{a \gamma^2}{2}-1}.
\end{align*} 
We obtain the claim after setting $c = 6\mathrm{C}(C'+2)e^M$ and $\gamma = \sqrt{\frac{2}{a}(1+\epsilon)}$.
\end{proof}
Next, assuming a bound for the extremum of the normalized counting function, we derive the global rigidity of the points in the process.
\begin{lemma}\label{lemma: xk not far away from kappa k}
Let $\epsilon > 0$. For all sufficiently large $s$, if the event
\begin{align}\label{event holds true}
\sup_{x> s}\left|\frac{N(x)-\mu(x)}{\sigma^2(x)}\right|=\sup_{x\geq s}\left|\frac{N(x)-\mu(x)}{\sigma^2(x)}\right| \leq   \sqrt{\frac{2}{a}(1+\epsilon)}
\end{align}
holds true, then we have
\begin{align}\label{upper and lower bound for mu xk}
\sup_{k \geq \mu(2s)} \frac{|\mu(x_k) - k|}{(\sigma^2\circ\mu^{-1})(k)} \leq \sqrt{\frac{2}{a}(1+2\epsilon)}.
\end{align}
\end{lemma}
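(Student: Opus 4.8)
The plan is to convert the hypothesis \eqref{event holds true} on the normalized counting function into two-sided bounds on $N$, and then to use the elementary relations $N(x_k)\ge k$ and $N(x)\le k-1$ for $x<x_k$ to locate $\mu(x_k)$ close to $k$. On the event \eqref{event holds true}, and for $s>s_0$ (so that $\sigma^2>0$ on $[s,\infty)$), one has for every $x\ge s$
\[
\mu(x)-\sigma^2(x)\sqrt{\tfrac{2}{a}(1+\epsilon)}\le N(x)\le\mu(x)+\sigma^2(x)\sqrt{\tfrac{2}{a}(1+\epsilon)}.
\]
First I would check that $x_k>s$ whenever $k\ge\mu(2s)$, so that these bounds may be applied at and just below $x_k$: from the upper bound at $x=s$, together with $\mu(2s)-\mu(s)\ge(\log 2)\,s\mu'(s)$ (which holds because $u\mapsto u\mu'(u)$ is weakly increasing) and $s\mu'(s)/\sigma^2(s)\to\infty$, one gets $N(s)<\mu(2s)\le k$ for $s$ large, hence $x_k>s$.

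Next, fix $k\ge\mu(2s)$ and write $y:=\mu(x_k)$, so that $x_k=\mu^{-1}(y)$ and $\sigma^2(x_k)=(\sigma^2\circ\mu^{-1})(y)$. Combining $N(x_k)\ge k$ with the upper bound on $N$ at $x_k$ gives $k-y\le(\sigma^2\circ\mu^{-1})(y)\sqrt{\tfrac{2}{a}(1+\epsilon)}$, while combining $N(x)\le k-1$ for $s<x<x_k$ with the lower bound on $N$ and letting $x\to x_k^-$ (using continuity of $\mu$ and $\sigma^2$) gives $y-k\le 1+(\sigma^2\circ\mu^{-1})(y)\sqrt{\tfrac{2}{a}(1+\epsilon)}$. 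Hence
\[
|\mu(x_k)-k|\le 1+(\sigma^2\circ\mu^{-1})(\mu(x_k))\sqrt{\tfrac{2}{a}(1+\epsilon)}.
\]

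The remaining task is to replace $(\sigma^2\circ\mu^{-1})(\mu(x_k))$ by $(\sigma^2\circ\mu^{-1})(k)$ at the cost of enlarging $\epsilon$ to $2\epsilon$. Since $(\sigma^2\circ\mu^{-1})(y)=o(y)$ as $y\to\infty$ --- a consequence of $(\sigma^2\circ\mu^{-1})(y)\sim a\log y$ from Assumptions \ref{assumptions}(3) --- the displayed inequality forces the crude bound $\mu(x_k)\le 2k$ for all $k\ge\mu(2s)$ once $s$ is large. Then by monotonicity $(\sigma^2\circ\mu^{-1})(\mu(x_k))\le(\sigma^2\circ\mu^{-1})(2k)$, and since $(\sigma^2\circ\mu^{-1})(2k)/(\sigma^2\circ\mu^{-1})(k)\to 1$ and $1/(\sigma^2\circ\mu^{-1})(k)\to 0$ as $k\to\infty$, uniformly for $k\ge\mu(2s)$ as $s\to\infty$ because $\mu(2s)\to\infty$, one gets
\[
1+(\sigma^2\circ\mu^{-1})(2k)\sqrt{\tfrac{2}{a}(1+\epsilon)}\le(\sigma^2\circ\mu^{-1})(k)\sqrt{\tfrac{2}{a}(1+2\epsilon)}
\]
for all $k\ge\mu(2s)$, provided $s$ is large, since $1+2\epsilon>1+\epsilon$ leaves exactly the margin needed. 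Combining the last three displays and taking the supremum over $k\ge\mu(2s)$ yields \eqref{upper and lower bound for mu xk}.

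I expect the main obstacle to be this final step: the margin between $\sqrt{1+\epsilon}$ and $\sqrt{1+2\epsilon}$ is narrow, and it must simultaneously absorb the additive $+1$ produced by the discretization $N(x_k^-)\le k-1$ and the multiplicative distortion $(\sigma^2\circ\mu^{-1})(\mu(x_k))/(\sigma^2\circ\mu^{-1})(k)$, which appears because the natural bound features $\sigma^2$ evaluated at $x_k$ rather than at $\mu^{-1}(k)$. Controlling this distortion is precisely what requires first establishing the crude bound $\mu(x_k)\le 2k$ and then invoking the slow, logarithmic growth of $\sigma^2\circ\mu^{-1}$ guaranteed by Assumptions \ref{assumptions}(2) and (3); the uniformity over the range $k\ge\mu(2s)$ is then automatic.
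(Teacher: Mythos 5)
Your proposal is correct. The first step (showing $x_k>s$ for all $k\geq\mu(2s)$ via $N(s)\leq \mu(s)+\sigma^2(s)\sqrt{\tfrac{2}{a}(1+\epsilon)}<\mu(2s)$, using that $u\mapsto u\mu'(u)$ is weakly increasing and $s\mu'(s)/\sigma^2(s)\to\infty$) is essentially the paper's argument in contrapositive form. Where you genuinely diverge is in the main estimate: the paper discretizes by locating $x_k$ between $\kappa_{k+m}=\mu^{-1}(k+m)$ and $\kappa_{k+m+1}$, obtains a self-referential inequality in the integer $m$, and resolves it using the \emph{concavity} of $\sigma^2\circ\mu^{-1}$ together with $(\sigma^2\circ\mu^{-1})'(k)\to 0$, treating the upper and lower bounds on $\mu(x_k)$ separately. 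You instead work directly with $y=\mu(x_k)$, derive the two-sided bound $|y-k|\leq 1+(\sigma^2\circ\mu^{-1})(y)\sqrt{\tfrac{2}{a}(1+\epsilon)}$ from $N(x_k)\geq k$ and $N(x)\leq k-1$ for $x<x_k$, and resolve the self-reference by the a priori bound $y\leq 2k$ (forced by $(\sigma^2\circ\mu^{-1})(y)=o(y)$) combined with $(\sigma^2\circ\mu^{-1})(2k)/(\sigma^2\circ\mu^{-1})(k)\to 1$, which follows from the $a\log$ asymptotics alone. Your route avoids invoking concavity in this lemma and handles both signs of $\mu(x_k)-k$ at once; the paper's route needs no doubling-type argument and yields a marginally sharper lower bound (with $1+\epsilon$ rather than $1+2\epsilon$), though this is immaterial for the theorem. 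All the places where care is needed — applying the event only at points $>s$, the continuity argument as $x\to x_k^-$, and the uniformity in $k\geq\mu(2s)$ of the thresholds — are handled correctly in your write-up.
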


\begin{proof}
We start by proving that
\begin{align}\label{intermediate proof}
x_{k} > s, \qquad \mbox{for all } k\geq \mu(2s),
\end{align}
for all large enough $s$. Suppose that $x_k\leq s<2s \leq \kappa_k$, where $\kappa_{k} = \mu^{-1}(k)$. Then 
\begin{align*}
\mu(2s) \leq  \mu(\kappa_k)=k=N(x_k) \leq N(s),
\end{align*}
which implies by Assumptions \ref{assumptions} that
\begin{align}\label{lol4}
\frac{N(s)-\mu(s)}{\sigma^2(s)}\geq \frac{\mu(2s)-\mu(s)}{\sigma^2(s)}\geq \frac{s\inf_{s\leq \xi\leq 2s}\mu'(\xi)}{\sigma^2(s)} \geq \frac{\inf_{s\leq \xi\leq 2s}\xi\mu'(\xi)}{2 \, \sigma^2(s)} = \frac{s\mu'(s)}{2 \, \sigma^2(s)}.
\end{align}
Again by Assumptions \ref{assumptions}, the right-hand-side of \eqref{lol4} tends to $+\infty$ as $s\to +\infty$, so there is a contradiction with \eqref{event holds true}, provided that $s$ is chosen large enough. We conclude that $x_k > s$ for all $k \geq \mu(2s)$, provided that $s$ is large enough.

We split the proof of \eqref{upper and lower bound for mu xk} into two parts. We first prove the following upper bound for $\mu(x_k)$:
\begin{align}\label{upper bound in proof}
\mu(x_k)\leq k+ \sqrt{\frac{2}{a}(1+2\epsilon)}(\sigma^2\circ\mu^{-1})(k), \qquad \mbox{for all } k\geq \mu(2s).
\end{align}
Define $m = m(k)$ as the unique integer
such that $\kappa_{k+m}<x_k\leq \kappa_{k+m+1}$. If $m < 0$, then \eqref{upper bound in proof} is immediately satisfied. Let us now treat the case $m \geq 0$. Since $k \geq \mu(2s)$, we have $x_{k}>s$ by \eqref{intermediate proof}. Therefore, we use \eqref{event holds true} together with $m \geq 0$ to conclude that
\begin{align*}
\sqrt{\frac{2}{a}(1+\epsilon)}\geq\frac{\mu(x_k)-N(x_k)}{\sigma^2(x_k)}\geq \frac{m}{(\sigma^2 \circ\mu^{-1})(k+m+1)},
\end{align*}
and it follows that 
\begin{align*}
m\leq \sqrt{\frac{2}{a}(1+\epsilon)}(\sigma^2 \circ\mu^{-1})(k+m+1)\leq \sqrt{\frac{2}{a}(1+\epsilon)}\left((\sigma^2\circ \mu^{-1})(k)+(m+1)(\sigma^2\circ \mu^{-1})'(k)\right),
\end{align*}
where we used the concavity of $\sigma^2\circ \mu^{-1}$ from Assumptions \ref{assumptions}. This inequality can be rewritten as
\begin{align*}
 \Big( 1- \sqrt{\frac{2}{a}(1+\epsilon)}(\sigma^2\circ \mu^{-1})'(k) \Big)m \leq  \sqrt{\frac{2}{a}(1+\epsilon)}\left((\sigma^2\circ \mu^{-1})(k)+(\sigma^2\circ \mu^{-1})'(k)\right).
\end{align*}
Since $\sigma \circ \mu^{-1}$ is concave, the derivative $(\sigma \circ \mu^{-1})'$ is decreasing, and thus for $k \geq k_{0}$ we have
\begin{align}\label{lol5}
(\sigma \circ \mu^{-1})(k) = (\sigma \circ \mu^{-1})(k_{0}) + \int_{k_{0}}^{k}(\sigma \circ \mu^{-1})'(\tilde{k})d\tilde{k} \geq (\sigma \circ \mu^{-1})(k_{0}) + (\sigma \circ \mu^{-1})'(k)(k-k_{0}).
\end{align}
Since $(\sigma \circ \mu^{-1})(k) \sim a \log(k)$ as $k \to + \infty$, \eqref{lol5} yields $(\sigma \circ \mu^{-1})'(k) \to 0$ as $k \to + \infty$. We deduce that, for any fixed $\delta > 0$, 
\begin{align*}
(1-\delta)m\leq (1+\delta)\sqrt{\frac{2}{a}(1+\epsilon)}(\sigma^2\circ \mu^{-1})(k)-(1-\delta), \qquad \mbox{for all } k \geq \mu(2s),
\end{align*}
provided that $s$ is large enough. We choose $\delta>0$ sufficiently small such that
\begin{align*}
\frac{1+\delta}{1-\delta}\sqrt{\frac{2}{a}(1+\epsilon)} < \sqrt{\frac{2}{a}(1+2\epsilon)}.
\end{align*}
Therefore, we achieve the inequality 
\begin{align*}
m+1\leq \sqrt{\frac{2}{a}(1+2\epsilon)}(\sigma^2\circ \mu^{-1})(k), \qquad \mbox{for all } k \geq  \mu(2s),
\end{align*}
provided that $s$ is large enough. It follows that
\begin{align*}
\mu(x_k)\leq \mu(\kappa_{k+m+1}) = k+m+1\leq k+\sqrt{\frac{2}{a}(1+2\epsilon)}(\sigma^2\circ \mu^{-1})(k).
\end{align*}
In the second part of the proof, we show the following lower bound for $\mu(x_{k})$:
\begin{align}\label{lower bound in proof}
k- \sqrt{\frac{2}{a}(1+\epsilon)}(\sigma^2\circ\mu^{-1})(k)\leq \mu(x_k), \qquad \mbox{for all } k\geq \mu(2s)
\end{align}
which is even slightly better than what is required to prove \eqref{upper and lower bound for mu xk}. Suppose that $\mu(x_k)<k-m$ with $m>0$. By combining \eqref{intermediate proof} with \eqref{event holds true}, we have
\begin{align*}
\sqrt{\frac{2}{a}(1+\epsilon)}\geq \frac{N(x_k)-\mu(x_k)}{\sigma^2(x_k)}>\frac{m}{\sigma^2(x_k)} > \frac{m}{(\sigma^{2} \circ \mu^{-1})(k)}, \qquad \mbox{for all } k \geq \mu(2s),
\end{align*}
and it follows that $m< \sqrt{\frac{2}{a}(1+\epsilon)}(\sigma^2\circ \mu^{-1})(k)$, which proves the lower bound.
\end{proof}

It now suffices to combine the above two results in order to obtain Theorem \ref{thm:rigidity}.
\begin{proof}[Proof of Theorem \ref{thm:rigidity}]
It follows from Lemma \ref{lemma: A r eps} that there exists $c>0$ such that for all $\epsilon > 0$ sufficiently small and for all $s$ sufficiently large, we have
\begin{align}\label{bayes1}
\mathbb{P}\left( \sup_{x>s} \frac{|N(x)-\mu(x)|}{\sigma^{2}(x)} \leq \sqrt{\frac{2}{a}\left( 1+\frac{\epsilon}{2} \right)} \right) \geq 1 - \frac{c \, \mu(s)^{-\frac{\epsilon}{2}}}{\epsilon}.
\end{align}
Furthermore, Lemma \ref{lemma: xk not far away from kappa k} implies that
\begin{align}\label{bayes2}
\mathbb{P}\left( \sup_{k \geq \mu(2s)}\frac{|\mu(x_k)-k|}{\sigma^2(\mu^{-1}(k))} \leq  \sqrt{\frac{2}{a}(1+\epsilon)} \; \bigg| \; \sup_{x> s} \frac{|N(x)-\mu(x)|}{\sigma^{2}(x)} \leq \sqrt{\frac{2}{a}\left( 1+\frac{\epsilon}{2} \right)} \right) = 1,
\end{align}
for all sufficiently large $s$. Theorem \ref{thm:rigidity} follows by a direct application of Bayes' formula, combining \eqref{bayes1} and \eqref{bayes2}.
\end{proof}

\section{RH problem and differential identities}\label{Section: RH and diff identities}
\paragraph{Double contour integral representation for the kernels.} For convenience, let us write
\begin{align*}
\mathbb{K}^{(1)} = \mathbb{K}^{\mathrm{Me}} \qquad \mbox{ and } \qquad  \mathbb{K}^{(2)} = \mathbb{K}^{\mathrm{Wr}},
\end{align*}
where $\mathbb{K}^{\mathrm{Me}}$ and $\mathbb{K}^{\mathrm{Wr}}$ have been defined in \eqref{def Meijer G kernel} and \eqref{def Wright kernel}, respectively. For our analysis, we will use the following double contour representation for these kernels \cite{ClaeysGirSti}:
\begin{equation}
\label{limiting kernels1}\mathbb K^{(j)}(x,y)=\frac{1}{4\pi^2}\int_\gamma d u\int_{\tilde\gamma}d v\frac{F^{(j)}(u)}{F^{(j)}(v)}\frac{x^{-u}y^{v-1}}{u-v} ,\qquad j=1,2,
\end{equation}
with
\begin{equation}\label{defF12}
F^{(1)}(z)=\frac{\Gamma(z)\prod_{k = 1}^q \Gamma\left(1+\mu_k -z\right)}{  \prod_{k=1}^r \Gamma\left(1+\nu_k -z\right)},\qquad 
F^{(2)}(z)=\frac{\Gamma(z+\frac{\alpha}{2})}{\Gamma\left(\frac{\frac{\alpha}{2}+1-z}{\theta}\right)}.
\end{equation}
For $j=1$, we require $r,q \in \mathbb{N}$, $r > q \geq 0$, $\nu_{1},\ldots,\nu_{r} \in \mathbb{N}$ and $\mu_{1},\ldots,\mu_{q} \in \mathbb{N}_{>0}$ such that $\mu_{k} > \nu_{k}$, $k=1,\ldots,q$. If $q=0$, the product in the numerator is understood as $1$.
For $j=2$, we require $\alpha>-1$ and $\theta>0$. 
The contours $\gamma, \tilde\gamma$ are both oriented upward, do not intersect each other, and intersect the real line on the interval $(0,1+\nu_{\min})$ if $j=1$, with $\nu_{\min} := \min\{\nu_{1},\ldots,\nu_{r}\}$, and on the interval $(-\frac{\alpha}{2},1+\frac{\alpha}{2})$ if $j=2$. Furthermore, $\gamma$ tends to infinity in sectors lying strictly in the left half plane, and $\tilde\gamma$ tends to infinity in sectors lying strictly in the right half plane, see Figure \ref{fig: gamma and gamma tilde}.

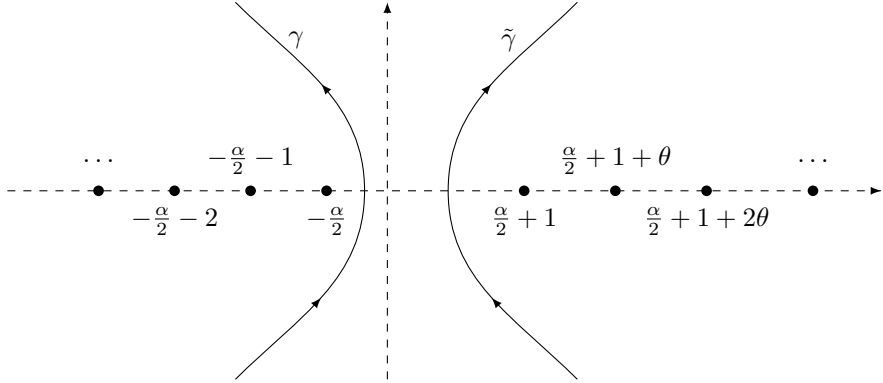
\begin{figure}
\begin{center}
\begin{tikzpicture}
\node at (0,0) {};
\draw[dashed,->-=1,black] (0,-2.5) to [out=90, in=-90] (0,2.5);
\draw[dashed,->-=1,black] (-5,0) to [out=0, in=-180] (6.5,0);

\node at (-1.2,2) {$\gamma$};
\draw[->-=0.5,black] (-2,-2.5) to [out=45, in=-90] (-0.3,0);
\draw[->-=0.5,black] (-0.3,0) to [out=90, in=-45] (-2,2.5);

\node at (1.6,2) {$\tilde{\gamma}$};
\draw[->-=0.5,black] (2.5,-2.5) to [out=135, in=-90] (0.8,0);
\draw[->-=0.5,black] (0.8,0) to [out=90, in=-135] (2.5,2.5);


\fill (-0.8,0) circle (0.07cm);
\node at (-0.8,-0.4) {$-\frac{\alpha}{2}$};
\fill (-1.8,0) circle (0.07cm);
\node at (-1.8,0.4) {$-\frac{\alpha}{2}-1$};
\fill (-2.8,0) circle (0.07cm);
\node at (-2.8,-0.4) {$-\frac{\alpha}{2}-2$};
\fill (-3.8,0) circle (0.07cm);
\node at (-3.8,0.4) {$\ldots$};

\fill (1.8,0) circle (0.07cm);
\node at (1.8,-0.4) {$\frac{\alpha}{2}+1$};
\fill (3,0) circle (0.07cm);
\node at (3,0.4) {$\frac{\alpha}{2}+1+\theta$};
\fill (4.2,0) circle (0.07cm);
\node at (4.2,-0.4) {$\frac{\alpha}{2}+1+2\theta$};
\fill (5.6,0) circle (0.07cm);
\node at (5.6,0.4) {$\ldots$};
\end{tikzpicture}
\end{center}
\caption{\label{fig: gamma and gamma tilde}\textit{The contours $\gamma$ and $\tilde{\gamma}$.}}
\end{figure}

\paragraph{Integrable kernels.} An mentioned at the end of Section \ref{Section:intro}, the kernels $\mathbb{K}^{(j)}$, $j=1,2$ are known to be integrable only for particular values of the parameters. With minor modifications of \cite[Propositions 2.1 and 2.2]{ClaeysGirSti}, we obtain the following.
\begin{proposition}
Let $t \in (0,+\infty)$. For $j=1,2$, we have
\begin{equation}\label{eq:detid}
\det \left( 1 - (1-t)\mathbb{K}^{(j)}\Big|_{[0,s]} \right) = \det \left( 1 - \mathbb{M}_{s,t}^{(j)} \right), \qquad j = 1,2,
\end{equation}
where $\mathbb{M}_{s,t}^{(j)}$ is the integral operator acting on $L^{2}(\gamma \cup \tilde{\gamma})$ with kernel
\begin{align*}
\mathbb{M}_{s,t}^{(j)}(u,v) = \frac{f(u)^{T}g(v)}{u-v},
\end{align*}
where $f$ and $g$ are given by
\begin{align}\label{def of f and g}
f(u) = \frac{1}{2\pi i} \begin{pmatrix}
\chi_{\gamma}(u) \\
s^{u} \chi_{\tilde{\gamma}}(u)
\end{pmatrix}, \qquad g(v) = \begin{pmatrix}
-\sqrt{1-t}F^{(j)}(v)^{-1}\chi_{\tilde{\gamma}}(v), \\
\sqrt{1-t}s^{-v} F^{(j)}(v) \chi_{\gamma}(v)
\end{pmatrix},
\end{align}
and $\chi_{\gamma}$ and $\chi_{\tilde{\gamma}}$ are the characteristic functions of $\gamma$ and $\tilde{\gamma}$, respectively. The determination of $\sqrt{1-t}$ in the definition of $g$ is unimportant (but the same determination must be chosen for both entries of $g$). For definiteness, we require
\begin{align}\label{choice of sqrt(1-t)}
& \sqrt{1-t} \in [0,1),  & &  \mbox{if } t \in (0,1], \\
& \sqrt{1-t} \in [0,i \, \infty), & & \mbox{if } t \in [1,+\infty).
\end{align}
\end{proposition}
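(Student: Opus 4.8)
The statement to be proved is the Fredholm determinant identity \eqref{eq:detid}, expressing the Fredholm determinant of $(1-t)\mathbb{K}^{(j)}$ on $[0,s]$ as a Fredholm determinant of an integrable (IIKS-type) operator $\mathbb{M}_{s,t}^{(j)}$ acting on $L^2(\gamma\cup\tilde\gamma)$. The strategy is to follow the scheme of \cite[Propositions 2.1 and 2.2]{ClaeysGirSti}, making only cosmetic changes to accommodate the slightly different functions $F^{(j)}$ and the two-parameter family here. The first step is to insert the double contour integral representation \eqref{limiting kernels1} into the operator $(1-t)\mathbb{K}^{(j)}\big|_{[0,s]}$ and observe that this exhibits $\mathbb{K}^{(j)}$ as (a constant multiple of) a composition of two operators: one, call it $\mathsf A$, mapping $L^2([0,s])\to L^2(\gamma)$ with kernel proportional to $F^{(j)}(u)x^{-u}$, and another, $\mathsf B$, mapping $L^2(\tilde\gamma)\to L^2([0,s])$ (or the appropriate transpose) with kernel involving $F^{(j)}(v)^{-1}y^{v-1}$ and the Cauchy kernel $\frac{1}{u-v}$.

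\textbf{Key steps.} Concretely, I would write $(1-t)\mathbb{K}^{(j)}\big|_{[0,s]} = \mathsf{P}\, \mathsf{Q}$ where $\mathsf{P}: L^2(\gamma\cup\tilde\gamma)\to L^2([0,s])$ and $\mathsf{Q}: L^2([0,s])\to L^2(\gamma\cup\tilde\gamma)$ are built out of the factors $f$ and $g$ in \eqref{def of f and g}: the entry $s^u\chi_{\tilde\gamma}(u)$ and $s^{-v}F^{(j)}(v)\chi_\gamma(v)$ encode the truncation to $[0,s]$ after the change of variables $x = s\,\xi$, $\xi\in[0,1]$ (this is where the Mellin-type conjugation enters and turns powers $x^{-u}$ into $s^{-u}\xi^{-u}$, with the $\xi$-integral producing the Cauchy kernel $\frac{1}{u-v}$ upon integrating $\int_0^1 \xi^{v-u-1}d\xi = \frac{1}{v-u}$). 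Then I invoke the general identity $\det(1-\mathsf{P}\mathsf{Q}) = \det(1-\mathsf{Q}\mathsf{P})$, valid whenever both compositions are trace class, to move from a Fredholm determinant on $L^2([0,s])$ to one on $L^2(\gamma\cup\tilde\gamma)$. Computing $\mathsf{Q}\mathsf{P}$ explicitly gives precisely the kernel $\mathbb{M}_{s,t}^{(j)}(u,v) = \frac{f(u)^T g(v)}{u-v}$, with the $2\times 2$ block structure coming from whether $u,v$ lie on $\gamma$ or $\tilde\gamma$; the choice of branch \eqref{choice of sqrt(1-t)} of $\sqrt{1-t}$ just distributes the factor $1-t$ symmetrically between $\mathsf P$ and $\mathsf Q$ and must be checked to be consistent. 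Finally I verify the trace-class (or at least Hilbert–Schmidt for each factor, trace-class for the composition) property: the contours $\gamma,\tilde\gamma$ decay into the left/right half planes respectively, and the ratio $F^{(j)}(u)/F^{(j)}(v)$ together with the Cauchy kernel and the Gamma-function decay guarantees the needed integrability along $\gamma\cup\tilde\gamma$ — this is exactly where the precise asymptotics of $F^{(j)}$ near infinity along the chosen sectors is used, and is the reason the sectors in Figure \ref{fig: gamma and gamma tilde} are prescribed as they are.

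\textbf{Main obstacle.} The routine algebra of composing the operators is straightforward; the genuine technical point — and the one place where the present setting is not literally identical to \cite{ClaeysGirSti} — is the verification that $\mathsf P\mathsf Q$ and $\mathsf Q\mathsf P$ are trace class, which requires controlling the growth/decay of $F^{(1)}(z) = \Gamma(z)\prod_{k}\Gamma(1+\mu_k-z)/\prod_k\Gamma(1+\nu_k-z)$ and $F^{(2)}(z) = \Gamma(z+\tfrac{\alpha}{2})/\Gamma\big(\tfrac{\alpha/2+1-z}{\theta}\big)$ along the contours. For $F^{(2)}$ with irrational $\theta$ one must use the Stirling asymptotics of $\Gamma$ carefully to see that $F^{(2)}(u)x^{-u}$ decays as $u\to\infty$ along $\gamma$ (left half plane) and $F^{(2)}(v)^{-1}y^{v-1}$ decays as $v\to\infty$ along $\tilde\gamma$ (right half plane), uniformly for $x,y\in(0,s]$; for $F^{(1)}$ the ratio of Gamma products behaves, by Stirling, like a power of $z$ times an exponential that is integrable along the prescribed sectors. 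Once this decay is in hand, the conjugation by the truncation/Mellin factors is legitimate, $\det(1-\mathsf P\mathsf Q)=\det(1-\mathsf Q\mathsf P)$ applies, and \eqref{eq:detid} follows. I would state the contour decay estimates as a short lemma (or simply cite the corresponding estimates from \cite{ClaeysGirSti}, noting the minor modifications) and keep the rest of the argument at the level of the composition identity.
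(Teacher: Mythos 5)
Your toolkit is the right one (Mellin-type kernels, the identity $\det(1-\mathsf P\mathsf Q)=\det(1-\mathsf Q\mathsf P)$, and Stirling-based decay estimates along the prescribed sectors, exactly as in \cite{ClaeysGirSti}, which is what the paper itself invokes, the only new feature being the factor $\sqrt{1-t}$ inserted into the kernels). However, the specific way you propose to apply the swap has a genuine structural gap. You factor $(1-t)\mathbb{K}^{(j)}\big|_{[0,s]}=\mathsf P\mathsf Q$ through $L^{2}([0,s])$ and assert that the reverse composition $\mathsf Q\mathsf P$ ``gives precisely'' the kernel $\mathbb{M}^{(j)}_{s,t}$. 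This cannot happen. The operator $\mathbb{M}^{(j)}_{s,t}$ is block \emph{off-diagonal} with respect to $L^{2}(\gamma)\oplus L^{2}(\tilde\gamma)$: the product $f(u)^{T}g(v)$ only contains the terms $\chi_{\gamma}(u)\chi_{\tilde\gamma}(v)$ and $\chi_{\tilde\gamma}(u)\chi_{\gamma}(v)$. On the other hand, any composition through $L^{2}([0,s])$ has $(\gamma,\gamma)$ block $\int_{0}^{s}q_{\gamma}(u,x)p_{\gamma}(x,u')\,dx$ and $(\tilde\gamma,\tilde\gamma)$ block $\int_{0}^{s}q_{\tilde\gamma}(v,x)p_{\tilde\gamma}(x,v')\,dx$, where $q_{\gamma},q_{\tilde\gamma},p_{\gamma},p_{\tilde\gamma}$ are the restrictions of the kernels of $\mathsf Q,\mathsf P$ to the two contours, and for the two off-diagonal blocks of $\mathsf Q\mathsf P$ to be nonzero all four of these families must be nontrivial. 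With the Mellin/power kernels you describe (the mechanism $\int_{0}^{1}\xi^{v-u-1}d\xi=\frac{1}{v-u}$), the diagonal pairings are then nonvanishing beta-type integrals such as $\int_{0}^{s}x^{-u-u'}dx=\frac{s^{1-u-u'}}{1-u-u'}$; and if you try to force them to vanish, the vanishing of $\int_{0}^{s}x^{-u}p_{\gamma}(x,u')dx$ for all $u\in\gamma$ forces $p_{\gamma}\equiv 0$ by analyticity of the Mellin transform, which kills the $(\tilde\gamma,\gamma)$ block as well. So no factorization of the type you indicate can satisfy $\mathsf Q\mathsf P=\mathbb{M}^{(j)}_{s,t}$, and the single swap through $L^{2}([0,s])$ does not prove \eqref{eq:detid}.

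The repair is the two-step structure of \cite[Propositions 2.1 and 2.2]{ClaeysGirSti}, which is precisely what the paper adapts. First, write $(1-t)\mathbb{K}^{(j)}\big|_{[0,s]}=\mathcal A\mathcal B$ with $\mathcal B:L^{2}([0,s])\to L^{2}(\tilde\gamma)$ of Mellin type, with kernel proportional to $\sqrt{1-t}\,y^{v-1}F^{(j)}(v)^{-1}$, and $\mathcal A:L^{2}(\tilde\gamma)\to L^{2}([0,s])$ whose kernel already contains the $\gamma$-integral $\frac{\sqrt{1-t}}{2\pi i}\int_{\gamma}F^{(j)}(u)\frac{x^{-u}}{u-v}\,du$; the swap $\det(1-\mathcal A\mathcal B)=\det(1-\mathcal B\mathcal A)$ lands on $L^{2}(\tilde\gamma)$, and the $x$-integration $\int_{0}^{s}x^{v-u-1}dx=\frac{s^{v-u}}{v-u}$ produces the second Cauchy factor, giving (up to a harmless diagonal conjugation) the kernel $\mathbb H^{(j)}_{s}$ of \cite[eq.\ (2.10)]{ClaeysGirSti} with the extra $\sqrt{1-t}$. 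Second, observe that $\mathbb H^{(j)}_{s}$ itself is a composition of two operators mapping between $L^{2}(\gamma)$ and $L^{2}(\tilde\gamma)$, with kernels $A^{(j)}$ and $B^{(j)}$ as in the paper's proof, and apply the determinant identity once more, now factoring through $L^{2}(\gamma)$ rather than $L^{2}([0,s])$ (equivalently, use that $\det(1-\mathsf B\mathsf A)$ equals the Fredholm determinant of the operator on $L^{2}(\gamma\cup\tilde\gamma)$ with zero diagonal blocks and off-diagonal blocks $\mathsf A,\mathsf B$); after a diagonal conjugation by powers of $s$ and by $F^{(j)}$, which leaves the determinant unchanged, this block operator is exactly $\mathbb M^{(j)}_{s,t}$ with $f,g$ as in \eqref{def of f and g}. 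Your trace-class concerns are legitimate and are handled at this level exactly as in \cite{ClaeysGirSti}; but without the intermediate single-contour operator and the second factorization, your argument does not reach the block off-diagonal IIKS form.
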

\begin{proof}
The proof for $t=0$ can be found in \cite[Propositions 2.1 and 2.2]{ClaeysGirSti} and relies on a conjugation of $\left.\mathbb K^{(j)}\right|_{[0,s]}$ with a Mellin transform. The proof for arbitrary values of $t \in (0,+\infty)$ only requires minor modifications: the quantity $\mathbb{H}_{s}^{(j)}(v,z)$ of \cite[eq (2.10)]{ClaeysGirSti}\footnote{There is a factor $\frac{1}{2\pi i}$ missing in the expressions for $\mathbb{H}_{s}^{(j)}(v,z)$ and $B^{(j)}(v,u)$ of \cite[eq (2.10) and (2.19)]{ClaeysGirSti}.} needs to modified to
\begin{align*}
\frac{\sqrt{1-t}}{2\pi i} \int_{\gamma} \frac{du}{2\pi i}s^{z-u} \frac{F^{(j)}(u)}{F^{(j)}(v)(v-u)(z-u)},
\end{align*}
and the kernels $A^{(j)}$ and $B^{(j)}$ of \cite[eq (2.19)]{ClaeysGirSti} need to be modified to 
\begin{align*}
A^{(j)}(u,z) = \sqrt{1-t} \frac{s^{z-u}F^{(j)}(u)}{2\pi i (z-u)}, \qquad B^{(j)}(v,u) = \frac{\sqrt{1-t}}{2\pi i F^{(j)}(v)(v-u)},
\end{align*}
where the determination of $\sqrt{1-t}$ is unimportant, as long as it is the same for $A^{(j)}$ and $B^{(j)}$.
\end{proof}

Using a method developed by Its, Izergin, Korepin, and Slavnov \cite{IIKS}, we will establish differential identities in $s$ and $t$ for the logarithm of the Fredholm determinants \eqref{eq:detid} in terms of the following RH problem:
\subsubsection*{RH problem for $Y=Y^{(j)}$, $j=1,2$}
\begin{itemize}
\item[(a)] $Y : \mathbb{C} \setminus (\gamma \cup \tilde{\gamma}) \to \mathbb{C}^{2 \times 2}$ is analytic. 
\item[(b)] $Y(z)$ has continuous boundary values $Y_{\pm}(z)$ as $z$ approaches the contour $\gamma \cup \tilde{\gamma}$ from the left ($+$) and right ($-$), according to its orientation, and we have the jump relations
\begin{align*}
Y_{+}(z) = Y_{-}(z) J(z), \qquad z \in \gamma \cup \tilde{\gamma},
\end{align*}
with jump matrix $J=J^{(j)}$ given by
\begin{align}\label{jumps of Y}
J(z) = I - 2 \pi i f(z) g(z)^{T} = \begin{cases}
\begin{pmatrix}
1 & -\sqrt{1-t}s^{-z}F^{(j)}(z) \\
0 & 1
\end{pmatrix}, & z \in \gamma, \\
\begin{pmatrix}
1 & 0 \\
\sqrt{1-t} s^{z} F^{(j)}(z)^{-1} & 1 
\end{pmatrix}, & z \in \tilde{\gamma}.
\end{cases}
\end{align}
\item[(c)] As $z \to \infty$, there exists $Y_{1} = Y_{1}^{(j)}(s,t)$ independent of $z$ such that
\begin{align*}
Y(z) = I + \frac{Y_{1}}{z} + \bigO(z^{-2}).
\end{align*}
\end{itemize}
\begin{remark}\label{remark: symmetry}
We have some freedom in the choice of $\gamma$ and $\tilde{\gamma}$. We choose them symmetric with respect to the real line. This symmetry will be useful later to simplify computations.
\end{remark}
\begin{lemma}\label{lemma: differential identities}
For $j=1,2$, we have the following differential identities:
\begin{align}
& \partial_{s} \log \det \left( 1 - (1-t)\mathbb{K}^{(j)}\Big|_{[0,s]} \right) = \frac{Y_{1,11}}{s}, \label{diff identity with s} \\
& \partial_{t} \log \det \left( 1 - (1-t)\mathbb{K}^{(j)}\Big|_{[0,s]} \right) = \frac{-1}{2(1-t)} \int_{\gamma \cup \tilde{\gamma}} {\rm Tr} \Big[ Y^{-1}(z)Y'(z)(J(z)-I) \Big] \frac{dz}{2\pi i}. \label{diff identity with t}
\end{align}
\end{lemma}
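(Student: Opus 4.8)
The plan is to derive both identities in Lemma~\ref{lemma: differential identities} from the general IIKS formalism applied to the integrable operator $\mathbb{M}_{s,t}^{(j)}$ of the previous proposition, following closely the strategy of \cite{IIKS} and \cite{ClaeysGirSti}. First I would record the standard consequences of the IIKS theory: since $\mathbb{M}_{s,t}^{(j)}$ has integrable kernel $\frac{f(u)^T g(v)}{u-v}$ with $f,g$ as in \eqref{def of f and g}, the associated RH problem for $Y=Y^{(j)}$ (parts (a)--(c) above) is solvable precisely when $1-\mathbb{M}_{s,t}^{(j)}$ is invertible, and the resolvent kernel can be written as $\frac{F(u)^T G(v)}{u-v}$ where $F = Y_\pm f$ and $G = (Y_\pm^{-T}) g = (Y_\pm^T)^{-1} g$ (these products are continuous across the contour despite the subscripts). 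The general principle is that $\log\det(1-\mathbb{M}_{s,t}^{(j)})$ has a computable derivative with respect to any parameter $\lambda$ on which the operator depends, given by $\partial_\lambda \log\det(1-\mathbb{M}) = -\operatorname{Tr}\big((1-\mathbb{M})^{-1}(\partial_\lambda \mathbb{M})\big)$, and the right-hand side simplifies using the explicit resolvent and the structure of $f,g$.

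For the $s$-derivative \eqref{diff identity with s}: the dependence of $\mathbb{M}_{s,t}^{(j)}$ on $s$ enters only through the factors $s^{\pm u}$, $s^{\pm v}$ in $f$ and $g$, i.e.\ through a conjugation of $f,g$ by $\operatorname{diag}(1, s^u)$ and $\operatorname{diag}(1,s^{-v})$-type factors. Differentiating in $s$ produces a factor $\frac1s$ times a multiplication by $u$ (resp.\ $v$) on the appropriate components. I would plug this into the trace formula, use the resolvent representation together with the identity $\frac{u-v}{u-v}=1$ that collapses the double integral in the IIKS computation, and integrate by parts / use the jump relation to rewrite everything in terms of the coefficient $Y_1$ in the expansion $Y(z)=I+Y_1/z+\bigO(z^{-2})$. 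The outcome is that only the $(1,1)$ entry survives, yielding $\partial_s \log\det = Y_{1,11}/s$. This is the more routine of the two and the computation is essentially the one in \cite[Section 2]{ClaeysGirSti} carried through with the extra factor $\sqrt{1-t}$, which does not affect the $s$-dependence.

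For the $t$-derivative \eqref{diff identity with t}: here the dependence on $t$ is only through the scalar prefactor $\sqrt{1-t}$ in $g$, equivalently through the jump matrix $J=J^{(j)}$, which depends on $t$ only via this prefactor and satisfies $\partial_t (J-I) = \frac{-1}{2(1-t)}(J-I)$ on each of $\gamma$ and $\tilde\gamma$ (using that $J-I$ is linear in $\sqrt{1-t}$ and $\partial_t \sqrt{1-t} = \frac{-1}{2\sqrt{1-t}}$, with $\sqrt{1-t}$ the chosen determination from \eqref{choice of sqrt(1-t)}). Writing $\partial_t \log\det(1-\mathbb{M}) = -\operatorname{Tr}((1-\mathbb{M})^{-1}\partial_t \mathbb{M})$ and expressing this trace via the resolvent and the relation $F = Y_- f$, $G^T = g^T Y_-^{-1}$, the double contour integral again collapses and one is left with a single contour integral of $\operatorname{Tr}\big[Y^{-1}(z)Y'(z)(J(z)-I)\big]$ against $\frac{dz}{2\pi i}$, times the scalar $\frac{-1}{2(1-t)}$; the $Y'$ arises from the $\frac{1}{u-v}$ kernel singularity after the collapse, exactly as in the classical IIKS deformation formula. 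I would present this following the template of \cite[eq (2.19)]{ClaeysGirSti} (heeding the footnote correction about the missing $\frac{1}{2\pi i}$ factor) and the derivation of differential identities in \cite{ClaeysGirSti, IIKS}.

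The main obstacle I expect is bookkeeping rather than conceptual: one must be careful about (i) the boundary values $Y_\pm$ and the fact that $Y_\pm f$ and $g^T Y_\pm^{-1}$ are genuinely continuous across $\gamma\cup\tilde\gamma$ so that the contour integrals make sense, (ii) the orientation and the precise placement of $\gamma,\tilde\gamma$ (which, by Remark~\ref{remark: symmetry}, we take symmetric about $\mathbb{R}$), and (iii) justifying the interchange of differentiation and the Fredholm trace, which requires the resolvent to depend smoothly on $s$ and $t$ in trace norm --- this follows from the invertibility of $1-\mathbb{M}_{s,t}^{(j)}$ (equivalently solvability of the RH problem) together with the fact that $f,g$ and their $s,t$-derivatives lie in $L^2(\gamma\cup\tilde\gamma)$ with the appropriate decay coming from the behaviour of $F^{(j)}$ along the contours. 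Once these points are in place, both identities drop out of the standard IIKS manipulations.
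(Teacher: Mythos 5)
Your proposal is correct and follows essentially the same route as the paper: the paper's proof simply cites the general IIKS/Bertola deformation formula (\cite[Section 5.1]{Bertola}) together with \cite[page 13]{ClaeysGirSti} for \eqref{diff identity with s}, and then verifies the one-line identity $\partial_t J(z)\, J(z)^{-1} = \frac{-1}{2(1-t)}(J(z)-I)$, which is exactly the observation at the heart of your argument for \eqref{diff identity with t}. The only real difference is that you re-derive the relevant specialization of Bertola's formula by hand from the resolvent representation --- which does go through, since here $\partial_t \mathbb{M}^{(j)}_{s,t} = \frac{-1}{2(1-t)}\mathbb{M}^{(j)}_{s,t}$ so the trace reduces to the resolvent trace, and the extra diagonal term $f'(z)^{T}g(z)$ produced in the collapse vanishes because the nonzero entries of $f$ and $g$ are supported on opposite contours --- whereas the paper outsources this step to the cited references.
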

\begin{remark}\label{remark: no boundary values indicated}
We do not mention whether we take the $+$ or $-$ boundary values of $Y$ in the integrand of \eqref{diff identity with t}. This is without ambiguity, because
\begin{multline*}
{\rm Tr} \Big[ Y_{+}^{-1}(z)Y_{+}'(z)(J(z)-I) \Big] = {\rm Tr} \Big[ Y_{-}^{-1}(z)Y_{-}'(z)(J(z)-I) \Big] \\
={\rm Tr} \Big[ Y_{+}^{-1}(z)Y_{-}'(z)(J(z)-I) \Big] = {\rm Tr} \Big[ Y_{-}^{-1}(z)Y_{+}'(z)(J(z)-I) \Big].
\end{multline*}
\end{remark}
\begin{proof}
Both \eqref{diff identity with s} and \eqref{diff identity with t} are specializations of more general results from \cite{Bertola}.
For the proof of \eqref{diff identity with s}, we refer to \cite[page 13]{ClaeysGirSti}, and for the proof of \eqref{diff identity with t}, we apply \cite[Section 5.1]{Bertola} (with $\partial = \partial_{t}$) to obtain
\begin{align*}
\partial_{t} \log \det \left( 1 - (1-t)\mathbb{K}^{(j)}\Big|_{[0,s]} \right) = \int_{\gamma \cup \tilde{\gamma}} {\rm Tr} \Big[ Y_{-}^{-1}(z)Y_{-}'(z) \partial_{t}J(z) J^{-1}(z) \Big] \frac{dz}{2\pi i}.
\end{align*}
From \eqref{jumps of Y}, it is straightforward to verify that
\begin{equation*}
\partial_{t}J(z)J(z)^{-1} = \frac{-1}{2(1-t)} (J(z)-I),
\end{equation*}
which yields \eqref{diff identity with t} and finishes the proof.
\end{proof}

\section{Steepest descent analysis}\label{Section: steepest descent}
In this section, we use the Deift/Zhou \cite{DeiftZhou} steepest descent method to perform an asymptotic analysis of $Y = Y^{(j)}$, $j=1,2$, as $s \to + \infty$ uniformly for $t$ in compact subsets of $(0,+\infty)$. The first transformation $Y \mapsto U$ in Section \ref{subsection: Y to U} is a rescaling which is identical to the one from \cite{ClaeysGirSti}. The rest of the analysis differs drastically from \cite{ClaeysGirSti}, and we highlight the main ideas for it here. As common in steepest descent analysis of RH problems, we will need to do a saddle points analysis of a phase function appearing in the jump matrix for $U$ (see Section \ref{subsection: saddles point analysis}). The opening of the lenses is done in two steps $U \mapsto \widehat{T} \mapsto T$ presented in Section \ref{subsection: U to T}, and we emphasize that this is somewhat unusual, in that it requires two different factorizations of the jump matrix, each on a different part of the jump contour. The global parametrix $P^{(\infty)}$ of Section \ref{subsection: global param} approximates $T$ everywhere in the complex plane except near the saddle points $b_{1}$ and $b_{2}$. In Sections \ref{subsection: local param at b2} and \ref{subsection: local param at b1}, we construct local parametrices $P^{(b_{k})}$ in terms of parabolic cylinder functions. The local parametrix $P^{(b_{k})}$ is defined in a small disk $\mathcal{D}_{b_{k}}$ centered at $b_{k}$ and satisfies the same jumps as $T$. The last step $T \mapsto R$ of the steepest descent analysis is completed in Section \ref{subsection: small norm}. A matrix $R$ is built in terms of $T$, $P^{(\infty)}$, $P^{(b_{1})}$, and $P^{(b_{2})}$, and we show that it satisfies a small norm RH problem. In particular, $R(z)$ is close to the identity matrix as $s \to + \infty$. We also compute the first two subleading terms of $R$ which are needed for the proofs of Theorems \ref{thm:expmoments Wrights} and \ref{thm:expmoments Meijer}. 
\subsection{First transformation $Y \mapsto U$}\label{subsection: Y to U}
We first rescale the variable of the RH problem for $Y$ in a convenient way. In the same way as in \cite[Section 3.1]{ClaeysGirSti}, we define
\begin{align}\label{def of U}
U(\zeta) = s^{\frac{\tau}{2}\sigma_{3}} Y \big( i s^{\rho} \zeta + \tau \big) s^{-\frac{\tau}{2}\sigma_{3}},
\end{align}
where $\tau=\tau^{(j)}$ and $\rho = \rho^{(j)}$, $j=1,2$, are given by
\begin{align}
& \tau^{(1)} = \frac{\nu_{\min}+1}{2}, & & \rho^{(1)} = \frac{1}{r-q+1}, \label{tau rho 1} \\
& \tau^{(2)} = \frac{1}{2}, & & \rho^{(2)} = \frac{\theta}{\theta + 1}, \label{tau rho 2}
\end{align}
and $\nu_{\min} := \min \{\nu_{1},\ldots,\nu_{r}\}$. The matrix $U$ satisfies the following RH problem.
\subsubsection*{RH problem for $U$}
\begin{itemize}
\item[(a)] $U : \mathbb{C}\setminus (\gamma_{U} \cup \gamma_{\tilde{U}}) \to \mathbb{C}^{2 \times 2}$ is analytic, where
\begin{align}\label{def of gamma U and gamma tilde U}
& \gamma_{U} = \{\zeta \in  \mathbb{C} : is^{\rho} \zeta + \tau \in \gamma \}, \qquad \mbox{ and } \qquad \tilde{\gamma}_{U} = \{\zeta \in  \mathbb{C} : is^{\rho} \zeta + \tau \in \tilde{\gamma} \}.
\end{align}
The contour $\gamma_{U}$ (resp. $\tilde{\gamma}_{U}$) lies in the upper (resp. lower) half plane and is oriented from left to right.
\item[(b)] $U$ satisfies the jumps $U_{+}(\zeta) = U_{-}(\zeta)J_{U}(\zeta)$ for $\zeta \in \gamma_{U} \cup \tilde{\gamma}_{U}$ with
\begin{align*}
J_{U}(\zeta) = \begin{cases}
\begin{pmatrix}
1 & -\sqrt{1-t} \; s^{-is^{\rho}\zeta}F(is^{\rho}\zeta + \tau) \\
0 & 1
\end{pmatrix}, & \mbox{if } \zeta \in \gamma_{U}, \\
\begin{pmatrix}
1 & 0 \\
\sqrt{1-t} \; s^{is^{\rho}\zeta}F(is^{\rho}+\tau)^{-1} & 1
\end{pmatrix}, & \mbox{if } \zeta \in \tilde{\gamma}_{U}.
\end{cases}
\end{align*}
\item[(c)] As $\zeta \to \infty$, we have
\begin{align*}
U(\zeta) = I + \frac{U_{1}}{\zeta} + \bigO(\zeta^{-2}), 
\end{align*}
where $U_{1} = U_{1}^{(j)}(s,t)$ is given by
\begin{align*}
U_{1} = \frac{1}{is^{\rho}} s^{\frac{\tau}{2}\sigma_{3}}Y_{1}s^{-\frac{\tau}{2}\sigma_{3}}.
\end{align*}
\end{itemize}
\begin{remark}\label{remark: symmetry for U}
Since $\gamma$ and $\tilde{\gamma}$ are symmetric with respect to the real line, the contours $\gamma_{U}$ and $\tilde{\gamma}_{U}$ are symmetric with respect to $i \mathbb{R}$. Furthermore, we note that the function
\begin{align*}
\zeta \mapsto f(\zeta) := s^{-is^{\rho}\zeta}F(is^{\rho}\zeta + \tau)
\end{align*}
satisfies the symmetry relation $f(\zeta) = \overline{f(-\overline{\zeta})}$, and thus we also have $J_{U}(\zeta) = \overline{J_{U}(-\overline{\zeta})}$ for $\zeta \in \gamma_{U} \cup \tilde{\gamma}_{U}$. By uniqueness of the RH solution $U$, we conclude that
\begin{align*}
U(\zeta) = \overline{U(-\overline{\zeta})}, \qquad \zeta \in \mathbb{C}\setminus (\gamma_{U} \cup \gamma_{\tilde{U}}).
\end{align*}
\end{remark}
\subsection{Saddle point analysis}\label{subsection: saddles point analysis}
We choose the branch for $\log F^{(j)}$, $j=1,2$, such that
\begin{align*}
& \log F^{(1)}(z) = \log \Gamma(z) -\sum_{k=1}^{r} \log \Gamma(1+\nu_{k}-z) + \sum_{k=1}^{q} \log \Gamma(1+\mu_{k}-z), \\
& \log F^{(2)}(z) = \log \Gamma \left( z + \frac{\alpha}{2} \right) - \log \Gamma \left( \frac{\frac{\alpha}{2}+1-z}{\theta} \right),
\end{align*}
where $z \mapsto \log \Gamma(z)$ is the log-gamma function, which has a branch cut along $(-\infty,0]$. Therefore, $z \mapsto \log F^{(1)}(z)$ has a branch cut along $(-\infty,0]  \cup [1+\nu_{\min},+\infty)$, and $z \mapsto \log F^{(2)}(z)$ has a branch cut along $(-\infty,-\frac{\alpha}{2}]\cup [1+\frac{\alpha}{2},+\infty)$. Asymptotics for $\log \big( s^{-is^{\rho}\zeta}F(is^{\rho} \zeta + \tau)\big)$ as $s \to + \infty$ and simultaneously $s^{\rho}\zeta \to \infty$, $\big|\arg(\zeta) \pm \frac{\pi}{2} \big| > \epsilon > 0$ were computed in \cite{ClaeysGirSti} and are given by
\begin{align*}
\log \big( s^{-is^{\rho}\zeta} F(is^{\rho}\zeta + \tau)\big) = & \; is^{\rho} [c_{1} \zeta \log(i \zeta) + c_{2} \zeta \log(-i\zeta) + c_{3} \zeta] \\
& \; + c_{4} \log(s) + c_{5} \log(i\zeta) + c_{6} \log(-i\zeta) + c_{7} + \frac{c_{8}}{is^{\rho}\zeta} + \bigO \left( \frac{1}{s^{2\rho}\zeta^{2}} \right),
\end{align*}
where the constants $\{c_{i}=c_{i}^{(j)}\}_{i=1}^{8}$, $j = 1,2$ are given by \cite[equations (3.10)--(3.12)]{ClaeysGirSti}\footnote{The upperscripts $j=1, 2$ in this paper correspond to the upperscripts $j=2, 3$ in \cite{ClaeysGirSti}. Also, there is a typo in \cite[equation (3.12)]{ClaeysGirSti} for the constant $c_{8}^{(3)}$. The correct value of $c_{8}^{(3)}$ can be found in \cite[equation (2.4)]{CLM2019}.}. The values of $c_{7}$ and $c_{8}$ turn out to be unimportant for us. We recall the values of the other constants here. For $j=1$, we have
\begin{align*}
& c_{1} = 1, & & c_{2} = r-q, & & c_{3} = -(r-q+1), \\
& c_{4} = \frac{\nu_{\min}}{2} - \frac{\sum_{k=1}^{r} \nu_{k} - \sum_{k=1}^{q} \mu_{k}}{r-q+1} & & c_{5} = \frac{\nu_{\min}}{2}, & & c_{6} = (r-q)\frac{\nu_{\min}}{2} - \sum_{j=1}^{r} \nu_{j} + \sum_{k=1}^{q} \mu_{k},
\end{align*}
and for $j=2$, we have
\begin{align*}
& c_{1} = 1, & & c_{2} = \frac{1}{\theta}, & & c_{3} = - \frac{\theta + 1 + \log \theta}{\theta}, \\
& c_{4} = \frac{(\theta-1)(1+\alpha)}{2(\theta + 1)}, & & c_{5} = \frac{\alpha}{2}, & & c_{6} = \frac{\theta-\alpha -1}{2 \theta}.
\end{align*}
Following \cite{ClaeysGirSti}, we define $h(\zeta) = h^{(j)}(\zeta)$, $j=1,2$, by
\begin{align}\label{def of h}
& h(\zeta) = -c_{1} \zeta \log(i \zeta) - c_{2} \zeta \log(-i \zeta) - c_{3} \zeta,
\end{align}
where the principal branch is chosen for the logarithms, and $\mathcal{G}=\mathcal{G}^{(j)}$ is defined via
\begin{align}\label{def of G}
s^{-is^{\rho}\zeta} F(is^{\rho}\zeta + \tau) = e^{-is^{\rho}h(\zeta)} \mathcal{G}(\zeta;s).
\end{align}
We have
\begin{align}\label{asymp for G}
\log \mathcal{G}(\zeta;s) = c_{4} \log s + c_{5} \log(i\zeta) + c_{6} \log(-i\zeta) + c_{7} + \frac{c_{8}}{is^{\rho}\zeta} + \bigO \left( \frac{1}{s^{2\rho}\zeta^{2}} \right)
\end{align}
as $s \to + \infty$ such that $s^{\rho}\zeta \to \infty$, $\big|\arg(\zeta) \pm \frac{\pi}{2} \big| > \epsilon > 0$. On the other hand, as $s \to +\infty$ and simultaneously $\zeta\to 0$ such that $s^{\rho}\zeta = \bigO(1)$, and such that $is^{\rho}\zeta + \tau$ is bounded away from the poles of $F$, we have $\mathcal{G}(\zeta;s) = \bigO(1)$. The jumps for $U$ can be rewritten in terms of $h$ and $\mathcal{G}$ as follows,
\begin{align}\label{JU in terms of G}
J_{U}(\zeta) = \begin{cases}
\begin{pmatrix}
1 & -\sqrt{1-t} \; e^{-is^{\rho}h(\zeta)} \mathcal{G}(\zeta;s) \\
0 & 1
\end{pmatrix}, & \mbox{if } \zeta \in \gamma_{U}, \\
\begin{pmatrix}
1 & 0 \\
\sqrt{1-t} \; e^{is^{\rho}h(\zeta)} \mathcal{G}(\zeta;s)^{-1} & 1
\end{pmatrix}, & \mbox{if } \zeta \in \tilde{\gamma}_{U}.
\end{cases}
\end{align}
\paragraph{Saddle points of $h$.} The saddle points of $h$ are the solutions to $h'(\zeta) = 0$. Using \eqref{def of h}, this equation can be written explicitly as follows:
\begin{align}\label{saddles point equation}
-(c_{1}+c_{2}+c_{3})-c_{1} \zeta \log(i\zeta) - c_{2} \zeta \log(-i\zeta) = 0.
\end{align}
A direct computation shows that this equation admits two solutions $\zeta = b_{2}$ and $\zeta = b_{1}$, where
\begin{align}\label{def of b2}
b_{2} = - \overline{b_{1}} = \exp\Big(\hspace{-0.1cm}-\hspace{-0.05cm}\frac{c_{1}+c_{2}+c_{3}}{c_{1}+c_{2}}\Big) \exp \Big( \frac{i\pi}{2}\frac{c_{2}-c_{1}}{c_{1}+c_{2}}\Big).
\end{align}
For the Meijer-$G$ point process (i.e. $j=1$), we have $c_{1}+c_{2}+c_{3} = 0$ and $c_{2} > c_{1}$, and therefore $b_{2}$ lies on the unit circle in the quadrant $Q_{1} := \{\zeta \in \mathbb{C}: \re \zeta \geq 0, \im \zeta \geq 0\}$. For the Wright's generalized Bessel process (i.e. $j = 2$), $b_{2}$ lies on the circle centered at the origin of radius  $\exp\big(\hspace{-0.1cm}-\hspace{-0.05cm}\frac{c_{1}+c_{2}+c_{3}}{c_{1}+c_{2}}\big)$; $b_{2}$ is in the quadrant $Q_{1}$ for $\theta \leq 1$, and in the quadrant $Q_{4} := \{\zeta \in \mathbb{C}: \re \zeta \geq 0, \im \zeta \leq 0\}$ for $\theta \geq 1$. Let us define 
\begin{align*}
\ell := \re (ih(b_{2})) = -(c_{1}+c_{2})\exp \left( - \frac{c_{1}+c_{2}+c_{3}}{c_{1}+c_{2}} \right) \sin \left( \frac{\pi}{2} \frac{c_{2}-c_{1}}{c_{1}+c_{2}} \right).
\end{align*}
We consider the zero set of $\re (ih)-\ell$:
\begin{align*}
\mathcal{N} = \{\zeta \in \mathbb{C}: \re(ih(\zeta)) = \ell\},
\end{align*}
which is visualized in Figure \ref{fig: real part of ih-ell}.
 
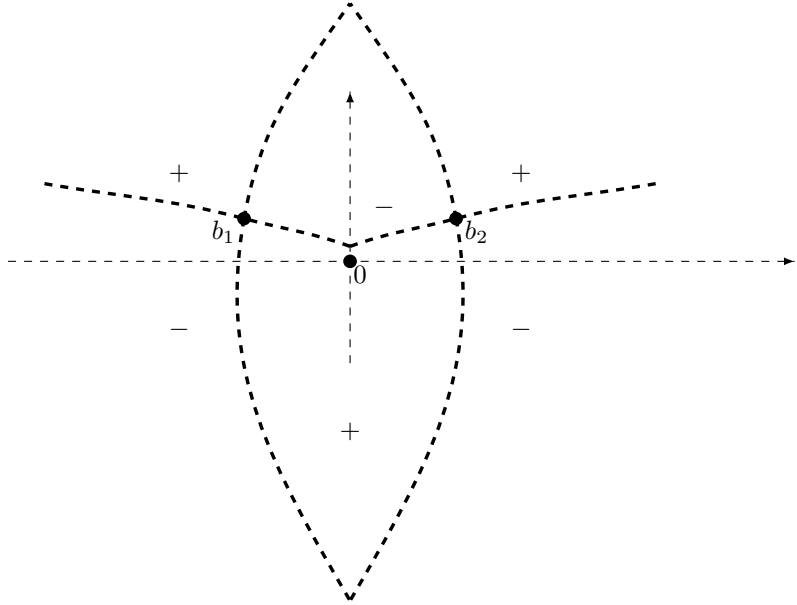
\begin{figure}
\begin{center}
\begin{tikzpicture}[scale = 0.9]
\node at (0,0) {};
\fill (0,0) circle (0.1cm);
\node at (0.15,-0.2) {$0$};
\fill (1.55,0.63) circle (0.1cm);
\node at (1.85,0.44) {$b_{2}$};
\fill (-1.55,0.63) circle (0.1cm);
\node at (-1.85,0.44) {$b_{1}$};

\draw[dashed,->-=1,black] (0,-1.5) to [out=90, in=-90] (0,2.5);
\draw[dashed,->-=1,black] (-5,0) to [out=0, in=-180] (6.5,0);

\draw[dashed,line width=0.45 mm,black] (0,0.22) to [out=20, in=180+15] (1.55,0.63) 
to [out=15, in=180+10] (4.5,1.15);
\draw[dashed,line width=0.45 mm,black] (0,0.22) to [out=180-20, in=-15] (-1.55,0.63)
to [out=180-15, in=-10] (-4.5,1.15);
\draw[dashed,line width=0.45 mm,black] (1.55,0.63) to [out=100, in=180+125] (0,3.8);
\draw[dashed,line width=0.45 mm,black] (1.55,0.63) to [out=-80, in=180-120] (0,-5);
\draw[dashed,line width=0.45 mm,black] (-1.55,0.63) to [out=180-100, in=-125] (0,3.8);
\draw[dashed,line width=0.45 mm,black] (-1.55,0.63) to [out=-100, in=180-60] (0,-5);

\node at (0,-2.5) {$+$};
\node at (-2.5,-1) {$-$};
\node at (2.5,-1) {$-$};
\node at (0.5,0.8) {$-$};
\node at (-2.5,1.3) {$+$};
\node at (2.5,1.3) {$+$};
\end{tikzpicture}
\end{center}
\caption{\label{fig: real part of ih-ell}\textit{The thick dashed curved correspond to $\re \big( ih(\zeta) \big)-\ell = 0$. These curves divide the complex plane into four regions, and we indicate the sign of $\re \big( ih(\zeta) \big)-\ell$ in each of these regions by the symbols $\pm$. The thin dashed curves represent the real and imaginary axis.}}
\end{figure}

\begin{lemma}\label{lemma: real part of h}

The set $\mathcal{N}$ consists of five simple curves $\Gamma_{j}$, $j=1,...,5$ and satisfies the symmetry $\mathcal{N} = -\overline{\mathcal{N}}$. Three of these curves, say $\Gamma_{1}$, $\Gamma_{2}$ and $\Gamma_{3}$, join $b_{2}$ with $b_{1}$. The curve $\Gamma_{4}$ starts at $b_{2}$ and leaves the right half plane in the sector $\arg \zeta \in (-\epsilon,\epsilon)$ for any $\epsilon > 0$. The last curve satisfies $\Gamma_{5} = - \overline{\Gamma_{4}}$. In particular, $\mathcal{N}$ divides the complex plane in four regions: two unbounded regions, and two bounded regions. Furthermore, the sign of $\re (ih(\zeta))-\ell$ in each of these regions is as shown in Figure \ref{fig: real part of ih-ell}.
\end{lemma}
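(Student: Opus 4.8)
Throughout I would work with the real‑valued function $u(\zeta):=\re\big(ih(\zeta)\big)-\ell$, whose zero set is exactly $\mathcal N$, and exploit that $u$ is \emph{harmonic off the imaginary axis}: by \eqref{def of h} the branch cuts of $\log(i\zeta)$ and $\log(-i\zeta)$ sit on $i\R_{\ge 0}$ and $i\R_{\le 0}$, so $h$ is analytic on $\C\setminus i\R$ and $u$ is harmonic on the open right half‑plane $H_+:=\{\re\zeta>0\}$. Across $i\R_{>0}$ (resp. $i\R_{<0}$) the jump of $h$ produced by the cut of $\log(i\zeta)$ (resp. $\log(-i\zeta)$) is $\mp 2\pi i\,c_1\zeta$ (resp. $\pm 2\pi i\,c_2\zeta$), so the jump of $\re(ih)$ is $\pm 2\pi c_1\re\zeta$ (resp. $\mp 2\pi c_2\re\zeta$), which vanishes on $i\R$; hence $u$ is continuous on all of $\C$, with the explicit values $u(iy)=y\big[(c_1+c_2)\log y+c_3\big]-\ell$ for $y>0$ and $u(iy)=-|y|\big[(c_1+c_2)\log|y|+c_3\big]-\ell$ for $y<0$, and in particular $u(0)=-\ell\neq 0$. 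Finally $h(-\overline\zeta)=-\overline{h(\zeta)}$ (immediate from \eqref{def of h}), so $u(-\overline\zeta)=u(\zeta)$ and $\mathcal N=-\overline{\mathcal N}$; it thus suffices to describe $\mathcal N\cap\overline{H_+}$ and reflect. I will assume $c_2>c_1$, so that $\ell<0$ and $b_2\in Q_1\setminus\R$; the case $c_2<c_1$ follows from the substitution $\zeta\mapsto\overline\zeta$, $c_1\leftrightarrow c_2$, which sends $\mathcal N$ to $\overline{\mathcal N}$ and reverses the sign of $u$, while $c_1=c_2$ is the Bessel case.

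Next I would record the local, boundary and asymptotic behaviour of $u$. Since $h''(\zeta)=-(c_1+c_2)/\zeta$ is nonzero at $b_1,b_2$, both are simple saddles, and $\re(ih(b_1))=\re(ih(b_2))=\ell$ (the second by definition of $\ell$, the first by the symmetry); hence near $b_k$ one has $u(\zeta)=\re\big(\tfrac{ih''(b_k)}{2}(\zeta-b_k)^2\big)+O(|\zeta-b_k|^3)$, so $\mathcal N$ consists locally of four analytic arcs meeting at successive angles $\pi/2$. Off $i\R$ and away from $b_1,b_2$ the gradient $\nabla u=\overline{ih'(\zeta)}$ does not vanish, so $\mathcal N\setminus\{b_1,b_2\}$ is a $1$‑manifold (smooth there), and along $i\R$ the one‑variable analysis below shows $\mathcal N$ crosses transversally. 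From the explicit formula, $y\mapsto u(iy)$ on $(0,\infty)$ starts at $-\ell>0$, has a unique interior minimum at $y_\ast=\exp(-1-c_3/(c_1+c_2))$ with negative value $-(c_1+c_2)y_\ast\big(1-\sin(\tfrac\pi2\tfrac{c_2-c_1}{c_1+c_2})\big)$, then increases to $+\infty$, hence has exactly two zeros; on $(-\infty,0)$ it starts at $-\ell>0$, attains a positive maximum, then decreases to $-\infty$, hence has exactly one zero. So $\mathcal N\cap i\R$ is exactly three points, none of them $b_1,b_2$. Finally, from \eqref{def of h} one gets $\re(ih(\zeta))\sim(c_1+c_2)|\zeta|\log|\zeta|\,\sin(\arg\zeta)$ for $\arg\zeta$ away from $0,\pi$, so $u\to+\infty$ in the open upper half‑plane and $u\to-\infty$ in the open lower half‑plane, while along $\R$ one computes $u(x)=-\tfrac\pi2|x|(c_2-c_1)-\ell\to-\infty$; consequently a large circle meets $\mathcal N\cap\overline{H_+}$ in exactly one point, lying near the positive real direction.

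The curves are then pinned down by a counting argument in $\overline{H_+}$, where $u$ is harmonic in the interior. The maximum principle forbids a closed loop of $\mathcal N$ inside $\overline{H_+}$ (such a loop bounds a domain with $u\equiv0$ on its boundary and $u$ harmonic inside, the saddle being a smooth point of $u$, forcing $u\equiv0$). Hence $\mathcal N\cap\overline{H_+}$ is a finite union of arcs with endpoints in $\{b_2\}\cup(\mathcal N\cap i\R)\cup\{\infty\}$; the endpoint count is $4+3+1=8$, giving four arcs. As no loop is allowed each arc carries at most one $b_2$‑endpoint, so the four $b_2$‑endpoints lie on four distinct, hence on all, arcs, and the remaining four endpoints (the three $i\R$‑points and $\infty$) are matched to them bijectively: one arc $\Gamma_4$ from $b_2$ to $\infty$ (escaping, by the asymptotics, in $\{|\arg\zeta|<\epsilon\}$), and three arcs from $b_2$ to the three points of $\mathcal N\cap i\R$. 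Reflecting through $\zeta\mapsto-\overline\zeta$, which fixes those three points and swaps $b_2$ with $b_1$, the latter arcs glue with their mirror images into simple curves $\Gamma_1,\Gamma_2,\Gamma_3$ from $b_2$ to $b_1$, and $\Gamma_5:=-\overline{\Gamma_4}$ runs from $b_1$ to $\infty$; this yields the five curves and $\mathcal N=-\overline{\mathcal N}$. For the regions, regard $\mathcal N\cup\{\infty\}$ as a connected graph on the sphere with vertices $b_1,b_2,\infty$ and edges $\Gamma_1,\dots,\Gamma_5$, so by Euler's formula the number of complementary regions is $2-3+5=4$; one checks directly that two are bounded (the lens pieces between consecutive $\Gamma_j$'s) and two unbounded. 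Their signs are read off by evaluating $u$ at one point of each region on $i\R$ via the explicit formulas — $u(0)=-\ell>0$; $u(iy)<0$ for $y$ strictly between the two positive‑axis zeros; $u(iy)\to+\infty$ as $y\to+\infty$ and $\to-\infty$ as $y\to-\infty$ — reproducing Figure \ref{fig: real part of ih-ell}.

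The main obstacle is the global topological bookkeeping of the third step: ruling out spurious arcs and loops and showing that exactly the five listed curves occur. This is precisely what the maximum principle together with the endpoint count deliver. The branch cut of $\log(\pm i\zeta)$ on $i\R$ is only a minor nuisance: it deprives $u$ of harmonicity \emph{across} $i\R$, but the whole argument plays out in the open half‑plane $H_+$ where $u$ is genuinely harmonic, the behaviour on $i\R$ itself being fully captured by the closed‑form expression for $u(iy)$.
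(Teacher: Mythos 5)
Your overall strategy is essentially the paper's: analyze $u=\re(ih)-\ell$ on the imaginary axis to get exactly three crossings, invoke the maximum principle in the open right half plane to exclude closed loops, use the large-$|\zeta|$ behaviour to show the only escape to infinity is in a small sector around the positive real direction, transfer to the left half plane via $\zeta\mapsto-\overline{\zeta}$, and read off the signs from the explicit formula on $i\mathbb{R}$. Your additional observations (simple saddles, nonvanishing gradient off $i\mathbb{R}$, the endpoint/Euler count producing four arcs in the closed right half plane) are a legitimate alternative bookkeeping to the paper's "four branches emanate from $b_{2}$ and must terminate at $y_{1},y_{2},y_{3}$ or at $\infty$".

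There is, however, a genuine gap: the reduction "assume $c_{2}>c_{1}$; treat $c_{2}<c_{1}$ by conjugation; $c_{1}=c_{2}$ is the Bessel case" discards a case the lemma must cover. For the Meijer-$G$ process with $r-q=1$ and for Wright's process with $\theta=1$ one has $c_{1}=c_{2}$ and hence $\ell=0$; these parameters are within the hypotheses, and the paper even relies on the steepest descent analysis at exactly these values in Section \ref{Section: Differential identity in t} (where $r=1,q=0$, resp.\ $\theta=1,\alpha=0$, are used to determine $\chi(t)$), so one cannot defer to known Bessel results. In that case several of your statements fail as written: $u(0)=-\ell=0$; the entire real axis lies in $\mathcal{N}$ (indeed $\re(ih(x))=\mp\tfrac{\pi}{2}x(c_{2}-c_{1})$ for $x\gtrless 0$, which vanishes identically when $c_{1}=c_{2}$), so your claim $u(x)\to-\infty$ along $\mathbb{R}$ degenerates to $u\equiv 0$ there; and the zero counts on the two imaginary semi-axes change (the middle crossing moves to the origin). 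The lemma is still true in this case ($\Gamma_{4}$ is then literally the real ray beyond $b_{2}$) and your counting skeleton can be adapted, but as written the proof does not cover it; the paper avoids any case split by arguing uniformly, comparing the minimum value \eqref{re ih at ystar} with $\ell$ and using oddness of $y\mapsto\re(ih(iy))$. A smaller point: "a large circle meets $\mathcal{N}$ in the closed right half plane in exactly one point" does not follow from the sector asymptotics alone; uniqueness inside the thin sector $|\arg\zeta|<\epsilon$ requires the monotonicity-in-$\phi$ computation that the paper carries out in its Claim 2, which you should include.
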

\begin{proof}
We divide the proof in four steps. 

\hspace{-0.55cm}\textbf{Claim 1:} $\mathcal{N}$ intersects the imaginary axis at three distinct points $y_{1}<y_{2}<y_{3}$ such that $y_{1}<0$ and $y_{3}>0$.

To prove this, it suffices to inspect the graph of the function $y \mapsto \re(ih(iy))$ for $y \in \mathbb{R}$. It is a simple computation to verify that
\begin{align*}
\re (ih(iy)) = (c_{1}+c_{2})y \log |y| + c_{3} y, \qquad y  \in \mathbb{R}.
\end{align*}
This function is odd in the variable $y$, is equal to $0$ at $y=0$, tends to $+ \infty$ as $y \to + \infty$ and admits a local minimum at $y = y_{\star} := \exp \left( - \frac{c_{1}+c_{2}+c_{3}}{c_{1}+c_{2}} \right)$ where it takes the value
\begin{align}\label{re ih at ystar}
\re (ih(iy_{\star})) = -(c_{1}+c_{2})\exp \left( - \frac{c_{1}+c_{2}+c_{3}}{c_{1}+c_{2}} \right) < \ell.
\end{align}
Since $y \mapsto \re (ih(iy))$ is odd, and since
\begin{align*}
\re (ih(-iy_{\star})) = (c_{1}+c_{2})\exp \left( - \frac{c_{1}+c_{2}+c_{3}}{c_{1}+c_{2}} \right) > \ell,
\end{align*}
the equation $\re(i h(iy))=\ell$ admits three solutions $y_{1},y_{2},y_{3}$ satisfying 
\begin{align*}
y_{1} < -y_{\star}, \qquad y_{2} \in (-y_{\star},y_{\star}), \qquad y_{3} > y_{\star}.
\end{align*}

\hspace{-0.55cm}\textbf{Claim 2:} For any $\epsilon \in (0,\frac{\pi}{2})$, there exists $\rho_{\epsilon} > 0$ such that for all $\rho \geq \rho_{\epsilon}$, $\mathcal{N}$ intersects $
\{\rho \, e^{i \phi} : \phi \in (-\epsilon,\epsilon)\}$ at a single point.

This follows from a direct computation using the following expression for $\zeta = \rho \, e^{i \phi}$, $\phi \in (-\epsilon,\epsilon)$:
\begin{align*}
\re (ih(\zeta)) = \re(\zeta) \Big[ (c_{1}+c_{2})\big( \tan  \phi \log \rho + \phi \big) + c_{3} \tan \phi - \frac{\pi}{2}(c_{2}-c_{1}) \Big].
\end{align*}

\hspace{-0.55cm}\textbf{Claim 3}: There exists no closed curve $\Gamma$ lying entirely in either the left or right half plane such that $\Gamma\subset\mathcal{N}$. 

Since $h$ is analytic in $\mathbb{C}\setminus i \mathbb{R}$, $\zeta \mapsto \re (ih(\zeta))$ is harmonic in $\mathbb{C}\setminus i \mathbb{R}$. Let $\Gamma \subset \mathbb{C}\setminus i \mathbb{R}$ be a closed curve such that $\Gamma \subset \mathcal{N}$. The maximum principle for harmonic functions implies that $\re (ih(\zeta))\equiv\ell$ on the interior of $\Gamma$. Since $\re (ih(\zeta))$ is non-constant on any open disk, we conclude that there exists no such curve $\Gamma$.

\hspace{-0.55cm}\textbf{Proof of Lemma \ref{lemma: real part of h}:} 


Since $h'(b_{2})=0$, there are four curves $\{\Gamma_{j}\}_{j=1}^{4}$ emanating from $b_{2}$ that belong to $\mathcal{N}$. From \textbf{Claim 3}, none of these curves is a closed curve lying entirely in the right half plane. We conclude that these curves must leave the right half plane either on $i \mathbb{R}$ or at $\infty$. From \textbf{Claim 1} and \textbf{Claim 2}, three curves $\Gamma_{j}$, $j=1,2,3$ leave the right half plane on $i \mathbb{R}$ at $y_{1}$, $y_{2}$ and $y_{3}$, respectively, and the last curve $\Gamma_{4}$ leaves the right half plane at $\infty$ in the sector $\arg \zeta \in (-\epsilon,\epsilon)$ (for any $\epsilon > 0$ fixed). By $\re(ih(-\overline{\zeta})) = \re(ih(\zeta))$, $\mathcal{N}$ is symmetric with respect to $i \mathbb{R}$ and this proves that $\Gamma_{j}$, $j=1,2,3$, join $b_{2}$ and $b_{1}$, and that there exists $\Gamma_{5} \subset \mathcal{N}$ in the left half plane satisfying $\Gamma_{5} = - \overline{\Gamma_{4}}$. The sign of $\re(i h(\zeta)) - \ell$ in the topmost bounded region is negative by \eqref{re ih at ystar}. Since the sign of $\re(i h(\zeta)) - \ell$ changes every time a curve $\Gamma_{j}$, $j \in \{1,...,5\}$ is crossed, this determines the sign of $\re(i h(\zeta)) - \ell$ in the other regions as well.
\end{proof}

\subsection{Second transformation $U \mapsto T$}\label{subsection: U to T}
We will now define $T$ in terms of $U$ in two steps, $U \mapsto \widehat{T}$ and $\widehat{T}\mapsto T$. The transformation $U \mapsto \widehat{T}$ is similar to the one from \cite[Section 3.2]{ClaeysGirSti}. Let us define the union of two line segments $\Sigma_{5}:= [b_{1},0] \cup [0,b_{2}]$, as shown in Figure \ref{fig: Sigma 1,2,3,4}. $\widehat{T}$ consists of analytic continuations of $U$ in different regions, such that it has jumps on $\bigcup_{j=1}^{5} \Sigma_{j}$ instead of $\gamma_{U} \cup \tilde{\gamma}_{U}$, where the contours $\Sigma_{1},\ldots,\Sigma_{5}$ are shown in Figure \ref{fig: Sigma 1,2,3,4}. More precisely, denote $U_{\rm I}$ for the analytic continuation of the function $U$ as defined in the region above the contour $\gamma_U$, $U_{\rm II}$ for the analytic continuation of $U$ as defined in the region between $\gamma_U$ and $\tilde\gamma_U$, and $U_{\rm III}$ for the analytic continuation of $U$ as defined in the region below $\tilde\gamma_U$; then with the regions I', II', III' as in Figure \ref{fig: Sigma 1,2,3,4}, we define $\widehat T=U_{\rm I}$ in region I',  $\widehat T=U_{\rm II}$ in the two regions II', and $\widehat T=U_{\rm III}$ in region III'.

\begin{figure}
\begin{center}
\begin{tikzpicture}
\node at (0,0) {};
\fill (0,0) circle (0.1cm);
\node at (0.15,-0.2) {$0$};
\fill (1.55,0.63) circle (0.1cm);
\node at (1.25,0.8) {$b_{2}$};
\fill (-1.55,0.63) circle (0.1cm);
\node at (-1.25,0.8) {$b_{1}$};

\draw[dashed,->-=1,black] (0,-1.5) to [out=90, in=-90] (0,2.5);
\draw[dashed,->-=1,black] (-5,0) to [out=0, in=-180] (6.5,0);

\draw[dashed,line width=0.45 mm,black] (0,0.22) to [out=20, in=180+15] (1.55,0.63) 
to [out=15, in=180+10] (4.5,1.15);
\draw[dashed,line width=0.45 mm,black] (0,0.22) to [out=180-20, in=-15] (-1.55,0.63)
to [out=180-15, in=-10] (-4.5,1.15);
\draw[dashed,line width=0.45 mm,black] (1.55,0.63) to [out=100, in=180+125] (0,3.8);
\draw[dashed,line width=0.45 mm,black] (1.55,0.63) to [out=-80, in=180-120] (0,-5);
\draw[dashed,line width=0.45 mm,black] (-1.55,0.63) to [out=180-100, in=-125] (0,3.8);
\draw[dashed,line width=0.45 mm,black] (-1.55,0.63) to [out=-100, in=180-60] (0,-5);

\node at (1,3.5) {I'};
\node at (-4,1.8) {II'};
\node at (4,1.8) {II'};
\node at (1,-4.5) {III'};

\draw[->-=0.6,black,line width=0.45 mm] (0,0)--(1.55,0.63);
\draw[->-=0.6,black,line width=0.45 mm] (1.55,0.63)--($(1.55,0.63)+(15+45:3.2)$);
\draw[->-=0.6,black,line width=0.45 mm] (1.55,0.63)--($(1.55,0.63)+(15-45:2.9)$);
\node at (3,2.2) {$\Sigma_{2}$};
\node at (3.9,-0.4) {$\Sigma_{4}$};

\draw[-<-=0.5,black,line width=0.45 mm] (0,0)--(-1.55,0.63);
\draw[-<-=0.6,black,line width=0.45 mm] (-1.55,0.63)--($(-1.55,0.63)+(180-15-45:3.2)$);
\draw[-<-=0.6,black,line width=0.45 mm] (-1.55,0.63)--($(-1.55,0.63)+(180-15+45:2.9)$);
\node at (-3,2.2) {$\Sigma_{1}$};
\node at (-4,-0.4) {$\Sigma_{3}$};

\node at (-0.65,0) {$\Sigma_{5}$};
\end{tikzpicture}
\end{center}
\caption{\label{fig: Sigma 1,2,3,4}\textit{The jump contour $\cup_{i=1}^{5} \Sigma_{i}$ for the RH problem for $\widehat{T}$.}}
\end{figure}
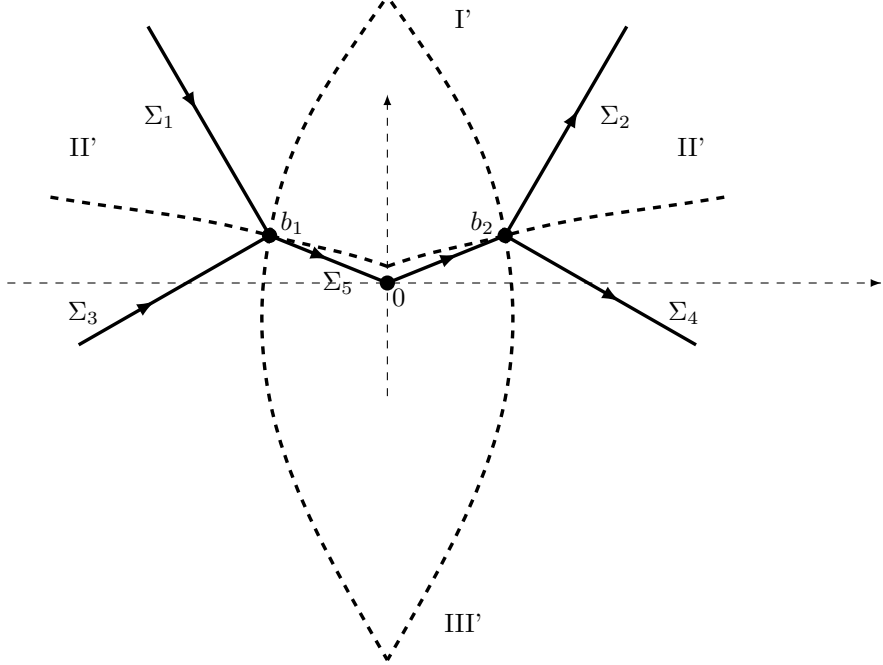
$\widehat{T}$ satisfies the same RH conditions as $U$, except for a modified jump relation on $\Sigma_5$.
\subsubsection*{RH problem for $\widehat{T}$}
\begin{itemize}
\item[(a)] $\widehat{T}$ is analytic in $\mathbb{C}\setminus \bigcup_{j=1}^{5} \Sigma_{j}$, where the contour $\bigcup_{j=1}^{5} \Sigma_{j}$ is shown in Figure \ref{fig: Sigma 1,2,3,4} and is chosen to be symmetric with respect to $i \mathbb{R}$.
\item[(b)] For $\zeta \in \bigcup_{j=1}^{5} \Sigma_{j}$, we have $\widehat{T}_{+}(\zeta) = \widehat{T}_{-}(\zeta) J_{\widehat{T}}(\zeta)$, where
\begin{align*}
J_{\widehat{T}}(\zeta) = \begin{cases}
\begin{pmatrix}
1 & -\sqrt{1-t} \, e^{-is^{\rho}h(\zeta)} \mathcal{G}(\zeta;s) \\
0 & 1
\end{pmatrix}, & \mbox{if } \zeta \in \Sigma_{1} \cup \Sigma_{2}, \\
\begin{pmatrix}
1 & 0 \\
\sqrt{1-t} \, e^{is^{\rho}h(\zeta)} \mathcal{G}(\zeta;s)^{-1} & 1
\end{pmatrix}, & \mbox{if } \zeta \in \Sigma_{3} \cup \Sigma_{4}, \\
\begin{pmatrix}
1 & -\sqrt{1-t} \, e^{-is^{\rho}h(\zeta)} \mathcal{G}(\zeta;s) \\
\sqrt{1-t} \, e^{is^{\rho}h(\zeta)} \mathcal{G}(\zeta;s)^{-1} & t
\end{pmatrix}, & \mbox{if } \zeta \in \Sigma_{5}.
\end{cases}
\end{align*}
\item[(c)] As $\zeta \to \infty$, we have
\begin{align*}
\widehat{T}(\zeta) = I + \frac{U_{1}}{\zeta} + \bigO(\zeta^{-2}).
\end{align*}
As $\zeta \to b_{1}$ and as $\zeta \to b_{2}$, we have $\widehat{T}(\zeta) = \bigO(1)$.
\end{itemize}
We note that
\begin{align}
& \begin{pmatrix}
1 & -\sqrt{1-t} \, e^{-is^{\rho}h(\zeta)} \mathcal{G}(\zeta;s) \\
\sqrt{1-t} \, e^{is^{\rho}h(\zeta)} \mathcal{G}(\zeta;s)^{-1} & t
\end{pmatrix} \nonumber \\
& = \begin{pmatrix}
1 & 0 \\
\sqrt{1-t} \, e^{is^{\rho}h(\zeta)} \mathcal{G}(\zeta;s)^{-1} & 1
\end{pmatrix} \begin{pmatrix}
1 & -\sqrt{1-t} \, e^{-is^{\rho}h(\zeta)} \mathcal{G}(\zeta;s) \\
0 & 1
\end{pmatrix} \label{first factorization} \\
& = \begin{pmatrix}
1 & - \frac{\sqrt{1-t}}{t}e^{-is^{\rho}h(\zeta)} \mathcal{G}(\zeta;s) \\
0 & 1
\end{pmatrix} \begin{pmatrix}
\frac{1}{t} & 0 \\
0 & t
\end{pmatrix} \begin{pmatrix}
1 & 0 \\
\frac{\sqrt{1-t}}{t}e^{is^{\rho}h(\zeta)} \mathcal{G}(\zeta;s)^{-1} & 1 
\end{pmatrix}. \label{second factorization}
\end{align}
We have used the factorization \eqref{first factorization} in the transformation $U \mapsto \widehat{T}$ to collapse part of the contours on $\Sigma_{5}$. In the transformation $\widehat{T} \mapsto T$, we now use the other factorization \eqref{second factorization} to open lenses on the other side of $\Sigma_{5}$. 

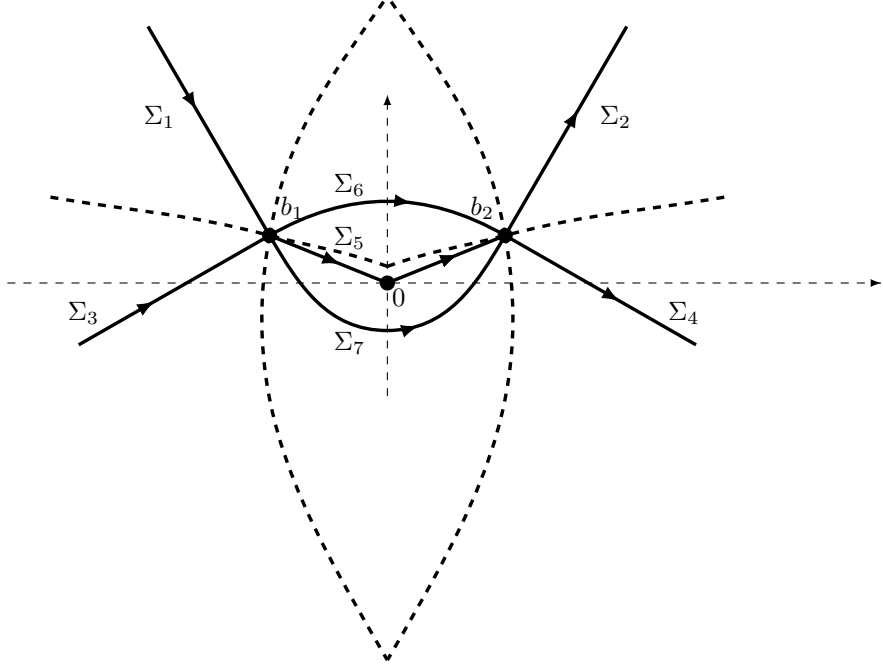
\begin{figure}
\begin{center}
\begin{tikzpicture}
\node at (0,0) {};
\fill (0,0) circle (0.1cm);
\node at (0.15,-0.2) {$0$};
\fill (1.55,0.63) circle (0.1cm);
\node at (1.25,1) {$b_{2}$};
\fill (-1.55,0.63) circle (0.1cm);
\node at (-1.25,1) {$b_{1}$};

\draw[->-=0.3, ->-=0.8,line width=0.45 mm,black] (-1.55,0.63)--(0,0)--(1.55,0.63);

\draw[dashed,->-=1,black] (0,-1.5) to [out=90, in=-90] (0,2.5);
\draw[dashed,->-=1,black] (-5,0) to [out=0, in=-180] (6.5,0);

(0,0.22) to [out=20, in=180+15] (1.55,0.63);
\draw[dashed,line width=0.45 mm,black] (0,0.22) to [out=20, in=180+15] (1.55,0.63) 
to [out=15, in=180+10] (4.5,1.15);
(0,0.22) to [out=180-20, in=-15] (-1.55,0.63);
\draw[dashed,line width=0.45 mm,black] (0,0.22) to [out=180-20, in=-15] (-1.55,0.63)
to [out=180-15, in=-10] (-4.5,1.15);
\draw[dashed,line width=0.45 mm,black] (1.55,0.63) to [out=100, in=180+125] (0,3.8);
\draw[dashed,line width=0.45 mm,black] (1.55,0.63) to [out=-80, in=180-120] (0,-5);
\draw[dashed,line width=0.45 mm,black] (-1.55,0.63) to [out=180-100, in=-125] (0,3.8);
\draw[dashed,line width=0.45 mm,black] (-1.55,0.63) to [out=-100, in=180-60] (0,-5);

\draw[->-=0.6,black,line width=0.45 mm] (1.55,0.63)--($(1.55,0.63)+(15+45:3.2)$);
\draw[->-=0.6,black,line width=0.45 mm] (1.55,0.63)--($(1.55,0.63)+(15-45:2.9)$);
\node at (3,2.2) {$\Sigma_{2}$};
\node at (3.9,-0.4) {$\Sigma_{4}$};

\draw[-<-=0.6,black,line width=0.45 mm] (-1.55,0.63)--($(-1.55,0.63)+(180-15-45:3.2)$);
\draw[-<-=0.6,black,line width=0.45 mm] (-1.55,0.63)--($(-1.55,0.63)+(180-15+45:2.9)$);
\node at (-3,2.2) {$\Sigma_{1}$};
\node at (-4,-0.4) {$\Sigma_{3}$};

\node at (-0.5,0.6) {$\Sigma_{5}$};
\node at (-0.5,1.3) {$\Sigma_{6}$};
\node at (-0.5,-0.8) {$\Sigma_{7}$};

\draw[->-=0.6,line width=0.45 mm,black]
(-1.55,0.63) to [out=-15+45, in=180+15-45] (1.55,0.63);
\draw[->-=0.6,line width=0.45 mm,black]
(-1.55,0.63) to [out=-15-45, in=180] (0,-0.63)
to [out=0, in=180+15+45] (1.55,0.63);
\end{tikzpicture}
\end{center}
\caption{\label{fig: contour for T}\textit{The jump contour for the RH problem for $T$.}}
\end{figure}

We define $T$ as follows:
\begin{align}\label{def of T}
T(\zeta) = s^{-\frac{c_{4}}{2}\sigma_{3}}e^{\frac{s^{\rho} \ell}{2}\sigma_{3}}\widehat{T}(\zeta)H(\zeta) e^{-\frac{s^{\rho} \ell}{2} \sigma_{3}}s^{\frac{c_{4}}{2}\sigma_{3}},
\end{align}
where
\begin{align*}
H(\zeta) = \begin{cases}
\begin{pmatrix}
1 & 0 \\
-\frac{\sqrt{1-t}}{t}e^{is^{\rho}h(\zeta)} \mathcal{G}(\zeta;s)^{-1} & 1 
\end{pmatrix}, & \mbox{if } \zeta \in \mbox{int}(\Sigma_{5}\cup \Sigma_{6}), \\
\begin{pmatrix}
1 & - \frac{\sqrt{1-t}}{t}e^{-is^{\rho}h(\zeta)} \mathcal{G}(\zeta;s) \\
0 & 1
\end{pmatrix}, & \mbox{if } \zeta \in \mbox{int}(\Sigma_{5}\cup \Sigma_{7}), \\
I, & \mbox{otherwise}.
\end{cases}
\end{align*}
Note that $e^{-is^{\rho}h(\zeta)} \mathcal{G}(\zeta;s)$ is analytic (in particular has no poles) in the lower half plane, while $e^{is^{\rho}h(\zeta)} \mathcal{G}(\zeta;s)^{-1}$ is analytic in the upper half plane, so that $H(\zeta)$ is analytic for $\zeta \in \mathbb{C}\setminus \big( \Sigma_{5} \cup \Sigma_{6} \cup \Sigma_{7} \big)$. 
Then $T$ satisfies the following RH problem:
\subsubsection*{RH problem for $T$}
\begin{itemize}
\item[(a)] $T : \mathbb{C}\setminus \bigcup_{j=1}^{7} \Sigma_{j} \to \mathbb{C}^{2 \times 2}$ is analytic. The contour $\bigcup_{j=1}^{7} \Sigma_{j}$ is shown in Figure \ref{fig: contour for T} and is chosen to be symmetric with respect to $i \mathbb{R}$.
\item[(b)] It satisfies the jumps $T_{+}(\zeta) = T_{-}(\zeta) J_{T}(\zeta)$ for $\zeta \in \bigcup_{j=1}^{7} \Sigma_{j}$, where
\begin{align}\label{jumps J_T}
J_{T}(\zeta) = \begin{cases}
\begin{pmatrix}
1 & -\sqrt{1-t}e^{-s^{\rho}(ih(\zeta)-\ell)} \widetilde{\mathcal{G}}(\zeta;s) \\
0 & 1
\end{pmatrix}, & \mbox{if } \zeta \in \Sigma_{1} \cup \Sigma_{2}, \\
\begin{pmatrix}
1 & 0 \\
\sqrt{1-t}e^{s^{\rho}(ih(\zeta)-\ell)} \widetilde{\mathcal{G}}(\zeta;s)^{-1} & 1
\end{pmatrix}, & \mbox{if } \zeta \in \Sigma_{3} \cup \Sigma_{4}, \\
\begin{pmatrix}
\frac{1}{t} & 0 \\
0 & t
\end{pmatrix}, & \mbox{if } \zeta \in \Sigma_{5}, \\
\begin{pmatrix}
1 & 0 \\
\frac{\sqrt{1-t}}{t} e^{s^{\rho}(ih(\zeta)-\ell)} \widetilde{\mathcal{G}}(\zeta;s)^{-1} & 1 
\end{pmatrix}, & \mbox{if } \zeta \in \Sigma_{6}, \\
\begin{pmatrix}
1 & - \frac{\sqrt{1-t}}{t}e^{-s^{\rho}(ih(\zeta)-\ell)} \widetilde{\mathcal{G}}(\zeta;s) \\
0 & 1
\end{pmatrix}, & \mbox{if } \zeta \in \Sigma_{7},
\end{cases}
\end{align}
where
\begin{align}\label{def of G tilde}
\widetilde{\mathcal{G}}(\zeta;s) = \mathcal{G}(\zeta;s)s^{-c_{4}}.
\end{align}
\item[(c)] As $\zeta \to \infty$, we have
\begin{align}\label{asymp of T at inf}
T(\zeta) = I + \frac{T_{1}}{\zeta} + \bigO(\zeta^{-2}),
\end{align}
where
\begin{align}\label{relation T1 to Y1}
T_{1} = s^{-\frac{c_{4}}{2}\sigma_{3}}e^{\frac{s^{\rho} \ell}{2}\sigma_{3}}U_{1}e^{-\frac{s^{\rho} \ell}{2}\sigma_{3}}s^{\frac{c_{4}}{2}\sigma_{3}} = \frac{1}{is^{\rho}} s^{-\frac{c_{4}}{2}\sigma_{3}}e^{\frac{s^{\rho} \ell}{2}\sigma_{3}}s^{\frac{\tau}{2}\sigma_{3}}Y_{1}s^{-\frac{\tau}{2}\sigma_{3}}e^{-\frac{s^{\rho} \ell}{2}\sigma_{3}}s^{\frac{c_{4}}{2}\sigma_{3}}.
\end{align}
As $\zeta \to b_{1}$ and as $\zeta \to b_{2}$, we have $T(\zeta) = \bigO(1)$.
\end{itemize}
\begin{remark}\label{remark: symmetry for T}
We choose the jump contour for $T$ to be symmetric with respect to $i \mathbb{R}$ for later use (it will make the analysis simpler). Using this symmetry, we show in a similar way as in Remark \ref{remark: symmetry for U} that $J_{T}(\zeta) = \overline{J_{T}(-\overline{\zeta})}$ for $\zeta \in \bigcup_{j=1}^{7} \Sigma_{j}$. By uniqueness of the solution to the RH problem for $T$, this implies the symmetry
\begin{align}\label{symmetry for T}
T(\zeta) = \overline{T(-\overline{\zeta})}, \qquad \zeta \in \mathbb{C} \setminus \bigcup_{j=1}^{7} \Sigma_{j}.
\end{align}
\end{remark}
By Lemma \ref{lemma: real part of h}, the jumps for $T$ tends to $I$ exponentially fast as $s \to + \infty$ on $\big( \bigcup_{j=1}^{7} \Sigma_{j} \big) \setminus \Sigma_{5}$, and this convergence is uniform outside neighborhoods of $b_{1}$ and $b_{2}$.

For convenience, we use the notation
\begin{align}\label{def of ell tilde}
\widetilde{\ell} := \im (ih(b_{2})) = - \im(i h(b_{1})) = (c_{1}+c_{2})\exp \left( - \frac{c_{1}+c_{2}+c_{3}}{c_{1}+c_{2}} \right) \cos \left( \frac{\pi}{2} \frac{c_{2}-c_{1}}{c_{1}+c_{2}} \right).
\end{align}
\subsection{Global parametrix}\label{subsection: global param}
In this subsection we construct the global parametrix $P^{(\infty)}$. We will show in Section \ref{subsection: small norm} that $P^{(\infty)}$ approximates $T$ outside of neighborhoods of $b_{1}$ and $b_{2}$.
\subsubsection*{RH problem for $P^{(\infty)}$}
\begin{itemize}
\item[(a)] $P^{(\infty)} : \mathbb{C}\setminus \Sigma_{5} \to \mathbb{C}^{2 \times 2}$ is analytic.
\item[(b)] It satisfies the jumps
\begin{align*}
& P^{(\infty)}_{+}(\zeta) = P^{(\infty)}_{-}(\zeta)\begin{pmatrix}
\frac{1}{t} & 0 \\
0 & t
\end{pmatrix}, & & \zeta \in \Sigma_{5}.
\end{align*}
\item[(c)] As $\zeta \to \infty$, we have
\begin{align}\label{asymp of Pinf at inf}
P^{(\infty)}(\zeta) = I + \frac{P^{(\infty)}_{1}}{\zeta} + \bigO(\zeta^{-2}).
\end{align}
\item[(d)] As $\zeta$ tends to $b_{1}$ or $b_{2}$, $P^{(\infty)}(\zeta)$ remains bounded.
\end{itemize}
Conditions (a)-(c) for the RH problem for $P^{(\infty)}$ are obtained from the RH problem for $T$ by ignoring the jumps on $\big( \bigcup_{j=1}^{7} \Sigma_{j} \big) \setminus \Sigma_{5}$. Condition (d) has been added to ensure uniqueness of the solution of the RH problem for $P^{(\infty)}$. This solution can be easily obtained by using Cauchy's formula and is given by
\begin{align}\label{def of Pinf}
P^{(\infty)}(\zeta) = D(\zeta)^{-\sigma_{3}},
\end{align}
where
\begin{align}\label{def of nu and D}
D(\zeta) = \exp \left( i\nu\int_{\Sigma_{5}} \frac{d\xi}{\xi-\zeta} \right) = \exp \left( i\nu \log \left[ \frac{\zeta - b_{2}}{\zeta - b_{1}} \right] \right), \qquad \nu := - \frac{1}{2\pi} \log t \in \mathbb{R},
\end{align}
where the branch for the log is taken along $\Sigma_{5}$. The function $D$ satisfies 
\begin{align*}
& D_{+}(\zeta) = D_{-}(\zeta)t, & & \zeta \in \Sigma_{5}, \\
& D(\zeta) = 1+ \frac{D_{1}}{\zeta} + \bigO(\zeta^{-2}), & & \mbox{as } \zeta \to \infty,
\end{align*}
where $D_{1} = -i \nu (b_{2}-b_{1}) = -2i\nu \re b_{2}$. From \eqref{asymp of Pinf at inf} and \eqref{def of Pinf}, we obtain
\begin{align}\label{Pinf1}
P_{1}^{(\infty)} = - D_{1} \sigma_{3}.
\end{align}
We will also need asymptotics for $P^{(\infty)}(\zeta)$ as $\zeta \to b_{2}$. From \eqref{def of nu and D}, as $\zeta \to b_{2}$ we have
\begin{align*}
D(\zeta) = \left( \frac{\zeta -b_{2}}{b_{2}-b_{1}} \right)^{i \nu}\left( 1-i \nu \frac{\zeta - b_{2}}{b_{2}-b_{1}} + \bigO((\zeta - b_{2})^{2}) \right),
\end{align*}
which implies by \eqref{def of Pinf} that
\begin{align}\label{asymp Pinf near b2}
P^{(\infty)}(\zeta) =  \left( \frac{\zeta -b_{2}}{b_{2}-b_{1}} \right)^{-i \nu \sigma_{3}}\left(I+i \nu \frac{\zeta - b_{2}}{b_{2}-b_{1}} \sigma_{3} + \bigO((\zeta - b_{2})^{2}) \right), \qquad \mbox{as } \zeta \to b_{2}.
\end{align}
it is also direct to verify from \eqref{def of Pinf} and \eqref{def of nu and D} that $P^{(\infty)}$ satisfies the symmetry relation
\begin{align}\label{symmetry for Pinf}
P^{(\infty)}(\zeta) = \overline{P^{(\infty)}(-\overline{\zeta})}, \qquad \zeta \in \mathbb{C}\setminus \Sigma_{5}.
\end{align}
\subsection{Local parametrix at $b_{2}$}\label{subsection: local param at b2}
We construct the local parametrix $P^{(b_{2})}$ in a small disk $\mathcal{D}_{b_{2}}$ around $b_{2}$ with radius independent of $s$. We require $P^{(b_{2})}$ to satisfy the same jumps as $T$ inside $\mathcal{D}_{b_{2}}$, to remain bounded as $\zeta \to b_{2}$, and to match with $P^{(\infty)}$ on the boundary of $\mathcal{D}_{b_{2}}$, in the sense that
\begin{align*}
P^{(b_{2})}(\zeta) = (I+o(1))P^{(\infty)}(\zeta), \qquad \mbox{as } s \to + \infty,
\end{align*}
uniformly for $\zeta \in \partial\mathcal{D}_{b_{2}}$. The solution can be constructed in terms of the solution $\Phi_{\mathrm{PC}}$ to the Parabolic Cylinder model RH problem presented in Appendix \ref{appendix:PC}. This model RH problem depends on a parameter $q$; in our case we need to choose $q = \sqrt{1-t}$. Let us define
\begin{align}\label{def of fb2}
f(\zeta) = \sqrt{-2(h(\zeta)-h(b_{2}))}.
\end{align}
This is a conformal map from $\mathcal{D}_{b_{2}}$ to a neighborhood of $0$ satisfying $f(b_2)=0$ and
\begin{equation}\label{cb2 and cb2p2p}
f'(b_2)= \sqrt{\frac{c_{1}+c_{2}}{b_{2}}} = \frac{\sqrt{c_{1}+c_{2}}}{\exp\Big(\hspace{-0.1cm}-\hspace{-0.05cm}\frac{c_{1}+c_{2}+c_{3}}{2(c_{1}+c_{2})}\Big) \exp \Big(i \frac{\pi}{4}\frac{c_{2}-c_{1}}{c_{1}+c_{2}}\Big)} \quad \mbox{ and } \quad f''(b_2)=- \frac{1}{3 b_{2}}f'(b_2) .
\end{equation}
In small neighborhoods of $\mathcal{D}_{b_{2}}$ and $\mathcal{D}_{b_{1}}$, we slightly deform the contour $\bigcup_{j=1}^{7} \Sigma_{j}$ such that 
it remains symmetric with respect to $i \mathbb{R}$ and such that it satisfies
\begin{align}\label{Sigma j is well-mapped by fb2}
f\bigg(\bigcup_{j=1}^{7} \Sigma_{j} \cap \mathcal{D}_{b_{2}}\bigg) \subset \Sigma_{\mathrm{PC}},
\end{align}
where $\Sigma_{\mathrm{PC}}$ is shown in Figure \ref{fig: contour for PC}. The local parametrix is given by
\begin{align}\label{def of Pb2}
P^{({2})}(\zeta;s) = E(\zeta;s) \Phi_{\mathrm{PC}}(s^{\frac{\rho}{2}}f(\zeta);\sqrt{1-t})e^{ \frac{s^{\rho} }{2}(ih(\zeta)-\ell) \sigma_{3}} \widetilde{\mathcal{G}}(\zeta;s)^{-\frac{\sigma_{3}}{2}},
\end{align}
where $E$ is analytic in $\mathcal{D}_{b_{2}}$ and given by
\begin{equation}\label{def of Eb2}
E(\zeta;s) = P^{(\infty)}(\zeta)\widetilde{\mathcal{G}}(\zeta;s)^{\frac{\sigma_{3}}{2}}e^{-\frac{s^{\rho}}{2}i \widetilde{\ell}\sigma_{3}}\big( s^{\frac{\rho}{2}}f(\zeta) \big)^{i\nu \sigma_{3}},
\end{equation}
where $\nu = \nu(t) \in \mathbb{R}$ is given by \eqref{def of nu and D} and the branch cut for $\big( s^{\frac{\rho}{2}}f(\zeta) \big)^{i\nu \sigma_{3}}$ is taken along $\Sigma_{5} \cap \mathcal{D}_{b_{2}}$. Note that $\widetilde{\mathcal{G}}(\zeta;s)$ depends on $s$, but by \eqref{asymp for G} and \eqref{def of G tilde} it is bounded as $s \to + \infty$ uniformly for $\zeta \in \mathcal{D}_{b_{2}}$. Since $\nu \in \mathbb{R}$ and $\widetilde{\ell} \in \mathbb{R}$ (see \eqref{def of ell tilde}), we thus have $E(\zeta;s) = \bigO(1)$ as $s \to + \infty$, uniformly for $\zeta \in \mathcal{D}_{b_{2}}$. Using \eqref{asymp Pinf near b2} and \eqref{cb2 and cb2p2p}, we infer that
\begin{align}
& E(\zeta;s) = \alpha(s)^{\sigma_3}\Big( I + \beta(s)\sigma_{3}(\zeta - b_{2}) + \bigO\big( (\zeta-b_{2})^{2} \big) \Big), \qquad \zeta \to b_{2}, \label{Eb2 at b2 asymptotics} \\
& \alpha(s) = \Big[ (b_{2}-b_{1})f'(b_{2})s^{\frac{\rho}{2}} \Big]^{i\nu} \widetilde{\mathcal{G}}(b_{2};s)^{\frac{1}{2}} e^{- \frac{i s^{\rho}}{2}\widetilde{\ell}}, \label{Eb2 at b2} \\
& \beta(s) = \frac{i\nu}{b_{2}-b_{1}} + \frac{1}{2} (\log \widetilde{\mathcal{G}})'(b_{2};s) - \frac{i\nu}{6b_2} . \label{eb2 at b2}
\end{align}
As $s \to + \infty$, for any $N \in \mathbb{N}$, we have
\begin{align}\label{matching param b2}
P^{(b_{2})}(\zeta)P^{(\infty)}(\zeta)^{-1} = I + E(\zeta;s)\bigg(  \sum_{j=1}^{N}\frac{\Phi_{\mathrm{PC},j}}{s^{\frac{j\rho}{2}}f(\zeta)^{j}} \bigg) E(\zeta;s)^{-1} +  \bigO(s^{-\frac{(N+1)\rho}{2}}),
\end{align}
uniformly for $\zeta \in \partial \mathcal{D}_{b_{2}}$, where the matrices $\Phi_{\mathrm{PC},1}$ and $\Phi_{\mathrm{PC},2}$ are given by \eqref{PhiPC1 and beta def}. In particular, the matrix $\Phi_{\mathrm{PC},1}$ is expressed in terms of the quantities $\beta_{12}$ and $\beta_{21}$ defined in \eqref{def of beta 12 and beta 21}. Furthermore, the matrices $\Phi_{\mathrm{PC},2k}$ are diagonal for every $k \geq 1$ and the matrices $\Phi_{\mathrm{PC},2k-1}$ are off-diagonal for every $k \geq 1$, see again \eqref{PhiPC1 and beta def}. 

\medskip

We need to expand $E(\zeta;s)$ as $s\to\infty$. By the expansion \eqref{asymp for G} of $\mathcal G$ and the definition \eqref{def of G tilde} of $\widetilde{\mathcal G}$, we obtain
\begin{align}\label{asymp for ln G tilde}
\log \widetilde{\mathcal{G}}(\zeta;s) = c_{5} \log(i\zeta) + c_{6} \log(-i\zeta) + c_{7} + \frac{c_{8}}{is^{\rho}\zeta} + \bigO ( s^{-2\rho} ) \qquad \mbox{as } s \to + \infty,
\end{align}
uniformly for $\zeta \in \mathcal{D}_{b_{2}}$, where the error term can be expanded in a full asymptotic series in integer powers of $s^{-\rho}$.
We deduce from this that
\begin{align}
& \widetilde{\mathcal{G}}(b_{2};s) = (ib_{2})^{c_{5}}(-ib_{2})^{c_{6}}e^{c_{7}}\Big(1+\frac{c_{8}}{is^{\rho}b_{2}} + \bigO(s^{-2\rho}) \Big), \label{asymp of G at b2} \\
& (\log \widetilde{\mathcal{G}})'(b_{2};s) = \frac{c_{5}+c_{6}}{b_{2}} - \frac{c_{8}}{i s^{\rho} b_{2}^{2}} + \bigO(s^{-2\rho}), \label{asymp of log G prim at b2}
\end{align}
as $s\to\infty$,
and by \eqref{def of Eb2}, we can write
\begin{align}\label{expansion of Eb2 in s}
E(\zeta;s) = \sum_{j=0}^{N} E_j(\zeta;s)s^{-\rho j} + \bigO(s^{-(N+1)\rho}), \qquad \mbox{as } s \to + \infty,
\end{align}
for any $N \in \mathbb{N}$, uniformly for $\zeta \in \mathcal{D}_{b_{2}}$, and where the diagonal matrices $E_j(\zeta;s)$ depend on $s$ but are bounded; in particular
\begin{align}
& E_0(b_{2};s) = \Big[ (b_{2}-b_{1})f'(b_2)s^{\frac{\rho}{2}} \Big]^{i\nu \sigma_{3}} \big((ib_{2})^{c_{5}}(-ib_{2})^{c_{6}}e^{c_{7}}\big)^{\frac{\sigma_{3}}{2}} e^{- \frac{s^{\rho}}{2}i \widetilde{\ell}\sigma_{3}}, \label{Eb2p0p at b2} \\
& E_0'(b_{2};s) = \beta_0(s)E_{{0}}(b_{2};s)\sigma_{3}, \label{Eb2'p0p at b2} \\
& \beta_0(s) = \frac{i\nu}{b_{2}-b_{1}} +  \frac{c_{5}+c_{6}}{2b_{2}} -\frac{i\nu}{6b_2}. \label{eb2p0p at b2}
\end{align}
For later use, we note that this implies
\begin{align}
\mathfrak{e}(s) & := \frac{E_0(b_{2};s)_{11}}{\overline{E_0(b_{2};s)_{11}}} = \exp\Big( 2i \arg(E_0(b_{2};s)_{11}) \Big) \nonumber \\
& = \Big[ (b_{2}-b_{1})|f'({b_{2}})|s^{\frac{\rho}{2}} \Big]^{2i\nu } \exp \Big( i c_{5} \arg(ib_{2}) + i c_{6} \arg(-ib_{2}) \Big) e^{-i \widetilde{\ell} s^{\rho}}. \label{def of efrak}
\end{align}
\subsection{Local parametrix at $b_{1}$}\label{subsection: local param at b1}
We construct the local parametrix $P^{(b_{1})}$ in a small disk $\mathcal{D}_{b_{1}}$ around $b_{1}$ in a similar way as we defined $P^{(b_{2})}$ in $\mathcal{D}_{b_{2}}$. More precisely, we require $P^{(b_{1})}$ to satisfy the same jumps as $T$ inside $\mathcal{D}_{b_{1}}$, to remain bounded as $\zeta \to b_{1}$, and to satisfy the matching condition
\begin{align*}
P^{(b_{1})}(\zeta) = (I+o(1))P^{(\infty)}(\zeta), \qquad \mbox{as } s \to + \infty,
\end{align*}
uniformly for $\zeta \in \partial\mathcal{D}_{b_{1}}$. It is possible to construct $P^{(b_{1})}(\zeta)$ in a similar way as $P^{(b_{2})}(\zeta)$ in terms of parabolic cylinder functions. To avoid unnecessary analysis and computations, we choose $\mathcal{D}_{b_{1}}=-\overline{\mathcal{D}_{b_{2}}}$, and we rely on the symmetry $J_{T}(\zeta) = \overline{J_{T}(-\overline{\zeta})}$ for $\zeta \in \bigcup_{j=1}^{7} \Sigma_{j}$ (see Remark \ref{remark: symmetry for T}) to conclude directly that the function
\begin{align}\label{symmetry local param}
P^{(b_{1})}(\zeta) = \overline{P^{(b_{2})}(-\overline{\zeta})}, \qquad \zeta \in \mathcal{D}_{b_{1}} \setminus \bigcup_{j=1}^{7} \Sigma_{j}
\end{align}
satisfies the required conditions for the local parametrix.

\subsection{Small norm RH problem}\label{subsection: small norm}
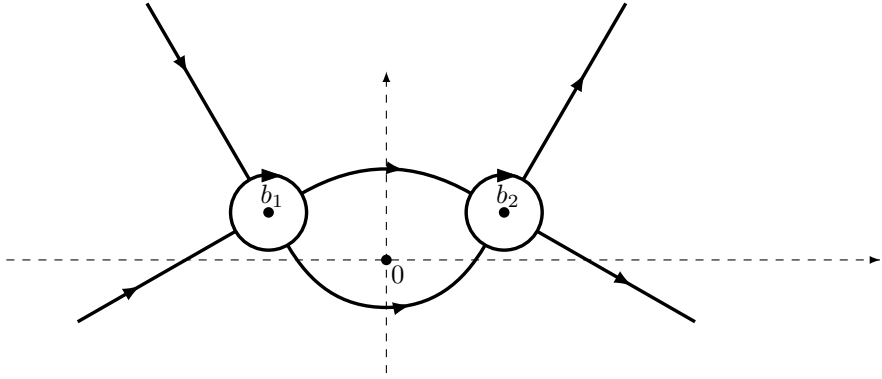
\begin{figure}
\begin{center}
\begin{tikzpicture}
\node at (0,0) {};
\fill (0,0) circle (0.07cm);
\node at (0.15,-0.2) {$0$};
\fill (1.55,0.63) circle (0.07cm);
\draw[line width=0.45 mm] (1.55,0.63) circle (0.5cm);
\draw[black,arrows={-Triangle[length=0.27cm,width=0.18cm]}]
($(1.55,0.63)+(70:0.52)$) --  ++(+0.0001,0);
\node at (1.6,0.85) {$b_{2}$};
\fill (-1.55,0.63) circle (0.07cm);
\draw[line width=0.45 mm] (-1.55,0.63) circle (0.5cm);
\draw[black,arrows={-Triangle[length=0.27cm,width=0.18cm]}]
($(-1.55,0.63)+(70:0.52)$) --  ++(+0.0001,0);
\node at (-1.5,0.85) {$b_{1}$};

\draw[dashed,->-=1,black] (0,-1.5) to [out=90, in=-90] (0,2.5);
\draw[dashed,->-=1,black] (-5,0) to [out=0, in=-180] (6.5,0);


\draw[->-=0.6,black,line width=0.45 mm] ($(1.55,0.63)+(15+45:0.5)$)--($(1.55,0.63)+(15+45:3.2)$);
\draw[->-=0.6,black,line width=0.45 mm] ($(1.55,0.63)+(15-45:0.5)$)--($(1.55,0.63)+(15-45:2.9)$);

\draw[-<-=0.6,black,line width=0.45 mm] ($(-1.55,0.63)+(180-15-45:0.5)$)--($(-1.55,0.63)+(180-15-45:3.2)$);
\draw[-<-=0.6,black,line width=0.45 mm] ($(-1.55,0.63)+(180-15+45:0.5)$)--($(-1.55,0.63)+(180-15+45:2.9)$);

\draw[->-=0.6,line width=0.45 mm,black]
($(-1.55,0.63)+(-15+45:0.5)$) to [out=-15+45, in=180+15-45] ($(1.55,0.63)+(15-45+180:0.5)$);
\draw[->-=0.6,line width=0.45 mm,black]
($(-1.55,0.63)+(-15-45:0.5)$) to [out=-15-45, in=180] (0,-0.63)
to [out=0, in=180+15+45] ($(1.55,0.63)+(15+45+180:0.5)$);
\end{tikzpicture}
\end{center}
\caption{\label{fig: contour for R}\textit{The jump contour $\Sigma_{R}$ in the RH problem for $R$.}}
\end{figure}
In this section we show that, as $s$ becomes large, $P^{(\infty)}(z)$ approximates $T(z)$ for $z \in \mathbb{C}\setminus \cup_{j=1}^2\mathcal{D}_{b_j}$ and $P^{(b_j)}(z)$ approximates $T(z)$ for $z \in \mathcal{D}_{b_j}$, $j=1,2$.
We define
\begin{align}\label{def of R}
R(\zeta) = \begin{cases}
T(\zeta)P^{(\infty)}(\zeta)^{-1}, & \mbox{if } \zeta \in \mathbb{C}\setminus (\overline{\mathcal{D}_{b_{1}} \cup \mathcal{D}_{b_{2}}}), \\
T(\zeta)P^{(b_{1})}(\zeta)^{-1}, & \mbox{if } \zeta \in \mathcal{D}_{b_{1}}, \\
T(\zeta)P^{(b_{2})}(\zeta)^{-1}, & \mbox{if } \zeta \in \mathcal{D}_{b_{2}}.
\end{cases}
\end{align}
Since $P^{(b_j)}$, $j=1,2$, have the exact same jumps as $T$ inside the disks, $R$ is analytic in $\cup_{j=1}^{2}\mathcal{D}_{b_j}\setminus\{b_{j}\}$. Furthermore, since $S(z)$ and $P^{(b_j)}(z)^{-1}$ remain bounded as $z \to b_{j}$, $j=1,2$, we conclude that $R(z)$ is also bounded as $z \to b_{j}$, $j=1,2$. Thus the singularities of $R$ at $b_{1}$ and $b_{2}$ are removable and $R$ is analytic in the entire open disks. $R$ satisfies the following RH problem.
\subsubsection*{RH problem for $R$}
\begin{itemize}
\item[(a)] $R : \mathbb{C}\setminus  \Sigma_{R} \to \mathbb{C}^{2 \times 2}$ is analytic, where
\begin{align*}
\Sigma_{R} = \partial \mathcal{D}_{b_{1}} \cup \partial \mathcal{D}_{b_{2}} \cup \bigcup_{j=1}^{7} \Sigma_{j} \setminus (\mathcal{D}_{b_{1}} \cup \mathcal{D}_{b_{2}} \cup \Sigma_{5}).
\end{align*}
The contour $\Sigma_{R}$ is oriented as shown in Figure \ref{fig: contour for R}. In particular, we orient the circles $\partial \mathcal{D}_{b_{1}}$ and $\partial \mathcal{D}_{b_{2}}$ in the clockwise direction.
\item[(b)] For $\zeta \in \Sigma_{R}$, $R$ satisfies the jumps $R_{+}(\zeta) = R_{-}(\zeta)J_{R}(\zeta)$, where
\begin{align*}
& J_{R}(\zeta) = P^{(b_{1})}(\zeta)P^{(\infty)}(\zeta)^{-1}, & & \zeta \in \partial \mathcal{D}_{b_{1}}, \\
& J_{R}(\zeta) = P^{(b_{2})}(\zeta)P^{(\infty)}(\zeta)^{-1}, & & \zeta \in \partial \mathcal{D}_{b_{2}}, \\
& J_{R}(\zeta) = P^{(\infty)}(\zeta)J_{T}(\zeta)P^{(\infty)}(\zeta)^{-1}, & & \zeta \in \Sigma_{R}\setminus (\partial \mathcal{D}_{b_{1}} \cup \partial \mathcal{D}_{b_{2}}).
\end{align*}
\item[(c)] $R(\zeta)$ remains bounded as $\zeta$ tends to the points of self-intersection of $\Sigma_{R}$. 

As $\zeta \to \infty$, there exists $R_1=R_1(s)$ such that
\begin{align}\label{asymp of R at inf}
R(\zeta) = I + \frac{R_{1}}{\zeta} + \bigO(\zeta^{-2}).
\end{align}
\end{itemize}
\begin{remark}\label{remark: symmetry for R}
The contour $\Sigma_{R}$ is symmetric with respect to $i \mathbb{R}$. Furthermore, by \eqref{symmetry for Pinf} and \eqref{symmetry local param}, the jumps $J_{R}$ satisfy the symmetry relation $J_{R}(\zeta) = \overline{J_{R}(-\overline{\zeta})}$ for $\zeta \in \Sigma_{R}$. Hence, by uniqueness of the solution to the RH problem for $R$, we conclude that
\begin{align}\label{symmetry for R}
R(\zeta) = \overline{R(-\overline{\zeta})}, \qquad \zeta \in \mathbb{C}\setminus \Sigma_{R}.
\end{align}
\end{remark}
From Lemma \ref{lemma: real part of h} and the fact that $P^{(\infty)}$ is independent of $s$ and uniformly bounded outside $\mathcal{D}_{b_{1}} \cup \mathcal{D}_{b_{2}}$, we have
\begin{align}\label{estimate jumps exp small}
J_{R}(\zeta) = I+ \bigO(e^{-c s^{\rho} |\zeta|}), \qquad \mbox{as } s \to +\infty
\end{align}
uniformly for $\zeta \in \Sigma_{R}\setminus (\partial \mathcal{D}_{b_{1}} \cup \partial \mathcal{D}_{b_{2}})$, and uniformly for $t$ in compact subsets of $(0,\infty)$. By substituting the expansion \eqref{expansion of Eb2 in s} in \eqref{matching param b2}, we infer that, for any $N \in \mathbb{N}$, $J_R$ has an expansion in the form
\begin{align}\label{estimate jumps on the disks}
J_{R}(\zeta)=J_{R}(\zeta;s) = I+ \sum_{j=1}^{N}J_{R}^{(j)}(\zeta;s)s^{-\frac{j\rho}{2}}+\bigO(s^{-\frac{(N+1)\rho}{2}}), \qquad \mbox{as } s \to +\infty,
\end{align}
where all coefficients $J_{R}^{(j)}(\zeta;s)$ satisfy the symmetry $J_{R}^{(j)}(\zeta;s) = \overline{J_{R}^{(j)}(-\overline{\zeta};s)}$ and are bounded as $s \to + \infty$, uniformly for $\zeta \in \partial \mathcal{D}_{b_{1}} \cup \partial \mathcal{D}_{b_{2}}$ and for $t$ in compact subsets of $(0,\infty)$. The first two coefficients $J_{R}^{(j)}(\zeta;s)$, for $j=1,2$ are given by
\begin{equation}\label{jumps for R first order on the boundary Db2}
J_R^{(j)}(\zeta;s)=E_0(\zeta;s)\frac{\Phi_{\mathrm{PC},j}}{f(\zeta)^{j}} E_0(\zeta;s)^{-1}, \quad j=1,2.
\end{equation}

\medskip

The jump relation for $R$ can also be written in the additive form $R_+=R_- + R_- J_R$, and together with the asymptotis for $R$, this implies the integral equation
\begin{equation}\label{eq:integraleq}R(\zeta)=R(\zeta;s)=I+\frac{1}{2\pi i}\int_{\Sigma_R}\frac{R_-(\xi;s)J_R(\xi;s)}{\xi-\zeta}d\xi.\end{equation}
We conclude from \eqref{estimate jumps exp small} and \eqref{estimate jumps on the disks} that $R$ satisfies a small norm RH problem as $s\to +\infty$, and by standard theory \cite{Deiftetal}, it follows that $R$ exists for sufficiently large $s$. Moreover, substituting \eqref{estimate jumps exp small} and \eqref{estimate jumps on the disks} in \eqref{eq:integraleq} and expanding as $s\to +\infty$, we obtain
\begin{align}
& R(\zeta;s) = I + \sum_{j=1}^{N} \frac{R^{(j)}(\zeta;s)}{s^{\frac{j\rho}{2}}} + \bigO(s^{-\frac{(N+1)\rho}{2}}), \qquad \mbox{as } s \to + \infty, \label{estimate for R} \\
& R'(\zeta;s) = \sum_{j=1}^{N} \frac{R^{(j)\prime}(\zeta;s)}{s^{\frac{j\rho}{2}}} + \bigO(s^{-\frac{(N+1)\rho}{2}}), \qquad \mbox{as } s \to + \infty, \nonumber
\end{align}
uniformly for $\zeta \in \mathbb{C}\setminus \Sigma_{R}$, and uniformly for $t$ in compact subsets of $(0,\infty)$. All the coefficients $R^{(j)}$ can in principle be computed iteratively.
In particular, 
\begin{align}\label{Rp1p integral form}
R^{(1)}(\zeta;s) = \frac{1}{2\pi i} \int_{\partial \mathcal{D}_{b_{1}}} \frac{J_{R}^{(1)}(\xi;s)}{\xi-\zeta}d\xi + \frac{1}{2\pi i} \int_{\partial \mathcal{D}_{b_{2}}} \frac{J_{R}^{(1)}(\xi;s)}{\xi-\zeta}d\xi,
\end{align}
and
\begin{multline}\label{Rp2p integral representation}
R^{(2)}(\zeta;s) = \frac{1}{2\pi i} \int_{\partial \mathcal{D}_{b_{1}}} \frac{R^{(1)}_{-}(\xi;s)J_{R}^{(1)}(\xi;s)+J_{R}^{(2)}(\xi;s)}{\xi-\zeta}d\xi \\+ \frac{1}{2\pi i} \int_{\partial \mathcal{D}_{b_{2}}} \frac{R^{(1)}_{-}(\xi;s)J_{R}^{(1)}(\xi;s)+J_{R}^{(2)}(\xi;s)}{\xi-\zeta}d\xi.
\end{multline}
where we recall that $\partial \mathcal{D}_{b_{1}}$ and $\partial \mathcal{D}_{b_{2}}$ are oriented clockwise.

\medskip

In the rest of this section, we evaluate $R^{(1)}(\zeta;s)$ and $R^{(2)}(\zeta;s)$ explicitly for $\zeta \in \mathbb{C}\setminus (\mathcal{D}_{b_{1}} \cup \mathcal{D}_{b_{2}})$, and we prove that $R^{(k)}(\zeta;s)$ can be chosen diagonal for $k$ even and off-diagonal for $k$ odd.


\medskip

The expression \eqref{jumps for R first order on the boundary Db2} for $J_R^{(1)}$ can be analytically continued from $\partial \mathcal{D}_{b_{2}}$ to the punctured disk $\mathcal{D}_{b_{2}} \setminus \{b_{2}\}$, and we note that $J_{R}^{(1)}(\zeta;s)$ has a simple pole at $\zeta = b_{2}$. Therefore, for $\zeta$ outside the disks, a residue calculation gives
\begin{align}\label{res JRp1p at b2}
\frac{1}{2\pi i} \int_{\partial \mathcal{D}_{b_{2}}} \frac{J_{R}^{(1)}(\xi;s)}{\xi-\zeta}d\xi = \frac{A^{(1)}(s)}{\zeta - b_{2}}, \quad \mbox{with} \quad
A^{(1)}(s) = \mbox{Res} \Big( J_{R}^{(1)}(\xi;s),\xi = b_{2} \Big).
\end{align}
To evaluate the first integral that appears at the right-hand-side of \eqref{Rp1p integral form}, we appeal to the symmetries of Remark \ref{remark: symmetry for R} to write
\begin{align}\label{res JRp1p at b1}
\frac{1}{2\pi i} \int_{\partial \mathcal{D}_{b_{1}}} \frac{J_{R}^{(1)}(\xi)}{\xi-\zeta}d\xi = \overline{\frac{1}{2\pi i} \int_{\partial \mathcal{D}_{b_{2}}} \frac{J_{R}^{(1)}(\xi)}{\xi+\overline{\zeta}}d\xi} = \frac{-\overline{A^{(1)}(s)}}{\zeta - b_{1}}.
\end{align}
Therefore, it remains to evaluate $A^{(1)}(s)=\mbox{Res} \Big( J_{R}^{(1)}(\xi;s),\xi = b_{2} \Big)$. From \eqref{Eb2p0p at b2}, \eqref{jumps for R first order on the boundary Db2}, and \eqref{PhiPC1 and beta def}, we immediately obtain that
\begin{align}\label{explicit expression for Ap1p}
A^{(1)}(s) = \frac{1}{f'(b_{2})}E_0(b_{2};s) \Phi_{\mathrm{PC},1}E_0(b_{2};s)^{-1} = \frac{1}{f'(b_2)}\begin{pmatrix}
0 & \beta_{12} E_0(b_{2};s)_{11}^{2} \\
\beta_{21} E_0(b_{2};s)_{11}^{-2} & 0
\end{pmatrix},
\end{align}
which is an off-diagonal matrix, and $E_0(b_{2};s)$ has been explicitly evaluated in \eqref{Eb2p0p at b2}. Combining \eqref{Rp1p integral form} with \eqref{res JRp1p at b2} and \eqref{res JRp1p at b1}, we obtain 
\begin{align}\label{Rp1p explicit}
R^{(1)}(\zeta;s) = \frac{A^{(1)}}{\zeta - b_{2}}-\frac{\overline{A^{(1)}}}{\zeta - b_{1}}, \qquad \mbox{for } \zeta \in \mathbb{C}\setminus (\mathcal{D}_{b_{1}} \cup \mathcal{D}_{b_{2}}),
\end{align}
where $A^{(1)}$ is given by \eqref{explicit expression for Ap1p}.


\medskip

For the computation of $R^{(2)}$, we recall that $J_{R}^{(1)}$ and $J_{R}^{(2)}$ are given by \eqref{jumps for R first order on the boundary Db2}, and we note that $J_{R}^{(2)}$ can be simplified as follows
\begin{align}\label{jumps for R second order on the boundary Db2}
J_{R}^{(2)}(\zeta) = \frac{1}{f(\zeta)^{2}}
E_0(\zeta;s) \Phi_{\mathrm{PC},2} E_0(\zeta;s)^{-1} = \frac{1}{f(\zeta)^{2}}
\Phi_{\mathrm{PC},2}, \qquad \zeta \in \partial \mathcal{D}_{b_{2}},
\end{align}
where we have used that both $E_0(\zeta;s)$ and $\Phi_{\mathrm{PC},2}$ are diagonal matrices. We note that $J_{R}^{(2)}$ can also be analytically continued from $\partial \mathcal{D}_{b_{2}}$ to the punctured disk $\mathcal{D}_{b_{2}} \setminus \{b_{2}\}$. Let us start by evaluating the integral over $\partial \mathcal{D}_{b_{2}}$ which appears at the right-hand-side of \eqref{Rp2p integral representation}. For $\zeta \in \mathbb{C}\setminus (\partial \mathcal{D}_{b_{1}} \cup \partial \mathcal{D}_{b_{2}})$, since $R_{-}^{(1)}$ is analytic on $\mathcal{D}_{b_{1}} \cup \mathcal{D}_{b_{2}}$, and since $J_{R}^{(j)}$ admits a pole of order $j$ at $b_{2}$, $j=1,2$, we have
\begin{align*}
& \frac{1}{2\pi i} \int_{\partial \mathcal{D}_{b_{2}}} \frac{R^{(1)}_{-}(\xi;s)J_{R}^{(1)}(\xi;s)+J_{R}^{(2)}(\xi;s)}{\xi-\zeta}d\xi = \frac{A^{(2)}(s)}{\zeta-b_{2}} + \frac{B^{(2)}(s)}{(\zeta-b_{2})^{2}}, \\
& A^{(2)}(s) = R^{(1)}(b_{2};s)A^{(1)}(s) + \mbox{Res} \Big( J_{R}^{(2)}(\xi;s),\xi = b_{2} \Big), \\
& B^{(2)}(s) = \mbox{Res} \Big( (\xi-b_{2})J_{R}^{(2)}(\xi;s),\xi = b_{2} \Big).
\end{align*}
We again appeal to the symmetry $\zeta \mapsto - \overline{\zeta}$ of Remark \ref{remark: symmetry for R} to evaluate the integral over $\partial \mathcal{D}_{b_{1}}$:
\begin{align*}
\frac{1}{2\pi i} \int_{\partial \mathcal{D}_{b_{1}}} \frac{R^{(1)}_{-}(\xi;s)J_{R}^{(1)}(\xi;s)+J_{R}^{(2)}(\xi;s)}{\xi-\zeta}d\xi & = \overline{\frac{1}{2\pi i} \int_{\partial \mathcal{D}_{b_{2}}} \frac{R^{(1)}_{-}(\xi;s)J_{R}^{(1)}(\xi;s)+J_{R}^{(2)}(\xi;s)}{\xi+\overline{\zeta}}d\xi} \\
& = -\frac{\overline{A^{(2)}(s)}}{\zeta-b_{1}} + \frac{\overline{B^{(2)}(s)}}{(\zeta-b_{1})^{2}}.
\end{align*}
To evaluate $A^{(2)}$ and $B^{(2)}$ explicitly, it remains to compute $R^{(1)}(b_{2};s)$, and the two residues 
\begin{align}\label{residue needed for Rp2p}
\mbox{Res} \Big( J_{R}^{(2)}(\xi;s),\xi = b_{2} \Big) \quad \mbox{ and } \quad \mbox{Res} \Big( (\xi-b_{2})J_{R}^{(2)}(\xi;s),\xi = b_{2} \Big).
\end{align}
It is fairly easy to compute the residues \eqref{residue needed for Rp2p} from the expression \eqref{jumps for R first order on the boundary Db2} for $J_{R}^{(2)}(\zeta;s)$. We obtain
\begin{align*}
& B^{(2)}(s) = \mbox{Res} \Big( (\xi-b_{2})J_{R}^{(2)}(\xi;s),\xi = b_{2} \Big) = \frac{1}{f'(b_2)^{2}} \begin{pmatrix}
\frac{(1+i\nu)\nu}{2} & 0 \\
0 & \frac{(1-i\nu)\nu}{2}
\end{pmatrix}, \\
& \mbox{Res} \Big( J_{R}^{(2)}(\xi;s),\xi = b_{2} \Big) =- \frac{f''(b_2)}{2f'(b_2)^3}\begin{pmatrix}
(1+i\nu)\nu & 0 \\
0 & (1-i\nu)\nu
\end{pmatrix},
\end{align*}
where $f'(b_2)$ and $f''(b_2)$ are given by \eqref{cb2 and cb2p2p}. Since
\begin{align*}
R_{-}^{(1)}(\xi;s) = R_{+}^{(1)}(\xi;s) - J_{R}^{(1)}(\xi;s),
\end{align*}
and since $R_{+}^{(1)}(\xi;s)$ has already been computed in \eqref{Rp1p explicit}, we obtain 
\begin{align*}
& R^{(1)}(b_{2};s) = - \frac{\overline{A^{(1)}(s)}}{b_{2}-b_{1}} - \mbox{Res} \Big( \frac{J_{R}^{(1)}(\xi;s)}{\xi-b_{2}},\xi=b_{2} \Big) \\
& = - \frac{\overline{A^{(1)}(s)}}{b_{2}-b_{1}} + \frac{1}{f'(b_2)} \begin{pmatrix}
0 & (-\frac{1}{6b_2}-2\beta_0(s))\beta_{12} E_0(b_{2};s)_{11}^{2} \\
(-\frac{1}{6b_2}+2\beta_0(s)) \beta_{21} E_0(b_{2};s)_{11}^{-2} & 0
\end{pmatrix},
\end{align*}
and the constant $\beta_0(s)$ is given by \eqref{eb2p0p at b2}. Summarizing, we have
\begin{align}\label{Rp2p at infty}
R^{(2)}(\zeta;s) = \frac{A^{(2)}(s)}{\zeta-b_{2}} + \frac{B^{(2)}(s)}{(\zeta-b_{2})^{2}} + \frac{-\overline{A^{(2)}(s)}}{\zeta-b_{1}} + \frac{\overline{B^{(2)}(s)}}{(\zeta-b_{1})^{2}}, \qquad \mbox{for }\zeta \in \mathbb{C}\setminus (\mathcal{D}_{b_{1}}\cup \mathcal{D}_{b_{2}}), 
\end{align}
where $A^{(2)}$ and $B^{(2)}$ are diagonal matrices given by
\begin{align*}
& B^{(2)}_{11}(s) = \frac{(1+i\nu)\nu}{2f'(b_{2})^{2}}, \qquad \qquad  B^{(2)}_{22}(s) = \frac{(1-i\nu)\nu}{2f'(b_2)^2}, \\
& A^{(2)}_{11}(s) = \frac{1}{f'(b_2)^{2}}\left( \left(-\frac{1}{6b_2}-2\beta_0(s)\right)\beta_{12}\beta_{21} +\frac{1}{6b_2}(1+i\nu)\nu - \frac{f'(b_2)}{\overline{f'(b_2)}} 
\frac{ \beta_{21} \overline{\beta_{12}}}{(b_{2}-b_{1})\mathfrak{e}^{2}}  \right), \\
& A^{(2)}_{22}(s) = \frac{1}{f'(b_2)^{2}}\left( \left(-\frac{1}{6b_2}+2\beta_0(s)\right)\beta_{12}\beta_{21} +\frac{1}{6b_2}(1-i\nu)\nu - \frac{f'(b_2)}{\overline{f'(b_2)}} \frac{ \beta_{12} \overline{\beta_{21}}\mathfrak{e}^{2}}{(b_{2}-b_{1})}  \right),
\end{align*}
where $\mathfrak{e} = \mathfrak{e}(s)$ depends on $s$, but satisfies $|\mathfrak{e}(s)| = 1$ for all values of $s$. Its precise expression is given by \eqref{def of efrak}. The formula \eqref{beta 12 beta21 relation} allows the simplification
\begin{align}\label{explicit expression for Ap2p11}
A^{(2)}_{11}(s) =  \frac{1}{f'(b_2)^2}\left( \left(-\frac{1}{6b_2}-2\beta_0(s)\right)\nu +\frac{1}{6b_2}(1+i\nu)\nu - \frac{f'(b_2)}{\overline{f'(b_2)}} \frac{ \beta_{21} \overline{\beta_{12}}}{(b_{2}-b_{1})\mathfrak{e}^{2}}  \right),
\end{align}
and similarly for $A_{22}^{(2)}$. We end this section with a lemma about the structure of the matrices $R^{(j)}$, $j \geq 1$. 
\begin{lemma}\label{lemma: diag and off-diag}
For any $j \geq 1$, the matrix $R^{(2j-1)}$ is off-diagonal and the matrix $R^{(2j)}$ is diagonal.
\end{lemma}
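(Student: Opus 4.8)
The plan is to prove the statement by a single induction on $j$, exploiting the iterative nature of the expansion \eqref{estimate for R} that is generated by the integral equation \eqref{eq:integraleq}. Substituting \eqref{estimate for R} and \eqref{estimate jumps on the disks} into \eqref{eq:integraleq} and collecting equal powers of $s^{-\rho/2}$ yields, with the convention $R^{(0)}:=I$,
\begin{align*}
R^{(n)}(\zeta;s) = \frac{1}{2\pi i}\int_{\partial\mathcal{D}_{b_{1}}\cup\partial\mathcal{D}_{b_{2}}}\frac{1}{\xi-\zeta}\sum_{k=0}^{n-1}R^{(k)}_-(\xi;s)\,J_R^{(n-k)}(\xi;s)\,d\xi ,
\end{align*}
where I have used that $J_R(\zeta;s)-I$ is exponentially small on $\Sigma_R\setminus(\partial\mathcal{D}_{b_{1}}\cup\partial\mathcal{D}_{b_{2}})$ by \eqref{estimate jumps exp small}, so it contributes nothing to the coefficients $J_R^{(j)}$. (For $n=1,2$ this recovers the integral representations \eqref{Rp1p integral form} and \eqref{Rp2p integral representation}, whose explicit evaluations \eqref{Rp1p explicit} and \eqref{Rp2p at infty} are consistent with the claim.)

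The first step is to record the parity of the jump coefficients: $J_R^{(j)}$ is off-diagonal for $j$ odd and diagonal for $j$ even. On $\partial\mathcal{D}_{b_{2}}$ this follows by inserting the expansion \eqref{expansion of Eb2 in s} of $E(\zeta;s)$ — whose coefficients $E_k(\zeta;s)$ are all diagonal and carry only integer powers of $s^{-\rho}$, i.e. even powers of $s^{-\rho/2}$ — into the matching relation \eqref{matching param b2}, and recalling that $\Phi_{\mathrm{PC},2k}$ is diagonal while $\Phi_{\mathrm{PC},2k-1}$ is off-diagonal. Conjugating a diagonal (resp. off-diagonal) matrix by a diagonal matrix preserves its type, so the coefficient of $s^{-j\rho/2}$ receives contributions only from the $\Phi_{\mathrm{PC},m}$ with $m\equiv j\pmod 2$, all of the same type; this gives the asserted parity of $J_R^{(j)}$ on $\partial\mathcal{D}_{b_{2}}$. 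On $\partial\mathcal{D}_{b_{1}}$ the same conclusion holds because $J_R^{(j)}(\zeta;s)=\overline{J_R^{(j)}(-\overline{\zeta};s)}$ (see \eqref{estimate jumps on the disks}) and the map $\zeta\mapsto-\overline{\zeta}$ followed by complex conjugation preserves the diagonal/off-diagonal splitting.

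The induction then runs as follows. Call a $2\times 2$ matrix \emph{even} if it is diagonal and \emph{odd} if it is off-diagonal, and assign parity $+1$ and $-1$ accordingly; the product of two such matrices has parity equal to the product of the parities, and the Cauchy transform $M\mapsto\frac{1}{2\pi i}\int\frac{M(\xi)}{\xi-\zeta}\,d\xi$ acts entrywise and hence preserves parity. Assume that $R^{(k)}$ — hence also its boundary values $R^{(k)}_-$ — has parity $(-1)^k$ for all $0\le k\le n-1$; this holds for $k=0$ since $R^{(0)}=I$. In the displayed recursion each summand $R^{(k)}_-J_R^{(n-k)}$ has parity $(-1)^k(-1)^{n-k}=(-1)^n$ by the previous paragraph, so the sum does too, and therefore $R^{(n)}$ has parity $(-1)^n$. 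This is exactly the statement: $R^{(2j-1)}$ is off-diagonal and $R^{(2j)}$ is diagonal.

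I expect the only delicate point to be the power bookkeeping in the second paragraph: one must make sure that the diagonal factors coming from $E(\zeta;s)$ (and, in the induction step, from $R^{(k)}_-$) only ever shift the exponent of $s^{-\rho/2}$ by an even amount, so that the parity of the exponent stays locked to the parity of the matrix — which is precisely why it is essential that $E$ admits an asymptotic expansion in \emph{integer} powers of $s^{-\rho}$, and that the $\Phi_{\mathrm{PC},m}$ alternate in type with $m$.
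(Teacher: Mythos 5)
Your proposal is correct and follows essentially the same route as the paper: the recursive Cauchy-integral formula for $R^{(n)}$ coming from \eqref{eq:integraleq}, combined with the parity of the jump coefficients $J_R^{(j)}$, which is deduced from the matching relation \eqref{matching param b2}, the expansion \eqref{expansion of Eb2 in s} of $E$ in integer powers of $s^{-\rho}$ with diagonal coefficients, and the alternating diagonal/off-diagonal structure of the $\Phi_{\mathrm{PC},m}$. Your explicit treatment of $\partial\mathcal{D}_{b_1}$ via the symmetry $J_R^{(j)}(\zeta;s)=\overline{J_R^{(j)}(-\overline{\zeta};s)}$ is a detail the paper leaves implicit, but otherwise the arguments coincide.
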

\begin{proof}
By \eqref{eq:integraleq}, the matrices $R^{(j)}$ can be computed recursively as follows:
\begin{align*}
R^{(j)}(\zeta;s) = \frac{1}{2\pi i} \int_{\partial \mathcal{D}_{b_{1}} \cup \partial \mathcal{D}_{b_{2}}} \frac{\sum_{\ell = 1}^{j}R_{-}^{(j-\ell)}(\xi;s)J_{R}^{(\ell)}(\xi;s)}{\xi-\zeta}d\xi, \qquad j \geq 1.
\end{align*}
The result follows by induction, provided that the matrices $J_{R}^{(2j)}$ are diagonal and $J_{R}^{(2j-1)}$ are off-diagonal.
To prove this claim, consider \eqref{matching param b2} and \eqref{expansion of Eb2 in s}. These imply that $J_R^{(j)}(\zeta;s)$ from \eqref{jumps for R first order on the boundary Db2} is composed of terms of the form
\[\frac{1}{f^{j-2k}}E_m\Phi_{{\rm PC}, j-2k}(E^{-1})_{k-m},\]
for $m=0,\ldots, k$ and $k=0, 1, \ldots, \lfloor\frac{j-1}{2}\rfloor$, and where $E_m$ and $(E^{-1})_{k-m}$ are diagonal matrices. All these terms are diagonal if $j$ is even and off-diagonal if $j$ is odd.
\end{proof}

\section{Proofs of Theorems \ref{thm:expmoments Wrights} and \ref{thm:expmoments Meijer}: part 1}\label{Section: Differential identity in s}
In this section, we use the analysis of Section \ref{Section: steepest descent} to prove part of Theorems \ref{thm:expmoments Wrights} and \ref{thm:expmoments Meijer} via the differential identity in $s$ 
\begin{align}\label{recall the diff id in s}
& \partial_{s} \log \det \left( 1 - (1-t)\mathbb{K}^{(j)}\Big|_{[0,s]} \right) = \frac{Y_{1,11}}{s},
\end{align}
which was derived in \eqref{diff identity with s}. As mentioned in the introduction, the advantage of this differential identity is that it leads to a significantly simpler analysis than the one carried out in Section \ref{Section: Differential identity in t} and that it allows to prove the optimal bound $\bigO(s^{-\rho})$ for the error terms of \eqref{asymp gap thm explicit  Wr} and \eqref{asymp gap thm explicit  Me}. The main disadvantage is that it does not allow for the evaluation of the constants $C$ of \eqref{asymp gap thm explicit  Wr} and \eqref{asymp gap thm explicit  Me}. These constants will be obtained in Section \ref{Section: Differential identity in t}.

\medskip

By \eqref{relation T1 to Y1}, we have
\begin{align*}
T_{1,11} = \frac{1}{i s^{\rho}}Y_{1,11}.
\end{align*}
On the other hand, for $\zeta$ outside the lenses and outside the disks, we know from \eqref{def of R} that
\begin{align*}
T(\zeta) = R(\zeta)P^{(\infty)}(\zeta),
\end{align*}
from which we deduce, by \eqref{asymp of T at inf}, \eqref{asymp of Pinf at inf}, \eqref{asymp of R at inf}, and \eqref{estimate for R} that
\begin{align*}
T_{1}=T_1(s) = P^{(\infty)}_{1} + R_{1}(s) = P^{(\infty)}_{1} + \sum_{j=1}^{2N+1} R_{1}^{(j)}(s) \, s^{-\frac{j\rho}{2}} + \bigO(s^{-(N+1)\rho}), \qquad \mbox{as } s \to + \infty,
\end{align*}
uniformly for $t$ in compact subsets of $\mathbb{R}$, where $N \in \mathbb{N}$ is arbitrary, and where the coefficients $R_{1}^{(j)}(s)$, $j \geq 1$, are defined via the expansion
\begin{align*}
R^{(j)}(\zeta) = \frac{R^{(j)}_{1}(s)}{\zeta} + \bigO(\zeta^{-2}), \qquad \mbox{as } \zeta \to \infty.
\end{align*}
We know from Lemma \ref{lemma: diag and off-diag} that $R_{1}^{(2j-1)}$ is off-diagonal for all $j \geq 1$. Thus, using \eqref{recall the diff id in s}, we find
\begin{align}\label{asymp diff identity in s}
\partial_{s} \log \det \left( 1 - (1-t)\mathbb{K}^{(j)}\Big|_{[0,s]} \right) & = is^{\rho-1}\Big( P^{(\infty)}_{1} + \sum_{j=1}^{2N+1} R_{1}^{(j)}(s)s^{-\frac{j\rho}{2}} + \bigO(s^{-(N+1)\rho}) \Big)_{11} \nonumber \\
& = is^{\rho-1}\Big( P^{(\infty)}_{1,11} + \sum_{j=1}^{N} R_{1,11}^{(2j)}(s)s^{-j \rho} + \bigO(s^{-(N+1)\rho}) \Big),
\end{align}
as $s \to + \infty$. After integrating \eqref{asymp diff identity in s}, we obtain
\begin{align}\label{lol6}
\log \det \left( 1 - (1-t)\mathbb{K}^{(j)}\Big|_{[0,s]} \right) = \frac{i}{\rho}P_{1,11}^{(\infty)} s^{\rho} +  \int_{M}^{s}\frac{iR_{1,11}^{(2)}(s)}{s} ds + \log C_{1} + \bigO(s^{-\rho}),
\end{align}
as $s \to + \infty$, where $C_{1}$ is an unknown constant of integration and $M$ is a sufficiently large constant (i.e. $M$ is independent of $s$). An explicit expression for $P_{1,11}^{(\infty)}$ has been computed in \eqref{Pinf1}. Then, the leading coefficient in \eqref{lol6} is given by
\begin{align}\label{leading term}
\frac{i}{\rho} P_{1,11}^{(\infty)} = -\frac{iD_{1}}{\rho} = -\frac{2 \nu \re b_{2}}{\rho}.
\end{align}
We now turn to the computation of the second term of \eqref{lol6}. Using \eqref{explicit expression for Ap2p11} and \eqref{Rp2p at infty}, we obtain
\begin{align*}
iR_{1,11}^{(2)}(s) = i(A_{11}^{(2)}(s) - \overline{A_{11}^{(2)}}(s)) = -2 \, \im A_{11}^{(2)}(s) = \frac{\nu^{2}}{c_{1}+c_{2}} + \frac{1}{|f'(b_2)|^{2} \re b_{2}} \im \big( \beta_{21} \overline{\beta_{12}}\mathfrak{e}(s)^{-2} \big).
\end{align*}
We recall that $\mathfrak{e}(s)$ is given by 
\begin{align*}
\mathfrak{e}(s) = \Big[ (b_{2}-b_{1})|f'(b_2)|s^{\frac{\rho}{2}} \Big]^{2i\nu } \exp \Big( i c_{5} \arg(ib_{2}) + i c_{6} \arg(-ib_{2}) \Big) e^{-i \widetilde{\ell} s^{\rho}}. 
\end{align*}
In particular, it satisfies $|\mathfrak{e}(s)| = 1$ and it oscillates rapidly as $s \to + \infty$, since $\widetilde{\ell} \neq 0$, see \eqref{def of ell tilde}. Therefore, we have
\begin{align*}
\int_{M}^{s} \mathfrak{e}^{-2}(s) s^{-1} ds & = \tilde{c} \int_{M}^{s} e^{2 i \widetilde{\ell}s^{\rho}-2 \nu i \log s^{\rho}} s^{-1} ds = \frac{\tilde{c}}{\rho}\int_{M^{\rho}}^{s^{\rho}} e^{2i \widetilde{\ell} u -2 \nu i \log u} u^{-1}du \\
& = \log C_{2} + C_{3}(s).
\end{align*}
where $\tilde{c}$ and $C_{2}$ are constants whose exact values are unimportant for us, and $C_{3}(s)$ is bounded by
\begin{align*}
|C_{3}(s)| = \bigO\Big(\frac{1}{\widetilde{\ell}s^{\rho}}\Big)\qquad \mbox{as $s\to +\infty$}.
\end{align*}
We conclude that the integral in \eqref{lol6} has the following asymptotics
\begin{align}\label{integral in diff in s asymp}
\int_{M}^{s}\frac{iR_{1,11}^{(2)}(s)}{s} ds = \frac{\nu^{2}}{c_{1}+c_{2}}\log s + \log(C_{2}) + \bigO(s^{-\rho}), \qquad \mbox{as } s \to + \infty.
\end{align}
Since $\rho = \frac{1}{c_{1}+c_{2}}$, by combining \eqref{lol6}, \eqref{leading term}, and \eqref{integral in diff in s asymp}, we get
\begin{align}\label{final asymptotics}
\log \det \left( 1 - (1-t)\mathbb{K}^{(j)}\Big|_{[0,s]} \right) = -\frac{2 \nu \re b_{2}}{\rho} s^{\rho} + \nu^{2}\log s^{\rho} + \log(C) + \bigO\big(s^{-\rho}\big),
\end{align}
as $s \to + \infty$, where $C = C_{1}C_{2}$. The values of $\rho^{(1)}$ and $\rho^{(2)}$ are given by \eqref{tau rho 1} and \eqref{tau rho 2}, respectively, and $\re b_{2}^{(j)}$, $j=1,2$, can be evaluated by using \eqref{def of b2} together with the coefficients $c_{1}$, $c_{2}$ and $c_{3}$ (given above \eqref{def of h}). Recalling also that 
\[\log \det \left( 1 - (1-t)\mathbb{K}^{(j)}\Big|_{[0,s]} \right)=\mathbb E\left[e^{-2\pi\nu N(s)}\right],\]
we have now completed the proofs of Theorems \ref{thm:expmoments Wrights} and \ref{thm:expmoments Meijer}, up to the determination of the constants $C=C^{(j)}$, $j=1,2$.

\section{Proofs of Theorems \ref{thm:expmoments Wrights} and \ref{thm:expmoments Meijer}: part 2}\label{Section: Differential identity in t}
In this section, we will compute $C$ via the differential identity in $t$
\begin{align}\label{recall the diff id in t}
& \partial_{t} \log \det \left( 1 - (1-t)\mathbb{K}^{(j)}\Big|_{[0,s]} \right) = \frac{-1}{2(1-t)} \int_{\gamma \cup \tilde{\gamma}} {\rm Tr} \Big[ Y^{-1}(z)Y'(z)(J(z)-I) \Big] \frac{dz}{2\pi i},
\end{align}
which was derived in Lemma \ref{lemma: differential identities}. 

\medskip

We divide the proofs is a series of lemmas. First, we use the analysis of Section \ref{Section: steepest descent} to expand the right-hand side of \eqref{recall the diff id in t} as $s \to + \infty$.
\begin{lemma}
As $s \to + \infty$, we have
\begin{align}
& \partial_{t} \log \det \left( 1 - (1-t)\mathbb{K}^{(j)}\Big|_{[0,s]} \right) = I_{1} + I_{2} + 2 \, \re I_{b_{2}} +\bigO(e^{-c s^\rho}), \label{diff identity t first lemma asymptotics}
\end{align}
where $c>0$ and 
\begin{align}
I_{1} & = - \frac{1}{t}\int_{\Sigma_{5}} \Big( \log \big( e^{-is^{\rho}h(\zeta)} \mathcal{G}(\zeta;s) \big)  \Big)' \frac{d\zeta}{2\pi i} = - \frac{2}{t}\re\bigg[ \int_{[0,b_{2}]} \Big( \hspace{-0.1cm}-\hspace{-0.05cm} i s^{\rho}h'(\zeta)+ \big(\log \mathcal{G}(\zeta;s)   \big)'\Big) \frac{d\zeta}{2\pi i}\bigg] , \label{def of I1} \\
I_{2} & = - \frac{1}{t}\int_{\Sigma_{5}} {\rm Tr}\big[ T^{-1}(\zeta)T'(\zeta)\sigma_{3} \big] \frac{d\zeta}{2\pi i} = - \frac{2}{t}\re \bigg[\int_{[0,b_{2}]} {\rm Tr}\big[ T^{-1}(\zeta)T'(\zeta)\sigma_{3} \big] \frac{d\zeta}{2\pi i}\bigg], \label{def of I2} \\
I_{b_{2}}  & = \frac{1}{2\sqrt{1-t}} \int_{\Sigma_{2} \cap \mathcal{D}_{b_{2}}} e^{-s^{\rho}(ih(\zeta)-\ell)} \widetilde{\mathcal{G}}(\zeta;s){\rm Tr}\big[T^{-1}(\zeta) T'(\zeta)\sigma_{+}\big]\frac{d\zeta}{2\pi i} \nonumber \\
& + \frac{2-t}{2t^{2}\sqrt{1-t}} \int_{\Sigma_{7} \cap \mathcal{D}_{b_{2}}} e^{-s^{\rho}(ih(\zeta)-\ell)} \widetilde{\mathcal{G}}(\zeta;s){\rm Tr}\big[T^{-1}(\zeta) T'(\zeta)\sigma_{+}\big]\frac{d\zeta}{2\pi i} \nonumber \\
& + \frac{-1}{2\sqrt{1-t}} \int_{ \Sigma_{4}\cap \mathcal{D}_{b_{2}}} e^{s^{\rho}(ih(\zeta)-\ell)} \widetilde{\mathcal{G}}(\zeta;s)^{-1}{\rm Tr}\big[T^{-1}(\zeta) T'(\zeta)\sigma_{-}\big]\frac{d\zeta}{2\pi i} \nonumber \\
& + \frac{-(2-t)}{2t^{2}\sqrt{1-t}} \int_{ \Sigma_{6}\cap \mathcal{D}_{b_{2}}} e^{s^{\rho}(ih(\zeta)-\ell)} \widetilde{\mathcal{G}}(\zeta;s)^{-1}{\rm Tr}\big[T^{-1}(\zeta) T'(\zeta)\sigma_{-}\big]\frac{d\zeta}{2\pi i}. \label{def of Ib2}
\end{align}
\end{lemma}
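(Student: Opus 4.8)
The plan is to take the differential identity \eqref{recall the diff id in t} as the starting point and to transport the contour integral over $\gamma\cup\tilde\gamma$ through the chain of transformations $Y\mapsto U\mapsto\widehat T\mapsto T$ of Section \ref{Section: steepest descent}, turning it into an integral over the final jump contour $\bigcup_{j=1}^{7}\Sigma_{j}$ written in terms of $T$, and then to localize the resulting integrand to the places where the jump of $T$ fails to be exponentially close to the identity.

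First I would substitute $z=is^{\rho}\zeta+\tau$ in \eqref{recall the diff id in t}. Using \eqref{def of U} one has $Y(is^{\rho}\zeta+\tau)=s^{-\frac\tau2\sigma_{3}}U(\zeta)s^{\frac\tau2\sigma_{3}}$, so the conjugations by the constant matrices $s^{\pm\frac\tau2\sigma_{3}}$ disappear under the trace and the two factors $is^{\rho}$ coming from $Y'$ and from $dz$ cancel; hence ${\rm Tr}[Y^{-1}Y'(J-I)]\,\frac{dz}{2\pi i}={\rm Tr}[U^{-1}U'(J_{U}-I)]\,\frac{d\zeta}{2\pi i}$ and the right-hand side of \eqref{recall the diff id in t} becomes $\frac{-1}{2(1-t)}\int_{\gamma_{U}\cup\tilde\gamma_{U}}{\rm Tr}[U^{-1}U'(J_{U}-I)]\,\frac{d\zeta}{2\pi i}$. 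Since $\widehat T$ is built from $U$ by analytic continuation, $\widehat T^{-1}\widehat T'$ is the continuation of $U^{-1}U'$, and the integrand (taken as ${\rm Tr}[U_{-}^{-1}U_{-}'(J_{U}-I)]$, which is unambiguous by the argument of Remark \ref{remark: no boundary values indicated}) extends meromorphically with poles only at the poles of $F^{(j)}$, which the contours avoid. Collapsing the part of $\gamma_{U}\cup\tilde\gamma_{U}$ lying between $b_{1}$ and $b_{2}$ onto $\Sigma_{5}$ via the factorization \eqref{first factorization} — with no boundary term at infinity, because the relevant exponentials decay along the contours by Lemma \ref{lemma: real part of h} — rewrites this as $\frac{-1}{2(1-t)}\int_{\bigcup_{j=1}^{5}\Sigma_{j}}{\rm Tr}[\widehat T^{-1}\widehat T'(J_{\widehat T}-I)]\,\frac{d\zeta}{2\pi i}$.

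Next I would pass from $\widehat T$ to $T$ via \eqref{def of T}. On $\Sigma_{1},\dots,\Sigma_{4}$ one has $H=I$ and only a constant conjugation remains, so there the integrand is unchanged and equals ${\rm Tr}[T^{-1}T'(J_{T}-I)]$ with $J_{T}-I$ a scalar multiple of $\sigma_{+}$ or $\sigma_{-}$. On $\Sigma_{5}$, where $T$ has the constant jump ${\rm diag}(1/t,t)$ while $\widehat T$ and $H$ both jump, I would substitute $\widehat T$ in terms of $T$ and $H$ and use the factorization \eqref{second factorization}; this produces a ``logarithmic derivative of $H$'' term, namely exactly the term $(\log(e^{-is^{\rho}h}\mathcal{G}))'$ of $I_{1}$, together with the genuine $T$-contribution ${\rm Tr}[T^{-1}T'\sigma_{3}]$ of $I_{2}$, the same lens-opening being what creates the jumps of $J_{T}$ on $\Sigma_{6}$ and $\Sigma_{7}$ and redistributes part of the $\Sigma_{5}$-integral onto them. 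Then, by Lemma \ref{lemma: real part of h} and the remark following the RH problem for $T$, on $\big(\bigcup_{j\neq5}\Sigma_{j}\big)\setminus(\mathcal{D}_{b_{1}}\cup\mathcal{D}_{b_{2}})$ we have $J_{T}-I=\bigO(e^{-cs^{\rho}})$ with $T$ uniformly bounded, so those pieces contribute only $\bigO(e^{-cs^{\rho}})$; the $\Sigma_{5}$-integral contributes $I_{1}+I_{2}$; and on the parts of $\Sigma_{1},\Sigma_{2},\Sigma_{4},\Sigma_{6},\Sigma_{7}$ inside the disks, where $J_{T}-I$ is of order $1$, writing ${\rm Tr}[T^{-1}T'(J_{T}-I)]={\rm Tr}[T^{-1}T'\sigma_{\pm}]$ times the relevant scalar entry, multiplying by $\frac{-1}{2(1-t)}$, and adding the part inherited from the $\widehat T\mapsto T$ step on $\Sigma_{5}$ (which is what turns the naive coefficient $\frac1{2t\sqrt{1-t}}$ on $\Sigma_{7}$ into $\frac{2-t}{2t^{2}\sqrt{1-t}}$) reproduces exactly the four integrals defining $I_{b_{2}}$. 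Finally, the symmetry $T(\zeta)=\overline{T(-\overline\zeta)}$ of Remark \ref{remark: symmetry for T}, together with $\mathcal{D}_{b_{1}}=-\overline{\mathcal{D}_{b_{2}}}$ and the symmetry of the whole contour, shows that the $\mathcal{D}_{b_{1}}$-contribution equals $\overline{I_{b_{2}}}$ and that the $[b_{1},0]$-part of $\Sigma_{5}$ is the conjugate of the $[0,b_{2}]$-part, which produces the $\re$'s in \eqref{def of I1}--\eqref{def of I2} and the total $I_{1}+I_{2}+2\re I_{b_{2}}+\bigO(e^{-cs^{\rho}})$.

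The main obstacle is the passage $\widehat T\mapsto T$ in the neighborhood of $\Sigma_{5}$. Because the conjugating matrix $H$ depends on $t$ and carries the factors $e^{\mp is^{\rho}h}\mathcal{G}^{\pm1}$, the transformed integrand written with any single fixed boundary value of $T$ contains terms that are exponentially large on $\Sigma_{5}$ (on which $\re(ih)-\ell$ has a definite sign); one has to choose the boundary values appropriately, or exhibit the cancellation of these terms and regroup the remainder onto $\Sigma_{6}$ and $\Sigma_{7}$, in order to reach a manifestly controlled expression. This delicate bookkeeping is precisely what produces the non-obvious prefactors — the factor $\tfrac1t$ in $I_{1}$ and $I_{2}$, and $\tfrac{2-t}{2t^{2}\sqrt{1-t}}$ in $I_{b_{2}}$ — and is the step where a careful, explicit computation cannot be avoided.
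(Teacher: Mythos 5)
Your route is the paper's: change variables to $\zeta$, deform $\gamma_U\cup\tilde\gamma_U$ onto $\Sigma_1\cup\dots\cup\Sigma_5$ via the continuation $\widehat T$, rewrite the $\Sigma_5$ part through \eqref{def of T}, discard exponentially small pieces away from the disks, and use $T(\zeta)=\overline{T(-\overline{\zeta})}$ to get $I_{b_1}=\overline{I_{b_2}}$ and the real-part forms in \eqref{def of I1}--\eqref{def of I2}. However, your intermediate identity $\frac{-1}{2(1-t)}\int_{\cup_{j=1}^5\Sigma_j}{\rm Tr}\big[\widehat T^{-1}\widehat T'(J_{\widehat T}-I)\big]\frac{d\zeta}{2\pi i}$ is wrong on $\Sigma_5$: there $J_{\widehat T}$ is not triangular, so ${\rm Tr}\big[\widehat T^{-1}\widehat T'\sigma_\pm\big]$ genuinely depends on the boundary value (the argument of Remark \ref{remark: no boundary values indicated} only works for triangular jumps, i.e.\ on $\Sigma_1,\dots,\Sigma_4$); the correct collapsed expression pairs the $(1,2)$-entry of the jump with $\widehat T_+$ and the $(2,1)$-entry with $\widehat T_-$, and your formula also creates a spurious term $\tfrac12(\widehat T^{-1}\widehat T')_{22}$ coming from the $(2,2)$-entry $t-1$ of $J_{\widehat T}-I$, which is absent from the true expression.

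The more serious issue is that the step which actually produces \eqref{def of I1}--\eqref{def of Ib2} is only announced. After inserting \eqref{def of T} on $\Sigma_5$ one must compute $e^{\mp is^{\rho}h}\mathcal{G}^{\pm1}\,{\rm Tr}\big[H_\pm(H_\pm^{-1})'\sigma_\pm\big]$ (this is where $-\tfrac1t\big(\log(e^{-is^{\rho}h}\mathcal{G})\big)'$ and $-\tfrac1t{\rm Tr}[T^{-1}T'\sigma_3]$ come from) and conjugate $\sigma_\pm$ by $H_\pm$ explicitly; the resulting off-diagonal pieces carry $e^{\pm s^{\rho}(ih-\ell)}$, which are exponentially large on $\Sigma_5$, and they are removed neither by a cancellation nor by a choice of boundary values, as you suggest, but by contour deformation into the lenses: the piece involving the upper-lens branch of $T$ is moved to $\Sigma_6$, while the other piece is first pushed through $\Sigma_5$ using the constant jump $T_+=T_-\,{\rm diag}(t^{-1},t)$ (which multiplies ${\rm Tr}[T^{-1}T'\sigma_+]$ by $t^{-2}$) and then deformed to $\Sigma_7$. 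This mechanism produces the coefficient $\frac{1}{2t^{2}\sqrt{1-t}}$ and, after adding the analogous contribution from the $\tilde\gamma$-integral, the prefactors $\pm\frac{2-t}{2t^{2}\sqrt{1-t}}$ in \eqref{def of Ib2}. Since you explicitly defer this bookkeeping, and since the lemma is essentially this computation, the proposal as written does not yet establish \eqref{diff identity t first lemma asymptotics}; the remaining ingredients of your outline (exponential smallness off $\mathcal{D}_{b_1}\cup\mathcal{D}_{b_2}$, the symmetry argument) are correct and coincide with the paper's.
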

\begin{proof}
Using the change of variables $z = is^{\rho}\zeta + \tau$ in \eqref{recall the diff id in t}, we obtain
\begin{align}
& \partial_{t} \log \det \left( 1 - (1-t)\mathbb{K}^{(j)}\Big|_{[0,s]} \right) = \frac{-1}{2(1-t)} \int_{\gamma_{U}\cup\tilde{\gamma}_{U}} {\rm Tr}\big[U^{-1}(\zeta) U'(\zeta)(J_{U}(\zeta)-I)\big]\frac{d\zeta}{2\pi i} = I_{\gamma} + I_{\tilde{\gamma}}, \nonumber \\
& I_{\gamma} = \frac{1}{2\sqrt{1-t}} \int_{\gamma_{U}} e^{-is^{\rho}h(\zeta)} \mathcal{G}(\zeta;s){\rm Tr}\big[U^{-1}(\zeta) U'(\zeta)\sigma_{+}\big]\frac{d\zeta}{2\pi i}, \label{I gamma U} \\
& I_{\tilde{\gamma}} = \frac{-1}{2\sqrt{1-t}} \int_{\tilde{\gamma}_{U}} e^{is^{\rho}h(\zeta)} \mathcal{G}(\zeta;s)^{-1}{\rm Tr}\big[U^{-1}(\zeta) U'(\zeta)\sigma_{-}\big]\frac{d\zeta}{2\pi i}, \label{I gamma tilde U}
\end{align}
where $U$ is defined in \eqref{def of U}, $\gamma_{U}$ and $\tilde{\gamma}_{U}$ are defined in \eqref{def of gamma U and gamma tilde U}, 
\begin{align*}
\sigma_{+} = \begin{pmatrix}
0 & 1 \\
0 & 0
\end{pmatrix} \quad \mbox{ and } \quad \sigma_{-} = \begin{pmatrix}
0 & 0 \\
1 & 0
\end{pmatrix},
\end{align*}
and where we have used \eqref{JU in terms of G}. Note that we do not specify whether we take the $+$ or $-$ boundary values of $U$ in \eqref{I gamma U} and \eqref{I gamma tilde U}, which is without ambiguity, see Remark \ref{remark: no boundary values indicated}. Now, we deform the contours of integration by using the analytic continuation of $U$ (denoted $\widehat{T}$ and defined in Section \ref{subsection: U to T}). We obtain
\begin{align}
& I_{\gamma} = \frac{1}{2\sqrt{1-t}} \int_{\Sigma_{1} \cup \Sigma_{2}} e^{-is^{\rho}h(\zeta)} \mathcal{G}(\zeta;s){\rm Tr}\big[\widehat{T}^{-1}(\zeta) \widehat{T}'(\zeta)\sigma_{+}\big]\frac{d\zeta}{2\pi i} \nonumber \\
& \hspace{0.4cm} + \frac{1}{2\sqrt{1-t}} \int_{\Sigma_{5}} e^{-is^{\rho}h(\zeta)} \mathcal{G}(\zeta;s){\rm Tr}\big[\widehat{T}_{+}^{-1}(\zeta) \widehat{T}_{+}'(\zeta)\sigma_{+}\big]\frac{d\zeta}{2\pi i}, \label{Igamma in terms of T hat} \\
& I_{\tilde{\gamma}} = \frac{-1}{2\sqrt{1-t}} \int_{\Sigma_{3}\cup \Sigma_{4}} e^{is^{\rho}h(\zeta)} \mathcal{G}(\zeta;s)^{-1}{\rm Tr}\big[\widehat{T}^{-1}(\zeta) \widehat{T}'(\zeta)\sigma_{-}\big]\frac{d\zeta}{2\pi i} \nonumber \\
& \hspace{0.4cm} + \frac{-1}{2\sqrt{1-t}} \int_{\Sigma_{5}} e^{is^{\rho}h(\zeta)} \mathcal{G}(\zeta;s)^{-1}{\rm Tr}\big[\widehat{T}_{-}^{-1}(\zeta) \widehat{T}_{-}'(\zeta)\sigma_{-}\big]\frac{d\zeta}{2\pi i}, \label{Igamma tilde in terms of T hat}
\end{align}
where the contours $\Sigma_{1},\ldots,\Sigma_{5}$ are shown in Figure \ref{fig: Sigma 1,2,3,4}. Once more, we have not specified the boundary values of $\widehat{T}$ for the integrals over $\Sigma_{j}$, $j=1,...,4$ of \eqref{Igamma in terms of T hat} and \eqref{Igamma tilde in terms of T hat}; again, this is without ambiguity. Note however that this is not the case for the integrals over $\Sigma_{5}$. For $\zeta \in \Sigma_{5}$, by \eqref{def of T} we have
\begin{align*}
& {\rm Tr}\big[\widehat{T}_{\pm}^{-1} \widehat{T}_{\pm}'\sigma_{\pm}\big] = {\rm Tr}\big[ H_{\pm}(H_{\pm}^{-1})' \sigma_{\pm} \big] + {\rm Tr}\big[ T_{\pm}^{-1}T_{\pm}'s^{-\frac{c_{4}}{2}\sigma_{3}}e^{\frac{s^{\rho}\ell}{2}\sigma_{3}}H_{\pm}^{-1} \sigma_{\pm}H_{\pm}e^{-\frac{s^{\rho}\ell}{2}\sigma_{3}}s^{\frac{c_{4}}{2}\sigma_{3}} \big], \\
& e^{-is^{\rho}h(\zeta)} \mathcal{G}(\zeta;s) {\rm Tr}\big[ H_{+}(H_{+}^{-1})' \sigma_{+} \big] = - \frac{\sqrt{1-t}}{t} \Big( \log \big( e^{-is^{\rho}h(\zeta)} \mathcal{G}(\zeta;s) \big)  \Big)', \\
& e^{is^{\rho}h(\zeta)} \mathcal{G}(\zeta;s)^{-1} {\rm Tr}\big[ H_{-}(H_{-}^{-1})' \sigma_{-} \big] = \frac{\sqrt{1-t}}{t} \Big( \log \big( e^{-is^{\rho}h(\zeta)} \mathcal{G}(\zeta;s) \big)  \Big)' , \\
& e^{-is^{\rho}h(\zeta)} \mathcal{G}(\zeta;s) s^{-\frac{c_{4}}{2}\sigma_{3}}e^{\frac{s^{\rho}\ell}{2}\sigma_{3}}H_{+}^{-1}\sigma_{+}H_{+} e^{-\frac{s^{\rho}\ell}{2}\sigma_{3}}s^{\frac{c_{4}}{2}\sigma_{3}}  = \begin{pmatrix}
- \frac{\sqrt{1-t}}{t} & e^{-s^{\rho}(ih(\zeta)-\ell)} \widetilde{\mathcal{G}}(\zeta;s) \\
- \frac{1-t}{t^{2}}e^{s^{\rho}(ih(\zeta)-\ell)} \widetilde{\mathcal{G}}(\zeta;s)^{-1} & \frac{\sqrt{1-t}}{t}
\end{pmatrix}, \\
& e^{is^{\rho}h(\zeta)} \mathcal{G}(\zeta;s)^{-1} s^{-\frac{c_{4}}{2}\sigma_{3}}e^{\frac{s^{\rho}\ell}{2}\sigma_{3}}H_{-}^{-1}\sigma_{-}H_{-} e^{-\frac{s^{\rho}\ell}{2}\sigma_{3}}s^{\frac{c_{4}}{2}\sigma_{3}} = \begin{pmatrix}
\frac{\sqrt{1-t}}{t} &  - \frac{1-t}{t^{2}} e^{-s^{\rho}(ih(\zeta)-\ell)} \widetilde{\mathcal{G}}(\zeta;s) \\
e^{s^{\rho}(ih(\zeta)-\ell)} \widetilde{\mathcal{G}}(\zeta;s)^{-1} & -\frac{\sqrt{1-t}}{t}
\end{pmatrix}.
\end{align*}
Therefore, using also the jumps for $T$ given by \eqref{jumps J_T}, we obtain
\begin{align*}
& \frac{1}{2\sqrt{1-t}} \int_{\Sigma_{5}} e^{-is^{\rho}h(\zeta)} \mathcal{G}(\zeta;s){\rm Tr}\big[\widehat{T}_{+}^{-1}(\zeta) \widehat{T}_{+}'(\zeta)\sigma_{+}\big]\frac{d\zeta}{2\pi i} = - \frac{1}{2t}\int_{\Sigma_{5}} \Big( \log \big( e^{-is^{\rho}h(\zeta)} \mathcal{G}(\zeta;s) \big)  \Big)' \frac{d\zeta}{2\pi i} \\
& - \frac{1}{2t}\int_{\Sigma_{5}} {\rm Tr}\big[ T^{-1}(\zeta)T'(\zeta)\sigma_{3} \big] \frac{d\zeta}{2\pi i}- \frac{\sqrt{1-t}}{2t^{2}}\int_{\Sigma_{6}}e^{s^{\rho}(ih(\zeta)-\ell)}\widetilde{\mathcal{G}}(\zeta;s)^{-1}{\rm Tr}\big[ T^{-1}(\zeta)T'(\zeta)\sigma_{-} \big] \frac{d\zeta}{2\pi i} \\
& + \frac{1}{2t^{2}\sqrt{1-t}}\int_{\Sigma_{7}}e^{-s^{\rho}(ih(\zeta)-\ell)}\widetilde{\mathcal{G}}(\zeta;s) {\rm Tr}\big[ T^{-1}(\zeta)T'(\zeta)\sigma_{+} \big] \frac{d\zeta}{2\pi i},
\end{align*}
and
\begin{align*}
& \frac{-1}{2\sqrt{1-t}} \int_{\Sigma_{5}} e^{is^{\rho}h(\zeta)} \mathcal{G}(\zeta;s)^{-1}{\rm Tr}\big[\widehat{T}_{-}^{-1}(\zeta) \widehat{T}_{-}'(\zeta)\sigma_{-}\big]\frac{d\zeta}{2\pi i} = - \frac{1}{2t}\int_{\Sigma_{5}} \Big( \log \big( e^{-is^{\rho}h(\zeta)} \mathcal{G}(\zeta;s) \big)  \Big)' \frac{d\zeta}{2\pi i} \\
& - \frac{1}{2t}\int_{\Sigma_{5}} {\rm Tr}\big[ T^{-1}(\zeta)T'(\zeta)\sigma_{3} \big] \frac{d\zeta}{2\pi i}- \frac{1}{2t^{2}\sqrt{1-t}}\int_{\Sigma_{6}}e^{s^{\rho}(ih(\zeta)-\ell)}\widetilde{\mathcal{G}}(\zeta;s)^{-1}{\rm Tr}\big[ T^{-1}(\zeta)T'(\zeta)\sigma_{-} \big] \frac{d\zeta}{2\pi i} \\
& + \frac{\sqrt{1-t}}{2t^{2}}\int_{\Sigma_{7}}e^{-s^{\rho}(ih(\zeta)-\ell)}\widetilde{\mathcal{G}}(\zeta;s) {\rm Tr}\big[ T^{-1}(\zeta)T'(\zeta)\sigma_{+} \big] \frac{d\zeta}{2\pi i}.
\end{align*}
Thus, using again \eqref{def of T} to rewrite the integrals over $\Sigma_{j}$, $j=1,2,3,4$, in terms of $T$, and collecting the above computations, we rewrite \eqref{Igamma in terms of T hat}--\eqref{Igamma tilde in terms of T hat} as follows,
\begin{align}
& I_{\gamma} = \frac{1}{2\sqrt{1-t}} \int_{\Sigma_{1} \cup \Sigma_{2}} e^{-s^{\rho}(ih(\zeta)-\ell)} \widetilde{\mathcal{G}}(\zeta;s){\rm Tr}\big[T^{-1}(\zeta) T'(\zeta)\sigma_{+}\big]\frac{d\zeta}{2\pi i}, \nonumber \\
& - \frac{1}{2t}\int_{\Sigma_{5}} \Big( \log \big( e^{-is^{\rho}h(\zeta)} \mathcal{G}(\zeta;s) \big)  \Big)' \frac{d\zeta}{2\pi i} \nonumber \\
& - \frac{1}{2t}\int_{\Sigma_{5}} {\rm Tr}\big[ T^{-1}(\zeta)T'(\zeta)\sigma_{3} \big] \frac{d\zeta}{2\pi i}- \frac{\sqrt{1-t}}{2t^{2}}\int_{\Sigma_{6}}e^{s^{\rho}(ih(\zeta)-\ell)}\widetilde{\mathcal{G}}(\zeta;s)^{-1}{\rm Tr}\big[ T^{-1}(\zeta)T'(\zeta)\sigma_{-} \big] \frac{d\zeta}{2\pi i} \nonumber \\
& + \frac{1}{2t^{2}\sqrt{1-t}}\int_{\Sigma_{7}}e^{-s^{\rho}(ih(\zeta)-\ell)}\widetilde{\mathcal{G}}(\zeta;s) {\rm Tr}\big[ T^{-1}(\zeta)T'(\zeta)\sigma_{+} \big] \frac{d\zeta}{2\pi i}, \label{Igamma in terms of T} 
\end{align}
\begin{align}
& I_{\tilde{\gamma}} = \frac{-1}{2\sqrt{1-t}} \int_{\Sigma_{3} \cup \Sigma_{4}} e^{s^{\rho}(ih(\zeta)-\ell)} \widetilde{\mathcal{G}}(\zeta;s)^{-1}{\rm Tr}\big[T^{-1}(\zeta) T'(\zeta)\sigma_{-}\big]\frac{d\zeta}{2\pi i} \nonumber \\
& - \frac{1}{2t}\int_{\Sigma_{5}} \Big( \log \big( e^{-is^{\rho}h(\zeta)} \mathcal{G}(\zeta;s) \big)  \Big)' \frac{d\zeta}{2\pi i} \nonumber \\
& - \frac{1}{2t}\int_{\Sigma_{5}} {\rm Tr}\big[ T^{-1}(\zeta)T'(\zeta)\sigma_{3} \big] \frac{d\zeta}{2\pi i}- \frac{1}{2t^{2}\sqrt{1-t}}\int_{\Sigma_{6}}e^{s^{\rho}(ih(\zeta)-\ell)}\widetilde{\mathcal{G}}(\zeta;s)^{-1}{\rm Tr}\big[ T^{-1}(\zeta)T'(\zeta)\sigma_{-} \big] \frac{d\zeta}{2\pi i} \nonumber \\
& + \frac{\sqrt{1-t}}{2t^{2}}\int_{\Sigma_{7}}e^{-s^{\rho}(ih(\zeta)-\ell)}\widetilde{\mathcal{G}}(\zeta;s) {\rm Tr}\big[ T^{-1}(\zeta)T'(\zeta)\sigma_{+} \big] \frac{d\zeta}{2\pi i}. \label{Igamma tilde in terms of T}
\end{align}
We note from \eqref{def of Pinf} that $P^{(\infty)}$ is independent of $s$. From \eqref{def of Pb2} and \eqref{symmetry local param}, we also note that $P^{(b_{j})}(\zeta)$, $j\in \{1,2\}$, depends on $s$ but is bounded as $s \to +\infty$ uniformly for $\zeta \in \mathcal{D}_{b_{j}}$, and that $P^{(b_{j})\prime}(\zeta) = \bigO(s^{\rho})$ as $s \to + \infty$ uniformly for $\zeta \in \mathcal{D}_{b_{j}}$. Using \eqref{def of R} and \eqref{estimate for R}, we infer that
\begin{align*}
T(\zeta) = \bigO(1), \qquad T'(\zeta) = \bigO(s^{\rho}), \qquad \mbox{as } s \to + \infty,
\end{align*}
uniformly for $\zeta \in \mathbb{C}\setminus \bigcup_{j=1}^{7} \Sigma_{j}$. Since $\re (ih(\zeta)-\ell) > 0$ for $\zeta \in \Sigma_{1} \cup \Sigma_{2}$ and $\re (ih(\zeta)-\ell) < 0$ for $\zeta \in \Sigma_{3} \cup \Sigma_{4}$ (see Figures \ref{fig: real part of ih-ell} and \ref{fig: Sigma 1,2,3,4}), we have
\begin{align*}
I_{\gamma} + I_{\tilde{\gamma}} = I_{1} + I_{2} + I_{b_{2}} + I_{b_{1}} + \bigO(e^{-cs^{\rho}}), \qquad \mbox{as } s \to + \infty,
\end{align*}
where $I_{1}$, $I_{2}$ and $I_{b_{2}}$ are defined in \eqref{def of I1}, \eqref{def of I2} and \eqref{def of Ib2}, respectively, and $I_{b_{1}}$ is defined similarly as $I_{b_{2}}$. Using the symmetry $\zeta \mapsto -\overline{\zeta}$ (see in particular \eqref{symmetry for T}), we obtain 
\begin{align*}
I_{b_{1}} = \overline{I_{b_{2}}},
\end{align*}
which finishes the proof.
\end{proof}

\begin{lemma}\label{lemma:I1}
\begin{align*}
I_{1} = \frac{\re b_{2}}{\pi \, \rho \, t} s^{\rho} + \frac{c_{1}c_{6}-c_{2}c_{5}}{t(c_{1}+c_{2})} + \bigO(s^{-\rho}), \qquad \mbox{as } s \to + \infty.
\end{align*}
\end{lemma}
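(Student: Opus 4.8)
The plan is to evaluate the two pieces of \eqref{def of I1} by recognizing that the integrand is an exact derivative. Starting from the definition \eqref{def of G} of $\mathcal{G}$, which reads $e^{-is^{\rho}h(\zeta)}\mathcal{G}(\zeta;s)=s^{-is^{\rho}\zeta}F^{(j)}(is^{\rho}\zeta+\tau)$, one has
\[
-is^{\rho}h'(\zeta)+\big(\log\mathcal{G}(\zeta;s)\big)'=\frac{d}{d\zeta}\Big(-is^{\rho}\zeta\log s+\log F^{(j)}(is^{\rho}\zeta+\tau)\Big).
\]
Before applying the fundamental theorem of calculus on $[0,b_{2}]$ I would check that $\zeta\mapsto\log F^{(j)}(is^{\rho}\zeta+\tau)$ is analytic along this segment: by \eqref{def of b2} we have $\re b_{2}>0$, hence $\im(ib_{2})=\re b_{2}>0$, so the image $is^{\rho}[0,b_{2}]+\tau$ leaves $\tau$ into the open upper half-plane and therefore avoids the branch cuts, poles, and zeros of $F^{(j)}$, all of which lie on $\mathbb{R}$; moreover $\tau=\tau^{(j)}$ lies strictly between the two real branch cuts of $\log F^{(j)}$. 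Consequently
\[
\int_{[0,b_{2}]}\Big(-is^{\rho}h'(\zeta)+\big(\log\mathcal{G}(\zeta;s)\big)'\Big)\frac{d\zeta}{2\pi i}=\frac{1}{2\pi i}\Big(-is^{\rho}b_{2}\log s+\log F^{(j)}(is^{\rho}b_{2}+\tau)-\log F^{(j)}(\tau)\Big).
\]

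Next I would feed in the large-$s$ asymptotics of $\log F^{(j)}(is^{\rho}b_{2}+\tau)$. Evaluating \eqref{def of G} at the fixed point $\zeta=b_{2}$ gives $\log F^{(j)}(is^{\rho}b_{2}+\tau)=is^{\rho}b_{2}\log s-is^{\rho}h(b_{2})+\log\mathcal{G}(b_{2};s)$, and since $b_{2}$ is a fixed nonzero point with $|\arg b_{2}\pm\tfrac{\pi}{2}|>\epsilon$, the expansion \eqref{asymp for G} applies and yields $\log\mathcal{G}(b_{2};s)=c_{4}\log s+c_{5}\log(ib_{2})+c_{6}\log(-ib_{2})+c_{7}+\bigO(s^{-\rho})$. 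The potentially large terms $\pm is^{\rho}b_{2}\log s$ then cancel, and after multiplying by $-\tfrac{2}{t}$ and taking the real part as prescribed in \eqref{def of I1}, and using $\re(w/i)=\im w$ together with $c_{4},c_{7}\in\mathbb{R}$ and $F^{(j)}(\tau)>0$ (so $\log F^{(j)}(\tau)\in\mathbb{R}$), I obtain
\[
I_{1}=\frac{s^{\rho}}{\pi t}\,\re h(b_{2})-\frac{1}{\pi t}\,\im\big(c_{5}\log(ib_{2})+c_{6}\log(-ib_{2})\big)+\bigO(s^{-\rho}).
\]

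It then remains to identify the two coefficients. For the leading term, $\re h(b_{2})=\im(ih(b_{2}))=\widetilde{\ell}$, and comparing \eqref{def of ell tilde} with \eqref{def of b2} gives $\widetilde{\ell}=(c_{1}+c_{2})\re b_{2}=\re b_{2}/\rho$ since $\rho=1/(c_{1}+c_{2})$, so the leading term equals $\tfrac{\re b_{2}}{\pi\rho t}s^{\rho}$. For the constant, with the principal branch one has $\log(\pm ib_{2})=\log|b_{2}|+i\big(\tfrac{\pi}{2}\tfrac{c_{2}-c_{1}}{c_{1}+c_{2}}\pm\tfrac{\pi}{2}\big)$, and a short computation gives $\im\big(c_{5}\log(ib_{2})+c_{6}\log(-ib_{2})\big)=\tfrac{\pi}{c_{1}+c_{2}}\big(c_{2}c_{5}-c_{1}c_{6}\big)$, whence the constant term becomes $\tfrac{c_{1}c_{6}-c_{2}c_{5}}{t(c_{1}+c_{2})}$, as claimed. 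The only delicate points in carrying this out are verifying that the path $is^{\rho}[0,b_{2}]+\tau$ stays clear of the singularities of $F^{(j)}$ (so that the antiderivative identity is legitimate) and keeping careful track of the principal branches of $\log(\pm ib_{2})$, since the entire constant term of $I_{1}$ originates from those; the cancellation of $\pm is^{\rho}b_{2}\log s$ is automatic once these are handled correctly.
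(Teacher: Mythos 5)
Your proof is correct and takes essentially the same route as the paper's: both arguments reduce $I_{1}$ to endpoint evaluations by the fundamental theorem of calculus and then invoke the expansion \eqref{asymp for G} at $\zeta=b_{2}$ together with $\mathcal{G}(0;s)=F^{(j)}(\tau)>0$, arriving at the same identifications $\im(ih(b_{2}))=\widetilde{\ell}=\re b_{2}/\rho$ and $c_{5}\arg(ib_{2})+c_{6}\arg(-ib_{2})=\pi(c_{2}c_{5}-c_{1}c_{6})/(c_{1}+c_{2})$. The only cosmetic difference is that you first recombine the integrand into the exact derivative of $-is^{\rho}\zeta\log s+\log F^{(j)}(is^{\rho}\zeta+\tau)$ (explicitly checking analyticity along the path, which is a nice extra precaution) before re-expanding via $\mathcal{G}$, whereas the paper keeps the two pieces $-is^{\rho}h'$ and $(\log\mathcal{G})'$ separate from the start.
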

\begin{proof}
By the definition \eqref{def of I1} of $I_1$, we have
\begin{align*}
I_{1} = \frac{s^{\rho}}{\pi t} \im \int_{[0,b_{2}]} ih'(\zeta)d\zeta - \frac{1}{\pi t}\im \int_{[0,b_{2}]}\big( \log \mathcal{G}(\zeta;s) \big)' d\zeta.
\end{align*}
For the first integral, we use \eqref{def of h}, \eqref{def of ell tilde}, and \eqref{def of b2}, to obtain
\begin{align*}
\im \int_{[0,b_{2}]} ih'(\zeta)d\zeta = \im (ih(b_{2})) = \widetilde{\ell}= (c_{1}+c_{2})\exp \left( - \frac{c_{1}+c_{2}+c_{3}}{c_{1}+c_{2}} \right) \cos \left( \frac{\pi}{2} \frac{c_{2}-c_{1}}{c_{1}+c_{2}} \right) = \frac{\re b_{2}}{\rho}.
\end{align*}
For the second integral, we find
\begin{align*}
\int_{[0,b_{2}]}\big( \log \mathcal{G}(\zeta;s) \big)' d\zeta = \log \mathcal{G}(b_{2};s)-\log \mathcal{G}(0;s).
\end{align*}
Using \eqref{asymp for G} and \eqref{def of G}, as $s \to + \infty$ we have
\begin{align*}
\int_{[0,b_{2}]}\big( \log \mathcal{G}(\zeta;s) \big)' d\zeta = c_{4} \log s + c_{5} \log(ib_{2}) + c_{6} \log(-ib_{2}) + c_{7} -\log F(\tau) + \bigO \left( s^{-\rho} \right),
\end{align*}
and thus, by \eqref{def of b2}, we get
\begin{align*}
\im \int_{[0,b_{2}]}\big( \log \mathcal{G}(\zeta;s) \big)' d\zeta & =  c_{5} \arg(ib_{2}) + c_{6} \arg(-ib_{2}) + \bigO \left( s^{-\rho} \right), \\
& = c_{5} \frac{\pi}{2} \left( \frac{c_{2}-c_{1}}{c_{2}+c_{1}} + 1 \right) + c_{6} \frac{\pi}{2} \left( \frac{c_{2}-c_{1}}{c_{2}+c_{1}} - 1 \right) + \bigO \left( s^{-\rho} \right).
\end{align*}
\end{proof}
We split $I_{b_{2}}$ into four parts
\begin{align}\label{splitting of Ib2 first time}
I_{b_{2}} = I_{b_{2},1} + I_{b_{2},2} + I_{b_{2},3} + I_{b_{2},4},
\end{align}
where $I_{b_{2},j}$, $j=1,2,3,4$, are given by
\begin{align}
I_{b_{2},1}  & = \frac{1}{2\sqrt{1-t}} \int_{\Sigma_{2} \cap \mathcal{D}_{b_{2}}} e^{-s^{\rho}(ih(\zeta)-\ell)} \widetilde{\mathcal{G}}(\zeta;s){\rm Tr}\big[T^{-1}(\zeta) T'(\zeta)\sigma_{+}\big]\frac{d\zeta}{2\pi i}, \label{def of Ib2 1} \\
I_{b_{2},2} & = \frac{2-t}{2t^{2}\sqrt{1-t}} \int_{\Sigma_{7} \cap \mathcal{D}_{b_{2}}} e^{-s^{\rho}(ih(\zeta)-\ell)} \widetilde{\mathcal{G}}(\zeta;s){\rm Tr}\big[T^{-1}(\zeta) T'(\zeta)\sigma_{+}\big]\frac{d\zeta}{2\pi i}, \nonumber \\
I_{b_{2},3}  & = \frac{-1}{2\sqrt{1-t}} \int_{ \Sigma_{4}\cap \mathcal{D}_{b_{2}}} e^{s^{\rho}(ih(\zeta)-\ell)} \widetilde{\mathcal{G}}(\zeta;s)^{-1}{\rm Tr}\big[T^{-1}(\zeta) T'(\zeta)\sigma_{-}\big]\frac{d\zeta}{2\pi i}, \nonumber \\
I_{b_{2},4} & = \frac{-(2-t)}{2t^{2}\sqrt{1-t}} \int_{ \Sigma_{6}\cap \mathcal{D}_{b_{2}}} e^{s^{\rho}(ih(\zeta)-\ell)} \widetilde{\mathcal{G}}(\zeta;s)^{-1}{\rm Tr}\big[T^{-1}(\zeta) T'(\zeta)\sigma_{-}\big]\frac{d\zeta}{2\pi i}. \nonumber
\end{align}
\begin{lemma}\label{lemma:Ib2}
As $s \to + \infty$, we have
\begin{align*}
& I_{b_{2},1} = \frac{1}{2\sqrt{1-t}} \int_{e^{\frac{\pi i}{4}}[0,+\infty)} {\rm Tr} \big[ \Phi_{\mathrm{PC}}^{-1}(z)\Phi_{\mathrm{PC}}'(z)\sigma_{+} \big]  \frac{dz}{2\pi i} +\bigO ( s^{-\frac{\rho}{2}} ),\\
& I_{b_{2},2} = \frac{2-t}{2t^{2}\sqrt{1-t}} \int_{e^{-\frac{3\pi i}{4}}(+\infty,0]} {\rm Tr} \big[ \Phi_{\mathrm{PC}}^{-1}(z)\Phi_{\mathrm{PC}}'(z)\sigma_{+} \big] \frac{dz}{2\pi i} +\bigO ( s^{-\frac{\rho}{2}} ), \\
& I_{b_{2},3} = \frac{-1}{2\sqrt{1-t}} \int_{e^{-\frac{\pi i}{4}}[0,+\infty)} {\rm Tr} \big[ \Phi_{\mathrm{PC}}^{-1}(z)\Phi_{\mathrm{PC}}'(z)\sigma_{-} \big] \frac{dz}{2\pi i} +\bigO ( s^{-\frac{\rho}{2}} ), \\
& I_{b_{2},4} = \frac{-(2-t)}{2t^{2}\sqrt{1-t}} \int_{e^{\frac{3\pi i}{4}}(+\infty,0]} {\rm Tr} \big[ \Phi_{\mathrm{PC}}^{-1}(z)\Phi_{\mathrm{PC}}'(z)\sigma_{-} \big] \frac{dz}{2\pi i} +\bigO ( s^{-\frac{\rho}{2}} ). \\
\end{align*}
\end{lemma}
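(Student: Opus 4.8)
The four identities will be proved in the same way, so I will describe the argument for $I_{b_{2},1}$ only. The plan is to substitute the local parametrix on $\mathcal{D}_{b_{2}}$ into the integrand of \eqref{def of Ib2 1}, to check that all exponentially large and $s$-dependent diagonal factors cancel in the relevant matrix entry, and then, via the conformal change of variables $z=s^{\frac{\rho}{2}}f(\zeta)$, to pass to an integral of a quantity built only out of $\Phi_{\mathrm{PC}}$ over (a growing piece of) the ray $e^{\frac{\pi i}{4}}[0,+\infty)\subset\Sigma_{\mathrm{PC}}$. The exponential decay of $\Phi_{\mathrm{PC}}^{-1}\Phi_{\mathrm{PC}}'$ along this ray will then allow me to complete the contour to the whole ray at the cost of $\bigO(e^{-cs^{\rho}})$, and to control the remaining terms to order $\bigO(s^{-\rho/2})$.

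Concretely, for $\zeta\in\mathcal{D}_{b_{2}}$ I would write $T(\zeta)=R(\zeta)P^{(b_{2})}(\zeta)$ and use \eqref{def of Pb2} to factor $T(\zeta)=W(\zeta)\,\Phi_{\mathrm{PC}}\big(s^{\frac{\rho}{2}}f(\zeta)\big)\,e^{\frac{s^{\rho}}{2}(ih(\zeta)-\ell)\sigma_{3}}\widetilde{\mathcal{G}}(\zeta;s)^{-\frac{\sigma_{3}}{2}}$ with $W:=RE$. Because the two rightmost factors are diagonal, the terms in $T^{-1}T'$ coming from differentiating them are diagonal and drop out of the $(2,1)$ entry, and a short computation gives, with $z=s^{\frac{\rho}{2}}f(\zeta)$,
\begin{align*}
e^{-s^{\rho}(ih(\zeta)-\ell)}\widetilde{\mathcal{G}}(\zeta;s)\,{\rm Tr}\big[T^{-1}(\zeta)T'(\zeta)\sigma_{+}\big]=\big[\Phi_{\mathrm{PC}}^{-1}(z)\Phi_{\mathrm{PC}}'(z)\big]_{21}\,s^{\frac{\rho}{2}}f'(\zeta)+\big[\Phi_{\mathrm{PC}}^{-1}(z)\,W^{-1}W'(\zeta)\,\Phi_{\mathrm{PC}}(z)\big]_{21},
\end{align*}
the key point being that the weight $e^{-s^{\rho}(ih-\ell)}\widetilde{\mathcal{G}}$ in \eqref{def of Ib2 1} exactly cancels the factor $e^{s^{\rho}(ih-\ell)}\widetilde{\mathcal{G}}^{-1}$ produced by conjugating through the diagonal factors, so that nothing of size $e^{cs^{\rho}}$ survives. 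Using $W^{-1}W'=E^{-1}E'+E^{-1}R^{-1}R'E$ together with $R^{-1}R'=\bigO(s^{-\rho/2})$ uniformly on $\mathcal{D}_{b_{2}}$ (from \eqref{estimate for R}) and $E^{-1}E'=\bigO(1)$ (from \eqref{Eb2'p0p at b2}--\eqref{eb2p0p at b2}), the second bracket is $\bigO(1)$, with the $R$-contribution being $\bigO(s^{-\rho/2})$.

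Next I would change variables to $z=s^{\frac{\rho}{2}}f(\zeta)$, which is conformal on $\mathcal{D}_{b_{2}}$ with $f(b_{2})=0$ and, by \eqref{Sigma j is well-mapped by fb2}, maps $\Sigma_{2}\cap\mathcal{D}_{b_{2}}$ onto an initial segment of $e^{\frac{\pi i}{4}}[0,+\infty)$ of length of order $s^{\rho/2}$. The first term above becomes ${\rm Tr}[\Phi_{\mathrm{PC}}^{-1}(z)\Phi_{\mathrm{PC}}'(z)\sigma_{+}]\,dz$; from the large-$z$ asymptotics of $\Phi_{\mathrm{PC}}$ recalled in Appendix \ref{appendix:PC} one finds that on this ray (where $iz^{2}$ is real and negative) $[\Phi_{\mathrm{PC}}^{-1}\Phi_{\mathrm{PC}}']_{21}=\bigO(|z|^{-2}e^{-|z|^{2}/2})$, so completing the contour to the entire ray costs only $\bigO(e^{-cs^{\rho}})$ and yields the stated main term. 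For the second bracket I would use the same asymptotics to see that $[\Phi_{\mathrm{PC}}^{-1}M\Phi_{\mathrm{PC}}]_{21}=\bigO(\|M\|)$ for bounded $|z|$ and $\bigO(\|M\|e^{-|z|^{2}/2})$ for large $|z|$; splitting $\Sigma_{2}\cap\mathcal{D}_{b_{2}}$ at $|\zeta-b_{2}|\sim s^{-\rho/2}$ and integrating against $\frac{d\zeta}{2\pi i}$ then gives a bound $\bigO(\|M\|s^{-\rho/2})$, i.e.\ $\bigO(s^{-\rho/2})$ from the $E^{-1}E'$ piece and $\bigO(s^{-\rho})$ from the $R$-piece. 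This proves the formula for $I_{b_{2},1}$; the other three follow in the same way, with $\Sigma_{7},\Sigma_{4},\Sigma_{6}$ mapped respectively onto $e^{-\frac{3\pi i}{4}}(+\infty,0]$, $e^{-\frac{\pi i}{4}}[0,+\infty)$, $e^{\frac{3\pi i}{4}}(+\infty,0]$, with $\sigma_{+}$ replaced by $\sigma_{-}$ (and the weight $e^{-s^{\rho}(ih-\ell)}\widetilde{\mathcal{G}}$ by $e^{s^{\rho}(ih-\ell)}\widetilde{\mathcal{G}}^{-1}$) for $I_{b_{2},3},I_{b_{2},4}$, and with the prefactors $\frac{2-t}{2t^{2}\sqrt{1-t}}$, $\frac{-1}{2\sqrt{1-t}}$, $\frac{-(2-t)}{2t^{2}\sqrt{1-t}}$ carried through unchanged; on each of these rays $iz^{2}$ (resp.\ $-iz^{2}$) is again real and negative, giving the same Gaussian decay.

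I expect the main obstacle to be the algebra of the first step: keeping track of the diagonal conjugating factors $e^{\frac{s^{\rho}}{2}(ih-\ell)\sigma_{3}}$, $\widetilde{\mathcal{G}}^{-\sigma_{3}/2}$ and $E$ so as to verify that they cancel the exponential weight exactly in the $(2,1)$ (resp.\ $(1,2)$) entry, so that no term of size $e^{cs^{\rho}}$ remains, together with establishing that the decay estimate for $[\Phi_{\mathrm{PC}}^{-1}M\Phi_{\mathrm{PC}}]_{21}$ is uniform all the way down to $z=0$ — which uses that $\Phi_{\mathrm{PC}}$ and $\Phi_{\mathrm{PC}}^{-1}$ are bounded on compact subsets of the rays, being analytic up to the boundary away from the origin and $\bigO(1)$ near it.
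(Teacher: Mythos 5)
Your proposal is correct and follows essentially the same route as the paper: factor $T=RP^{(b_{2})}$ inside $\mathcal{D}_{b_{2}}$ (your grouping $W=RE$ is only a cosmetic repackaging), observe that the diagonal factors $e^{\frac{s^{\rho}}{2}(ih-\ell)\sigma_{3}}\widetilde{\mathcal{G}}^{-\sigma_{3}/2}$ conjugate to exactly cancel the exponential weight in the $(2,1)$ (resp.\ $(1,2)$) entry, change variables $z=s^{\frac{\rho}{2}}f(\zeta)$, and use the Gaussian decay of $\Phi_{\mathrm{PC}}\sigma_{\pm}\Phi_{\mathrm{PC}}^{-1}$ on the rays together with $E^{-1}E'=\bigO(1)$ and $R^{-1}R'=\bigO(s^{-\rho/2})$ to complete the contour at exponentially small cost and bound the remainders by $\bigO(s^{-\rho/2})$. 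Your explicit split of the contour at $|\zeta-b_{2}|\sim s^{-\rho/2}$ is a slightly more detailed way of obtaining the same error estimates the paper states after the change of variables, and the sign/decay checks on the four rays are all consistent with the paper.
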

\begin{proof}
From \eqref{def of fb2} and \eqref{cb2 and cb2p2p}, we have
\begin{align*}
s^{\rho}(ih(\zeta)-\ell) = s^{\rho} \Big( i \widetilde{\ell} - i \frac{f(\zeta)^{2}}{2} \Big) = i \widetilde{\ell} s^{\rho} - \frac{is^{\rho}}{2}f'(b_2)^{2}(\zeta-b_{2})^{2}(1+\bigO(\zeta - b_{2})), \qquad \mbox{as } \zeta \to b_{2},
\end{align*}
so the main contribution as $s \to + \infty$ in the integrals for $I_{b_{2},j}$, $j=1,...,4$ comes from the integrand as $s^{\frac{\rho}{2}}(\zeta-b_{2})  = \bigO(1)$.
We first obtain an expansion for ${\rm Tr}\big[ T^{-1}T'\sigma_{\pm} \big]$ as $s^{\frac{\rho}{2}}(\zeta-b_{2}) \to 0$ and simultaneously $s \to + \infty$. For $\zeta$ inside the disk $\mathcal{D}_{b_{2}}$, by \eqref{def of R} we have
\begin{align}\label{lol1}
T(\zeta) = R(\zeta)P^{(b_{2})}(\zeta).
\end{align}
Thus for $\zeta \in \mathcal{D}_{b_{2}}$, we have
\begin{align}\label{split the trace for Ib2}
{\rm Tr}\big[ T^{-1}T'\sigma_{\pm} \big] & = {\rm Tr}\big[ (P^{(b_{2})})^{-1}(P^{(b_{2})})'\sigma_{\pm} \big] +  {\rm Tr}\big[ (P^{(b_{2})})^{-1}R^{-1}R'P^{(b_{2})}\sigma_{\pm} \big].
\end{align}
We recall from \eqref{def of Pb2} that $P^{(b_{2})}$ is given by
\begin{align*}
P^{(b_{2})}(\zeta) = E(\zeta;s) \Phi_{\mathrm{PC}}(s^{\frac{\rho}{2}}f(\zeta);\sqrt{1-t})e^{ \frac{s^{\rho} }{2}(ih(\zeta)-\ell) \sigma_{3}} \widetilde{\mathcal{G}}(\zeta;s)^{-\frac{\sigma_{3}}{2}}, \qquad \zeta \in \mathcal{D}_{b_{2}} \setminus \bigcup_{j=1}^{7} \Sigma_{j},
\end{align*}
and thus
\begin{align}
& {\rm Tr}\big[ (P^{(b_{2})})^{-1}(P^{(b_{2})})'\sigma_{\pm} \big] = e^{\pm s^{\rho}(ih(\zeta)-\ell)}\widetilde{\mathcal{G}}(\zeta;s)^{\mp 1} \Big(s^{\frac{\rho}{2}}f' {\rm Tr} \big[ \Phi_{\mathrm{PC}}^{-1}\Phi_{\mathrm{PC}}'\sigma_{\pm} \big] + {\rm Tr}\big[ \Phi_{\mathrm{PC}}^{-1}E^{-1}E' \Phi_{\mathrm{PC}}\sigma_{\pm} \big]\Big),  \nonumber \\
& {\rm Tr}\big[ (P^{(b_{2})})^{-1}R^{-1}R'P^{(b_{2})}\sigma_{\pm} \big] = e^{\pm s^{\rho}(ih(\zeta)-\ell)}\widetilde{\mathcal{G}}(\zeta;s)^{\mp 1} {\rm Tr}\big[ \Phi_{\mathrm{PC}}^{-1}E^{-1}R^{-1}R'E\Phi_{\mathrm{PC}}\sigma_{\pm} \big], \label{explicit trace for Ib2 j}
\end{align}
where $\Phi_{\mathrm{PC}}$ and $\Phi_{\mathrm{PC}}'$ are evaluated at $s^{\frac{\rho}{2}}f(\zeta)$ and the other functions are evaluated at $\zeta$. We also recall from \eqref{def of Eb2} that 
\begin{align*}
E(\zeta;s) = P^{(\infty)}(\zeta)\widetilde{\mathcal{G}}(\zeta;s)^{\frac{\sigma_{3}}{2}}e^{-\frac{s^{\rho}}{2}i \widetilde{\ell}\sigma_{3}}\big( s^{\frac{\rho}{2}}f(\zeta) \big)^{i\nu \sigma_{3}}
\end{align*}
is analytic for $\zeta \in \mathcal{D}_{b_{2}}$, and thus
\begin{align}\label{estimate Eb2}
& E^{\pm 1}(\zeta;s) = \bigO(1), \qquad  E'(\zeta;s) = \bigO(1),
\end{align}
as $s \to + \infty$ uniformly for $\zeta  \in \mathcal{D}_{b_{2}}$. By \eqref{def of Ib2 1}, \eqref{split the trace for Ib2}, and \eqref{explicit trace for Ib2 j}, we have
\begin{align*}
I_{b_{2},1}  &  = \frac{1}{2\sqrt{1-t}} \int_{\Sigma_{2} \cap \mathcal{D}_{b_{2}}} \Big(s^{\frac{\rho}{2}}f' {\rm Tr} \big[ \Phi_{\mathrm{PC}}^{-1}\Phi_{\mathrm{PC}}'\sigma_{+} \big] + {\rm Tr}\big[ \Phi_{\mathrm{PC}}^{-1}E^{-1}E' \Phi_{\mathrm{PC}}\sigma_{+} \big]\Big)\frac{d\zeta}{2\pi i} \\
& + \frac{1}{2\sqrt{1-t}} \int_{\Sigma_{2} \cap \mathcal{D}_{b_{2}}} {\rm Tr}\big[ \Phi_{\mathrm{PC}}^{-1}E^{-1}R^{-1}R'E\Phi_{\mathrm{PC}}\sigma_{+} \big] \frac{d\zeta}{2\pi i}.
\end{align*}
Let us now perform the change of variables
\begin{align}\label{change of variables Ib2 j}
z = s^{\frac{\rho}{2}} f(\zeta),
\end{align}
where we recall that $f$ is injective on $\mathcal{D}_{b_{2}}$. Then we have
\begin{align*}
e^{-s^{\rho}(ih(\zeta)-\ell)} = e^{-is^{\rho}\widetilde{\ell}}e^{\frac{iz^{2}}{2}}, \qquad \zeta = f^{-1}(s^{-\frac{\rho}{2}}z), \qquad dz = s^{\frac{\rho}{2}}f'(\zeta) d\zeta.
\end{align*}
Since $\Sigma_{2} \cap \mathcal{D}_{b_{2}}$ is mapped by $f$ to a subset of $e^{\frac{\pi i}{4}}[0,+\infty)$, see \eqref{Sigma j is well-mapped by fb2}, this change of variables allows us to rewrite $I_{b_{2},1}$ as
\begin{align*}
I_{b_{2},1}& = \frac{1}{2\sqrt{1-t}} \int_{e^{\frac{\pi i}{4}}[0,s^{\frac{\rho}{2}}r]} {\rm Tr} \big[ \Phi_{\mathrm{PC}}^{-1}(z)\Phi_{\mathrm{PC}}'(z)\sigma_{+} \big]  \frac{dz}{2\pi i} \\
& + \frac{1}{2\sqrt{1-t}} \int_{e^{\frac{\pi i}{4}}[0,s^{\frac{\rho}{2}}r]} \frac{{\rm Tr}\big[ \Phi_{\mathrm{PC}}^{-1}(z)E^{-1}(\zeta;s)E'(\zeta;s) \Phi_{\mathrm{PC}}(z)\sigma_{+} \big]}{s^{\frac{\rho}{2}}f'(\zeta)} \frac{dz}{2\pi i}\\
& + \frac{1}{2\sqrt{1-t}} \int_{e^{\frac{\pi i}{4}}[0,s^{\frac{\rho}{2}}r]} \frac{{\rm Tr}\big[ \Phi_{\mathrm{PC}}^{-1}(z)E^{-1}(\zeta;s)R^{-1}(\zeta)R'(\zeta)E(\zeta;s)\Phi_{\mathrm{PC}}(z)\sigma_{+} \big]}{s^{\frac{\rho}{2}}f'(\zeta)}  \frac{dz}{2\pi i},
\end{align*}
where $r := |f(r_{\star})|$ with $r_{\star}$ defined by $\mathcal{D}_{b_{2}} \cap \Sigma_{2} = \{r_{\star}\}$. We note from \eqref{Phi PC at inf} that
\begin{align*}
\Phi_{\mathrm{PC}}(z) \sigma_{+} \Phi_{\mathrm{PC}}^{-1}(z) = \bigO(e^{\frac{iz^{2}}{2}}) \qquad \mbox{as } z \to \infty,
\end{align*}
and we conclude from \eqref{estimate for R} and \eqref{estimate Eb2} that
\begin{align*}
& \frac{1}{2\sqrt{1-t}} \int_{e^{\frac{\pi i}{4}}[0,s^{\frac{\rho}{2}}r]} \frac{{\rm Tr}\big[ \Phi_{\mathrm{PC}}^{-1}(z)E^{-1}(\zeta;s)E'(\zeta;s) \Phi_{\mathrm{PC}}(z)\sigma_{+} \big]}{s^{\frac{\rho}{2}}f'(\zeta)} \frac{dz}{2\pi i} = \bigO(s^{-\frac{\rho}{2}}), \\
& \frac{1}{2\sqrt{1-t}} \int_{e^{\frac{\pi i}{4}}[0,s^{\frac{\rho}{2}}r]} \frac{{\rm Tr}\big[ \Phi_{\mathrm{PC}}^{-1}(z)E^{-1}(\zeta;s)R^{-1}(\zeta)R'(\zeta)E(\zeta;s)\Phi_{\mathrm{PC}}(z)\sigma_{+} \big]}{s^{\frac{\rho}{2}}f'(\zeta)}  \frac{dz}{2\pi i} = \bigO(s^{-\rho}), \\
& \frac{1}{2\sqrt{1-t}} \int_{e^{\frac{\pi i}{4}}[0,s^{\frac{\rho}{2}}r]} \hspace{-0.2cm} {\rm Tr} \big[ \Phi_{\mathrm{PC}}^{-1}(z)\Phi_{\mathrm{PC}}'(z)\sigma_{+} \big]  \frac{dz}{2\pi i} = \frac{1}{2\sqrt{1-t}} \int_{e^{\frac{\pi i}{4}}[0,+\infty)} \hspace{-0.3cm}{\rm Tr} \big[ \Phi_{\mathrm{PC}}^{-1}(z)\Phi_{\mathrm{PC}}'(z)\sigma_{+} \big]  \frac{dz}{2\pi i} + \bigO(e^{-c s^{\rho}})
\end{align*}
as $s \to + \infty$, for a certain $c>0$. 
This finishes the proof for $I_{b_{2},1}$. The proofs of the expressions for the other integrals are similar.
\end{proof}
\begin{lemma}\label{lemma: asymp for Ib2 final}
We have
\begin{align*}
I_{b_{2}} = \mathcal{I}_{b_{2}}+ \bigO(s^{-\frac{\rho}{2}}), \qquad \mbox{as } s \to + \infty,
\end{align*}
where $\mathcal{I}_{b_{2}}$ depends on $t$ but is independent of the other parameters. More precisely, for $j=1$ (the Meijer-$G$ process), $\mathcal{I}_{b_{2}}$ is independent of $r$, $q$, $\nu_{1},...,\nu_{r}$, $\mu_{1},...,\mu_{q}$, and for $j=2$   (the Wright's generalized Bessel process), $\mathcal{I}_{b_{2}}$ is independent of $\alpha$ and $\theta$.
\end{lemma}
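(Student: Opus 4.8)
The plan is to read the result off directly from Lemma \ref{lemma:Ib2}. By \eqref{splitting of Ib2 first time} we have $I_{b_{2}} = I_{b_{2},1} + I_{b_{2},2} + I_{b_{2},3} + I_{b_{2},4}$, and Lemma \ref{lemma:Ib2} provides, for each $j$, an expansion $I_{b_{2},j} = \mathcal{I}_{b_{2},j} + \bigO(s^{-\rho/2})$ in which $\mathcal{I}_{b_{2},j}$ is an $s$-independent prefactor (one of $\tfrac{\pm 1}{2\sqrt{1-t}}$, $\tfrac{\pm(2-t)}{2t^{2}\sqrt{1-t}}$) times an integral, over a fixed ray of $\mathbb{C}$, of $\mathrm{Tr}\big[\Phi_{\mathrm{PC}}^{-1}(z)\Phi_{\mathrm{PC}}'(z)\sigma_{\pm}\big]$. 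Setting $\mathcal{I}_{b_{2}} := \sum_{j=1}^{4}\mathcal{I}_{b_{2},j}$ then gives $I_{b_{2}} = \mathcal{I}_{b_{2}} + \bigO(s^{-\rho/2})$, so everything reduces to checking that $\mathcal{I}_{b_{2}}$ is a finite quantity depending on $t$ only.

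For the parameter dependence I would argue as follows. By construction in Section \ref{subsection: local param at b2}, $\Phi_{\mathrm{PC}}$ is the solution of the Parabolic Cylinder model RH problem of Appendix \ref{appendix:PC} with the single parameter $q = \sqrt{1-t}$. Consequently $\Phi_{\mathrm{PC}}$, its jump contour $\Sigma_{\mathrm{PC}}$, and the rays $e^{\pm i\pi/4}[0,+\infty)$ and $e^{\pm 3i\pi/4}(+\infty,0]$ occurring in Lemma \ref{lemma:Ib2} carry no dependence on any parameter other than $t$; the four prefactors are rational functions of $t$ and $\sqrt{1-t}$ only. Hence each $\mathcal{I}_{b_{2},j}$, and therefore $\mathcal{I}_{b_{2}}$, is a function of $t$ alone: for $j=1$ it is independent of $r$, $q$, $\nu_{1},\ldots,\nu_{r}$, $\mu_{1},\ldots,\mu_{q}$, and for $j=2$ it is independent of $\alpha$ and $\theta$.

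It remains to verify that the four integrals defining $\mathcal{I}_{b_{2}}$ converge absolutely, so that $\mathcal{I}_{b_{2}}$ is well defined. Near the origin the integrand is an entry of $\Phi_{\mathrm{PC}}^{-1}\Phi_{\mathrm{PC}}'$, and, exactly as in Remark \ref{remark: no boundary values indicated} (on each of these rays the relevant jump of $\Phi_{\mathrm{PC}}$ is triangular in the direction selected by $\sigma_{+}$ or $\sigma_{-}$), the trace is independent of the choice of boundary value and extends continuously up to $z = 0$; thus there is no singularity there. At infinity I would invoke the asymptotics \eqref{Phi PC at inf}, which yield $\Phi_{\mathrm{PC}}(z)\sigma_{+}\Phi_{\mathrm{PC}}(z)^{-1} = \bigO(e^{iz^{2}/2})$ and $\Phi_{\mathrm{PC}}(z)\sigma_{-}\Phi_{\mathrm{PC}}(z)^{-1} = \bigO(e^{-iz^{2}/2})$; since $\re(iz^{2}/2) = -|z|^{2}/2$ on $e^{i\pi/4}[0,+\infty)$ and on $e^{-3i\pi/4}(+\infty,0]$, and $\re(-iz^{2}/2) = -|z|^{2}/2$ on $e^{-i\pi/4}[0,+\infty)$ and on $e^{3i\pi/4}(+\infty,0]$, each integrand decays like a Gaussian along its ray and the corresponding integral converges.

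I do not expect a serious obstacle in this lemma: the analytic work has already been carried out in establishing Lemma \ref{lemma:Ib2}, and the present statement is obtained by inspecting which objects in those formulas carry model-parameter dependence. The only mild points to be careful about are the boundary-value unambiguity of the traces on the jump rays of $\Phi_{\mathrm{PC}}$ and the Gaussian decay at infinity described above, which together make $\mathcal{I}_{b_{2}}$ a finite, purely $t$-dependent quantity.
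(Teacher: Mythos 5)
Your proposal is correct and follows essentially the same route as the paper: the paper's proof also just combines \eqref{splitting of Ib2 first time} with Lemma \ref{lemma:Ib2} and the observation that $\Phi_{\mathrm{PC}}$ depends only on $q=\sqrt{1-t}$. Your extra checks (unambiguity of the traces on the jump rays of $\Sigma_{\mathrm{PC}}$ and the Gaussian decay from \eqref{Phi PC at inf} ensuring convergence of the limiting integrals) are accurate and only make explicit what the paper leaves implicit.
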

\begin{proof}
This follows from \eqref{splitting of Ib2 first time}, Lemma \ref{lemma:Ib2}, and the fact that $\Phi_{\mathrm{PC}}$ only depends on $q = \sqrt{1-t}$.
\end{proof}
Let $b_{\star} := \Sigma_{5}\cap \partial \mathcal{D}_{b_{2}}$. We split $I_{2}$ into two parts:
\begin{align*}
& I_{2} = - \frac{2}{t}\re \bigg[\int_{[0,b_{2}]} {\rm Tr}\big[ T^{-1}(\zeta)T'(\zeta)\sigma_{3} \big] \frac{d\zeta}{2\pi i}\bigg] = I_{2,1} + I_{2,2}, \\
& I_{2,1} = - \frac{2}{t}\re \bigg[\int_{[0,b_{\star}]} {\rm Tr}\big[ T^{-1}(\zeta)T'(\zeta)\sigma_{3} \big] \frac{d\zeta}{2\pi i}\bigg], \\
& I_{2,2} = - \frac{2}{t}\re \bigg[\int_{\Sigma_{5}\cap\mathcal{D}_{b_{2}}} {\rm Tr}\big[ T^{-1}(\zeta)T'(\zeta)\sigma_{3} \big] \frac{d\zeta}{2\pi i}\bigg].
\end{align*}
\begin{lemma}\label{lemma: asymp for I21}
\begin{align*}
I_{2,1} = \frac{2\nu}{\pi t} \log \left| \frac{b_{\star}- b_{2}}{b_{\star}-b_{1}} \right|+ \bigO(s^{-\frac{\rho}{2}}), \qquad \mbox{as } s \to + \infty.
\end{align*}
\end{lemma}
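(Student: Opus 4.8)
The plan is to exploit that the segment $[0,b_{\star}]$ lies outside the disks $\mathcal{D}_{b_{1}}\cup\mathcal{D}_{b_{2}}$, so that $T=RP^{(\infty)}$ there by \eqref{def of R}, and to reduce the integrand to an explicit logarithmic derivative of the global parametrix. First I would split
\begin{equation*}
{\rm Tr}\big[T^{-1}(\zeta)T'(\zeta)\sigma_{3}\big] = {\rm Tr}\big[(P^{(\infty)})^{-1}(P^{(\infty)})'\sigma_{3}\big] + {\rm Tr}\big[(P^{(\infty)})^{-1}R^{-1}R'\,P^{(\infty)}\sigma_{3}\big].
\end{equation*}
Since $[0,b_{\star}]$ is a fixed compact contour (its endpoint $b_{\star}$ sits on $\partial\mathcal{D}_{b_{2}}$, whose radius is independent of $s$), and since $P^{(\infty)}$ is independent of $s$ and uniformly bounded and invertible away from $b_{1},b_{2}$, the estimates $R=I+\bigO(s^{-\rho/2})$ and $R'=\bigO(s^{-\rho/2})$ from \eqref{estimate for R} show that the second trace is $\bigO(s^{-\rho/2})$ uniformly on $[0,b_{\star}]$, hence contributes $\bigO(s^{-\rho/2})$ to $I_{2,1}$.

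For the main term, I would use $P^{(\infty)}=D^{-\sigma_{3}}$ (see \eqref{def of Pinf}) to get $(P^{(\infty)})^{-1}(P^{(\infty)})' = -(\log D)'\sigma_{3}$, whence ${\rm Tr}\big[(P^{(\infty)})^{-1}(P^{(\infty)})'\sigma_{3}\big] = -2(\log D)'$. By \eqref{def of nu and D},
\begin{equation*}
(\log D)'(\zeta) = i\nu\Big(\frac{1}{\zeta-b_{2}}-\frac{1}{\zeta-b_{1}}\Big),
\end{equation*}
which is analytic on a neighbourhood of $[0,b_{\star}]$ (the branch cut of $D$ along $\Sigma_{5}$ disappears upon differentiating). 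Integrating and substituting into the definition of $I_{2,1}$ gives
\begin{equation*}
I_{2,1} = \frac{4}{t}\,\re\Big[\frac{1}{2\pi i}\big(\log D(b_{\star})-\log D(0)\big)\Big] + \bigO(s^{-\rho/2}),
\end{equation*}
with the branch of $\log D$ continued along $[0,b_{\star}]$. Taking real parts, the endpoint $\zeta=b_{\star}$ contributes $\frac{2\nu}{\pi t}\log\big|\tfrac{b_{\star}-b_{2}}{b_{\star}-b_{1}}\big|$, while the endpoint $\zeta=0$ contributes a multiple of $\re\log\tfrac{b_{2}}{b_{1}}=\log|b_{2}/b_{1}|$, which vanishes because the symmetry $b_{2}=-\overline{b_{1}}$ forces $|b_{1}|=|b_{2}|$; this yields the claimed formula.

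There is no essential difficulty here once the factorization $T=RP^{(\infty)}$ is in place. The only points requiring care are the uniformity of the $\bigO(s^{-\rho/2})$ error (which rests on the $s$-independence and compactness of the integration contour together with the uniform bounds \eqref{estimate for R}), and the bookkeeping of the logarithmic branches of $D$, where the cancellation of the lower-limit contribution must be extracted from $|b_{1}|=|b_{2}|$.
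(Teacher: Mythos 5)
Your proposal is correct and follows essentially the same route as the paper's proof: factor $T=RP^{(\infty)}$ on $[0,b_{\star}]$, absorb the $R$-dependent trace into the $\bigO(s^{-\rho/2})$ error via \eqref{estimate for R}, and reduce the main term to $-2(\log D)'$, whose integral along $[0,b_{\star}]$ yields the stated logarithm. The only difference is that you spell out the vanishing of the endpoint contribution at $\zeta=0$ through $|b_{1}|=|b_{2}|$ (a consequence of $b_{2}=-\overline{b_{1}}$), which the paper leaves implicit.
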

\begin{proof}
For $\zeta \in [0,b_{\star}]\subset \mathbb{C}\setminus \mathcal{D}_{b_{2}}$, by \eqref{def of R} we have $T(\zeta) = R(\zeta)P^{(\infty)}(\zeta)$, and thus
\begin{align*}
{\rm Tr}\big[ T^{-1}T'\sigma_{3} \big] & = {\rm Tr}\big[ (P^{(\infty)})^{-1}(P^{(\infty)})'\sigma_{3} \big] +  {\rm Tr}\big[ (P^{(\infty)})^{-1}R^{-1}R'P^{(\infty)}\sigma_{3} \big] \\
& =  {\rm Tr}\big[ (P^{(\infty)})^{-1}(P^{(\infty)})'\sigma_{3} \big] + \bigO(s^{-\frac{\rho}{2}}), \qquad \mbox{as } s \to + \infty,
\end{align*}
uniformly for $\zeta \in [0,b_{\star}]$, where we have used \eqref{estimate for R}. We recall that $P^{(\infty)}$ is given by
\begin{align*}
P^{(\infty)}(\zeta) = D(\zeta)^{-\sigma_{3}}, \quad \mbox{ where } \quad D(\zeta) = \exp \left( i\nu\int_{\Sigma_{5}} \frac{d\xi}{\xi-\zeta} \right) = \exp \left( i\nu \log \left[ \frac{\zeta - b_{2}}{\zeta - b_{1}} \right] \right)
\end{align*}
and where the branch of the logarithm is taken along $\Sigma_{5}$. Thus
\begin{align*}
& {\rm Tr}\big[ (P^{(\infty)})^{-1}(P^{(\infty)})'\sigma_{3} \big] = -2\big( \log D\big)',
\end{align*}
and we find as $s\to +\infty$,
\begin{align*}
I_{2,1} & = - \frac{2}{t}\re \bigg[\int_{[0,b_{\star}]} {\rm Tr}\big[ (P^{(\infty)})^{-1}(P^{(\infty)})'\sigma_{3} \big] \frac{d\zeta}{2\pi i}\bigg] + \bigO(s^{-\frac{\rho}{2}}), \\
& = \frac{4}{t}\re \bigg[\int_{[0,b_{\star}]} \big( \log D\big)' \frac{d\zeta}{2\pi i}\bigg] + \bigO(s^{-\frac{\rho}{2}}) = \frac{2\nu}{\pi t} \log \left| \frac{b_{\star}- b_{2}}{b_{\star}-b_{1}} \right|+ \bigO(s^{-\frac{\rho}{2}}).
\end{align*}
\end{proof}
\begin{lemma}\label{lemma: splitting of I22}
As $s \to + \infty$, we have
\begin{align*}
& I_{2,2} = I_{2,2}^{(1)} + I_{2,2}^{(2)} + I_{2,2}^{(3)} + \bigO(s^{-\frac{\rho}{2}}), \\
& I_{2,2}^{(1)} = - \frac{2}{t}\re \bigg[\int_{\Sigma_{5}\cap\mathcal{D}_{b_{2}}} \Big( s^{\rho} i h'(\zeta) - (\log \widetilde{\mathcal{G}})'(\zeta;s) \Big) \frac{d\zeta}{2\pi i}\bigg], \\
& I_{2,2}^{(2)} = - \frac{2}{t}\re \bigg[s^{\frac{\rho}{2}}\int_{\Sigma_{5}\cap\mathcal{D}_{b_{2}}} f'(\zeta) {\rm Tr} \big[ \Phi_{\mathrm{PC}}^{-1}(s^{\frac{\rho}{2}}f_{b_{2}}(\zeta))\Phi_{\mathrm{PC}}'(s^{\frac{\rho}{2}}f_{b_{2}}(\zeta))\sigma_{3} \big] \frac{d\zeta}{2\pi i}\bigg],\\
& I_{2,2}^{(3)} = - \frac{2}{t}\re \bigg[\int_{\Sigma_{5}\cap\mathcal{D}_{b_{2}}}  {\rm Tr}\big[ \Phi_{\mathrm{PC}}^{-1}(s^{\frac{\rho}{2}}f_{b_{2}}(\zeta))E^{-1}(\zeta;s)E'(\zeta;s) \Phi_{\mathrm{PC}}(s^{\frac{\rho}{2}}f_{b_{2}}(\zeta))\sigma_{3} \big] \frac{d\zeta}{2\pi i}\bigg].
\end{align*} 
\end{lemma}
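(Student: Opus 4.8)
The plan is to work throughout on the (deformed) contour $\Sigma_{5}\cap\mathcal{D}_{b_{2}}$, and to substitute $T(\zeta)=R(\zeta)P^{(b_{2})}(\zeta)$ from \eqref{lol1}, so that
\begin{align*}
{\rm Tr}\big[T^{-1}T'\sigma_{3}\big]={\rm Tr}\big[(P^{(b_{2})})^{-1}(P^{(b_{2})})'\sigma_{3}\big]+{\rm Tr}\big[(P^{(b_{2})})^{-1}R^{-1}R'P^{(b_{2})}\sigma_{3}\big].
\end{align*}
I would then insert the factorization $P^{(b_{2})}=E\,\Phi_{\mathrm{PC}}(s^{\rho/2}f)\,M$ from \eqref{def of Pb2}, where $M:=e^{\frac{s^{\rho}}{2}(ih-\ell)\sigma_{3}}\widetilde{\mathcal{G}}^{-\sigma_{3}/2}$ is diagonal and hence commutes with $\sigma_{3}$, with $M\sigma_{3}M^{-1}=\sigma_{3}$. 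Using this identity and the cyclicity of the trace, the cross terms collapse:
\begin{align*}
{\rm Tr}\big[(P^{(b_{2})})^{-1}(P^{(b_{2})})'\sigma_{3}\big]={\rm Tr}\big[M^{-1}M'\sigma_{3}\big]+{\rm Tr}\big[\Phi_{\mathrm{PC}}^{-1}\Phi_{\mathrm{PC}}'\sigma_{3}\big]+{\rm Tr}\big[\Phi_{\mathrm{PC}}^{-1}E^{-1}E'\,\Phi_{\mathrm{PC}}\,\sigma_{3}\big],
\end{align*}
and likewise ${\rm Tr}\big[(P^{(b_{2})})^{-1}R^{-1}R'P^{(b_{2})}\sigma_{3}\big]={\rm Tr}\big[\Phi_{\mathrm{PC}}^{-1}E^{-1}R^{-1}R'E\,\Phi_{\mathrm{PC}}\,\sigma_{3}\big]$. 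Since $M=\exp\big[\big(\tfrac{s^{\rho}}{2}(ih-\ell)-\tfrac12\log\widetilde{\mathcal{G}}\big)\sigma_{3}\big]$ one has $M^{-1}M'=\big(\tfrac{s^{\rho}}{2}ih'-\tfrac12(\log\widetilde{\mathcal{G}})'\big)\sigma_{3}$, hence ${\rm Tr}[M^{-1}M'\sigma_{3}]=s^{\rho}ih'-(\log\widetilde{\mathcal{G}})'$; and $\Phi_{\mathrm{PC}}=\Phi_{\mathrm{PC}}(s^{\rho/2}f(\zeta))$ gives $\Phi_{\mathrm{PC}}'=s^{\rho/2}f'(\zeta)\,\Phi_{\mathrm{PC}}'(s^{\rho/2}f(\zeta))$ by the chain rule. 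Feeding these three terms into $-\tfrac{2}{t}\re\int_{\Sigma_{5}\cap\mathcal{D}_{b_{2}}}(\,\cdot\,)\tfrac{d\zeta}{2\pi i}$ reproduces exactly $I_{2,2}^{(1)},I_{2,2}^{(2)},I_{2,2}^{(3)}$, so the lemma is reduced to showing that
\begin{align*}
-\frac{2}{t}\re\bigg[\int_{\Sigma_{5}\cap\mathcal{D}_{b_{2}}}{\rm Tr}\big[\Phi_{\mathrm{PC}}^{-1}E^{-1}R^{-1}R'E\,\Phi_{\mathrm{PC}}\,\sigma_{3}\big]\frac{d\zeta}{2\pi i}\bigg]=\bigO(s^{-\rho/2}).
\end{align*}

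For this remainder I would use that $R(\zeta)=I+\bigO(s^{-\rho/2})$ and $R'(\zeta)=\bigO(s^{-\rho/2})$ uniformly inside $\mathcal{D}_{b_{2}}$ by \eqref{estimate for R} (so $R^{-1}R'=\bigO(s^{-\rho/2})$), together with $E^{\pm1}(\zeta;s)=\bigO(1)$ from \eqref{estimate Eb2}. It then only remains to check that the conjugation $A\mapsto\Phi_{\mathrm{PC}}^{-1}(z)\,A\,\Phi_{\mathrm{PC}}(z)$ of a bounded matrix $A$ stays bounded for $z=s^{\rho/2}f(\zeta)$, $\zeta\in\Sigma_{5}\cap\mathcal{D}_{b_{2}}$, and that the contour has fixed length. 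Recall that in Section~\ref{subsection: local param at b2} the contour was deformed inside $\mathcal{D}_{b_{2}}$ so that $f$ maps $\Sigma_{5}\cap\mathcal{D}_{b_{2}}$ onto the ray of $\Sigma_{\mathrm{PC}}$ carrying the constant jump $\left(\begin{smallmatrix}1/t&0\\0&t\end{smallmatrix}\right)$; on this ray $z^{2}$ is real, so by \eqref{Phi PC at inf} the matrix $\Phi_{\mathrm{PC}}(z)$ factors, for large $z$, as $(I+\bigO(z^{-1}))$ times a diagonal matrix of uniformly bounded modulus. I would split $\Sigma_{5}\cap\mathcal{D}_{b_{2}}$ into the portion with $|z|\le s^{\rho/4}$, on which $\zeta$ lies within $\bigO(s^{-\rho/4})$ of $b_{2}$ and $\Phi_{\mathrm{PC}}^{\pm1}(z)$ are bounded since $\Phi_{\mathrm{PC}}$ is analytic and invertible near $0$ (contributing $\bigO(s^{-\rho/2})\cdot\bigO(s^{-\rho/4})=\bigO(s^{-3\rho/4})$), and the portion with $|z|>s^{\rho/4}$, on which the diagonal conjugation preserves the size $\bigO(s^{-\rho/2})$ and the factors $I+\bigO(z^{-1})=I+\bigO(s^{-\rho/4})$ add only a bounded perturbation. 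Integrating over the fixed-length contour then gives the desired $\bigO(s^{-\rho/2})$.

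The main obstacle is precisely this last estimate: over most of $\Sigma_{5}\cap\mathcal{D}_{b_{2}}$ the argument $z=s^{\rho/2}f(\zeta)$ has size of order $s^{\rho/2}$, so a crude bound $\|\Phi_{\mathrm{PC}}(z)\|\,\|\Phi_{\mathrm{PC}}^{-1}(z)\|$ is hopeless; one must exploit that the contour has been moved onto the Stokes ray of the parabolic cylinder model, where $e^{\pm iz^{2}\sigma_{3}/4}$ has modulus one, so that the only surviving effect is the bounded diagonal conjugation of an already small matrix. Equally crucial is to have used $M\sigma_{3}M^{-1}=\sigma_{3}$ to eliminate the genuinely exponentially large factor $M$ (which itself blows up since $\Sigma_{5}$ is not a steepest-descent line for $\re(ih)$) before attempting any bound; the remaining work — the chain rule, the logarithmic derivative of $M$, and the large-$z$ asymptotics of $\Phi_{\mathrm{PC}}$ — is routine bookkeeping. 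Note that only $\Sigma_{5}$ enters $I_{2}$, so the companion analysis at $b_{1}$ and on $\Sigma_{6},\Sigma_{7}$ plays no role in this lemma.
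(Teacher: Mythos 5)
Your proposal is correct and follows essentially the same route as the paper: the factorization $T=RP^{(b_2)}$, the splitting of the trace using $P^{(b_2)}=E\,\Phi_{\mathrm{PC}}\,M$ with the diagonal factor $M$ eliminated via $M\sigma_3M^{-1}=\sigma_3$ and cyclicity, and the estimates $R^{-1}R'=\bigO(s^{-\rho/2})$, $E^{\pm1}=\bigO(1)$. The only (harmless) difference is in the remainder bound: the paper invokes the global estimate $\Phi_{\mathrm{PC}}(z)\sigma_3\Phi_{\mathrm{PC}}(z)^{-1}=\bigO(1)$ for all $z\in\mathbb{C}$ together with cyclicity, whereas you bound $\Phi_{\mathrm{PC}}^{\pm1}(z)$ directly on the ray $f(\Sigma_5\cap\mathcal{D}_{b_2})\subset\mathbb{R}^-$ where $e^{\pm iz^2\sigma_3/4}$ has modulus one — both yield the same $\bigO(s^{-\rho/2})$.
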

\begin{proof}
For $\zeta \in \Sigma_{5}\cap\mathcal{D}_{b_{2}}$, by \eqref{def of R} we have $T(\zeta) = R(\zeta)P^{(b_{2})}(\zeta)$, and thus
\begin{align*}
{\rm Tr}\big[ T^{-1}T'\sigma_{3} \big] & = {\rm Tr}\big[ (P^{(b_{2})})^{-1}(P^{(b_{2})})'\sigma_{3} \big] +  {\rm Tr}\big[ (P^{(b_{2})})^{-1}R^{-1}R'P^{(b_{2})}\sigma_{3} \big].
\end{align*}
We recall that $P^{(b_{2})}$ is given by
\begin{align*}
P^{(b_{2})}(\zeta) = E(\zeta;s) \Phi_{\mathrm{PC}}(z)e^{ \frac{s^{\rho} }{2}(ih(\zeta)-\ell) \sigma_{3}} \widetilde{\mathcal{G}}(\zeta;s)^{-\frac{\sigma_{3}}{2}} \quad \mbox{ with } \quad z = s^{\frac{\rho}{2}}f(\zeta),
\end{align*}
and thus
\begin{align*}
& {\rm Tr}\big[ (P^{(b_{2})})^{-1}(P^{(b_{2})})'\sigma_{3} \big] = \Big( s^{\rho} i h' - (\log \widetilde{\mathcal{G}})' \Big) + s^{\frac{\rho}{2}}f' {\rm Tr} \big[ \Phi_{\mathrm{PC}}^{-1}\Phi_{\mathrm{PC}}'\sigma_{3} \big] + {\rm Tr}\big[ \Phi_{\mathrm{PC}}^{-1}E^{-1}E' \Phi_{\mathrm{PC}}\sigma_{3} \big], \\
& {\rm Tr}\big[ (P^{(b_{2})})^{-1}R^{-1}R'P^{(b_{2})}\sigma_{3} \big] =  {\rm Tr}\big[ \Phi_{\mathrm{PC}}^{-1}E^{-1}R^{-1}R'E\Phi_{\mathrm{PC}}\sigma_{3} \big],
\end{align*}
where $\Phi_{\mathrm{PC}}$ and $\Phi_{\mathrm{PC}}'$ are evaluated at $z=s^{\frac{\rho}{2}}f_{b_{2}}(\zeta)$ and the other functions are evaluated at $\zeta$. Thus
\begin{align*}
& \int_{\Sigma_{5}\cap\mathcal{D}_{b_{2}}} {\rm Tr}\big[ T^{-1}T'\sigma_{3} \big] \frac{d\zeta}{2\pi i} = \int_{\Sigma_{5}\cap\mathcal{D}_{b_{2}}} \Big( s^{\rho} i h' - (\log \widetilde{\mathcal{G}})' \Big) \frac{d\zeta}{2\pi i} + s^{\frac{\rho}{2}}\int_{\Sigma_{5}\cap\mathcal{D}_{b_{2}}} f' {\rm Tr} \big[ \Phi_{\mathrm{PC}}^{-1}\Phi_{\mathrm{PC}}'\sigma_{3} \big] \frac{d\zeta}{2\pi i} \\
& + \int_{\Sigma_{5}\cap\mathcal{D}_{b_{2}}}  {\rm Tr}\big[ \Phi_{\mathrm{PC}}^{-1}E^{-1}E' \Phi_{\mathrm{PC}}\sigma_{3} \big] \frac{d\zeta}{2\pi i} + \int_{\Sigma_{5}\cap\mathcal{D}_{b_{2}}}  {\rm Tr}\big[ \Phi_{\mathrm{PC}}^{-1}E_{{2}}^{-1}R^{-1}R'E_{{2}}\Phi_{\mathrm{PC}}\sigma_{3} \big] \frac{d\zeta}{2\pi i}.
\end{align*}
From Appendix \ref{appendix:PC}, we know that
\begin{align}\label{estimate on Phi PC sigma3 Phi inv}
\Phi_{\mathrm{PC}}(z)\sigma_{3}\Phi_{\mathrm{PC}}(z)^{-1} = \bigO(1),
\end{align}
uniformly for $z \in \mathbb{C}$. Therefore, by the cyclic property of the trace, and using also the estimates \eqref{estimate for R} and \eqref{estimate Eb2}, we conclude that
\begin{align*}
\int_{\Sigma_{5}\cap\mathcal{D}_{b_{2}}}  {\rm Tr}\big[ \Phi_{\mathrm{PC}}^{-1}E^{-1}R^{-1}R'E\Phi_{\mathrm{PC}}\sigma_{3} \big] \frac{d\zeta}{2\pi i} = \bigO(s^{-\frac{\rho}{2}}), \qquad \mbox{as } s \to + \infty.
\end{align*}
\end{proof}
\begin{lemma}\label{lemma: asymp for I22p1p}
As $s \to + \infty$, we have
\begin{align*}
I_{2,2}^{(1)} = - \frac{\im (ih(b_{2})) - \im (ih(b_{\star}))}{\pi t}s^{\rho} + \frac{c_{5}+c_{6}}{\pi t} \arg \left(\frac{b_{2}}{b_{\star}}\right) + \bigO \left( s^{-\rho} \right).
\end{align*}
\end{lemma}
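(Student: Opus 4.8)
The plan is to observe that the integrand in the definition of $I_{2,2}^{(1)}$ (see Lemma \ref{lemma: splitting of I22}) is an exact derivative,
\[
s^{\rho} i h'(\zeta) - (\log \widetilde{\mathcal{G}})'(\zeta;s) = \frac{d}{d\zeta}\Big( s^{\rho} i h(\zeta) - \log \widetilde{\mathcal{G}}(\zeta;s) \Big),
\]
and to evaluate the integral by the fundamental theorem of calculus. Recall that $b_{\star} = \Sigma_{5} \cap \partial \mathcal{D}_{b_{2}}$ is a fixed point (the radius of $\mathcal{D}_{b_{2}}$ is independent of $s$), that $\Sigma_{5} \cap \mathcal{D}_{b_{2}}$ is the arc of $\Sigma_{5}$ joining $b_{\star}$ to $b_{2}$ oriented from $b_{\star}$ to $b_{2}$ (recall Figure \ref{fig: contour for T}), and that both $h$ and $\log \widetilde{\mathcal{G}}(\cdot;s)$ are analytic in a neighborhood of this arc provided $\mathcal{D}_{b_{2}}$ is small enough that the arc avoids $i\mathbb{R}$ and the origin. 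Hence
\[
\int_{\Sigma_{5}\cap\mathcal{D}_{b_{2}}} \Big( s^{\rho} i h'(\zeta) - (\log \widetilde{\mathcal{G}})'(\zeta;s) \Big)\frac{d\zeta}{2\pi i} = \frac{s^{\rho}}{2\pi}\big( h(b_{2})-h(b_{\star}) \big) - \frac{1}{2\pi i}\big( \log \widetilde{\mathcal{G}}(b_{2};s)-\log \widetilde{\mathcal{G}}(b_{\star};s) \big).
\]

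Next I would insert the asymptotic expansion \eqref{asymp for ln G tilde}. Since $b_{2}$ and $b_{\star}$ are fixed and close to each other, and lie in a common sector where the principal branches of $\log(i\zeta)$ and $\log(-i\zeta)$ are single-valued and analytic, the constant $c_{7}$ drops out of the difference and both $\log(\pm i b_{2})-\log(\pm i b_{\star})$ reduce to $\log(b_{2}/b_{\star})$, the branch being the one with argument close to $0$; the contributions $c_{8}/(is^{\rho}\zeta)$ and the $\bigO(s^{-2\rho})$ remainder only add $\bigO(s^{-\rho})$, so
\[
\log \widetilde{\mathcal{G}}(b_{2};s)-\log \widetilde{\mathcal{G}}(b_{\star};s) = (c_{5}+c_{6})\log(b_{2}/b_{\star}) + \bigO(s^{-\rho}).
\]
Plugging this back, taking $-\tfrac{2}{t}\re[\,\cdot\,]$, and using that $\re(w) = \im(iw)$ to rewrite $\re\big(h(b_{2})-h(b_{\star})\big) = \im(ih(b_{2})) - \im(ih(b_{\star}))$, turns the leading term into $-\tfrac{\im(ih(b_{2}))-\im(ih(b_{\star}))}{\pi t}s^{\rho}$. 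Since $c_{5}$ and $c_{6}$ are real for both $j=1$ and $j=2$ (see the explicit values listed above \eqref{def of h}), the subleading term becomes
\[
-\frac{2}{t}\re\left[ -\frac{c_{5}+c_{6}}{2\pi i}\log(b_{2}/b_{\star}) \right] = \frac{c_{5}+c_{6}}{\pi t}\,\im\log(b_{2}/b_{\star}) = \frac{c_{5}+c_{6}}{\pi t}\arg\!\left(\frac{b_{2}}{b_{\star}}\right),
\]
and collecting everything yields the claim.

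There is no substantial obstacle here: once the exact-derivative structure is recognized, the argument is a short computation. The only points demanding care are the orientation of $\Sigma_{5}\cap\mathcal{D}_{b_{2}}$ (so that the boundary terms carry the correct signs) and the branch bookkeeping in $\log(\pm i b_{2}) - \log(\pm i b_{\star})$; one must ensure that $\mathcal{D}_{b_{2}}$ is small enough for $\zeta \mapsto \log(\pm i \zeta)$ to be single-valued along the relevant arc, so that this difference equals $\log(b_{2}/b_{\star})$ exactly rather than up to an integer multiple of $2\pi i$.
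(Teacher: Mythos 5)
Your proposal is correct and follows essentially the same route as the paper: recognize that the integrand is an exact derivative, apply the fundamental theorem of calculus between $b_{\star}$ and $b_{2}$, insert the expansion \eqref{asymp for ln G tilde} so that only $(c_{5}+c_{6})\log(b_{2}/b_{\star})$ survives up to $\bigO(s^{-\rho})$, and take real/imaginary parts to produce the stated terms. The only difference is cosmetic bookkeeping (the paper splits off $\im$ of the two integrals at the outset, you keep $\tfrac{d\zeta}{2\pi i}$ and take $\re$ at the end), together with your explicit remark on branch choices, which is consistent with what the paper implicitly assumes.
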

\begin{proof}
By definition of $I_{2,2}^{(1)}$, we have
\begin{align*}
I_{2,2}^{(1)} = -\frac{s^{\rho}}{\pi t} \im \int_{\Sigma_{5}\cap\mathcal{D}_{b_{2}}} ih'(\zeta)d\zeta + \frac{1}{\pi t}\im \int_{\Sigma_{5}\cap\mathcal{D}_{b_{2}}}\big( \log \widetilde{\mathcal{G}}(\zeta;s) \big)' d\zeta,
\end{align*}
and
\begin{align}
& \im \int_{\Sigma_{5}\cap\mathcal{D}_{b_{2}}} ih'(\zeta)d\zeta = \im (ih(b_{2})) - \im (ih(b_{\star})), \nonumber \\
& \int_{\Sigma_{5}\cap\mathcal{D}_{b_{2}}}\big( \log \widetilde{\mathcal{G}}(\zeta;s) \big)' d\zeta = \log \widetilde{\mathcal{G}}(b_{2};s)-\log \widetilde{\mathcal{G}}(b_{\star};0). \label{second trivial primitive}
\end{align}
The right-hand-side of \eqref{second trivial primitive} can be expanded as $s \to + \infty$ using \eqref{asymp for G}, and we find
\begin{align*}
\int_{\Sigma_{5}\cap\mathcal{D}_{b_{2}}}\big( \log \widetilde{\mathcal{G}}(\zeta;s) \big)' d\zeta = (c_{5}+c_{6}) \log\left(\frac{b_{2}}{b_{\star}}\right) + \bigO \left( s^{-\rho} \right),
\end{align*}
and the result follows.
\end{proof}
\begin{lemma}\label{lemma: asymp for I22p2p}
Let $m \in \mathbb{C}\setminus \mathbb{R}^{-}$. As $s \to + \infty$, we have
\begin{align*}
I_{2,2}^{(2)} = - \frac{s^{\rho}}{\pi t}\big[ \im(ih(b_{\star}))-\im(ih(b_{2})) \big] - \frac{\nu  \rho}{\pi t}\log s - \frac{2\nu}{\pi t}\log r + \frac{2 \nu}{\pi t} \log |m| + \mathcal{I}_{2,2}^{(2)}(m) + \bigO(s^{-\frac{\rho}{2}}),
\end{align*}
where $r = |f(b_{\star})| = - f(b_{\star})$ and
\begin{align*}
\mathcal{I}_{2,2}^{(2)}(m) = -\frac{2}{t} \re\bigg[\int_{(-\infty,0]} \left({\rm Tr} \big[ \Phi_{\mathrm{PC}}^{-1}(z)\Phi_{\mathrm{PC}}'(z)\sigma_{3} \big] - \left( i z - \frac{2i \nu}{z-m} \right) \right)  \frac{dz}{2\pi i}\bigg].
\end{align*}
\end{lemma}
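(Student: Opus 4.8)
The plan is to reduce $I_{2,2}^{(2)}$ to an explicit model integral by the substitution $z = s^{\frac{\rho}{2}}f(\zeta)$. Since $f$ is conformal on $\mathcal{D}_{b_{2}}$ with $f(b_{2})=0$ and $f(b_{\star})=-r$ (so that $r=|f(b_{\star})|=-f(b_{\star})$), and $\Sigma_{5}\cap\mathcal{D}_{b_{2}}$ is the sub-arc of $[0,b_{2}]$ inside $\mathcal{D}_{b_{2}}$ oriented from $b_{\star}$ to $b_{2}$, this maps $\Sigma_{5}\cap\mathcal{D}_{b_{2}}$ onto $[-s^{\frac{\rho}{2}}r,0]$. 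Using $dz = s^{\frac{\rho}{2}}f'(\zeta)d\zeta$ we get
\[
I_{2,2}^{(2)} = -\frac{2}{t}\,\re\!\left[\int_{[-s^{\frac{\rho}{2}}r,\,0]}{\rm Tr}\big[\Phi_{\mathrm{PC}}^{-1}(z)\Phi_{\mathrm{PC}}'(z)\sigma_{3}\big]\frac{dz}{2\pi i}\right],
\]
where the boundary value is the one induced from $\Sigma_{5}$; as $\Phi_{\mathrm{PC}}$ has a constant diagonal jump on $\mathbb{R}^{-}\subset\Sigma_{\mathrm{PC}}$, the quantity ${\rm Tr}[\Phi_{\mathrm{PC}}^{-1}\Phi_{\mathrm{PC}}'\sigma_{3}]$ is unambiguous, exactly as in Remark \ref{remark: no boundary values indicated}.

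Next I would use the large-$z$ behaviour of the parabolic cylinder parametrix from Appendix \ref{appendix:PC}. The normalization forced by \eqref{def of Eb2}, \eqref{def of Pb2} and the matching \eqref{matching param b2} is $\Phi_{\mathrm{PC}}(z)=\big(I+\Phi_{\mathrm{PC},1}z^{-1}+\bigO(z^{-2})\big)z^{-i\nu\sigma_{3}}e^{\frac{iz^{2}}{4}\sigma_{3}}$ as $z\to\infty$. Since $\sigma_{3}$ commutes with the diagonal factors $z^{\mp i\nu\sigma_{3}}$ and $e^{\mp\frac{iz^{2}}{4}\sigma_{3}}$, a short computation gives
\[
{\rm Tr}\big[\Phi_{\mathrm{PC}}^{-1}(z)\Phi_{\mathrm{PC}}'(z)\sigma_{3}\big] = iz - \frac{2i\nu}{z} + \bigO(z^{-2}) = \Big(iz - \frac{2i\nu}{z-m}\Big) + \bigO(z^{-2}), \qquad z\to\infty,
\]
uniformly along $\mathbb{R}^{-}$, the off-diagonal $\bigO(z^{-2})$ corrections (bounded, though oscillatory, on $\mathbb{R}$) being killed by ${\rm Tr}[\,\cdot\,\sigma_{3}]$. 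Writing ${\rm Tr}[\Phi_{\mathrm{PC}}^{-1}\Phi_{\mathrm{PC}}'\sigma_{3}]=\big({\rm Tr}[\cdots]-iz+\tfrac{2i\nu}{z-m}\big)+\big(iz-\tfrac{2i\nu}{z-m}\big)$ splits the integral into a ``regularized'' part and an elementary part. For the regularized part the integrand is $\bigO(z^{-2})$ at $-\infty$ and, because $m\notin\mathbb{R}^{-}$, is bounded near $z=0$; hence it converges to $\int_{(-\infty,0]}(\cdots)\frac{dz}{2\pi i}$ with error $\int_{-\infty}^{-s^{\rho/2}r}\bigO(z^{-2})\,dz=\bigO(s^{-\frac{\rho}{2}})$, and applying $-\tfrac{2}{t}\re[\,\cdot\,]$ produces $\mathcal{I}_{2,2}^{(2)}(m)+\bigO(s^{-\frac{\rho}{2}})$.

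For the elementary part, $\tfrac{z^{2}}{4\pi}-\tfrac{\nu}{\pi}\log(z-m)$ is a primitive of $\tfrac{1}{2\pi i}\big(iz-\tfrac{2i\nu}{z-m}\big)$, so the integral over $[-s^{\rho/2}r,0]$ equals $-\tfrac{\nu}{\pi}\log(-m)-\tfrac{s^{\rho}r^{2}}{4\pi}+\tfrac{\nu}{\pi}\log(-s^{\frac{\rho}{2}}r-m)$. The identity $r^{2}=f(b_{\star})^{2}=-2\big(h(b_{\star})-h(b_{2})\big)$ together with $f(b_{\star})\in\mathbb{R}$ shows $\tfrac{s^{\rho}r^{2}}{4\pi}=\tfrac{s^{\rho}}{2\pi}\big(\im(ih(b_{2}))-\im(ih(b_{\star}))\big)$, while $\log(-s^{\frac{\rho}{2}}r-m)=\tfrac{\rho}{2}\log s+\log r+\log(-1)+\bigO(s^{-\frac{\rho}{2}})$. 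Taking $-\tfrac{2}{t}\re[\,\cdot\,]$ — so that $\log(-1)$ drops out, $\log(-m)\mapsto\log|m|$, and all prefactors are real since $\nu\in\mathbb{R}$ — turns this into exactly $-\tfrac{s^{\rho}}{\pi t}\big[\im(ih(b_{\star}))-\im(ih(b_{2}))\big]-\tfrac{\nu\rho}{\pi t}\log s-\tfrac{2\nu}{\pi t}\log r+\tfrac{2\nu}{\pi t}\log|m|$. Adding the contribution $\mathcal{I}_{2,2}^{(2)}(m)+\bigO(s^{-\frac{\rho}{2}})$ of the regularized part completes the proof. The only genuinely delicate step is the uniform expansion of ${\rm Tr}[\Phi_{\mathrm{PC}}^{-1}\Phi_{\mathrm{PC}}'\sigma_{3}]$ along $\mathbb{R}^{-}$ with the $\bigO(z^{-2})$ control — in particular checking that the oscillatory off-diagonal terms do not contribute under the trace — which I would read off from the explicit parabolic-cylinder asymptotics in Appendix \ref{appendix:PC}; everything else is bookkeeping of branches of the logarithm.
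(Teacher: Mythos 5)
Your proof is correct and follows essentially the same route as the paper's: the same change of variables $z=s^{\frac{\rho}{2}}f(\zeta)$ reducing $I_{2,2}^{(2)}$ to an integral of ${\rm Tr}\big[\Phi_{\mathrm{PC}}^{-1}\Phi_{\mathrm{PC}}'\sigma_{3}\big]$ over $[-s^{\frac{\rho}{2}}r,0]$, the same regularization by $iz-\frac{2i\nu}{z-m}$, the same $\bigO(s^{-\frac{\rho}{2}})$ tail estimate, and the same identity $r^{2}=-2\big(h(b_{\star})-h(b_{2})\big)$. The only differences are cosmetic (you evaluate the elementary integral via a primitive and dispose of branch ambiguities by taking real parts, and you make explicit the unambiguity of boundary values on $\mathbb{R}^{-}$ and the derivation of the trace expansion from \eqref{Phi PC at inf}, points the paper leaves implicit).
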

\begin{proof}
Using the change of variables $z = s^{\frac{\rho}{2}} f_{b_{2}}(\zeta)$ and denoting $r = |f_{b_{2}}(b_{\star})| = - f_{b_{2}}(b_{\star})$, we rewrite $I_{2,2}^{(2)}$ as
\begin{align}\label{divergent integral for I22p2p}
I_{2,2}^{(2)}& = - \frac{2}{t}\re \bigg[ \int_{[-s^{\frac{\rho}{2}}r,0]} {\rm Tr} \big[ \Phi_{\mathrm{PC}}^{-1}(z)\Phi_{\mathrm{PC}}'(z)\sigma_{3} \big]  \frac{dz}{2\pi i}\bigg].
\end{align}
From the expansion \eqref{Phi PC at inf}, we get
\begin{align*}
{\rm Tr} \big[ \Phi_{\mathrm{PC}}^{-1}(z)\Phi_{\mathrm{PC}}'(z)\sigma_{3} \big] & = iz - \frac{2i \nu}{z} + \bigO(z^{-2}), \qquad \mbox{as } z \to -\infty.
\end{align*}
Let $m \in \mathbb{C}\setminus \mathbb{R}^{-}$. We have
\begin{align}
\int_{[-s^{\frac{\rho}{2}}r,0]} {\rm Tr} \big[ \Phi_{\mathrm{PC}}^{-1}(z)\Phi_{\mathrm{PC}}'(z)\sigma_{3} \big]  \frac{dz}{2\pi i} & = \int_{[-s^{\frac{\rho}{2}}r,0]} \left( i z - \frac{2i \nu}{z-m} \right)  \frac{dz}{2\pi i} \label{explicit integral for I22p2p} \\
& + \int_{[-s^{\frac{\rho}{2}}r,0]} \left({\rm Tr} \big[ \Phi_{\mathrm{PC}}^{-1}(z)\Phi_{\mathrm{PC}}'(z)\sigma_{3} \big] - \left( i z - \frac{2i \nu}{z-m} \right) \right)  \frac{dz}{2\pi i}. \nonumber
\end{align}
Since 
\begin{align*}
& {\rm Tr} \big[ \Phi_{\mathrm{PC}}^{-1}(z)\Phi_{\mathrm{PC}}'(z)\sigma_{3} \big] - \left( i z - \frac{2i \nu}{z-m} \right) = \bigO(z^{-2}), \qquad \mbox{as } z \to - \infty,
\end{align*}
we have
\begin{align*}
& \int_{[-s^{\frac{\rho}{2}}r,0]} \left({\rm Tr} \big[ \Phi_{\mathrm{PC}}^{-1}(z)\Phi_{\mathrm{PC}}'(z)\sigma_{3} \big] - \left( i z - \frac{2i \nu}{z-m} \right) \right)  \frac{dz}{2\pi i}   \\
& =\int_{[-\infty,0]} \left({\rm Tr} \big[ \Phi_{\mathrm{PC}}^{-1}(z)\Phi_{\mathrm{PC}}'(z)\sigma_{3} \big] - \left( i z - \frac{2i \nu}{z-m} \right) \right)  \frac{dz}{2\pi i} + \bigO(s^{- \frac{\rho}{2}}), \qquad \mbox{as } s \to + \infty.
\end{align*}
On the other hand, the first integral on the right-hand-side of \eqref{explicit integral for I22p2p} can be easily expanded as follows:
\begin{align*}
\int_{[-s^{\frac{\rho}{2}}r,0]} \left( i z - \frac{2i \nu}{z-m} \right)  \frac{dz}{2\pi i} & = - \frac{s^{\rho}r^{2}}{4\pi} + \frac{\nu}{\pi}\log \Big( \frac{s^{\frac{\rho}{2}}r}{m}+1 \Big)  \\
& = - \frac{s^{\rho}r^{2}}{4\pi} + \frac{\nu \rho}{2\pi} \log s + \frac{\nu}{\pi}\log \frac{r}{m} + \bigO(s^{-\frac{\rho}{2}}), \qquad \mbox{as } s \to + \infty.
\end{align*}
Therefore, as $s \to + \infty$ we have
\begin{align*}
I_{2,2}^{(2)} = & - \frac{2}{t}\re \Big[ - \frac{s^{\rho}r^{2}}{4\pi} + \frac{\nu \rho}{2\pi} \log s + \frac{\nu}{\pi}\log \frac{r}{m}  \Big]  \\
& - \frac{2}{t}\re \bigg[ \int_{(-\infty,0]} \left({\rm Tr} \big[ \Phi_{\mathrm{PC}}^{-1}(z)\Phi_{\mathrm{PC}}'(z)\sigma_{3} \big] - \left( i z - \frac{2i \nu}{z-m} \right) \right)  \frac{dz}{2\pi i} \bigg] + \bigO(s^{-\frac{\rho}{2}}),
\end{align*}
and the claim follows by noticing that
\begin{align*}
& r^{2} = f(b_{\star})^{2} = -2 \big( h(b_{\star})-h(b_{2}) \big) = -2\big( \im(ih(b_{\star}))-\im(ih(b_{2})) \big).
\end{align*}
\end{proof}
\begin{lemma}\label{lemma: I22p3p}
\begin{align*}
& I_{2,2}^{(3)} = \bigO(1), & & \mbox{as } s \to + \infty, \\
& I_{2,2}^{(3)} = \bigO(b_{\star}-b_{2}), & & \mbox{as } b_{\star} \to b_{2}.
\end{align*}
\end{lemma}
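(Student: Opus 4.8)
The plan is to estimate the integrand in $I_{2,2}^{(3)}$ pointwise, using only the uniform boundedness estimates already established in Section \ref{Section: steepest descent}, and then to exploit that the contour $\Sigma_5 \cap \mathcal{D}_{b_2}$ is short. Recall from Lemma \ref{lemma: splitting of I22} that
\[
I_{2,2}^{(3)} = - \frac{2}{t}\re \bigg[\int_{\Sigma_{5}\cap\mathcal{D}_{b_{2}}}  {\rm Tr}\big[ \Phi_{\mathrm{PC}}^{-1}(s^{\frac{\rho}{2}}f_{b_{2}}(\zeta))\,E^{-1}(\zeta;s)E'(\zeta;s)\, \Phi_{\mathrm{PC}}(s^{\frac{\rho}{2}}f_{b_{2}}(\zeta))\,\sigma_{3} \big] \frac{d\zeta}{2\pi i}\bigg],
\]
and that $\Sigma_5 \cap \mathcal{D}_{b_2}$ is the straight segment joining $b_\star$ to $b_2$ and lying in $[0,b_2]$.

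First I would use the cyclic invariance of the trace to rewrite the integrand as ${\rm Tr}\big[ E^{-1}(\zeta;s)E'(\zeta;s)\,\Phi_{\mathrm{PC}}(w)\,\sigma_{3}\,\Phi_{\mathrm{PC}}(w)^{-1} \big]$ with $w = s^{\rho/2}f_{b_2}(\zeta)$. By \eqref{estimate Eb2} we have $E^{\pm 1}(\zeta;s) = \bigO(1)$ and $E'(\zeta;s) = \bigO(1)$ as $s \to + \infty$, uniformly for $\zeta \in \mathcal{D}_{b_2}$, and by \eqref{estimate on Phi PC sigma3 Phi inv} we have $\Phi_{\mathrm{PC}}(w)\sigma_3\Phi_{\mathrm{PC}}(w)^{-1} = \bigO(1)$ uniformly for $w \in \mathbb{C}$; the latter applies in particular at $w = s^{\rho/2}f_{b_2}(\zeta)$, no matter how large $s$ is. Hence the integrand is $\bigO(1)$, uniformly in $s$ and uniformly for $\zeta$ on the segment, and since the segment has length bounded independently of $s$, estimating the integral by its length times the supremum of the integrand yields $I_{2,2}^{(3)} = \bigO(1)$ as $s \to + \infty$.

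For the second assertion I would observe that the same pointwise bound survives the limit $b_\star \to b_2$: the functions $E$, $E^{-1}$, $E'$ are analytic, hence bounded, on the whole fixed disk $\mathcal{D}_{b_2}$, and $\Phi_{\mathrm{PC}}(\cdot)\sigma_3\Phi_{\mathrm{PC}}(\cdot)^{-1}$ is bounded on all of $\mathbb{C}$, so the $\bigO(1)$ bound on the integrand holds with a constant independent both of $s$ and of the point $b_\star$ (equivalently, of the radius of $\mathcal{D}_{b_2}$). Since $\Sigma_5 \cap \mathcal{D}_{b_2}$ is the segment $[b_\star,b_2]$, of length $|b_2 - b_\star|$, the same length-times-supremum estimate gives $I_{2,2}^{(3)} = \bigO(b_\star - b_2)$ as $b_\star \to b_2$, with the implied constant uniform in (large) $s$.

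There is no genuine obstacle here; the lemma is a bookkeeping step whose sole purpose is to record that the remainder $I_{2,2}^{(3)}$ neither grows with $s$ nor contributes to the $b_\star$-dependent constant once $b_\star \to b_2$ (a combination that will be used when all the $I$-pieces are put together and the arbitrary disk radius is removed). The only point worth making explicit, rather than glossing over, is that the $\bigO(1)$ bound on the integrand must be uniform simultaneously in $s$ and in the disk radius, which is immediate from the two cited uniform estimates together with the cyclicity of the trace.
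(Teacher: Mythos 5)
Your argument is correct and is essentially the paper's own proof: the paper likewise deduces the lemma from the estimates \eqref{estimate Eb2} and \eqref{estimate on Phi PC sigma3 Phi inv} together with the cyclic property of the trace, the integral then being bounded by the length of the segment $\Sigma_5\cap\mathcal{D}_{b_2}$ times a uniform constant. Your extra remark on uniformity of the constants in the disk radius is a fine (and harmless) elaboration of the same idea.
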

\begin{proof}
This follows from the previous estimates \eqref{estimate Eb2} and \eqref{estimate on Phi PC sigma3 Phi inv}, and the cyclic property of the trace. 
\end{proof}
\begin{lemma}\label{lemma: asymp for I2}
As $s \to + \infty$, we have
\begin{align*}
I_{2} = - \frac{\nu  \rho}{\pi t}\log s - \frac{2\nu}{\pi t} \log \left| (b_{2}-b_{1})f'(b_2) \right| + \mathcal{I}_{2,2}^{(2)}(1) + \bigO(s^{-\frac{\rho}{2}}).
\end{align*}
\end{lemma}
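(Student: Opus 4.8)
The plan is to assemble the pieces already in place: by definition $I_{2}=I_{2,1}+I_{2,2}$, and by Lemma~\ref{lemma: splitting of I22} one has $I_{2,2}=I_{2,2}^{(1)}+I_{2,2}^{(2)}+I_{2,2}^{(3)}+\bigO(s^{-\rho/2})$. First I would substitute into this the asymptotics of Lemmas~\ref{lemma: asymp for I21}, \ref{lemma: asymp for I22p1p}, and \ref{lemma: asymp for I22p2p}, applying the last one with the free parameter $m$ set equal to $1$, so that its $\tfrac{2\nu}{\pi t}\log|m|$ term vanishes and only $\mathcal{I}_{2,2}^{(2)}(1)$ survives. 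Adding everything up, the only contributions of order $s^{\rho}$ come from $I_{2,2}^{(1)}$ and $I_{2,2}^{(2)}$: they are $\tfrac{1}{\pi t}\big(\im(ih(b_\star))-\im(ih(b_2))\big)s^{\rho}$ and exactly its negative, so they cancel identically. What remains is $-\tfrac{\nu\rho}{\pi t}\log s$, the constant $\mathcal{I}_{2,2}^{(2)}(1)$, an $s$-independent collection of terms depending on the endpoint $b_\star=\Sigma_{5}\cap\partial\mathcal{D}_{b_{2}}$, the quantity $I_{2,2}^{(3)}$, and the remainder $\bigO(s^{-\rho/2})$.

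Next I would evaluate that $b_\star$-dependent constant together with $I_{2,2}^{(3)}$. The relevant pieces are $\tfrac{2\nu}{\pi t}\log\big|\tfrac{b_\star-b_2}{b_\star-b_1}\big|$ from $I_{2,1}$, the term $\tfrac{c_5+c_6}{\pi t}\arg\!\big(\tfrac{b_2}{b_\star}\big)$ from $I_{2,2}^{(1)}$, and $-\tfrac{2\nu}{\pi t}\log r$ with $r=|f(b_\star)|$ from $I_{2,2}^{(2)}$. I would then let the disk $\mathcal{D}_{b_{2}}$ shrink, i.e.\ send $b_\star\to b_2$. Since $f(b_\star)=f'(b_2)(b_\star-b_2)\big(1+\bigO(b_\star-b_2)\big)$ by \eqref{def of fb2}--\eqref{cb2 and cb2p2p}, the divergent $\log|b_\star-b_2|$ coming from $I_{2,1}$ cancels the one hidden in $-\tfrac{2\nu}{\pi t}\log r$, the $\arg(b_2/b_\star)$ term tends to $0$, and the finite part converges to $-\tfrac{2\nu}{\pi t}\log\big|(b_2-b_1)f'(b_2)\big|$; by Lemma~\ref{lemma: I22p3p}, $I_{2,2}^{(3)}=\bigO(b_\star-b_2)$, so it also drops out in the limit.

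The reason this limiting procedure is legitimate is that $I_{2}$ is, by its definition \eqref{def of I2} as an integral over the fixed segment $[0,b_{2}]$, completely independent of the radius of $\mathcal{D}_{b_{2}}$, hence of $b_\star$; the coefficient of $\log s$ and the constant $\mathcal{I}_{2,2}^{(2)}(1)$ are likewise $b_\star$-independent. Therefore the sum of all $b_\star$-dependent contributions and $I_{2,2}^{(3)}$ must itself be $b_\star$-independent modulo the $\bigO(s^{-\rho/2})$ error, and its value can be computed by passing to the limit $b_\star\to b_2$ as above. The step requiring the most care is exactly this interchange of limits: one must verify that the $\bigO(s^{-\rho/2})$ remainder of Lemma~\ref{lemma: splitting of I22} is uniform in $b_\star$ as $b_\star\to b_2$, and that the $\bigO(b_\star-b_2)$ bound of Lemma~\ref{lemma: I22p3p} (as well as the analogous finite-part estimate for the explicit terms) is uniform in $s$, so that shrinking the disk does not degrade the order of the error. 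Granting this, collecting the surviving terms yields exactly
\[
I_{2}=-\frac{\nu\rho}{\pi t}\log s-\frac{2\nu}{\pi t}\log\big|(b_{2}-b_{1})f'(b_{2})\big|+\mathcal{I}_{2,2}^{(2)}(1)+\bigO(s^{-\rho/2}),
\]
which is the assertion of the lemma.
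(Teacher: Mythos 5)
Your argument is correct and follows essentially the same route as the paper: the same decomposition $I_{2}=I_{2,1}+I_{2,2}^{(1)}+I_{2,2}^{(2)}+I_{2,2}^{(3)}+\bigO(s^{-\rho/2})$, the same cancellation of the $s^{\rho}$ terms, and the same evaluation of the $b_{\star}$-dependent constant by exploiting that the disk radius is arbitrary and letting $b_{\star}\to b_{2}$, using $f(b_{\star})=f'(b_{2})(b_{\star}-b_{2})(1+\bigO(b_{\star}-b_{2}))$ and Lemma \ref{lemma: I22p3p}. The only (cosmetic) deviations are that you fix $m=1$ at the outset instead of carrying a general $m$ and specializing at the end, and that you spell out the uniformity caveat behind the interchange of the limits $s\to+\infty$ and $b_{\star}\to b_{2}$, which the paper leaves implicit.
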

\begin{proof}
By combining Lemmas \ref{lemma: asymp for I21}, \ref{lemma: splitting of I22}, \ref{lemma: asymp for I22p1p}, \ref{lemma: asymp for I22p2p} and \ref{lemma: I22p3p}, as $s \to + \infty$ we have 
\begin{align}
 I_{2} = & \; I_{2,1} + I_{2,2} = I_{2,1} + I_{2,2}^{(1)} + I_{2,2}^{(2)} + I_{2,2}^{(3)} + \bigO(s^{-\frac{\rho}{2}}) \label{I2 asymp in proof mess} \\
= & \; \frac{2\nu}{\pi t} \log \left| \frac{b_{\star}- b_{2}}{b_{\star}-b_{1}} \right| - \frac{\im (ih(b_{2})) - \im (ih(b_{\star}))}{\pi t}s^{\rho} + \frac{c_{5}+c_{6}}{\pi t} \arg \left(\frac{b_{2}}{b_{\star}}\right) \nonumber \\
& \; - \frac{s^{\rho}}{\pi t}\big[ \im(ih(b_{\star}))-\im(ih(b_{2})) \big] - \frac{\nu  \rho}{\pi t}\log s - \frac{2\nu}{\pi t}\log |f(b_{\star})| + \frac{2 \nu}{\pi t} \log |m| + \mathcal{I}_{2,2}^{(2)}(m)  + I_{2,2}^{(3)} + \bigO(s^{-\frac{\rho}{2}}) \nonumber \\
= & \; - \frac{\nu  \rho}{\pi t}\log s + \frac{2\nu}{\pi t} \log \left| \frac{b_{\star}- b_{2}}{(b_{\star}-b_{1})f(b_{\star})} \right| + \frac{c_{5}+c_{6}}{\pi t} \arg \left(\frac{b_{2}}{b_{\star}}\right) + \frac{2 \nu}{\pi t} \log |m| + \mathcal{I}_{2,2}^{(2)}(m) + I_{2,2}^{(3)} + \bigO(s^{-\frac{\rho}{2}}). \nonumber
\end{align}
The term of order $\mathcal O(1)$ as $s\to +\infty$ in this expansion is given by
\begin{align}\label{term of order 1 what a mess}
\frac{2\nu}{\pi t} \log \left| \frac{b_{\star}- b_{2}}{(b_{\star}-b_{1})f(b_{\star})} \right| + \frac{c_{5}+c_{6}}{\pi t} \arg \left(\frac{b_{2}}{b_{\star}}\right) + \frac{2 \nu}{\pi t} \log |m| + \mathcal{I}_{2,2}^{(2)}(m) + \mathcal{I}_{2,2}^{(3)}.
\end{align}
We simplify this term by noticing that the disks can be chosen arbitrarily small (though independent of $s$). Therefore it is possible to evaluate \eqref{term of order 1 what a mess} simply by taking the limit $b_{\star} \to b_{2}$. As $b_{\star} \to b_{2}$, we have
\begin{align*}
& \frac{b_{\star}- b_{2}}{(b_{\star}-b_{1})f(b_{\star})} = \frac{1}{(b_{2}-b_{1})f'(b_2)} + \bigO(b_{\star}-b_{2}), \qquad \arg \left(\frac{b_{2}}{b_{\star}}\right) = \bigO(b_{\star}-b_{2}), \qquad I_{2,2}^{(3)} = \bigO(b_{\star}-b_{2}),
\end{align*}
where we have used Lemma \ref{lemma: I22p3p}. Therefore, taking the limit $b_{\star} \to b_{2}$ in \eqref{term of order 1 what a mess} and then substituting in \eqref{I2 asymp in proof mess}, we obtain
\begin{align}\label{lol7}
I_{2} = - \frac{\nu  \rho}{\pi t}\log s - \frac{2\nu}{\pi t} \log \left| (b_{2}-b_{1})f'(b_2) \right| + \frac{2 \nu}{\pi t} \log |m| + \mathcal{I}_{2,2}^{(2)}(m) + \bigO(s^{-\frac{\rho}{2}}), \qquad \mbox{ as } s \to + \infty.
\end{align}
We have the freedom to choose $m \in \mathbb{C}\setminus \mathbb{R}^{-}$. The claim follows after setting $m=1$ in \eqref{lol7}.
\end{proof}
\begin{lemma}
For $j=1,2$, we have
\begin{align}
\partial_{t} \log \det \left( 1 - (1-t)\mathbb{K}^{(j)}\Big|_{[0,s]} \right) = & \; \frac{\re b_{2}}{\pi \, \rho \, t} s^{\rho} - \frac{\nu  \rho}{\pi t}\log s + \frac{c_{1}c_{6}-c_{2}c_{5}}{t(c_{1}+c_{2})} - \frac{2\nu}{\pi t} \log \left| (b_{2}-b_{1})f'(b_2) \right| \nonumber \\
&  + \partial_{t}\big[ \log \big( G(1+i\nu)G(1-i\nu) \big) \big] + \bigO(s^{-\frac{\rho}{2}}), \qquad \mbox{as }s \to + \infty, \label{asymp der log Det t}
\end{align}
where $G$ is Barnes' $G$-function.
\end{lemma}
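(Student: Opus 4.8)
The plan is to first feed the results obtained so far into the decomposition \eqref{diff identity t first lemma asymptotics}, namely $\partial_{t}\log\det(1-(1-t)\mathbb{K}^{(j)}|_{[0,s]}) = I_{1} + I_{2} + 2\re I_{b_{2}} + \bigO(e^{-cs^{\rho}})$. Substituting the asymptotics of Lemma~\ref{lemma:I1}, Lemma~\ref{lemma: asymp for I2} and Lemma~\ref{lemma: asymp for Ib2 final}, I obtain, as $s\to+\infty$,
\[
\partial_{t}\log\det\big(1-(1-t)\mathbb{K}^{(j)}|_{[0,s]}\big) = \frac{\re b_{2}}{\pi\rho t}s^{\rho} + \frac{c_{1}c_{6}-c_{2}c_{5}}{t(c_{1}+c_{2})} - \frac{\nu\rho}{\pi t}\log s - \frac{2\nu}{\pi t}\log\big|(b_{2}-b_{1})f'(b_{2})\big| + \mathcal{C}(t) + \bigO(s^{-\rho/2}),
\]
where $\mathcal{C}(t):=\mathcal{I}_{2,2}^{(2)}(1) + 2\re\mathcal{I}_{b_{2}}$. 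The decisive structural fact is that $\mathcal{C}(t)$ is a \emph{universal} constant: by Lemma~\ref{lemma: asymp for Ib2 final} the quantity $\mathcal{I}_{b_{2}}$ depends only on $t$, while $\mathcal{I}_{2,2}^{(2)}(1)$ depends only on $t$ because it is an integral built solely from $\Phi_{\mathrm{PC}}$, which depends only on $q=\sqrt{1-t}$. Hence $\mathcal{C}(t)$ is the same function of $t$ for $j=1$ and $j=2$, and is independent of $\theta,\alpha$ (resp.\ of $r,q,\nu_{1},\dots,\nu_{r},\mu_{1},\dots,\mu_{q}$), so it can be pinned down in any one convenient instance.

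\textbf{Step 2: identification by comparison with the Bessel process.} I would take $j=2$ and specialize $\theta=1$. By \eqref{Bessel and Wright kernel} one has $\det(1-(1-t)\mathbb{K}^{\mathrm{Wr}}|_{[0,s]})|_{\theta=1} = \det(1-(1-t)\mathbb{K}^{\mathrm{Be}}|_{[0,4s]}) = \mathbb{E}\big[e^{-2\pi\nu N^{\mathrm{Be}}(4s)}\big]$, for which the complete asymptotics \eqref{moment asymp Bessel}--\eqref{mu sigma2 Bessel} are known. Taking the logarithm of \eqref{moment asymp Bessel} with $s$ replaced by $4s$ and differentiating in $t$ (using $\partial_{t}\nu = -\tfrac{1}{2\pi t}$) gives, after collecting the $\log 2$ terms,
\[
\partial_{t}\log\det\big(1-(1-t)\mathbb{K}^{\mathrm{Be}}|_{[0,4s]}\big) = \frac{2\sqrt{s}}{\pi t} - \frac{\nu\log s}{2\pi t} - \frac{\alpha}{2t} - \frac{3\nu\log 2}{\pi t} + \partial_{t}\big[\log\big(G(1+i\nu)G(1-i\nu)\big)\big] + \bigO(s^{-1/2}\log s).
\]
On the other hand, the Step~1 formula with $j=2$, $\theta=1$ is evaluated from the constants $c_{1}=c_{2}=1$, $c_{3}=-2$, $c_{4}=0$, $c_{5}=\tfrac{\alpha}{2}$, $c_{6}=-\tfrac{\alpha}{2}$, $\rho=\tfrac12$, $b_{2}=1=-b_{1}$, $f'(b_{2})=\sqrt{2}$; here $\tfrac{\re b_{2}}{\pi\rho t}s^{\rho} = \tfrac{2\sqrt{s}}{\pi t}$, $\tfrac{c_{1}c_{6}-c_{2}c_{5}}{t(c_{1}+c_{2})} = -\tfrac{\alpha}{2t}$, and $-\tfrac{2\nu}{\pi t}\log|(b_{2}-b_{1})f'(b_{2})| = -\tfrac{2\nu}{\pi t}\log(2\sqrt{2}) = -\tfrac{3\nu\log 2}{\pi t}$, so the formula reproduces exactly the same $s$- and $\alpha$-dependent terms, with $\mathcal{C}(t)$ in place of $\partial_{t}\log(G(1+i\nu)G(1-i\nu))$.

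\textbf{Step 3: conclusion.} The two displayed expansions describe the same function of $(s,t)$, so their coefficients must coincide; in particular all $s$-dependent terms and the $\alpha$-term cancel against each other (a welcome consistency check), and what remains forces $\mathcal{C}(t) = \partial_{t}\big[\log\big(G(1+i\nu)G(1-i\nu)\big)\big]$. Substituting this identity back into the Step~1 formula yields \eqref{asymp der log Det t} for $j=2$, and since $\mathcal{C}(t)$ was shown in Step~1 to be the \emph{same} universal function for $j=1$, the formula \eqref{asymp der log Det t} holds for $j=1$ as well.

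\textbf{Main obstacle.} The real difficulty is not computational: the (regularized) parabolic-cylinder integrals defining $\mathcal{I}_{2,2}^{(2)}(1)$ and $\mathcal{I}_{b_{2}}$ are forbidding to evaluate directly in closed form. The key, already secured by Lemma~\ref{lemma: asymp for Ib2 final} together with the way $\mathcal{I}_{2,2}^{(2)}$ is organized, is the observation that their combination $\mathcal{C}(t)$ carries no dependence on the remaining model parameters, which lets one bypass the explicit evaluation entirely by matching against the known Bessel moment asymptotics of \cite{Charlier}. The only further technical point to address carefully is that the error term $\bigO(s^{-1/2}\log s)$ in \eqref{moment asymp Bessel} survives differentiation in $t$: this is standard, using that \eqref{moment asymp Bessel} is uniform for $\nu$ in compact sets, so that a Cauchy estimate controls $\partial_{\nu}$ of the remainder by the same order on a slightly smaller set.
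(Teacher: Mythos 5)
Your proposal is correct and follows essentially the same route as the paper: assemble $I_{1}+I_{2}+2\,\re I_{b_{2}}$, observe that the unknown $t$-dependent constant $\chi(t)=2\,\re\mathcal{I}_{b_{2}}+\mathcal{I}_{2,2}^{(2)}(1)$ carries no dependence on the remaining model parameters, and identify it with $\partial_{t}\log\big(G(1+i\nu)G(1-i\nu)\big)$ by matching the $\theta=1$ (Bessel) specialization against \eqref{moment asymp Bessel}. The only cosmetic differences are that the paper also specializes the $j=1$ case (to $r=1$, $q=0$, $\nu_{1}=0$) and sets $\alpha=0$ instead of keeping $\alpha$ general and invoking universality of $\chi(t)$ as you do, and it justifies differentiating \eqref{moment asymp Bessel} in $t$ by appealing to the analysis of \cite{Charlier} rather than to a Cauchy estimate.
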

\begin{proof}
It follows from Lemmas \ref{lemma:I1}, \ref{lemma: asymp for Ib2 final} and \ref{lemma: asymp for I2} that
\begin{align}
& \partial_{t} \log \det \left( 1 - (1-t)\mathbb{K}^{(j)}\big|_{[0,s]} \right) =  I_{1} + I_{2} + 2 \, \re I_{b_{2}} +\bigO(e^{-c\rho}) \nonumber \\
& = \frac{\re b_{2}}{\pi \, \rho \, t} s^{\rho} - \frac{\nu  \rho}{\pi t}\log s + \frac{c_{1}c_{6}-c_{2}c_{5}}{t(c_{1}+c_{2})} - \frac{2\nu}{\pi t} \log \left| (b_{2}-b_{1})f'(b_2) \right| + \chi(t) + \bigO(s^{-\frac{\rho}{2}}) \label{lol3}
\end{align}
where
\begin{align*}
\chi(t) := 2 \, \re \mathcal{I}_{b_{2}} + \mathcal{I}_{2,2}^{(2)}(1).
\end{align*}
It is rather difficult to obtain an explicit expression for $\chi(t)$ from a direct analysis. However, it follows from Lemmas \ref{lemma: asymp for Ib2 final} and \ref{lemma: asymp for I2} that $\chi(t)$ depends on $t$ but is independent of the other parameters. More precisely, for $j=1$, $\chi(t)$ is independent of $r$, $q$, $\nu_{1},...,\nu_{r}$, $\mu_{1},...,\mu_{q}$, and for $j=2$, $\chi(t)$ is independent of $\alpha$ and $\theta$. We will take advantage of that by using the known result from \cite{Charlier} for the Bessel point process given by \eqref{moment asymp Bessel}. If $j=1$, then we set $r=1$, $q=0$ and $\nu_{1} = 0$, and if $j=2$, we set $\theta = 1$ and $\alpha = 0$. In these cases, $\re b_{2} = 1$, $\rho = \frac{1}{2}$, $c_{1} = c_{2} = 1$, $c_{5} = c_{6} = 0$, $f'(b_{2}) = \sqrt{2}$ and \eqref{lol3} becomes
\begin{align}\label{lol8}
\partial_{t} \log \det \left( 1 - (1-t)\mathbb{K}^{(j)}\big|_{[0,s]} \right) =  \frac{2}{\pi \, t} \sqrt{s} - \frac{\nu}{2\pi t}\log s - \frac{\nu}{\pi t} \log 8 + \chi(t) + \bigO(s^{-\frac{1}{4}}).
\end{align}
On the other hand, the asymptotics \eqref{moment asymp Bessel} can be differentiated with respect to $t$ (this follows from the analysis done in \cite{Charlier}), and we get as $s\to +\infty$,
\begin{align}
& \partial_{t} \log \det \left( 1 - (1-t)\mathbb{K}_{\mathrm{Be}}\Big|_{[0,4s]} \right) = \partial_{t}\left(-4\nu \sqrt{s} + \nu^{2} \log(8 \sqrt{s}) + \log \big( G ( 1+i\nu ) G ( 1-i\nu ) \big) + \bigO\big( \tfrac{\log s}{\sqrt{s}} \big)\right) \label{lol9} \\
& = \frac{2}{\pi t}\sqrt{s} - \frac{\nu}{2\pi t}\log s - \frac{\nu}{\pi t}\log 8 + \partial_{t}\big( \log \big( G(1+i\nu)G(1-i\nu) \big) + \bigO\big( \tfrac{\log s}{\sqrt{s}} \big). \nonumber
\end{align}
By \eqref{Bessel and Wright kernel} and \eqref{Mejer G Wright relation kernel}, the left-hand sides of \eqref{lol8} and \eqref{lol9} are equal, and this yields the relation
\begin{align*}
\chi(t) = \partial_{t}\big( \log \big( G(1+i\nu)G(1-i\nu) \big)\big).
\end{align*}
\end{proof}
\begin{lemma}
As $s \to + \infty$, we have
\begin{align*}
\log \det \left( 1 - (1-t)\mathbb{K}^{(j)}\Big|_{[0,s]} \right) = & \; -\frac{2 \nu \re b_{2}}{\rho} s^{\rho} + \nu^{2}\rho \log s -2\pi \nu \frac{c_{1}c_{6}-c_{2}c_{5}}{c_{1}+c_{2}} + 2 \nu^{2} \log \left| (b_{2}-b_{1})f'(b_2) \right| \\
& + \log \big( G(1+i\nu)G(1-i\nu) \big) + \bigO(s^{-\frac{\rho}{2}}) 
\end{align*}
\end{lemma}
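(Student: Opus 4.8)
The plan is to derive the formula by integrating in $t$ the differential identity from the preceding lemma, equation \eqref{asymp der log Det t}, starting from the value at $t=1$. The key preliminary observation is that at $t=1$ the operator $1-(1-t)\mathbb{K}^{(j)}|_{[0,s]}$ equals the identity, so that $\log\det(1-(1-t)\mathbb{K}^{(j)}|_{[0,s]})|_{t=1}=0$ (exactly as in \eqref{integration in t}). Hence
\[
\log\det\left(1-(1-t)\mathbb{K}^{(j)}\Big|_{[0,s]}\right)=\int_{1}^{t}\partial_{\tilde t}\log\det\left(1-(1-\tilde t)\mathbb{K}^{(j)}\Big|_{[0,s]}\right)d\tilde t,
\]
and it remains to substitute the expansion \eqref{asymp der log Det t} into this integral.

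First I would insert \eqref{asymp der log Det t}, with $t$ and $\nu$ replaced by $\tilde t$ and $\tilde\nu:=-\frac{1}{2\pi}\log\tilde t$. Because \eqref{asymp der log Det t} is uniform for $\tilde t$ in the (compact) interval with endpoints $1$ and $t$, the error $\bigO(s^{-\rho/2})$ integrates to $\bigO(s^{-\rho/2})$, and I am left with integrating the explicit terms. Here I would use that $b_{1},b_{2},f'(b_{2}),\rho$ and $c_{1},\dots,c_{6}$ do not depend on $t$, so that the whole $t$-dependence of the explicit part of \eqref{asymp der log Det t} is through $\tilde\nu$; the substitution $\frac{d\tilde t}{\tilde t}=-2\pi\,d\tilde\nu$ (sending $\tilde t=1\mapsto\tilde\nu=0$, $\tilde t=t\mapsto\tilde\nu=\nu$) then reduces everything to the elementary primitives $\int_{1}^{t}\frac{d\tilde t}{\tilde t}=-2\pi\nu$ and $\int_{1}^{t}\frac{\tilde\nu}{\tilde t}\,d\tilde t=-\pi\nu^{2}$. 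A term-by-term integration gives $-\frac{2\nu\re b_{2}}{\rho}s^{\rho}$ from the $s^{\rho}$-term, $\nu^{2}\rho\log s$ from the $\log s$-term, $-2\pi\nu\frac{c_{1}c_{6}-c_{2}c_{5}}{c_{1}+c_{2}}$ from the constant term, $2\nu^{2}\log|(b_{2}-b_{1})f'(b_{2})|$ from the $\log|(b_{2}-b_{1})f'(b_{2})|$-term, and $\log(G(1+i\nu)G(1-i\nu))$ from the $\partial_{\tilde t}[\log(G(1+i\tilde\nu)G(1-i\tilde\nu))]$-term (using $G(1)=1$); summing these reproduces the claimed asymptotics.

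The step I expect to require the most care is controlling the integrated error: I need \eqref{asymp der log Det t} to be uniform in $\tilde t$ over a compact subinterval of $(0,+\infty)$, which traces back to the uniformity for $t$ in compact subsets of $(0,+\infty)$ of the steepest descent analysis of Section \ref{Section: steepest descent} and of Lemmas \ref{lemma:I1}, \ref{lemma: asymp for Ib2 final} and \ref{lemma: asymp for I2}. I would also note that there is no obstruction at the endpoint $\tilde t=1$: there $\tilde\nu=0$ and $\sqrt{1-\tilde t}=0$, the differential identity \eqref{recall the diff id in t} remains valid throughout $(0,+\infty)$, and every quantity in \eqref{asymp der log Det t} stays finite, so the integral is regular. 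Once this is in place the computation is routine, and as a bonus it re-confirms the $s$-dependent part $-\frac{2\nu\re b_{2}}{\rho}s^{\rho}+\nu^{2}\rho\log s$ already found via the differential identity in $s$ in \eqref{final asymptotics}.
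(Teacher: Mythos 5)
Your proposal is correct and follows essentially the same route as the paper: the paper's proof is precisely "integrate \eqref{asymp der log Det t} in $t$" from the base point $t=1$ (where the Fredholm determinant equals $1$, as in \eqref{integration in t}), and your term-by-term computation via $\tilde\nu=-\frac{1}{2\pi}\log\tilde t$, together with the uniformity of the error for $t$ in compact subsets of $(0,+\infty)$, fills in the same routine details.
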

\begin{proof}
It suffices to integrate \eqref{asymp der log Det t} in $t$.
\end{proof}
Thus the constants $C = C^{(j)}$, $j=1,2$ of Theorems \ref{thm:expmoments Wrights} and \ref{thm:expmoments Meijer} are given by
\begin{align*}
\log C = -2\pi \nu \frac{c_{1}c_{6}-c_{2}c_{5}}{c_{1}+c_{2}} + 2 \nu^{2} \log \left| (b_{2}-b_{1})f'(b_2)\right| + \log \big( G(1+i\nu)G(1-i\nu) \big).
\end{align*}
This expression can be computed more explicitly by substituting the values for the constants $c_{1},c_{2},c_{5},c_{6}$ given at the beginning of Section \ref{subsection: saddles point analysis}, and the values \eqref{def of b2} and \eqref{cb2 and cb2p2p} for $b_{2}$, $b_{1}$, and $f'(b_2)$.
\appendix
\section{Parabolic Cylinder model RH problem}\label{appendix:PC}

\begin{figure}
\begin{center}
\begin{tikzpicture}
\node at (0,0) {};
\fill (0,0) circle (0.1cm);
\node at (0,-0.3) {$0$};

\draw[line width = 0.45mm,->-=0.55,black] (0,0) to [out=45, in=180+45] (2,2);
\draw[line width = 0.45mm,->-=0.55,black] (0,0) to [out=135, in=180+135] (-2,2);
\draw[line width = 0.45mm,->-=0.55,black] (0,0) to [out=-45, in=180-45] (2,-2);
\draw[line width = 0.45mm,->-=0.55,black] (0,0) to [out=-135, in=180-135] (-2,-2);
\draw[line width = 0.45mm,->-=0.55,black] (-3,0) to [out=0, in=-180] (0,0);

\end{tikzpicture}
\end{center}
\caption{\label{fig: contour for PC}\textit{The jump contour $\Sigma_{\mathrm{PC}}$ for the model RH problem for $\Phi_{\mathrm{PC}}$.}}
\end{figure}
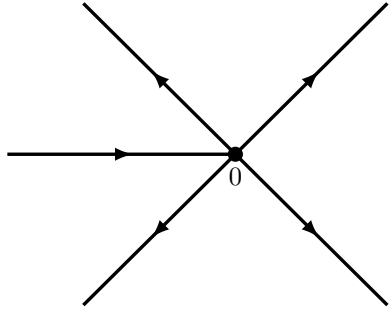
Let $q \in \mathbb{T} = [0,1) \cup i[0,+\infty)$ and let 
\begin{align*}
\nu := - \frac{1}{2\pi} \log(1-q^{2}) \in \mathbb{R}.
\end{align*}
Consider the following model RH problem.
\subsubsection*{RH problem for $\Phi_{\rm PC}$}
\begin{itemize}
\item[(a)] $\Phi_{\mathrm{PC}} : \mathbb{C}\setminus \Sigma_{\mathrm{PC}} \to \mathbb{C}^{2 \times 2}$ is analytic, where
\begin{align*}
\Sigma_{\mathrm{PC}} = \mathbb{R}^{-} \cup \bigcup_{j=0}^{3} e^{\frac{\pi i}{4} + j \frac{\pi i}{2}} \mathbb{R}^{+},
\end{align*}
as shown in Figure \ref{fig: contour for PC}.
\item[(b)] With the contour $\Sigma_{\mathrm{PC}}$ oriented as in Figure \ref{fig: contour for PC}, $\Phi_{\mathrm{PC}}$ satisfies the jumps
\begin{align*}
& \Phi_{\mathrm{PC},+}(z) = \Phi_{\mathrm{PC},-}(z) \begin{pmatrix}
1 & -q \\
0 & 1
\end{pmatrix}, & & z \in e^{\frac{\pi i}{4}}\mathbb{R}^{+}, \\
& \Phi_{\mathrm{PC},+}(z) = \Phi_{\mathrm{PC},-}(z) \begin{pmatrix}
1 & 0 \\
- \frac{q}{1-q^{2}} & 1
\end{pmatrix}, & & z \in e^{\frac{3\pi i}{4}}\mathbb{R}^{+}, \\
& \Phi_{\mathrm{PC},+}(z) = \Phi_{\mathrm{PC},-}(z) \begin{pmatrix}
1 & \frac{q}{1-q^{2}} \\
0 & 1
\end{pmatrix}, & & z \in e^{-\frac{3\pi i}{4}}\mathbb{R}^{+}, \\
& \Phi_{\mathrm{PC},+}(z) = \Phi_{\mathrm{PC},-}(z) \begin{pmatrix}
1 & 0 \\
q & 1
\end{pmatrix}, & & z \in e^{-\frac{\pi i}{4}}\mathbb{R}^{+}, \\
& \Phi_{\mathrm{PC},+}(z) = \Phi_{\mathrm{PC},-}(z) \begin{pmatrix}
\frac{1}{1-q^{2}} & 0 \\
0 & 1-q^{2}
\end{pmatrix}, & & z \in \mathbb{R}^{-}.
\end{align*}
\item[(c)] As $z \to 0$, we have $\Phi_{\mathrm{PC}}(z) = \bigO(1)$.

As $z \to \infty$, $\Phi_{\mathrm{PC}}$ admits an asymptotic series of the form
\begin{align}\label{Phi PC at inf}
\Phi_{\mathrm{PC}}(z) \sim \bigg( I + \sum_{k=1}^{\infty} \frac{\Phi_{\mathrm{PC},k}(q)}{z^{k}} \bigg) z^{-i\nu \sigma_{3}} e^{\frac{i z^{2}}{4}\sigma_{3}},
\end{align}
where the principal branch is taken for $z^{\pm i\nu}$, and where
\begin{align}
& \Phi_{\mathrm{PC},1}(q) = \begin{pmatrix}
0 & \beta_{12}(q) \\
\beta_{21}(q) & 0
\end{pmatrix},  \label{PhiPC1 and beta def} \\
& \Phi_{\mathrm{PC},2}(q) = \begin{pmatrix}
\frac{(1+i\nu)\nu}{2} & 0 \\
0 & \frac{(1-i\nu)\nu}{2}
\end{pmatrix}, \nonumber \\
& \Phi_{\mathrm{PC},2k-1}(q) = \begin{pmatrix}
0 & \Phi_{\mathrm{PC},2k-1}(q)_{12} \\
\Phi_{\mathrm{PC},2k-1}(q)_{21} & 0
\end{pmatrix}, \qquad k \geq 2, \nonumber \\
& \Phi_{\mathrm{PC},2k}(q) = \begin{pmatrix}
\Phi_{\mathrm{PC},2k}(q)_{11} & 0 \\
0 & \Phi_{\mathrm{PC},2k}(q)_{22}
\end{pmatrix}, \hspace{1.8cm} k \geq 2, \nonumber
\end{align}
where
\begin{align}\label{def of beta 12 and beta 21}
& \beta_{12}(q) = \frac{e^{-\frac{3\pi i}{4} }e^{-\frac{\pi \nu}{2}}\sqrt{2\pi}}{q \Gamma(i \nu)} \qquad \mbox{ and } \qquad \beta_{21}(q) = \frac{e^{\frac{3\pi i}{4} }e^{-\frac{\pi \nu}{2}}\sqrt{2\pi}}{q \Gamma(-i \nu)}.
\end{align}
\end{itemize}
The solution $\Phi_{\mathrm{PC}}(z) = \Phi_{\mathrm{PC}}(z;q)$ can be expressed in terms of the parabolic cylinder function $D_{a}(z)$ (see \cite[Chapter 12]{NIST} for a definition). RH problems related to parabolic cylinder functions were first studied in \cite{I1981}, and first used in a steepest descent analysis in \cite{DeiftZhou}. The solution to the above RH problem for $q \in [0,1)$ is known and can be found in e.g. \cite[Appendix B]{Lenells}. However, for $q \in i(0,+\infty)$, the RH problem for $\Phi_{\mathrm{PC}}$ differs from the one of \cite{Lenells} and, to the best of our knowledge, has not appeared before. Therefore, we construct its explicit solution here.
\begin{lemma}\label{lemma:sol mod Phi PC}
The unique solution to the model RH problem for $\Phi_{\mathrm{PC}}$ is given by 
\begin{align}\label{def of Phi PC in terms of Psi}
\Phi_{\mathrm{PC}}(z) = \Psi(z)B(z)^{-1},
\end{align}
where
\begin{align*}
B(z) = \begin{cases} 
\begin{pmatrix} 1 & -q \\ 0  & 1 \end{pmatrix}, & \arg z \in (0, \frac{\pi}{4}), 	
	\\
\begin{pmatrix} 1 & 0 \\  \frac{q}{1 - q^2}  & 1 \end{pmatrix}, & \arg z \in (\frac{3\pi}{4}, \pi), 
	\\
\begin{pmatrix} 1 & \frac{q}{1 - q^2}  \\ 0 & 1 \end{pmatrix}, & \arg z \in (-\pi, -\frac{3\pi}{4}), 
	\\
\begin{pmatrix} 1 & 0 \\ -q & 1 \end{pmatrix}, & \arg z \in (-\frac{\pi}{4},0),
	\\
I, & \mbox{elsewhere},
\end{cases}
\end{align*}
and
\begin{align}\label{psiqzdef}
  \Psi(z) = \begin{pmatrix} \psi_{11}(z) & \frac{\bigl(-i\frac{d}{dz} + \frac{z}{2}\bigr)\psi_{22}(z)}{ \beta_{21}(q)} \\
\frac{\bigl(i\frac{d}{dz} + \frac{z}{2}\bigr)\psi_{11}(z)}{ \beta_{12}(q)}  &  \psi_{22}(z) \end{pmatrix}, \qquad q \in \mathbb{T},\ z \in \C\setminus \R,  
\end{align}
where the functions $\psi_{11}$ and $\psi_{22}$ are defined by
\begin{subequations}
\begin{align}\label{psi11def}
& \psi_{11}(z) = \begin{cases}
e^{\frac{\pi \nu}{4}} D_{-i\nu}(e^{-\frac{\pi i}{4}} z), & \im z > 0, \\
e^{-\frac{3\pi \nu}{4}} D_{-i\nu}(e^{\frac{3\pi i}{4}} z), & \im z < 0,
\end{cases}
	\\ \label{psi22def}
& \psi_{22}(z) = \begin{cases}
e^{-\frac{3\pi \nu}{4}} D_{i\nu}(e^{-\frac{3\pi i}{4}} z), & \im z > 0, \\
e^{\frac{\pi \nu}{4}} D_{i\nu}(e^{\frac{\pi i}{4}} z), & \im z < 0.
\end{cases}
\end{align}
\end{subequations}
\end{lemma}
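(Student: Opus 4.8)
The plan is to establish Lemma \ref{lemma:sol mod Phi PC} in two logically separate parts: a soft uniqueness argument, valid for every $q\in\mathbb{T}$, and a direct verification that the explicit formula \eqref{psiqzdef}--\eqref{psi22def} solves the RH problem for $\Phi_{\mathrm{PC}}$. For uniqueness, I would first note that every jump matrix in part (b) has determinant $1$ (on $\R^{-}$ the determinant is $\tfrac{1}{1-q^{2}}(1-q^{2})=1$, and the others are unipotent). Hence for any solution $\Phi_{\mathrm{PC}}$ the scalar function $\det\Phi_{\mathrm{PC}}$ has no jumps, is $\bigO(1)$ as $z\to 0$, and tends to $1$ as $z\to\infty$ by \eqref{Phi PC at inf}, so $\det\Phi_{\mathrm{PC}}\equiv 1$ by Liouville's theorem and $\Phi_{\mathrm{PC}}^{-1}$ is also $\bigO(1)$ near $0$. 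If $\Phi_{1},\Phi_{2}$ are two solutions, then $\Phi_{1}\Phi_{2}^{-1}$ has no jump across $\Sigma_{\mathrm{PC}}$ (the jumps act on the right and cancel), is bounded near $0$ so the singularity there is removable, and tends to $I$ at infinity; Liouville then forces $\Phi_{1}\equiv\Phi_{2}$. In particular, for $q\in[0,1)$ our formula must agree with the known one of \cite[Appendix B]{Lenells}, so the genuinely new content is the case $q\in i(0,+\infty)$.

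For the construction I would use the standard properties of the parabolic cylinder function $D_{a}$ (see \cite[Ch.~12]{NIST}): it is entire in $z$ and solves Weber's equation; it obeys the derivative identity $D_{a}'(z)=\tfrac{z}{2}D_{a}(z)-D_{a+1}(z)$ and the ladder relation $\bigl(\tfrac{d}{dz}+\tfrac{z}{2}\bigr)D_{a}(z)=a\,D_{a-1}(z)$; it has the asymptotics $D_{a}(z)=z^{a}e^{-z^{2}/4}\bigl(1+\bigO(z^{-2})\bigr)$ as $z\to\infty$ with $|\arg z|<\tfrac{3\pi}{4}$; and it satisfies a connection formula expressing $D_{a}(z)$ as an explicit linear combination of $D_{-a-1}(iz)$ and $D_{-a-1}(-iz)$ with coefficients built from $\Gamma(a+1)$ and $e^{\pm i\pi a/2}$. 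Condition (a) is then immediate from entirety of $D_{a}$. Because of the ladder relation, the combinations $\bigl(i\tfrac{d}{dz}\pm\tfrac{z}{2}\bigr)\psi_{jj}$ in \eqref{psiqzdef} are again constant multiples of parabolic cylinder functions of shifted index, so $\Psi$ closes up as a $2\times2$ first-order system with coefficient matrix linear in $z$; in particular $\det\Psi$ is constant, and decoupling the system returns Weber's equation for the diagonal entries, confirming that \eqref{psi11def}--\eqref{psi22def} are admissible.

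Condition (c) I would check by taking $z\to\infty$ near the imaginary axis, where $B=I$ so $\Phi_{\mathrm{PC}}=\Psi$: inserting $D_{\mp i\nu}(z)\sim z^{\mp i\nu}e^{-z^{2}/4}$ and the derivative identity into the four entries of $\Psi$ produces the prefactor $z^{-i\nu\sigma_{3}}e^{iz^{2}/4\sigma_{3}}$ together with an $\bigO(z^{-1})$ off-diagonal correction, and the leading order of the off-diagonal entries gives $(\Phi_{\mathrm{PC},1})_{21}=\nu/\beta_{12}(q)$ and $(\Phi_{\mathrm{PC},1})_{12}=\nu/\beta_{21}(q)$. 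The explicit values \eqref{def of beta 12 and beta 21} are precisely those for which $\beta_{12}(q)\beta_{21}(q)=\nu$ — equivalently, using $\Gamma(i\nu)\Gamma(-i\nu)=\pi/(\nu\sinh\pi\nu)$ and $e^{-2\pi\nu}=1-q^{2}$ — so that $\Phi_{\mathrm{PC},1}$ is off-diagonal with the entries in \eqref{PhiPC1 and beta def}; carrying the expansion one order further yields the stated diagonal $\Phi_{\mathrm{PC},2}$, and the even/odd pattern of the higher coefficients follows from the $z\mapsto -z$ symmetry of the construction. Since $D_{a}$ is entire, $\Psi$ and $B$ are bounded near $0$, so $\Phi_{\mathrm{PC}}(z)=\bigO(1)$ there.

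It remains to verify the jumps in (b), and this is where I expect the only real difficulty. Since $\Psi$ is analytic off $\R$ and $B$ is piecewise constant, $\Phi_{\mathrm{PC}}=\Psi B^{-1}$ can only jump on $\Sigma_{\mathrm{PC}}$ together with $\R^{+}$; on $\R^{+}$ one checks the cancellation $\Psi_{+}B_{+}^{-1}=\Psi_{-}B_{-}^{-1}$, and on each of the five rays of $\Sigma_{\mathrm{PC}}$ one checks that $\Phi_{\mathrm{PC},-}^{-1}\Phi_{\mathrm{PC},+}$ equals the prescribed matrix. All these identities reduce to substituting the connection formula, and its rotations $z\mapsto e^{\pm i\pi/2}z$, for $D_{\pm i\nu}$ on the relevant half-lines into \eqref{psi11def}--\eqref{psi22def} and into $B$. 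The delicate part is the bookkeeping: one must keep straight the branches of $z^{\pm i\nu}$ in \eqref{Phi PC at inf}, the normalizing exponentials $e^{\pm\pi\nu/4}$ and $e^{\mp 3\pi\nu/4}$ in the upper and lower half-planes, and which connection formula applies on which ray, and then observe that for $q\in i(0,+\infty)$ the factors $\Gamma(a+1)^{\pm1}e^{\pm i\pi a/2}$ (with $a=\pm i\nu$ purely imaginary) recombine into exactly the purely imaginary jump entries $\mp q$ and $\pm q/(1-q^{2})$. Once these relations are in hand the lemma follows; as a final check the formula can be matched with \cite[Appendix B]{Lenells} on the overlap $q\in[0,1)$, where agreement is guaranteed by the uniqueness established first.
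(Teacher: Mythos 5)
Your uniqueness argument and your treatment of the jumps are viable (though note that the connection formula is really only needed across $\R$: on the four oblique rays $\Psi$ is analytic, so the jump of $\Phi_{\mathrm{PC}}=\Psi B^{-1}$ there is just $B_-B_+^{-1}$; the paper avoids connection formulas altogether by observing that $\Psi_\pm$ solve the same first--order ODE $\Psi'=\bigl(\frac{iz}{2}\sigma_3-i\bigl(\begin{smallmatrix}0&\beta_{12}\\-\beta_{21}&0\end{smallmatrix}\bigr)\bigr)\Psi$ with entire coefficients, so that $\Psi_-^{-1}\Psi_+$ is constant and can be evaluated at $z=0$ from $D_{i\nu}(0)$ and $D_{i\nu}'(0)$). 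The genuine gap is in your verification of condition (c): you only check the expansion \eqref{Phi PC at inf} as $z\to\infty$ near the imaginary axis, where $B=I$ and the principal-sector asymptotics $D_a(w)\sim w^ae^{-w^2/4}$ suffices. But \eqref{Phi PC at inf} is required in all directions (and is used in that strength for the matching on $\partial\mathcal D_{b_2}$), and the nontrivial directions are exactly those adjacent to $\R$, where $B\neq I$. For instance, for $\arg z\in(0,\frac{\pi}{4})$ one has $\Phi_{\mathrm{PC},12}=q\,\psi_{11}+\Psi_{12}$, and the argument of $e^{-3\pi i/4}z$ appearing in $\psi_{22}$ approaches the anti-Stokes ray $-\frac{3\pi}{4}$, so $D_{i\nu}(e^{-3\pi i/4}z)$ carries, besides the term $\propto e^{-iz^2/4}$, a second contribution $\propto e^{iz^2/4}$ of comparable modulus; the whole point of the specific entries $-q$ and $\frac{q}{1-q^2}$ of $B$ is that this contribution cancels against $q\,\psi_{11}$. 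Your plan never performs this check, and it cannot be skipped: after conjugation by $z^{-i\nu\sigma_3}e^{\frac{iz^2}{4}\sigma_3}$ an uncancelled term is merely oscillatory along $\R^+$ (since $|e^{iz^2/2}|=1$ there), so it would produce an $O(1)$ violation of \eqref{Phi PC at inf} on $\R^+$ that is invisible near $i\R$.

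Nor can the missing directions be recovered from the jumps together with the imaginary-axis asymptotics: the function $(I+e^{az^2}N)\Phi_{\mathrm{PC}}$ with $a>0$ and $N$ a constant matrix has the same jumps, is bounded at $0$, and satisfies \eqref{Phi PC at inf} in the sectors around $i\R$, yet fails it near $\R$. So an extra input about the Stokes behavior of $D_a$ is indispensable. This is precisely the step the paper carries out in its ``tedious but straightforward computation'': insert the full sector-dependent expansion of $D_a$ (the formula with the factor $\widehat{\mathfrak e}(z)$, which includes the subdominant term $\frac{\sqrt{2\pi}}{\Gamma(-a)}e^{z^2/4}z^{-a-1}$ in the sectors beyond $|\arg z|<\frac{3\pi}{4}$) together with $D_a'(z)=\frac{z}{2}D_a(z)-D_{a+1}(z)$, and verify that $\Psi B^{-1}$ has the expansion \eqref{Phi PC at inf} uniformly up to and across $\R^\pm$. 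Once you add this verification (your computation of $\Phi_{\mathrm{PC},1}$, $\Phi_{\mathrm{PC},2}$ and the identity $\beta_{12}\beta_{21}=\nu$ are fine), the proof is complete.
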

\begin{proof}
It is a classical fact (see e.g. \cite[Chapter 12]{NIST}) that $D_{a}(z)$ is an entire functions in both $a$ and $z$, which satisfies the second order ODE in $z$
\begin{align*}
D_{a}''(z) = \left(\frac{z^{2}}{4}-\frac{1}{2}-a\right) D_{a}(z).
\end{align*}
Therefore, we verify that the function $\Psi$ defined by \eqref{psiqzdef} satisfies the first order matrix differential equation
\begin{align*}
\Psi'(z) = \frac{iz}{2}\sigma_{3} \Psi(z) - i \begin{pmatrix}
0 & \beta_{12} \\
-\beta_{21} & 0
\end{pmatrix} \Psi(z), \qquad z \in \mathbb{C}\setminus \mathbb{R}.
\end{align*}
Since $\Psi_{+}$ and $\Psi_{-}$ satisfy the same linear differential equation, there exists $J_{\Psi}$ independent of $z$ such that $\Psi_{+}(z) = \Psi_{-}(z) J_{\Psi}$ for $z \in \mathbb{R}$. Using
\begin{align*}
D_{i\nu}(0) = \frac{2^{\frac{i\nu}{2}}\sqrt{\pi}}{\Gamma(\frac{1-i\nu}{2})}, \qquad D_{i\nu}'(0) = - \frac{2^{\frac{1+i\nu}{2}}\sqrt{\pi}}{\Gamma(-\frac{i\nu}{2})},
\end{align*}
we obtain after a computation that
\begin{align*}
J_{\Psi} = \Psi_{-}(0)^{-1}\Psi_{+}(0) = \begin{pmatrix}
1 & -q \\ q & 1-q^{2}
\end{pmatrix}
\end{align*}
where we have also used \eqref{def of beta 12 and beta 21}. Using the jumps for $\Psi$, it is easy to verify that $\Phi_{\mathrm{PC}}$ defined by \eqref{def of Phi PC in terms of Psi} satisfies the jumps of the RH problem for $\Phi_{\mathrm{PC}}$. For each $\delta > 0$, the parabolic cylinder function satisfies the asymptotic formula
\begin{align*}
D_a(z) = & \; z^a e^{-\frac{z^2}{4}}\biggl(1 - \frac{a(a-1)}{2z^2} + O\bigl(z^{-4}\bigr)\biggr)
	\\
& - \widehat{\mathfrak{e}}(z) \frac{\sqrt{2\pi}e^{\frac{z^2}{4}} z^{-a-1}}{\Gamma(-a)} \biggl(1 + \frac{(a+1)(a+2)}{2z^2} + O\bigl(z^{-4}\bigr)\biggr), \quad z \to \infty, \quad a \in \C,
	\\
\widehat{\mathfrak{e}}(z) = & \begin{cases} 0 , & \arg z \in [-\frac{3\pi}{4} + \delta, \frac{3\pi}{4} - \delta], \\
e^{i\pi a}, & \arg z \in [\frac{\pi}{4} + \delta, \frac{5\pi}{4} - \delta], \\
e^{-i\pi a}, & \arg z \in [-\frac{5\pi}{4} + \delta, -\frac{\pi}{4} - \delta], \\
\end{cases} 
\end{align*}
where the error terms are uniform with respect to $a$ in compact subsets and $\arg z$ in the given ranges. Using this formula and the identity
$$D_a'(z) = \frac{z}{2}D_a(z) - D_{a+1}(z),$$
the asymptotic equation \eqref{Phi PC at inf} follows from a tedious but straightforward computation. This shows that $\Phi_{\mathrm{PC}}$ given by \eqref{lemma:sol mod Phi PC} satisfies all the conditions of the RH problem for $\Phi_{\mathrm{PC}}$.
\end{proof}

\paragraph{Formula for $\beta_{12}\beta_{21}$.} Since $\nu \in \mathbb{R}$, we note from \cite[formula 5.4.3]{NIST} that
\begin{align*}
|\Gamma(i\nu)| = \frac{\sqrt{2\pi}}{\sqrt{\nu(e^{\pi \nu}-e^{-\pi \nu})}} = \frac{\sqrt{2\pi}}{\sqrt{\nu \, q^{2} \, e^{\pi \nu}}},
\end{align*}
from which we deduce the identity
\begin{align}\label{beta 12 beta21 relation}
\beta_{12} \beta_{21} = \nu.
\end{align}

\paragraph{Acknowledgements.}
C.C. was supported by the European Research Council, Grant Agreement No. 682537.
T.C. was supported by 
 the Fonds de la Recherche Scientifique-FNRS under EOS project O013018F. The authors are grateful to Gaultier Lambert and Christian Webb for useful discussions, in particular related to Remark \ref{remark:existence}.

\end{document}